\begin{document}

\title{\vspace*{-3.5cm} Optimal investment with transient price impact}
 
\author{ Peter Bank\footnote{Technische Universit{\"a}t Berlin,
    Institut f{\"u}r Mathematik, Stra{\ss}e des 17. Juni 136, 10623
    Berlin, Germany, email \texttt{bank@math.tu-berlin.de}.  Financial
    support by Einstein Foundation through project ``Game options and
    markets with frictions'' is gratefully acknowledged.}
  \hspace{2ex} Moritz Vo{\ss}\footnote{University of California Santa
    Barbara, Department of Statistics \& Applied Probability, Santa
    Barbara, CA 93106-3110, USA, email \texttt{voss@pstat.ucsb.edu}.}  }
\date{\today}

\maketitle
\begin{abstract}
  We introduce a price impact model which accounts for finite market
  depth, tightness and resilience. Its coupled bid- and ask-price
  dynamics induce convex liquidity costs. We provide existence of an
  optimal solution to the classical problem of maximizing expected
  utility from terminal liquidation wealth at a finite planning
  horizon. In the specific case when market uncertainty is generated
  by an arithmetic Brownian motion with drift and the investor
  exhibits constant absolute risk aversion, we show that the resulting
  singular optimal stochastic control problem readily reduces to a
  deterministic optimal tracking problem of the optimal frictionless
  constant Merton portfolio in the presence of convex costs. Rather
  than studying the associated Hamilton-Jacobi-Bellmann PDE, we
  exploit convex analytic and calculus of variations techniques
  allowing us to construct the solution explicitly and to describe the
  free boundaries of the action- and non-action regions in the
  underlying state space. As expected, it is optimal to trade towards
  the frictionless Merton position, taking into account the initial
  bid-ask spread as well as the optimal liquidation of the accrued
  position when approaching terminal time. It turns out that this
  leads to a surprisingly rich phenomenology of possible trajectories
  for the optimal share holdings.
\end{abstract}
 
\begin{description}
\item[Mathematical Subject Classification (2010):] 91G10, 91G80,
  91B06, \\ 49K21, 35R35
\item[JEL Classification:] G11, C61
\item[Keywords:] Utility maximization, transient price impact,
  singular control, convex analysis, calculus of variations, free
  boundary problem
\end{description}


\section{Introduction}

The classical Merton problem~\cite{Mert:71},~\cite{Mert:69} of
maximizing expected utility from terminal wealth by dynamically
trading a risky asset in a financial market has by now been
intensively studied and well understood in models with market
frictions like transaction costs. We refer to the recent survey
by~\citet{MuhleKarbeReppenSoner:17} for an overview. In contrast, less
is known about utility maximization problems in illiquid market models
where the friction is induced by \emph{price impact}: The investor
trades at bid- and ask-prices which are adversely affected by the
volume or speed of her current and past trades. Within these models,
the vast majority of the existing literature is primarily concerned
with the problem of optimally executing exogenously given orders; cf.,
e.g., the surveys by~\citet{GokayRochSoner:11}
and~\citet{GatheralSchied:13}. However, regarding more complex
optimization problems such as optimal portfolio choice, explicit
characterizations of optimal strategies seem to have been elusive so
far. This is notably the case for optimal investment problems on a
finite time horizon in the presence of a bid-ask spread and price
impact that, rather than being purely temporary or fully permanent, is
\emph{transient} in the sense that the impact of the investors current
and past trades on execution prices does not vanish instantaneously
but persists and decays over time at some finite resilience rate.

Most of the currently available work on optimal portfolio choice
problems in illiquid financial markets focuses on models with purely
temporary price impact, i.e., infinite resilience, zero bid-ask
spread, and restricts to long-term investors as, e.g.,
in~\citet{GuasoniWeb:17}, \cite{GuasoniWeb:15}, \cite{GuasoniWeb:16}
with constant relative risk aversion, in~\citet{FordeWeberZhang:16}
with constant absolut risk aversion or in~\citet{GarlPeder:13.1},
\cite{GarlPeder:13.2} with mean-variance preferences. In the latter
papers, the authors also take into account finite resilience. For
investors having a finite planning horizon but still solely facing
temporary price impact, asymptotic results have been obtained
by~\citet{MoreauMuhleKarbeSoner:17} and in a more general setup
in~\citet{CayeHerdegenMuhleKarbe:17};
cf. also~\citet{ChandraPapanicolaou:17} for a pertubation analysis.  The
results from~\cite{MoreauMuhleKarbeSoner:17} are also used as a
building block to describe asymptotically optimal trading strategies
under highly resilient price impact in~\citet{KallsenMuhleKarbe:14}, or
in~\citet{EkrenMuhleKarbe:17} in the setting
of~\cite{GarlPeder:13.1}. In all the above cited papers, trading
strategies are confined to be absolutely continuous.

In the present paper, we propose a price impact model which goes
beyond the block-shaped limit order book model of~\citet{ObiWang:13} by
allowing for both selling and buying stock.  Specifically, our model
determines bid- and ask-prices via a coupled system of controlled
diffusions, giving us the possibility to specify market depth,
tightness and resilience: the three dimensions of liquidity identified
in the seminal work by~\citet{Kyle:85}. The coupled bid- and ask-price
dynamics induce convex liquidity costs on the trading strategies which
are allowed to be singular and comprise non-infinitesimal block trades
as in~\cite{ObiWang:13}. In fact, our model is closely related to the
one proposed in~\citet{RochSoner:13} which is an extension of the
illiquid market model approach introduced by~\citet{CetinJarrProt:04}
in the sense that it additionally takes into account finite resilience
and a bid-ask spread. In contrast, our model captures recovery of the
bid- and ask-prices by a reversion to each other rather than towards
some auxiliary reference price process. Moreover, our illiquidity
parameters, i.e., market depth and resilience, are constant in order
to preserve tractability.

We provide existence of an optimal solution to the corresponding
classical problem of maximizing expected utility from terminal
liquidation wealth at some finite planning horizon. In its simplest
version, our price impact model is an illiquid variant of a Bachelier
model with convex liquidity costs which are levied on the agent's
trading activity. For an investor who exhibits constant absolute risk
aversion, it turns out that the resulting singular optimal stochastic
control problem readily reduces to a deterministic optimal tracking
problem of the optimal frictionless buy-and-hold Merton portfolio in
the presence of convex costs. Instead of the more common dynamic
programming methods which lead to the challenge of solving a
three-dimensional free boundary problem induced by a
Hamilton-Jacobi-Bellman partial differential equation, we exploit a
convex analytic approach. Deriving first order conditions in terms of
the (infinite dimensional) subgradients of the convex cost functional
allows us to construct explicitly the solution to the singular control
problem by calculus of variations. As a consequence, we are able to
describe analytically the free boundaries of the buying-, selling and
a no-trading region in the underlying three-dimensional state space
for the optimally controlled dynamics of the spread and the risky
asset holdings with respect to the remaining time to maturity.

Our explicit results
make transparent how the optimal strategy has to comprise several
aspects. As already expected by the work in~\citet{GuasoniWeb:17},
\cite{GuasoniWeb:15}, \cite{GuasoniWeb:16}, \citet{FordeWeberZhang:16},
and~\citet{GarlPeder:13.1}, \cite{GarlPeder:13.2}, it is indeed optimal
to trade towards the optimal frictionless portfolio while taking into
account the initial bid-ask spread as well as the available time
horizon. Specifically, since liquidation is costly in the present
setup, the optimizer also has to take care of optimally unwinding his
accrued position when approaching terminal time. It turns out that
already in this elementary illiquid Bachelier model the interaction of
market tightness, finite resilience, desired position targeting and
optimal liquidation at a finite time horizon permits a surprisingly
rich phenomenology of possible trajectories for the optimal share
holdings. In this regard, our optimization problem is substantially
different from the infinite horizon and zero spread frameworks
considered in the papers cited above. Our findings also complement and
extend the explicit results on the optimal order execution problem as
studied in~\citet{ObiWang:13} in a similar Bachelier-type setting.

The paper most closely related to ours is~\citet{SonerVukelja:16}.
Therein, the authors adopt the model from~\citet{RochSoner:13} without
bid-ask spread in a Black-Scholes framework with constant resilience
and stochastic market depth proportional to the risky asset
price. Using the dynamic programming principle and the notion of
viscosity solutions, the problem of maximizing expected utility from
terminal liquidation wealth for CRRA investors with finite planning
horizon is studied. Compared to our results, their more general
framework comes at the cost that a characterization of the optimal
strategy is only possible numerically via a discrete-time approximation
scheme.

The rest of the paper is organized as follows. In
Section~\ref{sec:model} we introduce a price impact
model. Section~\ref{sec:problem} outlines the problem of maximizing
expected utility from terminal liquidation wealth in our model and
provides existence of an optimal solution in a general setup. In the
specific case when market uncertainty is generated by an arithmetic
Brownian motion with drift and the investor exhibits constant absolute
risk aversion, we show that the optimal singular stochastic control
problem has a deterministic solution which we construct
explicitly. This is presented in Section~\ref{sec:example}. Technical
proofs are deferred to Section~\ref{sec:proofs}.


\section{A price impact model}
\label{sec:model}

We fix a filtered probability space
$(\Omega,\cF,(\mathcal{F}_t)_{t \geq 0},\PP)$ satisfying the usual
conditions of right continuity and completeness and consider an
investor whose trades in a risky asset affect its market prices in an
adverse manner. For our specification of her price impact, we propose
a variant of the block-shaped limit order book model introduced
by~\citet{ObiWang:13}. Specifically, the investor's trading strategy is
described by a pair $X = (X^{\uparrow},X^{\downarrow})$ of
predictable, nondecreasing, right-continuous processes where
$X^{\uparrow} = (X^{\uparrow}_t)_{t\geq 0}$,
$X^{\downarrow}=(X^{\downarrow}_t)_{t\geq 0}$ denote , respectively,
the cumulative purchases and sales of the risky asset until time
$t \geq 0$. We set
$X^{\uparrow}_{0-} \triangleq X^{\downarrow}_{0-} \triangleq
0$. Trading takes place via market orders in an idealized block-shaped
limit order book at the best bid- and ask-prices $B^X$ and
$A^X$. Their dynamics are specified as the solution to the following
coupled system of controlled diffusions
\begin{align}
  \begin{aligned} \label{eq:coupledBidAsk}
    dA^X_t & = dP_t + \eta dX^{\uparrow}_t - \frac{1}{2} \kappa ( A^X_{t-} -
    B^X_{t-} ) dt, \\
    dB^X_t & = dP_t - \eta dX^{\downarrow}_t + \frac{1}{2} \kappa ( A^X_{t-} -
    B^X_{t-} ) dt
  \end{aligned}
             \qquad (t \geq 0),
\end{align}
with given parameters $\eta > 0$, $\kappa > 0$,
$A^X_{0-} \triangleq A_0 > 0$ and $B^X_{0-} \triangleq B_0 > 0$. The
interpretation of the bid- and ask-price dynamics in
\eqref{eq:coupledBidAsk} is the following: Both processes $A^X$ and
$B^X$ are driven by some common exogenous fundamental random shock
$dP_t$ modeled by a continuous semimartingale $(P_t)_{t \geq 0}$ with
initial value $P_{0-} \triangleq (A_0 + B_0)/2$. The process
$(P_t)_{t \geq 0}$ can also be regarded as the unaffected price
process. Due to finite market depth $1/\eta \in (0, \infty)$ which can
be interpreted as the height of a block-shaped limit order book, a buy
order~$dX^\uparrow_t$ incurs an impact and increases the best
ask-price $A^X$ by the amount $\eta dX^{\uparrow}_t$ whereas the best
bid-price $B^X$ is not directly affected. After completion of each buy
trade, ask- and bid-prices revert to each other at some resilience
rate $\kappa > 0$.  The effects of sell orders $dX^\downarrow_t$ on
the best bid-price $B^X$ in \eqref{eq:coupledBidAsk} are
analogous. Note that price impact is \emph{transient} and does not
vanish instantaneously but persists and decays over time at a finite
exponential rate $\kappa$. We will assume for simplicity that both
illiquidity parameters, i.e., the instantaneous price impact factor
$\eta$ as well as the resilience rate $\kappa$, are
constant. According to the bid- and ask-price dynamics in
\eqref{eq:coupledBidAsk}, the controlled evolution of the bid-ask
spread $\zeta^X_t \triangleq A_t^X-B_t^X$ is described by
\begin{equation}
  \label{eq:spreaddynamics}
  d\zeta^X_t = \eta (dX^{\uparrow}_t
  + dX^{\downarrow}_t) - \kappa \zeta^X_{t-} dt
  \qquad (t \geq 0)
\end{equation}
with initial value $ \zeta^X_{0-} \triangleq \zeta_{0} \geq 0$ and
right-continuous solution
\begin{equation}
  \label{eq:spreadsolution}
  \zeta^X_t = e^{-\kappa (t-s)} \bigg( \zeta^X_{s-} + \eta \int_{[s,t]}
  e^{\kappa (u-s)} (dX^{\uparrow}_u + dX^{\downarrow}_u) \bigg)
  \quad (0 \leq s \leq t).
\end{equation}
Let us now derive the investor's wealth dynamics corresponding to a
trading strategy $X=(X^{\uparrow},X^{\downarrow})$. First, we
associate to $X$ the self-financing portfolio process
$(\xi_t^X,\varphi_t^{X})_{t \geq 0}$ with some given initial values
$(\xi^X_{0-},\varphi^X_{0-}) \in \RR^2$ where $\xi_t^X$ denotes the
amount of cash and
$\varphi^X_t \triangleq \varphi^X_{0-} + X^{\uparrow}_t -
X^{\downarrow}_t$ the number of shares of the risky asset held at time
$t \geq 0$.  Assuming zero interest rates, the self-financing
condition dictates that the cash balance $\xi^{X}$ changes only due to
trading activity $X$, i.e., we postulate that
\begin{equation*}
  d\xi^X_t = - \left( A^X_{t-} + \frac{\eta}{2} \Delta X^{\uparrow}_t
  \right) dX^{\uparrow}_t 
  + \left( B^X_{t-} - \frac{\eta}{2} \Delta X^{\downarrow}_t \right)
  dX^{\downarrow}_t \quad (t \geq 0)
\end{equation*}
with
$\Delta X^{\uparrow,\downarrow}_t \triangleq
X^{\uparrow,\downarrow}_{t} - X^{\uparrow,\downarrow}_{t-}$,
respectively. Observe that the effective execution price to, e.g., buy
a not necessarily infinitesimal quantity of $d X^{\uparrow}_t$ shares at time~$t$ is
given by $A^X_{t-} + \eta \Delta X^{\uparrow}_t/2$
where~$\eta \Delta X^{\uparrow}_t/2$ accounts for the impact a
non-infinitesimal order incurs; cf., e.g., also~\citet{AlfFruSch:10}
or~\citet{PredShaShr:11}. Analogous considerations apply for sell orders. The investor's
total wealth at any time is now expressed in terms of the
\emph{liquidation value} of her current portfolio. That is, we define
the investor's liquidation wealth process $(V_t(X))_{t \geq 0}$
associated to her portfolio process $(\xi^X,\varphi^{X})$ with trading
strategy $X=(X^{\uparrow},X^{\downarrow})$ and initial endowment
$(\xi^X_{0-},\varphi^X_{0-}) \in \RR^2$ as
\begin{equation}
  \label{eq:defliquidwealth}
  V_t(X) \triangleq \, \xi^X_t + \frac{1}{2} ( A^X_t+B^X_t) \varphi^X_t - 
  \left( \frac{1}{2} \zeta^X_t \vert \varphi^X_t \vert  
    + \frac{\eta}{2} (\varphi^X_t)^2 \right)
  \quad (t \geq 0).
\end{equation}
We set the initial value to
$V_{0-}(X) \triangleq \xi^X_{0-} + \varphi^X_{0-}(A_0+B_0)/2 -(\zeta_0
\vert \varphi^X_{0-} \vert + \eta (\varphi^X_{0-})^2)/2$. Note that
the liquidation value $V_t(X)$ in \eqref{eq:defliquidwealth}
decomposes into two parts: The first part represents the portfolio's
book value $\xi^X_t + \varphi^X_t(A^X_t+B^X_t)/2$, where the value of
the position $\varphi^X_t$ in the risky asset is measured in terms of
the mid-quote price $(A^X_t+B^X_t)/2$. The second part
$\zeta^X_t \vert \varphi^X_t \vert/2 + \eta (\varphi^X_t)^2/2$
accounts for the corresponding liquidation costs which are incurred by
the bid-ask spread $\zeta^X_t$ as well as the instantaneous price
impact $\eta$ when unwinding in one single block trade the
$\varphi^X_t$ shares. Following lemma shows that the dynamics of the
liquidation wealth process $(V_t(X))_{t \geq 0}$ in
\eqref{eq:defliquidwealth} conveniently separate into the common
frictionless wealth and a nonnegative, convex cost functional.

\begin{Lemma}\label{lem:propliquidwealth}
  The liquidation wealth process $(V_t(X))_{t \geq 0}$ of a strategy
  $X=(X^{\uparrow},X^{\downarrow})$ defined in
  \eqref{eq:defliquidwealth} allows for the decomposition
  \begin{equation} \label{eq:liquidwealth}
    V_t(X) = V_{0-}(X) + L_{0-}(X) + \int_0^t \varphi^X_s dP_s - L_t(X) \quad (t \geq 0)
  \end{equation}
  where $(L_t(X))_{t \geq 0}$ denotes the liquidity costs
  defined as
  \begin{equation} \label{eq:liquiditycosts}
    \begin{aligned}
      L_t(X) \triangleq & ~\frac{1}{4 \eta} \Big( \eta \vert \varphi^X_t
      \vert + (\zeta^X_t - e^{-\kappa t} \zeta_0) \Big)^2 +
      \frac{1}{2} \vert \varphi^X_t \vert e^{-\kappa t} \zeta_0
      + \frac{\eta}{4}(\varphi^X_{0-})^2  \\
      & + \frac{1}{2} \int_{[0,t]} e^{-\kappa s} \zeta_0
      (dX^{\uparrow}_s + dX^{\downarrow}_s) + \frac{\kappa}{2\eta}
      \int_0^t (\zeta^X_{s-} - e^{-\kappa s} \zeta_0)^2
      ds 
    \end{aligned}
  \end{equation}
  with initial value
  $L_{0-}(X) \triangleq \zeta_0 \vert \varphi^X_{0-} \vert/2 + \eta
  (\varphi^X_{0-})^2/2$.
  In particular, for all $t \geq 0$ the functional $L_t(X)$ is convex
  in $X$ and satisfies
  \begin{equation}
    L_t(X)\geq \frac{\eta}{4} e^{-2 \kappa t} (X_t^{\uparrow} +
    X_t^{\downarrow})^2
    + \frac{\kappa \eta}{2} \int_0^t e^{-2 \kappa s} (X_s^{\uparrow} +
    X_s^{\downarrow})^2 ds \geq
    0. \label{eq:liquiditycostsbound}
  \end{equation}
\end{Lemma}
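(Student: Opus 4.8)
The plan is to establish the decomposition \eqref{eq:liquidwealth}--\eqref{eq:liquiditycosts} by a direct integration-by-parts computation, and then to read off convexity and the bound \eqref{eq:liquiditycostsbound} from the explicit form \eqref{eq:liquiditycosts}. Write $N_t \triangleq X^{\uparrow}_t + X^{\downarrow}_t$ and let $M^X_t \triangleq (A^X_t + B^X_t)/2$ denote the mid-quote; the resilience drifts in \eqref{eq:coupledBidAsk} cancel in $M^X$, so $dM^X_t = dP_t + \tfrac{\eta}{2}\,d\varphi^X_t$. First I would integrate the cash dynamics to get $\xi^X_t = \xi^X_{0-} - \int_{[0,t]}A^X_{s-}\,dX^{\uparrow}_s + \int_{[0,t]}B^X_{s-}\,dX^{\downarrow}_s - \tfrac{\eta}{2}\sum_{s\le t}\big((\Delta X^{\uparrow}_s)^2 + (\Delta X^{\downarrow}_s)^2\big)$, using $\int_{[0,t]}\Delta X^{\uparrow}_s\,dX^{\uparrow}_s = \sum_{s\le t}(\Delta X^{\uparrow}_s)^2$ and the analogous identity for $X^{\downarrow}$. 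Writing $A^X_{s-} = M^X_{s-} + \tfrac12\zeta^X_{s-}$, $B^X_{s-} = M^X_{s-} - \tfrac12\zeta^X_{s-}$ and then applying the product rule to $M^X_t\varphi^X_t$ — with $[M^X,\varphi^X]_t = \tfrac{\eta}{2}\sum_{s\le t}(\Delta\varphi^X_s)^2$ since $P$ is continuous — the $\int M^X_{s-}\,d\varphi^X_s$ contributions cancel. Substituting into \eqref{eq:defliquidwealth}, using $\int_0^t\varphi^X_{s-}\,dP_s = \int_0^t\varphi^X_s\,dP_s$ (as $P$ is continuous) and $\xi^X_{0-} + M^X_{0-}\varphi^X_{0-} = V_{0-}(X) + L_{0-}(X)$, rewriting $\int_{[0,t]}\varphi^X_{s-}\,d\varphi^X_s$ through $d\big((\varphi^X)^2\big)$, and collecting all jump-quadratic terms — which assemble into $\tfrac{\eta}{4}\sum_{s\le t}(\Delta N_s)^2$ — the definition of $V_t(X)$ rearranges into the form \eqref{eq:liquidwealth} with $L_t(X)$ replaced by
\[
  \tfrac12\!\int_{[0,t]}\!\zeta^X_{s-}\,dN_s + \tfrac{\eta}{4}(\varphi^X_t)^2 + \tfrac{\eta}{4}(\varphi^X_{0-})^2 + \tfrac12\,\zeta^X_t\vert\varphi^X_t\vert + \tfrac{\eta}{4}\sum_{s\le t}(\Delta N_s)^2 .
\]

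It then remains to check that this displayed expression equals the claimed \eqref{eq:liquiditycosts}. The decisive step is the substitution $\widetilde\zeta^X_t \triangleq \zeta^X_t - e^{-\kappa t}\zeta_0 = \eta\int_{[0,t]}e^{-\kappa(t-u)}\,dN_u \ge 0$, which by \eqref{eq:spreaddynamics} solves $d\widetilde\zeta^X_t = \eta\,dN_t - \kappa\widetilde\zeta^X_{t-}\,dt$ with $\widetilde\zeta^X_{0-}=0$. Expanding $\tfrac1{4\eta}(\eta\vert\varphi^X_t\vert + \widetilde\zeta^X_t)^2$ in \eqref{eq:liquiditycosts} and using $\tfrac12\vert\varphi^X_t\vert(\widetilde\zeta^X_t + e^{-\kappa t}\zeta_0) = \tfrac12\vert\varphi^X_t\vert\zeta^X_t$, the equality of the two expressions reduces to
\[
  \tfrac12\!\int_{[0,t]}\!\widetilde\zeta^X_{s-}\,dN_s + \tfrac{\eta}{4}\sum_{s\le t}(\Delta N_s)^2 = \tfrac1{4\eta}(\widetilde\zeta^X_t)^2 + \tfrac{\kappa}{2\eta}\!\int_0^t(\widetilde\zeta^X_s)^2\,ds ,
\]
which follows by substituting $\eta\,dN_s = d\widetilde\zeta^X_s + \kappa\widetilde\zeta^X_{s-}\,ds$ on the left and invoking $(\widetilde\zeta^X_t)^2 = 2\int_{[0,t]}\widetilde\zeta^X_{s-}\,d\widetilde\zeta^X_s + \sum_{s\le t}(\Delta\widetilde\zeta^X_s)^2$ (recall $\widetilde\zeta^X_{0-}=0$): the jump terms match since $\tfrac{\eta}{4}(\Delta N_s)^2 = \tfrac1{4\eta}(\Delta\widetilde\zeta^X_s)^2$, and $\int_0^t(\widetilde\zeta^X_{s-})^2\,ds = \int_0^t(\widetilde\zeta^X_s)^2\,ds$. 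I expect this jump bookkeeping to be the main obstacle: one must make sure that the quadratic-variation corrections from the non-infinitesimal block trades — those arising in $\xi^X$, in $M^X\varphi^X$, and in $(\widetilde\zeta^X)^2$ — cancel exactly, and that this works without assuming $X^{\uparrow}$ and $X^{\downarrow}$ have disjoint jump times.

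Finally, convexity and \eqref{eq:liquiditycostsbound} are read off \eqref{eq:liquiditycosts}. The maps $X\mapsto\varphi^X_t = \varphi^X_{0-} + X^{\uparrow}_t - X^{\downarrow}_t$, $X\mapsto\widetilde\zeta^X_t$, and $X\mapsto\int_{[0,t]}e^{-\kappa s}\zeta_0\,dN_s$ are affine on the convex cone of admissible controls, and $\widetilde\zeta^X_t\ge 0$, $\zeta_0\ge 0$; hence each summand of \eqref{eq:liquiditycosts} is a convex functional of $X$ — in particular $\tfrac1{4\eta}(\eta\vert\varphi^X_t\vert + \widetilde\zeta^X_t)^2$ is the square of a nonnegative convex function and $\tfrac{\kappa}{2\eta}\int_0^t(\widetilde\zeta^X_s)^2\,ds$ is an integral of convex functions — so their sum $L_t(X)$ is convex. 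For \eqref{eq:liquiditycostsbound}, all summands in \eqref{eq:liquiditycosts} are nonnegative, whence $L_t(X)\ge 0$; retaining only $\tfrac1{4\eta}(\widetilde\zeta^X_t)^2$ (from the first term) and $\tfrac{\kappa}{2\eta}\int_0^t(\widetilde\zeta^X_s)^2\,ds$ and estimating $\widetilde\zeta^X_t = \eta e^{-\kappa t}\int_{[0,t]}e^{\kappa u}\,dN_u \ge \eta e^{-\kappa t}N_t$ (since $e^{\kappa u}\ge 1$, $N$ is nondecreasing and $N_{0-}=0$) delivers the asserted lower bound.
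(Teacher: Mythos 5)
Your proposal is correct and follows essentially the same route as the paper: an integration-by-parts computation arriving at exactly the paper's intermediate expression~\eqref{eq:p:wealthdyn:8} (your displayed sum with the $\frac{\eta}{4}\sum_{s\le t}(\Delta N_s)^2$ term), followed by the same substitution $\zeta^X-e^{-\kappa\cdot}\zeta_0$ (which is $\eta e^{-\kappa t}Y_t$ in the paper's notation, so your reduction identity is precisely the paper's~\eqref{eq:p:wealthdyn:9}), and the same estimate $\zeta^X_t-e^{-\kappa t}\zeta_0\ge\eta e^{-\kappa t}(X^\uparrow_t+X^\downarrow_t)$ for~\eqref{eq:liquiditycostsbound}. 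The only difference is organizational: you expand $\xi^X_t+M^X_t\varphi^X_t$ directly from the integrated self-financing condition, whereas the paper differentiates $V_t(X)$ itself (and therefore has to handle $d\vert\varphi^X\vert$ and $[\vert\varphi^X\vert,\zeta^X]$ before re-integrating), and your jump bookkeeping, including simultaneous jumps of $X^\uparrow$ and $X^\downarrow$, matches the paper's $[X^\uparrow,X^\uparrow]+[X^\downarrow,X^\downarrow]+2[X^\uparrow,X^\downarrow]=[N,N]$ accounting.
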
    

Observe that the quantity
$V_{0-}(X) + L_{0-}(X) = \xi^X_{0-} + \varphi^X_{0-} P_{0-}$ in
\eqref{eq:liquidwealth} represents by definition the initial wealth's
book value or initial frictionless wealth of strategy $X$ with initial
endowment $(\xi^X_{0-},\varphi^X_{0-})$.

\begin{Remark}
\begin{enumerate}
\item Compared to other price impact models which are used in the
  literature in the context of optimal portfolio choice, our price
  impact in~\eqref{eq:coupledBidAsk} depends on the trading volume of
  the investor in the spirit of~\citet{ObiWang:13} and not on the
  trading rate as, e.g., in~\citet{GarlPeder:13.1},
  \cite{GarlPeder:13.2} or~\citet{FordeWeberZhang:16}. These papers
  adopt purely temporary price impact as proposed
  by~\citet{AlmgChr:01}. In~\citet{GuasoniWeb:17}, \cite{GuasoniWeb:15},
  \cite{GuasoniWeb:16} temporary price impact is not only induced by
  the trading rate but also depends on the investor's total
  wealth. Our model captures transient price impact which decays only
  gradually over time. As a consequence, trading strategies are no
  longer restricted to be absolutely continuous but also comprise
  non-infinitesimal block trades. In fact, our modeling approach is
  similar to the one proposed in~\citet{RochSoner:13} where the authors
  allow for more general stochastic dynamics for the market depth and
  the resilience rate. Another difference is that our bid- and
  ask-prices in~\eqref{eq:coupledBidAsk} revert to each other and not
  to some reference price as in~\cite{RochSoner:13}.
\item Recall that proportional transaction costs as considered, e.g.,
  in~\citet{DavisNorm:90}, are linear in the risky asset
  holdings. Temporary price impact which is linear in the trading rate
  of absolutely continuous strategies as considered
  in~\citet{GarlPeder:13.1}, \cite{GarlPeder:13.2} or
  \citet{GuasoniWeb:17}, \cite{GuasoniWeb:16} induces quadratic
  liquidity costs on the latter. The authors
  in~\citet{FordeWeberZhang:16}, \citet{GuasoniWeb:15}
  and~\citet{CayeHerdegenMuhleKarbe:17} allow for nonlinear price
  impact which introduces a dependence of the incurred trading costs
  on a fractional power of the turnover rates. In our model above,
  price impact in~\eqref{eq:coupledBidAsk} is still linear in the
  trading strategy $X=(X^\uparrow,X^\downarrow)$ but the induced
  liquidity costs in~\eqref{eq:liquiditycosts} are convex in $X$ rather than
  purely quadratic because of the emergence of the absolute value function.
\end{enumerate}
\end{Remark}


\section{Optimal investment problem}
\label{sec:problem}

We consider an investor who aims to trade optimally in the price
impact model introduced in Section~\ref{sec:model}. The investor's
preferences are described by a utility function
$u: \RR \rightarrow \RR$ in $C^1(\mathbb{R})$ which is strictly
concave, increasing and bounded from above. She wants to maximize
expected utility from her terminal liquidation wealth $V_T(X)$ at some
finite planning horizon $T>0$ as defined in~\eqref{eq:defliquidwealth}
by following a trading strategy $X=(X^{\uparrow},X^{\downarrow})$ with
given initial endowment $\xi^X_{0-} \set \xi_0 \in \mathbb{R}$ in cash
and $\varphi^X_{0-} \set \varphi_0 \in \mathbb{R}$ shares of the risky
asset. Her corresponding initial wealth and the associated liquidation
costs are denoted by $V_0 \set V_{0-}(X)$ and $L_0 \set L_{0-}(X)$ for
some given initial bid-ask spread $\zeta^X_{0-} = \zeta_0 \geq 0$. In
other words, in view of Lemma~\ref{lem:propliquidwealth}, the
agent's aim is to solve the optimization problem
\begin{equation}
  \label{eq:utilityproblem}
  \EE u(V_T(X)) = \EE u\left( V_0 + L_0 +
    \int_0^T \varphi^X_t dP_t - L_T(X) \right)
  \rightarrow \max_{X=(X^{\uparrow},X^{\downarrow}) \in \cX}
\end{equation}
over all admissible trading policies
\begin{align*}
  \cX \set & \; \Big\{ (X_t)_{t \geq 0} = (X^{\uparrow}_t,X^{\downarrow}_t)_{t \geq 0} :
             X^{\uparrow}, X^{\downarrow} \textrm{ right-continuous}, \bigg. \\ 
           & \bigg. \hspace{15pt} \textrm{predictable, nondecreasing
             processes with } X^{\uparrow}_{0-} \set X^{\downarrow}_{0-} \set 0 \Big\}.
\end{align*}
The main tool which allows us to provide existence of an optimal
strategy to the maximization problem in~\eqref{eq:utilityproblem} is
given by the following convex compactness result for processes of
finite variation.

\begin{Lemma}[\citet{Guasoni:02}, Lemma~3.4] \label{lem:compactness}
  Consider a sequence of strategies $(X^n)_{n \geq 1} \subset \cX$
  such that
  $\conv(\{ X^{\uparrow,n}_T + X^{\downarrow,n}_T : n \geq 1 \})$ is
  bounded in $L^0(\Omega,\cF,\PP)$.  Then there exists a strategy
  $X \in \cX$ and a sequence $(\tilde{X}^n)_{n \geq 1} \subset \cX$ of
  cofinal convex combinations, i.e.,
  $\tilde{X}^n \in \conv(X^n, X^{n+1}, \ldots)$ for all $n \geq 1$,
  converging to $X$ weakly on $[0,T]$:
  \begin{equation} \label{lem:compactness:conv} 
    \lim_{n \rightarrow
      \infty} \tilde{X}^{\uparrow,\downarrow,n}_t(\omega) =
    X^{\uparrow,\downarrow}_t(\omega) \quad \text{for all } t \in
    \{\Delta X^{\uparrow,\downarrow}(\omega)=0\} \cup \{T\}, \; \omega \in \Omega.
  \end{equation} 
\end{Lemma}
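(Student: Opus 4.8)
The plan is to construct the limiting strategy by a Komlós-type extraction of cofinal convex combinations, diagonalised over a countable dense set of times, and then to pass to right-continuous regularisations. The first reduction is to scalar monotone data: for every $n$ the processes $X^{\uparrow,n}$ and $X^{\downarrow,n}$ are nondecreasing, so $X^{\uparrow,n}_t,\,X^{\downarrow,n}_t\le X^{\uparrow,n}_t+X^{\downarrow,n}_t\le X^{\uparrow,n}_T+X^{\downarrow,n}_T$ for every $t\in[0,T]$, and since convex combinations are formed simultaneously in both coordinates, the $L^0$-boundedness of $\conv(\{X^{\uparrow,n}_T+X^{\downarrow,n}_T:n\ge 1\})$ implies that for each fixed $t$ the families of all convex combinations of $\{X^{\uparrow,n}_t\}$ and of $\{X^{\downarrow,n}_t\}$ are nonnegative and bounded in $L^0$.

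Next, fix a countable dense set $D\subset[0,T]$ with $T\in D$ and enumerate $D=\{t_1,t_2,\dots\}$. A Komlós-type lemma for nonnegative $L^0$-bounded sequences (of the kind underlying \citet{Guasoni:02}) yields, for each $t_k$, cofinal convex combinations along which the corresponding values converge almost surely; since a cofinal convex combination of an almost surely convergent sequence converges to the same limit, a standard diagonal procedure — treating $t_1,t_2,\dots$ in turn, and for each $t_k$ first the $\uparrow$- and then the $\downarrow$-coordinate — produces one sequence $(\tilde X^n)_{n\ge 1}\subset\cX$ of cofinal convex combinations of $(X^n)$ together with random variables $\bar X^{\uparrow}_s,\bar X^{\downarrow}_s$, $s\in D$, such that $\tilde X^{\uparrow,\downarrow,n}_s\to\bar X^{\uparrow,\downarrow}_s$ almost surely for every $s\in D$. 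Because $\tilde X^{\uparrow,n}_T+\tilde X^{\downarrow,n}_T\in\conv(\{X^{\uparrow,n}_T+X^{\downarrow,n}_T\})$ is bounded in $L^0$, its almost sure limit $\bar X^{\uparrow}_T+\bar X^{\downarrow}_T$ is finite almost surely; together with monotonicity this produces a set $\Omega_0$ of full probability on which all $\bar X^{\uparrow,\downarrow}_s$, $s\in D$, are finite and $s\mapsto\bar X^{\uparrow,\downarrow}_s$ is nondecreasing on $D$.

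On $\Omega_0$ I would then set $X^{\uparrow,\downarrow}_t:=\lim_{s\downarrow t,\,s\in D}\bar X^{\uparrow,\downarrow}_s$ for $t<T$ (the limit exists by monotonicity), $X^{\uparrow,\downarrow}_T:=\bar X^{\uparrow,\downarrow}_T$, $X^{\uparrow,\downarrow}_{0-}:=0$, and $X:=0$ off $\Omega_0$. By construction $X^{\uparrow}$ and $X^{\downarrow}$ are nondecreasing, right-continuous and adapted (each $\bar X^{\uparrow,\downarrow}_s$ is $\cF_s$-measurable as an almost sure limit, whence $X^{\uparrow,\downarrow}_t$ is $\cF_{t+}=\cF_t$-measurable). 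For $X\in\cX$ it remains to check predictability: since $\bar X^{\uparrow}_s=\liminf_n\tilde X^{\uparrow,n}_s$ on $\Omega_0$ and $t\mapsto\liminf_n\tilde X^{\uparrow,n}_t$ is a nondecreasing predictable process (a countable lower envelope of predictable processes), $X^{\uparrow}$ agrees on $\Omega_0$ with the right-continuous regularisation of a predictable nondecreasing process, which one argues is again predictable; similarly for $X^{\downarrow}$. Finally, if $\Delta X^{\uparrow,\downarrow}_t(\omega)=0$, then squeezing $\tilde X^{\uparrow,\downarrow,n}_t(\omega)$ between its values at points of $D$ slightly below and slightly above $t$ — whose limits along $D$ are $\bar X^{\uparrow,\downarrow}$-values that pinch $X^{\uparrow,\downarrow}_t(\omega)$ arbitrarily tightly at such a continuity point — gives \eqref{lem:compactness:conv}; at $t=T$ this holds by definition.

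The step I expect to be the main obstacle is the predictability bookkeeping, i.e.\ verifying that passing from the predictable convex combinations $\tilde X^n$ to the right-continuous limit $X$ keeps the limit inside $\cX$, rather than the analytic core (the $L^0$-Komlós extraction and the routine monotone-limit manipulations, which are standard).
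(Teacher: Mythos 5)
The paper does not prove this lemma at all: it is quoted verbatim from \citet{Guasoni:02}, Lemma~3.4, so you are in effect re-proving the cited result. The analytic core of your sketch is the standard argument and is sound: domination of $X^{\uparrow,n}_t, X^{\downarrow,n}_t$ by the terminal variation, a Delbaen--Schachermayer/Koml\'os-type extraction of forward convex combinations at each point of a countable dense set $D\ni T$, a diagonal pass producing one cofinal sequence $(\tilde X^n)$, monotone regularisation of the limits along $D$, and the squeeze $\bar X_{s_1}\le\liminf_n\tilde X^n_t\le\limsup_n\tilde X^n_t\le\bar X_{s_2}$ at continuity points. (Minor quibble: your construction yields the convergence only on a set of full measure, not literally for every $\omega$ as the statement is phrased, but that is a harmless imprecision of the restatement.)

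The genuine gap is exactly the step you flag, and your proposed patch for it is false: the right-continuous regularisation of a predictable nondecreasing process need not be predictable. Take $\tau$ totally inaccessible (e.g.\ the first jump of a Poisson process in a filtration supporting one); then $A_t=1_{(\tau,T]}(t)$ is predictable (left-continuous and adapted), but its right-continuous regularisation $1_{[\tau,T]}(t)$ is not, since its graph contains $[\![\tau]\!]$, which is not a predictable set. Worse, this is not merely a defect of your argument: with $X^{\uparrow,n}=1_{[\tau+1/n,\,T]}$, $X^{\downarrow,n}\equiv 0$ (each predictable, since $\tau+1/n$ is a predictable time), every sequence of cofinal convex combinations converges pointwise to $1_{(\tau,T]}$, and the convergence requirement in~\eqref{lem:compactness:conv} then forces any right-continuous limit to equal $1_{[\tau,T]}$, which is not predictable. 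So in a general filtration the predictability of the limit cannot be recovered by regularisation alone; one must either invoke the cited lemma in a setting where optional and predictable processes coincide (as in the Brownian filtration actually used in Section~\ref{sec:example}), relax ``predictable'' in $\cX$ to ``adapted'', or formulate the limit/convergence so that the left-continuous (hence predictable) version suffices. Your self-described ``bookkeeping'' is therefore the substantive mathematical point, and as written the proposal does not close it.
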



Another important ingredient is provided by the continuity of the
liquidation wealth $V_T(X)$ in $X \in \cX$ given
in~\eqref{eq:liquidwealth}.

\begin{Lemma} \label{lem:semicontinuity} 
  Let $T >0$ and let $(X^n)_{n \geq 1} \subset \cX$ be a sequence of
  strategies with the same initial endowment
  $(\xi^X_{0-},\varphi^X_{0-}) = (\xi_0,\varphi_0)$ such that
  $X^n \rightarrow X \in \cX$ weakly on $[0,T]$ on all of $\Omega$. Then it holds that
  \begin{equation*}
  \lim_{n \rightarrow \infty} V_T(X^n) = V_T(X) \quad
  \textrm{pointwise for all } \omega \in \Omega.
  \end{equation*}
\end{Lemma}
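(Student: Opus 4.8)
\emph{Proof proposal.} The plan is to argue pathwise, fixing $\omega\in\Omega$ throughout (the stochastic integral being understood up to evanescence), and to exploit the decomposition~\eqref{eq:liquidwealth} of Lemma~\ref{lem:propliquidwealth}. Since all the $X^n$ share the initial endowment $(\xi_0,\varphi_0)$, the quantity $V_{0-}(X^n)+L_{0-}(X^n)$ does not depend on $n$, so it suffices to show that $\int_0^T\varphi^{X^n}_s\,dP_s\to\int_0^T\varphi^{X}_s\,dP_s$ and $L_T(X^n)\to L_T(X)$. The first preparatory step is to upgrade the convergence~\eqref{lem:compactness:conv} to weak convergence of the associated Stieltjes measures: let $\mu^{\uparrow,n},\mu^{\uparrow}$ be the finite Borel measures on $[0,T]$ with $\mu^{\uparrow,n}([0,t])=X^{\uparrow,n}_t$, $\mu^{\uparrow}([0,t])=X^{\uparrow}_t$ (and similarly $\mu^{\downarrow,n},\mu^{\downarrow}$). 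The hypothesis gives $\mu^{\uparrow,n}([0,t])\to\mu^{\uparrow}([0,t])$ for every $t$ in the dense set $\{\Delta X^{\uparrow}=0\}\cup\{T\}$; in particular the total masses $X^{\uparrow,n}_T+X^{\downarrow,n}_T$ converge and hence are bounded by some $C=C(\omega)<\infty$. A routine tightness-and-identification argument --- the families are tight by this mass bound, $[0,t]\subset[0,T]$ has topological boundary $\{t\}$ and is therefore a continuity set of any measure not charging $t$, and $t\mapsto\nu([0,t])$ is right-continuous --- then shows $\mu^{\uparrow,n}\to\mu^{\uparrow}$ and $\mu^{\downarrow,n}\to\mu^{\downarrow}$ weakly on $[0,T]$. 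Consequently $\int_{[0,T]}f\,d\mu^{\uparrow,n}\to\int_{[0,T]}f\,d\mu^{\uparrow}$ for every $f\in C([0,T])$, and, whenever $s$ is not an atom of $\mu^{\uparrow}+\mu^{\downarrow}$, also $\int_{[0,s]}f\,d\mu^{\uparrow,n}\to\int_{[0,s]}f\,d\mu^{\uparrow}$, since then $f\setind{[0,s]}$ is $\mu^{\uparrow}$-a.e.\ continuous (and likewise for $\downarrow$).

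For the stochastic integral I would pass to a pathwise Stieltjes integral by integration by parts. Since $\varphi^{X^n}=\varphi_0+X^{\uparrow,n}-X^{\downarrow,n}$ has finite variation and $P$ is continuous (so that $[\varphi^{X^n},P]_T=0$ and $P_{s-}=P_s$),
\[
\int_0^T\varphi^{X^n}_s\,dP_s
=\varphi^{X^n}_T P_T-\varphi_0 P_{0-}-\int_{[0,T]}P_s\,d\varphi^{X^n}_s
=\varphi^{X^n}_T P_T-\varphi_0 P_{0-}-\int_{[0,T]}P_s\,\big(d\mu^{\uparrow,n}-d\mu^{\downarrow,n}\big).
\]
Here $\varphi^{X^n}_T\to\varphi^{X}_T$ because $T$ lies in the convergence set, and the integral against $d\mu^{\uparrow,n}-d\mu^{\downarrow,n}$ converges to the corresponding one for $X$ by the first step applied to the continuous function $s\mapsto P_s(\omega)$; the same identity for $X$ then yields $\int_0^T\varphi^{X^n}_s\,dP_s\to\int_0^T\varphi^{X}_s\,dP_s$.

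It remains to show $L_T(X^n)\to L_T(X)$, for which I would pass to the limit in each summand of~\eqref{eq:liquiditycosts}. From~\eqref{eq:spreadsolution} with $s=0$ one has $\zeta^{X^n}_t=e^{-\kappa t}\big(\zeta_0+\eta\int_{[0,t]}e^{\kappa u}\,d(\mu^{\uparrow,n}+\mu^{\downarrow,n})\big)$, whence $0\le\zeta^{X^n}_t\le\zeta_0+\eta e^{\kappa T}C$ uniformly in $n$ and $t$, while $\zeta^{X^n}_T\to\zeta^{X}_T$ (full interval, continuous integrand) and, by the first step, $\zeta^{X^n}_s\to\zeta^{X}_s$ for Lebesgue-a.e.\ $s\in[0,T]$. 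Together with $\varphi^{X^n}_T\to\varphi^{X}_T$ and $(\varphi^{X^n}_{0-})^2=\varphi_0^2$, this makes the first three summands of~\eqref{eq:liquiditycosts} converge; the term $\tfrac12\int_{[0,T]}e^{-\kappa s}\zeta_0\,d(\mu^{\uparrow,n}+\mu^{\downarrow,n})$ converges by the first step; and for $\tfrac{\kappa}{2\eta}\int_0^T(\zeta^{X^n}_{s-}-e^{-\kappa s}\zeta_0)^2\,ds$ I would invoke dominated convergence, using that $\zeta^{X^n}_{s-}=\zeta^{X^n}_s$ for a.e.\ $s$, that the integrands converge a.e., and that they are bounded by the constant $(2\zeta_0+\eta e^{\kappa T}C)^2$. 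Summing up gives $L_T(X^n)\to L_T(X)$, and combined with the previous paragraph and~\eqref{eq:liquidwealth} this proves the claim.

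The proof has no deep obstacle, but the step requiring the most care is the first one: the mode of convergence in~\eqref{lem:compactness:conv} only provides pointwise convergence of the cumulative trades at continuity points of the limit (plus at $T$), and one must argue --- via tightness and the continuity-set characterization of weak convergence, being careful about possible atoms at $0$ and at jump times --- that this genuinely forces weak convergence of the Stieltjes measures, so that pathwise Stieltjes integrals against continuous integrands pass to the limit. The other point worth flagging is the use of integration by parts in the second paragraph, which reduces the stochastic integral to exactly such a pathwise Stieltjes integral and is legitimate precisely because $P$ is continuous.
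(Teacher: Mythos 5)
Your proposal is correct and follows essentially the same route as the paper's proof: pathwise use of the decomposition~\eqref{eq:liquidwealth}, integration by parts (legitimate by continuity of $P$) to handle the stochastic integral, and term-by-term convergence of~\eqref{eq:liquiditycosts} via the spread representation~\eqref{eq:spreadsolution}, uniform boundedness and dominated convergence. The only difference is that you spell out the upgrade from convergence at continuity points (plus $T$) to weak convergence of the Stieltjes measures, a step the paper treats as implicit in the meaning of ``weakly on $[0,T]$''.
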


As a consequence, due to convexity of the liquidity cost functional
$L_T(X)$ in $X \in \cX$ by virtue of Lemma \ref{lem:propliquidwealth},
we obtain the following existence and uniqueness result for the
optimization problem in \eqref{eq:utilityproblem}.

\begin{Theorem} \label{thm:existence} 
  There exists a unique strategy
  $\hat{X}=(\hat{X}^{\uparrow},\hat{X}^{\downarrow}) \in \cX$ such
  that $\EE u(V_T(\hat{X})) \geq \EE u(V_T(X))$ for all strategies
  $X=(X^{\uparrow},X^{\downarrow}) \in \cX$.
\end{Theorem}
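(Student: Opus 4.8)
The plan is to obtain existence via the standard direct method in the calculus of variations, combining the convex compactness result (Lemma~\ref{lem:compactness}), the continuity of terminal liquidation wealth (Lemma~\ref{lem:semicontinuity}), and concavity of~$u$; uniqueness then follows from strict concavity together with strict convexity of the cost functional on the relevant set. First I would let $v^* \set \sup_{X \in \cX} \EE u(V_T(X))$ and pick a maximizing sequence $(X^n)_{n \geq 1} \subset \cX$. To invoke Lemma~\ref{lem:compactness} I must verify that $\conv(\{X^{\uparrow,n}_T + X^{\downarrow,n}_T : n \geq 1\})$ is bounded in $L^0$. This is where the bound~\eqref{eq:liquiditycostsbound} does the work: from it, $L_T(X) \geq \frac{\eta}{4} e^{-2\kappa T}(X^{\uparrow}_T + X^{\downarrow}_T)^2$, so by~\eqref{eq:liquidwealth},
\begin{equation*}
  V_T(X) \leq V_0 + L_0 + \int_0^T \varphi^X_t \, dP_t - \frac{\eta}{4} e^{-2\kappa T} (X^{\uparrow}_T + X^{\downarrow}_T)^2.
\end{equation*}
Since $u$ is bounded from above, along the maximizing sequence $\EE u(V_T(X^n))$ is bounded below (eventually $\geq v^* - 1$), and this forces the total variation $X^{\uparrow,n}_T + X^{\downarrow,n}_T$ to be ``tight'' — concretely, one controls $\int_0^T \varphi^{X^n}_t \, dP_t$ by noting $\varphi^{X^n}_t = \varphi_0 + X^{\uparrow,n}_t - X^{\downarrow,n}_t$ is dominated by $|\varphi_0| + X^{\uparrow,n}_T + X^{\downarrow,n}_T$, so the quadratic penalty dominates and the $L^0$-boundedness of convex combinations of the terminal variations follows. (The same argument already underlies the well-posedness of~\eqref{eq:utilityproblem}.)

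Next I would apply Lemma~\ref{lem:compactness} to extract $\hat X \in \cX$ and cofinal convex combinations $\tilde X^n \in \conv(X^n, X^{n+1}, \dots)$ converging to $\hat X$ weakly on $[0,T]$ in the sense of~\eqref{lem:compactness:conv}. By concavity of $u$ and of the map $X \mapsto V_T(X)$ being affine in $\int_0^T \varphi^X \, dP$ minus a convex functional — more precisely, since $L_T$ is convex in $X$, the functional $X \mapsto \EE u(V_T(X))$ is concave — each $\tilde X^n$ satisfies $\EE u(V_T(\tilde X^n)) \geq \min_{m \geq n} \EE u(V_T(X^m)) \to v^*$, so $(\tilde X^n)$ is again maximizing. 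Now Lemma~\ref{lem:semicontinuity} gives $V_T(\tilde X^n) \to V_T(\hat X)$ pointwise on $\Omega$; continuity of $u$ yields $u(V_T(\tilde X^n)) \to u(V_T(\hat X))$ pointwise. To pass the limit under the expectation I would use Fatou's lemma in the form suited to upper bounds: since $u$ is bounded above by some constant $C$, the sequence $C - u(V_T(\tilde X^n)) \geq 0$ permits $\EE u(V_T(\hat X)) \geq \limsup_n \EE u(V_T(\tilde X^n)) = v^*$, hence $\hat X$ is optimal and in particular $v^* < \infty$.

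For uniqueness, suppose $X_1, X_2 \in \cX$ are both optimal and set $\bar X \set \tfrac12(X_1 + X_2) \in \cX$. Since $u$ is strictly concave and $X \mapsto V_T(X)$ is concave (affine term plus $-L_T$ with $L_T$ convex), Jensen gives $\EE u(V_T(\bar X)) \geq \tfrac12 \EE u(V_T(X_1)) + \tfrac12 \EE u(V_T(X_2)) = v^*$, with equality only if $V_T(X_1) = V_T(X_2)$ $\P$-a.s.; combined with optimality of $\bar X$ this forces $V_T(X_1) = V_T(X_2)$ a.s. It remains to deduce $X_1 = X_2$ from equality of terminal wealths. Here I would again lean on~\eqref{eq:liquiditycostsbound}: strict convexity of $L_T$ in the combination $X^{\uparrow}_{\cdot} + X^{\downarrow}_{\cdot}$ (via the $e^{-2\kappa s}(X^{\uparrow}_s + X^{\downarrow}_s)^2$ integral term and the terminal quadratic) together with the identity $V_T(\bar X) = \tfrac12 V_T(X_1) + \tfrac12 V_T(X_2)$ a.s.\ (which must hold since all three equal $v^*$-attaining wealths along the concavity chain) forces $X^{\uparrow,1}_t + X^{\downarrow,1}_t = X^{\uparrow,2}_t + X^{\downarrow,2}_t$ for all $t$, and then the a.s.\ equality of the book values $\int_0^T \varphi^X \, dP$ (equivalently of $\varphi^{X_1}_t - \varphi^{X_2}_t = X^{\uparrow,1}_t - X^{\downarrow,1}_t - X^{\uparrow,2}_t + X^{\downarrow,2}_t$ against the driving $P$) pins down the difference $X^{\uparrow}_t - X^{\downarrow}_t$ as well, yielding $X_1 = X_2$ after accounting for the minimal-variation (no simultaneous buy/sell) normalization.

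The main obstacle I anticipate is this last uniqueness step: turning ``$V_T(X_1) = V_T(X_2)$ a.s.'' into ``$X_1 = X_2$'' requires carefully exploiting the strict-convexity structure of $L_T$ in the right variables and separating the contribution of the stochastic integral $\int_0^T \varphi^X \, dP$ — which is only affine, not strictly convex — so one genuinely needs the quadratic penalty in~\eqref{eq:liquiditycostsbound} to be strictly convex in $X^{\uparrow}+X^{\downarrow}$ and a nondegeneracy property of $P$ (e.g.\ that $P$ is not a.s.\ constant) to recover $X^{\uparrow}-X^{\downarrow}$. The existence half, by contrast, is essentially a routine assembly of the three lemmas once the $L^0$-boundedness hypothesis of Lemma~\ref{lem:compactness} has been checked.
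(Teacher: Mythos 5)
Your existence argument is essentially the paper's: maximizing sequence, $L^0$-boundedness of the terminal variations via the quadratic lower bound \eqref{eq:liquiditycostsbound}, Lemma~\ref{lem:compactness}, concavity and monotonicity of $u$ plus convexity of $L_T$ to keep the convex combinations maximizing, then Lemma~\ref{lem:semicontinuity} and reverse Fatou. Two details you gloss over are handled explicitly in the paper: (i) Lemma~\ref{lem:compactness} needs boundedness of the \emph{convex hull} of $\{X^{\uparrow,n}_T+X^{\downarrow,n}_T\}$, and the paper (Lemma~\ref{lem:levelsets}) gets this by working in the level set $\cL_0=\{X:\EE u(V_T(X))\ge \EE u(V_T(0))\}$, which is convex because $u$ is concave and increasing and $L_T$ is convex, so the hull coincides with the set itself; (ii) the pathwise control of $\int_0^T\varphi^X_t\,dP_t$ is not obtained by bounding the integrand (that alone does not bound a stochastic integral $\omega$-wise) but by integration by parts, yielding $|\int_0^T\varphi^X_t\,dP_t|\le 2(|\varphi^X_{0-}|+X^\uparrow_T+X^\downarrow_T)P^*_T$, after which the quadratic penalty dominates. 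Both points are repairable and your plan is the paper's plan.

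The genuine gap is in uniqueness. Strict concavity of $u$ together with concavity of $X\mapsto V_T(X)$ gives, as you say, $V_T(X_1)=V_T(X_2)$ $\P$-a.s.\ for two optimizers, and this is all the paper's one-line argument invokes. Your attempt to upgrade this to $X_1=X_2$ does not go through as written: recovering $X^\uparrow-X^\downarrow$ from the a.s.\ equality of the stochastic integrals requires a nondegeneracy property of $P$, which is nowhere assumed (Section~\ref{sec:model} only takes $P$ to be a continuous semimartingale), and you additionally appeal to a ``no simultaneous buy/sell'' normalization that is not part of the definition of $\cX$. You flag this step yourself and leave it open, so the uniqueness half of the statement is not established by the proposal. (Incidentally, equality of the terminal positions does follow without any nondegeneracy, because $x\mapsto(\eta|x|+S)^2$ is strictly convex, and equality of the sums $X^\uparrow+X^\downarrow$ follows from the strictly convex spread terms, exactly as you argue; the delicate object is the intermediate path of $\varphi^X$, which enters $V_T$ only through the linear term $\int_0^T\varphi^X_t\,dP_t$. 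Your proposal makes explicit---and then does not resolve---precisely the step that the paper's brief uniqueness remark leaves implicit.)
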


\begin{proof}
  Consider a maximizing sequence $(X^n)_{n \geq 1} \subset \cX$ such
  that
  \begin{equation*}
    u^* \set \sup_{X \in \cX} \EE u(V_T(X))
    = \lim_{n \rightarrow \infty} \EE u(V_T(X^n)) \in (-\infty,u(\infty)).
  \end{equation*}
  We can assume without loss of generality that the sequence
  $(X^n)_{n \geq 1}$ belongs to the level-set
  $\cL_0 := \{ X \in \cX : \EE u(V_T(X)) \geq \EE u(V_T(0)) = u(V_0 +
  L_0) \}$. Moreover, due to Lemma \ref{lem:levelsets} below, it holds
  that $\conv(\{ X^{\uparrow}_T + X^{\downarrow}_T : X \in \cL_0 \})$ is
  $L^0(\Omega,\cF,\PP)$-bounded. Hence, by virtue of the compactness
  result in Lemma~\ref{lem:compactness}, there exists a strategy
  $\hat{X} \in \cX$ and a sequence
  $(\tilde{X}^n)_{n \geq 1} \subset \cX$ of convex combinations
  $\tilde{X}^n \in \conv(X^n, X^{n+1}, \ldots)$ such that a.s.
  $\tilde{X}^n \rightarrow \hat{X}$ weakly on $[0,T]$ for $n \uparrow \infty$.
  We claim that $\hat{X}$ is the optimal solution to problem
  \eqref{eq:utilityproblem}. Indeed, since the liquidity costs are
  convex, $(\tilde{X}^n)_{n \geq 1}$ is again a maximizing
  sequence. Specifically, given a finite number of strictly positive weights
  $(\lambda^n_m)_{m \geq n}$ of $\tilde{X}^n$, we have
  \begin{equation*}
    u(V_T(\tilde{X}^n)) \geq \sum_{m \geq n} \lambda_m^n u(V_T(X^m))
  \end{equation*}
  where we also used monotonicity and concavity of $u$. Taking
  expectations and passing to the limit in the above inequality yields
  $\lim_{n \rightarrow \infty} \EE u(V_T(\tilde{X}^n)) \geq
  u^*$. Moreover, by continuity of the liquidation wealth provided
  in~Lemma~\ref{lem:semicontinuity} and Fatou's Lemma we obtain
  \begin{equation*}
    u^* \geq \EE u(V_T(\hat{X})) \geq 
    \EE \limsup_{n \rightarrow \infty} u(V_T(\tilde{X}^n))
    \geq \limsup_{n \rightarrow \infty} \EE u(V_T(\tilde{X}^n)) \geq u^*.
  \end{equation*}
  Uniqueness of the optimizer $\hat{X}$ follows from strict concavity
  of the utility function $u$ and again convexity of the liquidity
  costs.
\end{proof}


\section{Illiquid Bachelier model with exponential utility}
\label{sec:example}

Let us investigate the utility maximization problem from
terminal liquidation wealth as formulated in \eqref{eq:utilityproblem}
in the specific case when market uncertainty $dP_t$ in our price
impact model~\eqref{eq:coupledBidAsk} is generated by a Brownian
motion with drift $\mu > 0$ and volatility $\sigma > 0$. That is, we
assume that the unaffected price process $(P_t)_{t \geq 0}$ is given
by
\begin{equation} \label{eq:BachelierP}
  P_{0-} = \frac{1}{2}\left(A_0+B_0\right), \quad dP_t = \mu dt + \sigma dW_t \quad (t \geq 0)
\end{equation}
where $(W_t)_{t \geq 0}$ denotes a standard Brownian motion on the
given filtered probability space $(\Omega,\cF,(\cF_t)_{t \geq 0},\PP)$.
In addition, we assume that the inverstor's preferences are prescribed
by an exponential utility function 
\begin{equation*}
u(x)=-e^{-\alpha x} \quad (x \in \mathbb{R})
\end{equation*}
with constant absolute risk aversion parameter $\alpha > 0$. In this
setup, the optimization problem in~\eqref{eq:utilityproblem} becomes
\begin{equation}
  \label{eq:optprob}
  \EE \bigg[ - \exp\Big\{ -\alpha \Big( \mu \int_0^T \varphi^X_t dt 
  + \sigma \int_0^T \varphi^X_t dW_t - L_T(X) \Big) \Big\} \bigg] 
  \rightarrow \max_{X \in \cX}.
\end{equation}
Note that for exponential utility, the
optimal strategy in~\eqref{eq:optprob} does not depend on the
investor's initial frictionless wealth $V_0+L_0$. By virtue of
Theorem~\ref{thm:existence}, there exists a unique optimal solution to
the maximization problem in \eqref{eq:optprob} for any time horizon
$T>0$, initial position $\varphi_0 \in \mathbb{R}$ in the risky asset
and any initial bid-ask spread $\zeta_0 \geq 0$

\begin{Remark}[Frictionless case]
  It is well known in the literature that in the frictionless case
  with $\eta = \zeta_0 = 0$, i.e., $A^X = B^X = P$
  in~\eqref{eq:coupledBidAsk} and $L_T(X) = 0$
  in~\eqref{eq:liquiditycosts} for any $X \in \cX$, the optimal
  strategy $\hat{X}^0=(\hat{X}^{0,\uparrow}, \hat{X}^{0,\downarrow})$
  to problem~\eqref{eq:optprob} (with initial position $\varphi_0=0$)
  is simply a deterministic buy-and-hold-strategy given by
  \begin{equation*}
    d\hat{X}_t^{0,\uparrow}=\frac{\mu}{\alpha \sigma^2} \delta_0(dt) 
    \quad \textrm{and} \quad
    d\hat{X}_t^{0, \downarrow}=\frac{\mu}{\alpha \sigma^2} \delta_T(dt) 
    \;\; \text{on } [0,T].
  \end{equation*}
  Here, $\delta_0$ and $\delta_T$ denote the Dirac measure in $0$ and
  $T$, respectively. Put differently, the optimal frictionless share
  holdings $\varphi^0$ in the risky asset are constant and given by
  the so-called \emph{Merton portfolio}
  \begin{equation} \label{eq:Merton} \varphi_t^0 \set
    \frac{\mu}{\alpha\sigma^2} \quad (0 \leq t \leq T)
  \end{equation}
  which is acquired at time $0$ and unwound at time $T$ with, respectively, an
  initial and a final block trade.
\end{Remark}

When taking into account illiquidity frictions as in our setup, that
is, price impact induced by finite market depth as well as market
tightness imposed by the bid-ask spread, it is intuitively sensible to
expect the following: Instead of directly implementing the desired
frictionless Merton position in~\eqref{eq:Merton}, the optimal
frictional portfolio for problem~\eqref{eq:optprob} will gradually
trade towards the latter. In fact, in the presence of price impact
$\eta>0$, it turns out that problem~\eqref{eq:optprob} readily
translates into a deterministic \emph{optimal tracking problem} of the
frictionless optimal portfolio position $\varphi^0$.

\begin{Proposition} \label{prop:optproblem} 
  For given time horizon $T > 0$, initial position $\varphi_0 \in \RR$
  and initial spread $\zeta_0 \geq 0$, the optimal investment strategy
  of the maximization problem in~\eqref{eq:optprob} is deterministic
  and coincides with the minimizer of the convex cost functional
  \begin{equation}
    \label{eq:trackingproblem}
    J_T(X) \triangleq L_T(X) + \frac{\alpha \sigma^2}{2} \int_0^T
    \left( \varphi^X_t - \frac{\mu}{\alpha\sigma^2} \right)^2 dt 
    \rightarrow \min_{X \in \cX^d} 
    \end{equation}
    with
    $\cX^d \set \{ X \in \cX : X=(X^\uparrow,X^\downarrow) \textrm{
      deterministic} \}$. 
\end{Proposition}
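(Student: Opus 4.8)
The plan is to exploit the exponential structure of the objective together with the fact that $P$ is a Brownian motion with drift. First I would observe that for any fixed, possibly stochastic, admissible strategy $X \in \cX$, the stochastic integral $\int_0^T \varphi^X_t \, dW_t$ plays the role of a martingale-type term that can be isolated. The standard device is to complete the square in the exponent: writing the argument of $u$ (up to the constant $V_0 + L_0$, which we already noted is irrelevant for exponential utility) as $\mu \int_0^T \varphi^X_t \, dt + \sigma \int_0^T \varphi^X_t \, dW_t - L_T(X)$, one uses that
\begin{equation*}
  \exp\Big\{ -\alpha \sigma \int_0^T \varphi^X_t \, dW_t - \frac{\alpha^2 \sigma^2}{2} \int_0^T (\varphi^X_t)^2 \, dt \Big\}
\end{equation*}
is the Dol\'eans-Dade exponential of a local martingale. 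The key step is to argue it is in fact a true martingale (here the growth bounds on admissible $\varphi^X$ coming from the cost functional $L_T$ via \eqref{eq:liquiditycostsbound}, or a localization argument, must be invoked), so that its expectation equals one. Then taking $\EE$ of the objective and multiplying and dividing by this density, one converts $\EE[\,\cdot\,]$ into an expectation under the measure $\QQ$ with that density — but since $\varphi^X$ is being optimized over, it is cleaner to instead write
\begin{equation*}
  \EE\big[ -\exp\{-\alpha(\cdots)\} \big]
  = -\EE\Big[ \exp\Big\{ -\alpha \Big( \mu \int_0^T \varphi^X_t dt - L_T(X) - \tfrac{\alpha\sigma^2}{2}\int_0^T (\varphi^X_t)^2 dt \Big) \Big\} \mathcal{E}_T \Big],
\end{equation*}
where $\mathcal{E}_T$ is the unit-mean exponential martingale above.

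Next I would complete the square in $\varphi$ inside the remaining deterministic-looking exponent:
\begin{equation*}
  \mu \int_0^T \varphi^X_t \, dt - \frac{\alpha \sigma^2}{2} \int_0^T (\varphi^X_t)^2 \, dt
  = -\frac{\alpha \sigma^2}{2} \int_0^T \Big( \varphi^X_t - \frac{\mu}{\alpha \sigma^2} \Big)^2 dt + \frac{\mu^2 T}{2\alpha \sigma^2},
\end{equation*}
so that the exponent becomes $-\alpha\big(-J_T(X) + \text{const}\big)$ plus the $\mathcal{E}_T$ factor. At this point the objective is $-e^{-\alpha \cdot \text{const}} \cdot \EE[ e^{\alpha J_T(X)} \mathcal{E}_T ]$. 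Now here is the crucial structural point: because $L_T(X)$ and $\varphi^X$ depend on $X$, while $\mathcal{E}_T$ depends on $X$ only through $\varphi^X$ in the stochastic integral, one cannot immediately drop $\mathcal{E}_T$. The correct route is instead: for a \emph{deterministic} strategy $X \in \cX^d$, the process $\varphi^X$ is deterministic, $\mathcal{E}_T$ is still a genuine (deterministic-integrand) exponential martingale with mean one, $L_T(X)$ is deterministic, and $\EE[e^{\alpha J_T(X)}\mathcal{E}_T] = e^{\alpha J_T(X)} \EE[\mathcal{E}_T] = e^{\alpha J_T(X)}$. Hence over $\cX^d$ the objective equals $-e^{-\alpha\,\text{const}} e^{\alpha J_T(X)}$, which is maximized exactly by the minimizer of $J_T$ over $\cX^d$; this minimizer exists and is unique by the convexity of $L_T$ (Lemma~\ref{lem:propliquidwealth}) plus strict convexity of the tracking penalty, via the same convex-compactness argument as in Theorem~\ref{thm:existence}.

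The remaining — and main — obstacle is to show that this deterministic minimizer is in fact optimal among \emph{all} (possibly stochastic) $X \in \cX$, i.e. that the overall supremum in \eqref{eq:optprob} is attained in $\cX^d$. For this I would use a conditional Jensen / conditioning argument: given an arbitrary $X \in \cX$, one shows $\EE\, u(V_T(X)) \le \EE\, u(V_T(X^d))$ where $X^d$ is the "deterministic projection" obtained by replacing each $X^{\uparrow,\downarrow}_t$ by its expectation (or, more carefully, by conditioning the exponent on $\cF_0$ and using that the Brownian term averages out while the convex functionals $L_T$ and the tracking penalty only decrease under averaging by Jensen). Concretely, since $u$ is concave and the map $X \mapsto -L_T(X) - \tfrac{\alpha\sigma^2}{2}\int_0^T(\varphi^X_t - \tfrac{\mu}{\alpha\sigma^2})^2 dt$ is concave in $X$, while $\int_0^T \varphi^X_t dW_t$ contributes zero mean and only variance that penalizes the objective through the already-extracted $\mathcal{E}_T$ term, one concludes that randomization cannot help; the $\mathcal{E}_T$-reweighting is precisely what accounts for the risk created by a stochastic $\varphi^X$, and deterministic $\varphi^X$ avoids paying that price while still being free to realize any target trajectory. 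I would expect the technical work to lie in making the martingale property of $\mathcal{E}_T$ rigorous for general admissible $X$ (a localization through stopping times, combined with the quadratic lower bound \eqref{eq:liquiditycostsbound} on $L_T$ to control $\int_0^T (\varphi^X_t)^2 dt$ on the relevant event) and in justifying the interchange of the supremum with the conditional expectation. Once existence in $\cX$ is known from Theorem~\ref{thm:existence} and the inequality $\EE\,u(V_T(X)) \le -e^{-\alpha\,\mathrm{const}}e^{\alpha \inf_{\cX^d} J_T}$ is established, uniqueness of the minimizer of $J_T$ then pins down $\hat X$ as that minimizer, completing the proof.
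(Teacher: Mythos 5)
Your computational skeleton is essentially the paper's: isolate the stochastic exponential $\mathcal{E}\left(-\alpha\sigma\int_0^\cdot\varphi^X_t\,dW_t\right)_T$ as in \eqref{eq:densityPX}, complete the square to turn the remaining exponent into $J_T(X)$ up to the constant $\mu^2T/(2\alpha\sigma^2)$, and establish the true-martingale property for general admissible $X$ by combining the quadratic lower bound \eqref{eq:liquiditycostsbound} with an integrability/localization argument (the paper does this via Kazamaki's criterion, using exactly the estimate you anticipate). Your treatment of deterministic $X\in\cX^d$ is also fine: there $\mathcal{E}_T$ has unit mean and the objective reduces to $-e^{\alpha J_T(X)}$ times a constant.

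The gap is in the step that matters most, the reduction from $\cX$ to $\cX^d$. The paper's key observation, which you never state, is that after passing to the measure $\PP^X$ the remaining integrand $\tilde J_T(X)=L_T(X)-\mu\int_0^T\varphi^X_t\,dt+\frac{\alpha\sigma^2}{2}\int_0^T(\varphi^X_t)^2\,dt$ is a \emph{pathwise} functional of the trajectory of $X$ alone: for each $\omega$ the realized path $X(\omega)$ is itself an element of $\cX^d$, hence $\tilde J_T(X)(\omega)\ge\inf_{Y\in\cX^d}\tilde J_T(Y)$ almost surely, and therefore $\EE\,u(V_T(X))=\EE_{\PP^X}\!\left[-e^{\alpha\tilde J_T(X)}\right]\le -e^{\alpha\inf_{\cX^d}\tilde J_T}$ for \emph{every} admissible $X$, with equality exactly at the deterministic minimizer; no projection or Jensen step is needed. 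Your proposed substitute — replace $X$ by $X^d=\EE X$ and argue that "the Brownian term averages out" while $L_T$ and the tracking penalty decrease under averaging — does not deliver $\EE\,u(V_T(X))\le\EE\,u(V_T(X^d))$: plain Jensen under $\PP$ only gives $\EE\,u(V_T(X))\le u(\EE V_T(X))$, a bound that ignores that the deterministic competitor still bears the Brownian risk $\sigma\int_0^T\varphi^{X^d}_t\,dW_t$ and has expected utility strictly below $u$ of its mean, so the two sides are not comparable this way; the variance penalty $\frac{\alpha\sigma^2}{2}\int_0^T(\varphi^X_t)^2dt$ quantifies precisely this risk and cannot be discarded by averaging. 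If instead you apply Jensen under the changed measure $\PP^X$ (where it would be legitimate, since $e^{\alpha\cdot}$ is convex and $\tilde J_T$ is convex), the resulting deterministic competitor is $\EE_{\PP^X}[X]\neq\EE_{\PP}[X]$, not the projection you defined. As written, the optimality of the deterministic minimizer over all of $\cX$ — the main content of the proposition — is not established; inserting the pathwise bound above repairs the argument and is exactly how the paper concludes in \eqref{p:prop:optproblem:3}.
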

\begin{proof}
  We give an argument similar to \citet{SchiedSchonTehranchi:10}, but
  extend it to also cover unbounded strategies. For notational convenience, let
  us define the cost functional
  \begin{equation*}
    \tilde{J}_T(X) \set L_T(X) - \mu
    \int_0^T \varphi^X_t dt + \frac{\alpha \sigma^2}{2} \int_0^T
    (\varphi^X_t)^2 dt = J_T(X) -
    \frac{\mu^2}{2\alpha\sigma^2}T 
  \end{equation*}
  for all $X \in \cX$ and let us set
  $\tilde{J}_T^* \set \inf_{X \in \cX^d} \tilde{J}_T(X)$. Next, let
  $X \in \cX$ be such that $\EE u(V_T)>-\infty$. We will
  argue below that for such $X$ the density
\begin{equation} \label{eq:densityPX}
  \begin{aligned}
    \frac{d\mathbb{P}^X}{d\mathbb{P}} & \set 
    \mathcal{E}\left( -\alpha \sigma \int_0^\cdot \varphi^X_t dW_t
    \right)_T \\ & =
    \exp \left( -\alpha \sigma \int_0^T \varphi^X_t dW_t -
      \frac{\alpha^2\sigma^2}{2} \int_0^T (\varphi^X_t)^2 dt \right)
  \end{aligned}
\end{equation}
  induces a probability measure on
  $(\Omega,\mathcal{F}_T)$. Then we can write
  \begin{equation} \label{p:prop:optproblem:3}
    \begin{aligned}
      & \EE[u(V_T(X))] \\
      &= \EE \left[ -\exp\left( -\alpha \int_0^T \varphi^X_t dP_t +
          \alpha
          L_T(X) \right) \right]  \\
      & = \EE_{\mathbb{P}^X} \left[ -\exp\left(\alpha L_T(X) -\alpha
          \mu \int_0^T \varphi^X_t dt + \frac{\alpha^2 \sigma^2}{2}
          \int_0^T
          (\varphi^X_t)^2 dt\right) \right]  \\
      & = \EE_{\mathbb{P}^X} \left[ -e^{\alpha \tilde{J}_T(X)} \right]
      \leq -e^{\alpha \tilde{ J}_T^*},
    \end{aligned}
  \end{equation}
  with equality holding true for the unique deterministic minimizer
  $X \in \cX^d$ of $\tilde{J}_T$. Thus, the maximizer of the
  right-hand side in~\eqref{p:prop:optproblem:3} over all admissible
  strategies $\cX$ which corresponds to our original problem
  in~\eqref{eq:optprob} is actually given by the deterministic
  strategy attaining the value~$\tilde{J}^*_T$. 

  It remains to verify that~\eqref{eq:densityPX} indeed defines a
  probability measure $\PP^X$ for $X \in \cX$ with $\EE u(V_T(X))>-\infty$,
  i.e., such that
  \begin{equation}\label{eq:Xintegrable}
   \EE \left[\exp\left(\alpha (L_T(X) - \int_0^T \varphi^X_t dP_t)\right)\right]<\infty.
  \end{equation}
  This will be accomplished by verifying Kazamaki's criterion for the process
  $M \set - \alpha \int_0^. \varphi^X_t \sigma dW_t$. To this end,
  observe first that we can assume without loss of generality that
  $\varphi_{0-}=\varphi_T=0$ and so, with
  $\|X\|_T \set X^{\uparrow}_T+X^{\downarrow}_T$ and $P^*_T \set
  \sup_{t \in [0,T]} |P_t|$, we can use~\eqref{eq:liquiditycostsbound}
  to estimate
  \begin{align*}
    L_T(X) -\int_0^T \varphi^X_t dP_t & = L_T(X) + \int_0^T P_t \,
             d\varphi^X_t \\
  & \geq c \|X\|_T^2 - P^*_T \|X\|_T \geq \frac{c}{2} \|X\|^2_T \text{
    on } {\{P^*_T\leq c\|X\|_T/2\}}
  \end{align*}
  for $c \set \eta e^{-2 \kappa T}/4$. With~\eqref{eq:Xintegrable} and
  the fact that $P^*_T \in L^2(\PP)$ it thus follows
  that $\|X\|_T \in L^2(\PP)$ which guarantees uniform integrability of
  $M$. Moreover, we have
  \begin{align*}
    \EE& \left[\exp\left(\frac{1}{2} M_T\right)\right]\\ 
       & = \EE \left[\exp\left(\frac{\alpha}{2}
         (L_T(X)-\int_0^T \varphi^X_t
         dP_t)\right)
         \exp\left(-\frac{\alpha}{2}(L_T(X)-\int_0^T
         \varphi^X_t \mu
         dt)\right)\right]\\
       &\leq \EE \left[\exp\left(\alpha
         (L_T(X)-\int_0^T \varphi^X_t
         dP_t)\right)\right]^{1/2} \\
& \hspace{15pt} \cdot \EE \left[\exp\left(-\alpha
         (L_T(X)-\int_0^T
         \varphi^X_t \mu
         dt)\right)\right]^{1/2} < \infty,
\end{align*}
which is finite because of~\eqref{eq:Xintegrable}
and~\eqref{eq:liquiditycostsbound}. It follows that $M$ indeed
satisfies Kazamaki's criterion.
\end{proof}

\begin{Remark} \label{rem:opttracking}
$\phantom{}$
\vspace{-.5em}
  \begin{enumerate}
  \item For deterministic strategies $X \in \cX^d$ the liquidation
    wealth $V_T(X)$ in \eqref{eq:liquidwealth} in the present illiquid
    Bachelier model is normally distributed. Hence, the maximization
    problem in \eqref{eq:optprob} and thus the minimization problem in
    \eqref{eq:trackingproblem} is equivalent to the problem of
    maximizing a mean-variance criterion given by
    \begin{align*}
      & \mathbb{E}[V_T(X)] -\frac{\alpha}{2} \text{var}(V_T(X)) \\
& = V_0 + L_0 + \mu
        \int_0^T \varphi^X_t dt - L_T(X) - \frac{\alpha \sigma^2}{2} \int_0^T
        (\varphi^X_t)^2 dt,
    \end{align*}
    cf. also the discussion in \citet{SchiedSchonTehranchi:10}.
  \item The minimization problem in \eqref{eq:trackingproblem} can be
    regarded as a deterministic optimal tracking problem of the
    frictionless Merton portfolio
    $\varphi^0 \equiv \mu/(\alpha\sigma^2)$ in the presence of trading
    costs measured by $L_T(\cdot)$. That is, the optimal
    strategy~$\hat{X}$ seeks to minimize both the squared deviation of
    its share holdings $\varphi^{\hat{X}}$ from the preferred constant
    position $\varphi^0$ of~\eqref{eq:Merton} as well as the
    incurred liquidity costs $L_T(\hat{X})$ which are levied on its
    trading activity $\hat{X}=(\hat{X}^\uparrow,\hat{X}^\downarrow)$
    due to market tightness and finite market depth. In addition,
    liquidation is costly in the current setup. Therefore, besides
    trading towards~$\varphi^0$, the optimizer also has to take into
    account unwinding the accrued position in the risky asset in an
    optimal manner when approaching terminal time $T$.
  \item The deterministic optimal tracking problem
    in~\eqref{eq:trackingproblem} is similiar to the stochastic
    tracking problem studied in~\citet{BankSonerVoss:17}
    (cf. also~\citet{BankVoss:18} for a more general
    framework). Therein, the authors investigate the problem of
    minimizing the $L^2(\mathbb{P} \otimes dt)$-distance of a
    portfolio process $\varphi^X$ from a given predictable stochastic
    target process $(\xi_t)_{0 \leq t \leq T}$ in the presence of
    temporary price impact as in~\citet{AlmgChr:01}. This means that
    investment strategies $\varphi^X$ are restricted to be absolutely
    continuous and quadratic costs are levied on the respective
    trading rates~$\dot\varphi^X$.  The process
    $(\xi_t)_{0 \leq t \leq T}$ represents, e.g., an optimal
    investment or hedging strategy adopted from a frictionless
    setting. In the current setup in~\eqref{eq:trackingproblem},
    liquidity costs $L_T(\cdot)$ are induced by market tightness and
    transient price impact \`a la~\citet{ObiWang:13} and strategies
    are allowed to be singular.
  \end{enumerate}
\end{Remark}


\subsection{First order optimality conditions}
\label{subsec:foc}

Since the objective functional $J_T(\cdot)$ of the minimization
problem in Proposition~\ref{prop:optproblem} is convex, tools from
convex analysis and calculus of variations can be employed to derive
a characterization of the optimal solution in terms of sufficient
first order conditions. Specifically, let us note that the convex
functional $J_T(\cdot)$ is supported on $\cX^d$ by the
infinite-dimensional \emph{buy-} and \emph{sell-subgradients} defined
as
\begin{align}
  {}^{\varrho}\nabla_t^{\uparrow} J_T(X) \set 
  & \int_t^T \left( \kappa e^{-\kappa (u-t)} \zeta_u^X 
    + \alpha \sigma^2 \left( \varphi^X_u - \frac{\mu}{\alpha\sigma^2}
    \right) \right) du \nonumber \\
  &  + \frac{1}{2} \left( \zeta_T^X + \eta \vert \varphi^X_T \vert
    \right) e^{-\kappa (T-t)} \label{eq:buysubgradient}  \\
  & + \frac{\eta}{2} \varphi^X_T + \frac{1}{2} \sign_\varrho(\varphi^X_T) \zeta^X_T
  \qquad (0 \leq t \leq T) \nonumber\\
  \intertext{and}
  {}^{\varrho}\nabla_t^{\downarrow} J_T(X) \set 
  & \int_t^T \left( \kappa e^{-\kappa (u-t)} \zeta_u^X 
    + \alpha \sigma^2 \left( \frac{\mu}{\alpha\sigma^2} -
    \varphi^X_u\right) \right) du \nonumber \\
  &  + \frac{1}{2} \left( \zeta_T^X + \eta \vert \varphi^X_T \vert
    \right) e^{-\kappa (T-t)} \label{eq:sellsubgradient}  \\
  & - \frac{\eta}{2} \varphi^X_T - \frac{1}{2} \sign_\varrho(\varphi^X_T) \zeta^X_T
  \qquad (0 \leq t \leq T) \nonumber
\end{align}
in the sense of Lemma \ref{lem:subgradients} below. 

\begin{Remark} \label{rem:sign} The map $x \mapsto \sign_\varrho(x)$
  appearing in the definition of the buy- and sell-subgradients in
  \eqref{eq:buysubgradient} and \eqref{eq:sellsubgradient} 
  represents the subgradient of the absolute value function
  $x \mapsto \vert x \vert$ (cf. proof of Lemma~\ref{lem:subgradients}
  in Section~\ref{sec:proofs}) and therefore allows for an arbitrary
  value $\sign_\varrho(0) \set \varrho \in [-1,1]$ when
  $\varphi^X_T = 0$. In this case the subgradients are actually
  set-valued. The dependence on the value $\varrho$ is indicated by
  the left-hand superscript in the operator symbols
  ${}^{\varrho}\nabla^{\uparrow}$
  and~${}^{\varrho}\nabla^{\downarrow}$. To alleviate notation, we
  will simply write $\nabla^{\uparrow}$, $\nabla^{\downarrow}$ and
  $\sign(\cdot)$ most of the time unless a specification of the
  value~$\varrho$ becomes necessary.
\end{Remark}

\begin{Lemma} \label{lem:subgradients}
  For any two strategies $X, Y \in \mathcal{X}^d$ with the same initial
  position $\varphi^{Y}_{0-}=\varphi^{X}_{0-}$ and initial spread
  $\zeta_0 \geq 0$ and for any $\varrho \in [-1,1]$, we have
  \begin{equation*}
    J_T(Y) - J_T(X) \geq \int_{[0,T]} {}^{\varrho}\nabla_t^{\uparrow}
    J_T(X) (dY_t^{\uparrow} - dX_t^{\uparrow}) 
    + \int_{[0,T]} {}^{\varrho}\nabla_t^{\downarrow} J_T(X)
    (dY_t^{\downarrow} - dX_t^{\downarrow})
  \end{equation*}
  with ${}^{\varrho}\nabla^{\uparrow} J_T(X)$ and
  ${}^{\varrho}\nabla^{\downarrow} J_T(X)$ as defined
  in~\eqref{eq:buysubgradient} and~\eqref{eq:sellsubgradient},
  respectively.
\end{Lemma}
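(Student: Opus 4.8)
My plan is to prove the inequality summand by summand. For each convex term appearing in $J_T$ I will invoke the elementary tangent inequality for convex functions, and then, using Fubini's theorem, recombine the resulting linear functionals of $Y-X$ into the two integrals against $d(Y^\uparrow-X^\uparrow)$ and $d(Y^\downarrow-X^\downarrow)$ on the right-hand side. To fix notation I will write $U\set Y^\uparrow-X^\uparrow$ and $D\set Y^\downarrow-X^\downarrow$; since $\varphi^Y_{0-}=\varphi^X_{0-}$ this gives $\varphi^Y_t-\varphi^X_t=U_t-D_t=\int_{[0,t]}(dU_s-dD_s)$, and by \eqref{eq:spreadsolution} the common initial spread cancels, so that with $\psi^X_t\set\zeta^X_t-e^{-\kappa t}\zeta_0=\eta\int_{[0,t]}e^{-\kappa(t-u)}(dX^\uparrow_u+dX^\downarrow_u)\ge 0$ one has $\psi^Y_t-\psi^X_t=\zeta^Y_t-\zeta^X_t=\eta\int_{[0,t]}e^{-\kappa(t-u)}(dU_u+dD_u)$. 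I may assume $J_T(X),J_T(Y)<\infty$; then the bound \eqref{eq:liquiditycostsbound} controls $X^\uparrow_T+X^\downarrow_T$, hence $\varphi^X$ and $\zeta^X$ are bounded on $[0,T]$, which will legitimise all the Fubini interchanges below.

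Next I will expand $J_T(Y)-J_T(X)$ via \eqref{eq:trackingproblem} and \eqref{eq:liquiditycosts}. The constant $\tfrac{\eta}{4}(\varphi^X_{0-})^2$ drops out; the linear summand $\tfrac12\int_{[0,T]}e^{-\kappa s}\zeta_0(dX^\uparrow_s+dX^\downarrow_s)$ contributes exactly $\tfrac12\int_{[0,T]}e^{-\kappa s}\zeta_0(dU_s+dD_s)$; and to the three quadratic summands $\tfrac{\alpha\sigma^2}{2}\int_0^T(\varphi^X_t-\tfrac{\mu}{\alpha\sigma^2})^2\,dt$, $\tfrac{1}{4\eta}(\eta|\varphi^X_T|+\psi^X_T)^2$ and $\tfrac{\kappa}{2\eta}\int_0^T(\psi^X_{s-})^2\,ds$ I will apply the pointwise estimate $b^2-a^2\ge 2a(b-a)$ in the variable $t$ (resp.\ $s$, resp.\ at $t=T$). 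In the two places where $|\varphi^X_T|$ occurs — inside $(\eta|\varphi^X_T|+\psi^X_T)^2$ and in $\tfrac12 e^{-\kappa T}\zeta_0|\varphi^X_T|$ — I will additionally use $|\varphi^Y_T|-|\varphi^X_T|\ge\sign_\varrho(\varphi^X_T)(\varphi^Y_T-\varphi^X_T)$, valid for every $\varrho\in[-1,1]$; this is precisely the point at which the set-valued choice $\sign_\varrho(0)=\varrho$ enters. These estimates yield $J_T(Y)-J_T(X)\ge\mathcal I$, where $\mathcal I$ is a finite sum of terms of the forms $\int_0^T h_t(\varphi^Y_t-\varphi^X_t)\,dt$, $\int_0^T h_t(\psi^Y_t-\psi^X_t)\,dt$, $c(\varphi^Y_T-\varphi^X_T)$, $c(\psi^Y_T-\psi^X_T)$ and $\int_{[0,T]}h_s(dU_s+dD_s)$.

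Then I will substitute the representations of $\varphi^Y-\varphi^X$ and $\psi^Y-\psi^X$ noted above and apply Fubini to each deterministic Lebesgue--Stieltjes integral — e.g.\ $\int_0^T h_t(\varphi^Y_t-\varphi^X_t)\,dt=\int_{[0,T]}\big(\int_s^T h_t\,dt\big)(dU_s-dD_s)$ and $\int_0^T h_t(\psi^Y_t-\psi^X_t)\,dt=\eta\int_{[0,T]}\big(\int_s^T h_t e^{-\kappa(t-s)}\,dt\big)(dU_s+dD_s)$ — so as to rewrite $\mathcal I=\int_{[0,T]}G^\uparrow_s\,dU_s+\int_{[0,T]}G^\downarrow_s\,dD_s$. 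It then remains to verify that $G^\uparrow={}^{\varrho}\nabla^\uparrow J_T(X)$ and $G^\downarrow={}^{\varrho}\nabla^\downarrow J_T(X)$, and I expect this last bookkeeping to be the main obstacle. Its two delicate features are: first, all ``bare'' contributions proportional to $e^{-\kappa t}\zeta_0$ — arising from the linear summand of $L_T$, from the $\psi^X_T$ inside the first quadratic summand, and from $\tfrac{\kappa}{2\eta}\int_0^T(\psi^X_{s-})^2\,ds$ — must be shown to cancel exactly, so that the subgradients depend on $\zeta_0$ only through $\zeta^X$; second, at the terminal time the identity $|\varphi^X_T|\,\sign_\varrho(\varphi^X_T)=\varphi^X_T$ must be invoked to convert the $|\varphi^X_T|$-contributions into the $\tfrac{\eta}{2}\varphi^X_T$, the $\tfrac12\sign_\varrho(\varphi^X_T)\zeta^X_T$ and the $\tfrac12(\zeta^X_T+\eta|\varphi^X_T|)e^{-\kappa(T-t)}$ terms of \eqref{eq:buysubgradient} and \eqref{eq:sellsubgradient}. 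Everything else reduces to the one-line convexity inequalities above and to routine Fubini interchanges justified by $J_T(X)<\infty$.
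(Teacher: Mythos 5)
Your proposal is correct and takes essentially the same route as the paper: there the lower bound for $J_T(Y)-J_T(X)$ is likewise obtained term by term from the convexity/sign-subgradient inequalities for the summands of $L_T$ and of the deviation functional (via the difference quotient of $\epsilon Y+(1-\epsilon)X$ and letting $\epsilon\downarrow 0$, rather than your direct tangent inequality, which is an immaterial difference), and the resulting linear terms are recombined by Fubini into integrals against $d(Y^{\uparrow}-X^{\uparrow})$ and $d(Y^{\downarrow}-X^{\downarrow})$, with exactly the $\zeta_0$-cancellation and the identity $|\varphi^X_T|\sign_\varrho(\varphi^X_T)=\varphi^X_T$ that you anticipate. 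The bookkeeping you defer does work out to the expressions \eqref{eq:buysubgradient} and \eqref{eq:sellsubgradient}, so there is no gap.
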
 

For any nondecreasing, right-continuous process $Z$ with
$Z_{0-} \set 0$, let us further define the set
\begin{equation} \label{def:actionset}
  \{ dZ > 0 \} \set \{t \in [0,T]: Z_{t-} < Z_u \text{ for all } u > t \}
\end{equation}
and observe that for any continuous $G=(G_t)_{0 \leq t \leq T}$ we
have
\begin{equation*}
\int_0^T G_t \, dZ_t = \int_{\{ dZ > 0 \}} G_t \, dZ_t.
\end{equation*}
Having at hand the subgradients in~\eqref{eq:buysubgradient}
and~\eqref{eq:sellsubgradient}, we can now formulate sufficient first
order optimality conditions for the minimization problem stated in
Proposition~\ref{prop:optproblem}.

\begin{Proposition}[First order conditions] \label{prop:foc} 
  The strategy $\hat{X}=(\hat{X}^{\uparrow},\hat{X}^{\downarrow})$ in
  $\cX^d$ solves the optimization problem
  in~\eqref{eq:trackingproblem} if the following conditions hold true:
  \begin{itemize}
  \item[(i)] $\nabla_t^{\uparrow} J_T(\hat{X}) \geq 0$ for all
    $t \in [0,T]$ with `=' on the set
    $\{ d\hat{X}^{\uparrow} > 0 \}$,
  \item[(ii)] $\nabla_t^{\downarrow} J_T(\hat{X}) \geq 0$ for all
    $t \in [0,T]$ with `=' on the set
    $\{ d\hat{X}^{\downarrow} > 0 \}$.
  \end{itemize}
  In case $\varphi^{\hat{X}}_T = 0$, the conditions in (i) and (ii)
  are meant to hold for ${}^{\varrho}\nabla^{\uparrow}$ and
  ${}^{\varrho}\nabla^{\downarrow}$ with some $\varrho \in [-1,1]$.
\end{Proposition}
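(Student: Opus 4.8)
The plan is to deduce this directly from the subgradient inequality of Lemma~\ref{lem:subgradients}. Let $\hat X \in \cX^d$ satisfy conditions (i) and (ii) (with the appropriate choice of $\varrho \in [-1,1]$ when $\varphi^{\hat X}_T = 0$), and let $Y \in \cX^d$ be an arbitrary competitor with the same initial position $\varphi^{Y}_{0-} = \varphi^{\hat X}_{0-} = \varphi_0$ and the same initial spread $\zeta_0$. Applying Lemma~\ref{lem:subgradients} with $X = \hat X$ gives
\[
  J_T(Y) - J_T(\hat X) \;\ge\; \int_{[0,T]} \nabla_t^{\uparrow} J_T(\hat X)\,(dY_t^{\uparrow} - d\hat X_t^{\uparrow})
  \;+\; \int_{[0,T]} \nabla_t^{\downarrow} J_T(\hat X)\,(dY_t^{\downarrow} - d\hat X_t^{\downarrow}),
\]
so it suffices to show that the right-hand side is nonnegative.

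First I would treat the ``buy'' integral; the ``sell'' integral is handled identically with $\uparrow$ replaced by $\downarrow$. Split it as
\[
  \int_{[0,T]} \nabla_t^{\uparrow} J_T(\hat X)\,dY_t^{\uparrow} \;-\; \int_{[0,T]} \nabla_t^{\uparrow} J_T(\hat X)\,d\hat X_t^{\uparrow}.
\]
For the first term, condition (i) says $\nabla_t^{\uparrow} J_T(\hat X) \ge 0$ for all $t\in[0,T]$, while $Y^{\uparrow}$ is nondecreasing, so $dY_t^{\uparrow}$ is a nonnegative measure and the integral is $\ge 0$. For the second term, I would use the observation recorded just before the proposition: since $t\mapsto \nabla_t^{\uparrow} J_T(\hat X)$ is continuous in $t$ (it is an integral of a continuous integrand plus continuous boundary terms — here I would invoke that $\varphi^{\hat X}$ is right-continuous of finite variation and the relevant expression depends continuously on $t$), we have
\[
  \int_{[0,T]} \nabla_t^{\uparrow} J_T(\hat X)\,d\hat X_t^{\uparrow}
  \;=\; \int_{\{d\hat X^{\uparrow}>0\}} \nabla_t^{\uparrow} J_T(\hat X)\,d\hat X_t^{\uparrow}
  \;=\; 0,
\]
because on the set $\{d\hat X^{\uparrow}>0\}$ condition (i) forces $\nabla_t^{\uparrow} J_T(\hat X) = 0$. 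Hence the buy integral is $\ge 0$; the same argument using condition (ii) shows the sell integral is $\ge 0$, and therefore $J_T(Y) \ge J_T(\hat X)$.

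Since $Y \in \cX^d$ was arbitrary (subject to matching the initial data, which is fixed throughout the minimization in~\eqref{eq:trackingproblem}), this proves $\hat X$ is a minimizer, with the understanding that in the degenerate case $\varphi^{\hat X}_T = 0$ one fixes once and for all the value $\varrho$ witnessing (i)--(ii) and runs the argument with the corresponding single-valued selection ${}^{\varrho}\nabla^{\uparrow}, {}^{\varrho}\nabla^{\downarrow}$, which is legitimate because Lemma~\ref{lem:subgradients} holds for every $\varrho \in [-1,1]$. The only genuinely non-mechanical point is the continuity in $t$ of the maps $t \mapsto {}^{\varrho}\nabla_t^{\uparrow} J_T(\hat X)$ and $t \mapsto {}^{\varrho}\nabla_t^{\downarrow} J_T(\hat X)$, needed to pass from integration over $[0,T]$ to integration over the action set $\{d\hat X^{\uparrow,\downarrow}>0\}$; I expect this to follow by inspection of the explicit formulas~\eqref{eq:buysubgradient}--\eqref{eq:sellsubgradient}, using that $u\mapsto\zeta^{\hat X}_u$ and $u\mapsto\varphi^{\hat X}_u$ are bounded on $[0,T]$ so the integrals in them are continuous (indeed Lipschitz) in the lower limit $t$, while the remaining terms depend on $t$ only through the smooth factor $e^{-\kappa(T-t)}$.
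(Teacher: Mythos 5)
Your proposal is correct and follows essentially the same route as the paper: apply Lemma~\ref{lem:subgradients} with $X=\hat X$, observe that the integrals against $dY^{\uparrow,\downarrow}$ are nonnegative by (i)--(ii) and monotonicity of $Y$, and that the integrals against $d\hat X^{\uparrow,\downarrow}$ vanish since the (continuous) subgradients are zero on the action sets $\{d\hat X^{\uparrow,\downarrow}>0\}$ -- exactly the observation the paper records in~\eqref{def:actionset} just before the proposition. Your extra care about fixing a single $\varrho$ when $\varphi^{\hat X}_T=0$ and checking continuity in $t$ of the subgradient maps matches the paper's implicit use of these facts.
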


\begin{proof}
  Assume that $\hat{X}=(\hat{X}^{\uparrow},\hat{X}^{\downarrow})$
  satisfies conditions (i) and (ii) (for some suitable
  $\varrho \in [-1,1]$ in case $\varphi^{\hat{X}}_T = 0$) and let
  $Y \in \cX^d$ be an arbitrary competing strategy with the same initial
  endowment $\varphi^Y_{0-} = \varphi^{\hat{X}}_{0-}$. Then, by virtue
  of Lemma~\ref{lem:subgradients} above, it holds that
  \begin{align*}
    J_T(Y) - J_T(\hat{X})  \geq 
    & \int_{[0,T]} {}^\varrho\nabla_t^{\uparrow} J_T(\hat{X}) dY_t^{\uparrow} 
      + \int_{[0,T]} {}^\varrho\nabla_t^{\downarrow} J_T(\hat{X}) dY_t^{\downarrow} \\
    & -\int_{[0,T]} {}^\varrho\nabla_t^{\uparrow} J_T(\hat{X}) d\hat{X}_t^{\uparrow} 
      - \int_{[0,T]} {}^\varrho\nabla_t^{\downarrow} J_T(\hat{X})
      d\hat{X}_t^{\downarrow} .
  \end{align*}
  By our assumptions (i) and (ii) the right-hand side is nonnegative
  which implies $J_T(Y) \geq J_T(\hat{X})$.
\end{proof}

\begin{Remark}
  In view of Lemma~\ref{lem:subgradients}, the quantities
  ${}^\varrho\nabla_t^{\uparrow} J_T(X)$ and
  ${}^\varrho\nabla_t^{\downarrow} J_T(X)$
  in~\eqref{eq:buysubgradient} and~\eqref{eq:sellsubgradient} can be
  regarded as (lower bounds for) the marginal costs which are incurred
  by an additional infinitesimal buy order and sell order at time~$t$,
  respectively, otherwise following strategy $X$. Hence, an optimal
  strategy $\hat{X}$ which satisfies the first order conditions in
  Proposition~\ref{prop:foc} acts so as to keep these additional
  marginal costs from intervention always nonnegative and only
  intervenes, i.e., buys or sells the risky asset, when the
  corresponding marginal costs
  ${}^\varrho\nabla_t^{\uparrow} J_T(\hat{X})$ or
  ${}^\varrho\nabla_t^{\downarrow} J_T(\hat{X})$ vanish.  In this
  regard, observe that, loosely speaking, the subgradients
  in~\eqref{eq:buysubgradient} and~\eqref{eq:sellsubgradient} at time
  $t$ can be interpreted as assessing for the future period $[t,T]$
  the trade-off between deviating from the target
  $\mu/(\alpha\sigma^2)$, the incurred spread $\zeta^X$ as well as the
  magnitude of the final position $\varphi^X_T$.
\end{Remark}

Due to market tightness, it is intuitively sensible to expect that an
optimal strategy satisfying the first order conditions in
Proposition~\ref{prop:foc} will never purchase and sell the risky
asset at the same time. In fact, this holds true in our setting and is
a direct consequence of the structure of the subgradients.

\begin{Lemma} \label{lem:nobuysell} For any strategy $X \in \cX^d $,
  $X \neq (0,0)$, we have
  $$
   \{ \nabla_.^{\uparrow} J_T(X) = 0\} \subset
   \{\nabla_.^{\downarrow} J_T(X) > 0\}
  \; \text{and} \;
   \{ \nabla_.^{\downarrow} J_T(X) = 0\} \subset
   \{\nabla_.^{\uparrow} J_T(X) > 0\}.
  $$
\end{Lemma}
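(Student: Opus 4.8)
The plan is to exploit that the \emph{sum} of the two subgradients in \eqref{eq:buysubgradient} and \eqref{eq:sellsubgradient} collapses to a manifestly nonnegative quantity. Adding the two formulas (for any common value $\varrho\in[-1,1]$), the tracking terms $\pm\alpha\sigma^2(\varphi^X_u-\mu/(\alpha\sigma^2))$ cancel under the integral, as do the boundary terms $\pm\tfrac{\eta}{2}\varphi^X_T$ and $\pm\tfrac12\sign_\varrho(\varphi^X_T)\zeta^X_T$, leaving
\[
  {}^\varrho\nabla^{\uparrow}_t J_T(X)+{}^\varrho\nabla^{\downarrow}_t J_T(X)
  = 2\int_t^T \kappa e^{-\kappa(u-t)}\zeta^X_u\,du
  + \bigl(\zeta^X_T+\eta\vert\varphi^X_T\vert\bigr)e^{-\kappa(T-t)}
  =: S_t(X),
\]
which in particular does not depend on $\varrho$. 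Since $\zeta^X_u\geq 0$ for every $u$ --- which is immediate from the explicit solution \eqref{eq:spreadsolution} together with $\zeta_0\geq 0$ and the monotonicity of $X^\uparrow,X^\downarrow$ --- and $\eta>0$, we get $S_t(X)\geq 0$ for all $t\in[0,T]$. Hence, if ${}^\varrho\nabla^{\uparrow}_t J_T(X)=0$ then ${}^\varrho\nabla^{\downarrow}_t J_T(X)=S_t(X)\geq 0$, so the first inclusion will follow once we establish the strict inequality $S_t(X)>0$; the second inclusion follows by the same argument with the roles of $\uparrow$ and $\downarrow$ interchanged.

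It remains to show that $S_t(X)=0$ forces $X=(0,0)$, which contradicts the standing hypothesis. From $S_t(X)=0$ and nonnegativity of each of its summands we read off $\zeta^X_u=0$ for Lebesgue-a.e.\ $u\in[t,T]$, $\zeta^X_T=0$, and $\varphi^X_T=0$. Right-continuity of $u\mapsto\zeta^X_u$ upgrades the a.e.\ statement to $\zeta^X_u=0$ for all $u\in[t,T)$, and together with $\zeta^X_T=0$ this gives $\zeta^X\equiv 0$ on $[t,T]$. By the spread dynamics \eqref{eq:spreaddynamics} this means $X^\uparrow+X^\downarrow$ is constant on $[t,T]$, and comparing jumps at $t$, $0=\zeta^X_t=\zeta^X_{t-}+\eta(\Delta X^\uparrow_t+\Delta X^\downarrow_t)$ with both summands nonnegative yields $\Delta X^\uparrow_t=\Delta X^\downarrow_t=0$ and $\zeta^X_{t-}=\zeta^X_t=0$. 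Plugging $\zeta^X_t=0$ into \eqref{eq:spreadsolution} with $s=0$ and using once more $\zeta_0\geq 0$, $e^{\kappa u}>0$, and the monotonicity of $X^\uparrow,X^\downarrow$ with $X^\uparrow_{0-}=X^\downarrow_{0-}=0$, forces $\zeta_0=0$ and $X^\uparrow_t=X^\downarrow_t=0$, i.e.\ no trading on $[0,t]$. Combined with the constancy of $X^\uparrow+X^\downarrow$ on $[t,T]$ this gives $X^\uparrow\equiv X^\downarrow\equiv 0$ on $[0,T]$, i.e.\ $X=(0,0)$, the desired contradiction.

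The only step that calls for genuine care is this last one: passing back and forth between the spread process $\zeta^X$ and the trading measure $d(X^\uparrow+X^\downarrow)$, in particular handling a possible atom of $X$ at the boundary point $t$ and the distinction between $\zeta^X_{t-}$ and $\zeta^X_t$, so that ``$\zeta^X$ vanishes on $[t,T]$'' genuinely propagates to ``$X$ vanishes on all of $[0,T]$''. The algebraic cancellation yielding $S_t(X)$ and the sign $S_t(X)\geq 0$ are both immediate from \eqref{eq:buysubgradient}--\eqref{eq:sellsubgradient} and \eqref{eq:spreadsolution}.
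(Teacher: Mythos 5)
Your proposal is correct and follows essentially the same route as the paper: there the buy-subgradient identity $\nabla^{\uparrow}_t J_T(X)=0$ is substituted into $\nabla^{\downarrow}_t J_T(X)$, which is exactly your observation that the two subgradients sum to $2\int_t^T\kappa e^{-\kappa(u-t)}\zeta^X_u\,du+(\zeta^X_T+\eta\vert\varphi^X_T\vert)e^{-\kappa(T-t)}$ with the tracking and $\sign_\varrho$ terms cancelling. The only difference is that the paper simply asserts this quantity is strictly positive ``because $X\neq(0,0)$,'' whereas you spell out, correctly, why its vanishing would force $\zeta_0=0$ and $X\equiv(0,0)$ on $[0,T]$ via \eqref{eq:spreadsolution}.
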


\begin{Remark}[Dynamic programming principle] \label{rem:dynamicprog}
  Note that for any strategy
  $X = (X^{\uparrow},X^{\downarrow}) \in \cX^d$ the subgradients of
  our functional $J_T(\cdot)$ in~\eqref{eq:buysubgradient}
  and~\eqref{eq:sellsubgradient} at time $t \in [0,T]$ only depend on
  the values $\varphi^X_{t-}$, $X^{\uparrow}_{t-}$,
  $X^{\downarrow}_{t-}$, $\zeta^X_{t-}$, the remaining time to
  maturity $T-t$ and the future evolution of the strategy
  $(X_u)_{t \leq u \leq T}$. This property together with the
  uniqueness of the optimal solution to
  problem~\eqref{eq:trackingproblem} implies that the dynamic
  programming principle (or so-called Bellman optimality) holds true
  in our setting. Specifically, let $\hat{X} \in \cX^d$ denote the
  unique optimal strategy for problem~\eqref{eq:trackingproblem} with
  time horizon $T>0$, initial position
  $\varphi^{\hat{X}}_{0-} = \varphi \in \RR$ and initial spread
  $\zeta^{\hat{X}}_{0-} = \zeta \geq 0$ which satisfies the first
  order conditions in Proposition~\ref{prop:foc}. From now on, we will
  use the notation
  $\hat{X}^{T,\zeta,\varphi}=(\hat{X}^{T,\zeta,\varphi,\uparrow},
  \hat{X}^{T,\zeta,\varphi,\downarrow})$ to emphasize the dependence
  of the optimal control on the \emph{problem data}
  $(T,\zeta,\varphi)$. Then for any $0 \leq t < T$ we have that the
  strategy
  \begin{equation*}
    \hat{X}^{T-t, \zeta^{\hat{X}}_{t-},\varphi^{\hat{X}}_{t-}}_s \set \hat{X}^{T,\zeta,\varphi}_{t+s}
    - \hat{X}^{T,\zeta,\varphi}_{t-} \qquad (0 \leq s \leq T-t)
  \end{equation*}
  is optimal for problem~\eqref{eq:trackingproblem} with problem data
  $(T-t, \zeta^{\hat{X}}_{t-},\varphi^{\hat{X}}_{t-})$, i.e., time
  horizon $T-t>0$, initial spread $\zeta^{\hat{X}}_{t-} \geq 0$ and
  initial position $\varphi^{\hat{X}}_{t-} \in \mathbb{R}$. Indeed,
  observe that
  \begin{equation*} 
    \nabla_s^{\uparrow,\downarrow} J_{T-t}(\hat{X}^{T-t,
      \zeta^{\hat{X}}_{t-},\varphi^{\hat{X}}_{t-}}) =
    \nabla_{t+s}^{\uparrow,\downarrow}
    J_{T}(\hat{X}^{T,\zeta,\varphi}) \quad (0 \leq s \leq T-t)
  \end{equation*}
  holds true which implies that
  $\hat{X}^{T-t, \zeta^{\hat{X}}_{t-},\varphi^{\hat{X}}_{t-}}$
  satisfies the first order conditions in Proposition~\ref{prop:foc}
  and is thus optimal.
\end{Remark}


\subsection{The state space}
\label{subsec:construction}

We want to solve the optimization problem formulated in
\eqref{eq:trackingproblem} for any given problem data
$(T,\zeta_0,\varphi_0)$, i.e., for any time horizon $T$, initial
spread $\zeta_0$ and initial position $\varphi_0$ in the risky
asset. For this purpose, let us introduce the three-dimensional
\emph{state space}
\begin{equation} \label{def:statespace}
  \cS \set \{ (\tau,\zeta,\varphi) : \tau \geq 0, \zeta \geq 0, \varphi \in \RR\} \subset \mathbb{R}^3
\end{equation}
with time to maturity $\tau$, spread $\zeta$ and number of shares
$\varphi$. For any triplet or problem data $(\tau,\zeta,\varphi)$ in
the state space $\cS$ we want to identify the corresponding unique
optimal strategy $\hat{X}^{\tau,\zeta,\varphi}$ with
$\varphi^{\hat{X}^{\tau,\zeta,\varphi}}_{0-} = \varphi$ and
$\zeta^{\hat{X}^{\tau,\zeta,\varphi}}_{0-} = \zeta$ which minimizes
the functional $J_{\tau}(\cdot)$ in~\eqref{eq:trackingproblem} for
time horizon~$\tau$ (cf. Remark \ref{rem:tauzero} below for our
convention in the special case $\tau = 0$). More precisely, we want to
describe the evolution of the optimally controlled system
$(\tau-t,\zeta^{\hat{X}^{\tau,\zeta,\varphi}}_{t},
\varphi^{\hat{X}^{\tau,\zeta,\varphi}}_{t})_{0 \leq t \leq \tau}$ in
the state space $\cS$. Intuitively, the first order optimality
conditions formulated in Proposition~\ref{prop:foc} suggest a
separation of the state space $\cS$ into two action regions -- a
\emph{buying}- and a \emph{selling}-region -- as well as a non-action
or \emph{waiting}-region for the optimizer
$\hat{X}^{\tau,\zeta,\varphi}$. Loosely speaking, depending on whether
the optimally controlled triplet
$(\tau-t,\zeta^{\hat{X}^{\tau,\zeta,\varphi}}_{t},\varphi^{\hat{X}^{\tau,\zeta,\varphi}}_{t})$
at time $t \in [0,\tau]$ is located in the buying-, selling- or
waiting-region, the corresponding optimal strategy
$\hat{X}^{\tau,\zeta,\varphi}$ buys, sells or does not do anything,
respectively, at this time instant $t$. In fact,
Proposition~\ref{prop:foc}, Lemma \ref{lem:nobuysell} as well as
Remark \ref{rem:dynamicprog} motivate the following definition of the
buying-, selling- and waiting-region.

\begin{Definition}[Buying-, selling-,
  waiting-region] \label{def:region}
$\phantom{}$
\vspace{-.5em}
  \begin{enumerate}
  \item We define the \emph{buying-region} as
    \begin{equation}
      \begin{aligned} \label{def:buyregion}
        \hspace{-20pt}\mathcal{R}_{\mathrm{buy}} \set \Big\{ (\tau,\zeta,\varphi)
        \in
        \cS : & \; \text{the optimal strategy } \hat{X}^{\tau,\zeta,\varphi} \in \cX^d \text{
          satisfies } \Big. \\
        & \; {}^\varrho\nabla_0^{\uparrow} J_\tau(\hat{X}^{\tau,\zeta,\varphi}) =
        0 \text{ for some } \varrho \\
        & \Big. \text{ and } \hat{X}^{\tau,\zeta,\varphi,\uparrow}_0 > 0 \Big\}
      \end{aligned}
    \end{equation}
    and the boundary of the buying-region as
    \begin{equation}
      \begin{aligned} \label{def:buybound}
        \hspace{-25pt}\partial \mathcal{R}_{\mathrm{buy}} \set 
        \Big\{ (\tau,\zeta,\varphi)
        \in
        \cS : & \; \text{the optimal strategy } \hat{X}^{\tau,\zeta,\varphi} \in \cX^d \text{
          satisfies} \Big. \\
        & \; {}^\varrho\nabla_0^{\uparrow} J_\tau(\hat{X}^{\tau,\zeta,\varphi}) =
        0 \text{ for some } \varrho \\
        & \Big. \text{ and } \hat{X}^{\tau,\zeta,\varphi,\uparrow}_0 = 0 \Big\}.
      \end{aligned}
    \end{equation}
  \item We define the \emph{selling-region} as
    \begin{equation}
      \begin{aligned} \label{def:sellregion}
        \hspace{-20pt}\mathcal{R}_{\mathrm{sell}} \set \Big\{ (\tau,\zeta,\varphi)
        \in
        \cS : & \; \text{the optimal strategy } \hat{X}^{\tau,\zeta,\varphi} \in \cX^d \text{
          satisfies} \Big. \\
        & \; {}^\varrho\nabla_{0}^{\downarrow} J_\tau(\hat{X}^{\tau,\zeta,\varphi}) =
        0 \text{ for some } \varrho \\
        & \Big. \text{ and } \hat{X}^{\tau,\zeta,\varphi,\downarrow}_0 > 0 \Big\}
      \end{aligned}
    \end{equation}
    and the boundary of the selling-region as
    \begin{equation}
      \begin{aligned} \label{def:sellbound}
        \hspace{-22pt}\partial \mathcal{R}_{\mathrm{sell}} \set \Big\{
        (\tau,\zeta,\varphi) \in
        \cS : & \; \text{the optimal strategy } \hat{X}^{\tau,\zeta,\varphi} \in \cX^d \text{
          satisfies} \Big. \\
        & \; {}^\varrho\nabla_0^{\downarrow}
        J_\tau(\hat{X}^{\tau,\zeta,\varphi}) = 0 \text{ for some }
        \varrho \\
        & \text{ and }
        \hat{X}^{\tau,\zeta,\varphi,\downarrow}_0 = 0 \Big\}.
      \end{aligned}
    \end{equation}
  \item We define the \emph{waiting-region} as
    \begin{equation} \label{def:waitregion}
      \mathcal{R}_{\mathrm{wait}} \set \cS \backslash
      (\bar{\mathcal{R}}_{\mathrm{buy}} \cup \bar{\mathcal{R}}_{\mathrm{sell}})
  \end{equation}
where $\bar{\mathcal{R}}_{\mathrm{buy}/\mathrm{sell}} \set
\mathcal{R}_{\mathrm{buy}/\mathrm{sell}} \cup \partial
\mathcal{R}_{\mathrm{buy}/\mathrm{sell}}$, respectively.
\end{enumerate}
\end{Definition}

\begin{Remark} \label{rem:tauzero}
$\phantom{}$
\vspace{-.5em}
  \begin{enumerate}
  \item Lemma~\ref{lem:nobuysell} implies 
    $\partial \mathcal{R}_{\mathrm{buy}} \cap \partial
    \mathcal{R}_{\mathrm{sell}} = \varnothing$. Moreover, it will  become
    clear in Theorem~\ref{thm:main} that these boundaries coincide with
    the topological ones.
  \item By definition in~\eqref{def:statespace} problem data or
    triplets $(0,\zeta,\varphi)$ with $\tau = 0$ also belong to the
    state space $\cS$. Hence, we have to find a convention for how to
    specify the associated optimal strategies
    $\hat{X}^{0,\zeta,\varphi}$ with
    $\varphi_{0-}^{\hat{X}^{0,\zeta,\varphi}} = \varphi$ and
    $\zeta^{\hat{X}^{0,\zeta,\varphi}}_{0-} = \zeta$. In view of the
    subgradients in \eqref{eq:buysubgradient} and
    \eqref{eq:sellsubgradient} we have
    \begin{equation*}
      {}^{\varrho}\nabla_0^{\uparrow,\downarrow}
      J_0(\hat{X}^{0,\zeta,\varphi}) 
      = \frac{1}{2}
      \left( \eta \vert \varphi \vert + \zeta
      \right) \pm \frac{\eta}{2} \varphi \pm \frac{1}{2} \sign_\varrho(\varphi)
      \zeta.
    \end{equation*}
    Thus, in case $\varphi > 0$ it holds that
    $\nabla_0^{\uparrow} J_0(\hat{X}^{0,\zeta,\varphi}) > 0$ and
    $\nabla_0^{\downarrow} J_0(\hat{X}^{0,\zeta,\varphi}) =
    0$. Therefore, we stipulate that the associated optimal strategy
    $\hat{X}^{0,\zeta,\varphi}$ is given by
    $\hat{X}_0^{0,\zeta,\varphi,\uparrow} \set 0$ and
    $\hat{X}_0^{0,\zeta,\varphi,\downarrow} \set \varphi > 0$, i.e.,
    it unwinds with a single block sell order the position
    $\varphi$. Analogously, in case $\varphi < 0$ we have
    $\nabla_0^{\uparrow} J_0(\hat{X}^{0,\zeta,\varphi}) = 0$ and
    $\nabla_0^{\downarrow} J_0(\hat{X}^{0,\zeta,\varphi}) > 0$ and
    thus we set $\hat{X}_0^{0,\zeta,\varphi,\downarrow} \set 0$ as
    well as $\hat{X}^{0,\zeta,\varphi,\uparrow}_0 = -\varphi>0$, i.e.,
    the optimal strategy clears out its short position by executing a
    single block buy order. In case $\varphi = 0$, we have
    \begin{equation*}
      {}^{\varrho}\nabla_0^{\uparrow,\downarrow}
      J_0(\hat{X}^{0,\zeta,0}) = \frac{1}{2} \zeta \pm
      \frac{1}{2} \varrho \zeta \geq 0
    \end{equation*}
    for all $\varrho \in [-1,1]$. We make the convention that the
    associated optimal strategy is simply defined as
    $\hat{X}^{0,\zeta,0}_0 \set (0,0)$.
  \item Note that our convention in 2.) together with the dynamic
    programming principle from Remark~\ref{rem:dynamicprog} entails
    that any optimal strategy $\hat{X}^{\tau,\zeta,\varphi}$ with a
    final position
    $\varphi^{\hat{X}^{\tau,\zeta,\varphi}}_{\tau} \neq 0$ in the
    risky asset in fact unwinds its remaining shares with a single
    block order.
\end{enumerate}
\end{Remark}


\subsection{Main result}
\label{subsec:mainresult}

Our main result is an explicit description of the buying- and
selling-region $\cR_{\mathrm{buy}}$ and $\cR_{\mathrm{sell}}$ in the
state space $\cS$ defined in
Definition~\ref{def:region}. Specifically, it turns out that the free
boundaries $\partial \cR_{\mathrm{buy}}$ and
$\partial \cR_{\mathrm{sell}}$ can be described analytically as the
graph of two \emph{free boundary functions}
$(\tau,\zeta) \mapsto \phi_{\mathrm{buy}}(\tau,\zeta)$ and
$(\tau,\zeta) \mapsto \phi_{\mathrm{sell}}(\tau,\zeta)$ defined on the
time-to-maturity and spread domain $[0,+\infty)^2$. All the results in
this section will be proved in Section~\ref{sec:proofs}.

\begin{Theorem}\label{thm:main}
  For the two functions
  \begin{equation} \label{eq:phisellphibuy}
    \phi_{\mathrm{buy}}(\tau,\zeta) < \phi_{\mathrm{sell}}(\tau,\zeta)
  \end{equation}
  defined in~\eqref{def:phisell} and~\eqref{def:phibuy:eq1} --
  \eqref{def:phibuy:eq3c} below, we have
  \begin{align}
    \mathcal{R}_{\mathrm{sell}} 
    & = \{ (\tau,\zeta,\varphi) \in
      \cS : \varphi > \phi_{\mathrm{sell}}(\tau,\zeta) \}, \label{main:sellreg} \\
    \partial \mathcal{R}_{\mathrm{sell}} 
    & = \{ (\tau,\zeta,\varphi)
      \in \cS : \varphi = \phi_{\mathrm{sell}}(\tau,\zeta) \} \label{main:sellbound}
  \end{align}
  as well as
  \begin{align}
    \mathcal{R}_{\mathrm{buy}} 
    & = \{ (\tau,\zeta,\varphi) \in
      \cS : \varphi < \phi_{\mathrm{buy}}(\tau,\zeta)
      \}, \label{main:buyreg} \\
    \partial \mathcal{R}_{\mathrm{buy}} 
    & = \{ (\tau,\zeta,\varphi)
      \in \cS : \varphi =   \phi_{\mathrm{buy}}(\tau,\zeta) \}. \label{main:buybound}
  \end{align}
  In particular, it holds that
  \begin{align}
    \mathcal{R}_{\mathrm{wait}} 
    & = \{ (\tau,\zeta,\varphi) \in
      \cS :
      \phi_{\mathrm{buy}}(\tau,\zeta) <
      \varphi <
      \phi_{\mathrm{sell}}(\tau,\zeta) \}, \label{main:waitreg} \\
      \partial \mathcal{R}_{\mathrm{wait}} 
      & = \partial \mathcal{R}_{\mathrm{buy}}
        \cup \partial \mathcal{R}_{\mathrm{sell}}. \label{main:waitbound}
  \end{align}
\end{Theorem}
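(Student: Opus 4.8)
The plan is to prove Theorem~\ref{thm:main} by producing, for every datum $(\tau,\zeta,\varphi)\in\cS$, an \emph{explicit} candidate deterministic strategy and then simply reading off from it in which of the regions of Definition~\ref{def:region} the datum lies. Concretely, for fixed $(\tau,\zeta)$ one builds a candidate $\hat X^{\tau,\zeta,\varphi}$ whose optimally controlled trajectory $(\tau-t,\zeta^{\hat X}_t,\varphi^{\hat X}_t)$ splits into at most three phases: a possible initial block trade at $t=0$; a subsequent \emph{reflecting} motion along a free boundary during which the control acts singularly so as to keep the relevant marginal-cost process ${}^{\varrho}\nabla^{\uparrow}_t J_\tau$ or ${}^{\varrho}\nabla^{\downarrow}_t J_\tau$ of \eqref{eq:buysubgradient}--\eqref{eq:sellsubgradient} pinned at $0$; a no-trade phase in which $\varphi^{\hat X}$ is frozen and $\zeta^{\hat X}$ decays as $e^{-\kappa\cdot}\zeta$; and a terminal block liquidation of $\varphi^{\hat X}_\tau$ (consistent with the convention in Remark~\ref{rem:tauzero} and the dynamic programming principle of Remark~\ref{rem:dynamicprog}). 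Requiring the appropriate subgradient to vanish along the reflecting phase turns into an integral equation for $(\varphi^{\hat X},\zeta^{\hat X})$ which, coupled with the frozen dynamics of the no-trade phase, closes up and is solved in closed form; this is exactly what produces the functions $\phi_{\mathrm{sell}}$ and $\phi_{\mathrm{buy}}$ of \eqref{def:phisell} and \eqref{def:phibuy:eq1}--\eqref{def:phibuy:eq3c}. Once such a candidate is in hand one verifies the sufficient first-order conditions of Proposition~\ref{prop:foc} --- both the equalities on the action sets and, crucially, the inequalities ${}^{\varrho}\nabla^{\uparrow}_t J_\tau\ge0$, ${}^{\varrho}\nabla^{\downarrow}_t J_\tau\ge0$ for all $t\in[0,\tau]$ --- so that by the uniqueness in Theorem~\ref{thm:existence}/Proposition~\ref{prop:optproblem} the candidate is \emph{the} optimizer $\hat X^{\tau,\zeta,\varphi}$.

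Given the explicit optimizer, membership in the regions of Definition~\ref{def:region} is decided by two features of $\hat X^{\tau,\zeta,\varphi}$ at time $0$: the sign of the initial jump $\hat X^{\tau,\zeta,\varphi,\uparrow}_0$, resp.\ $\hat X^{\tau,\zeta,\varphi,\downarrow}_0$, and whether ${}^{\varrho}\nabla^{\uparrow}_0 J_\tau$, resp.\ ${}^{\varrho}\nabla^{\downarrow}_0 J_\tau$, vanishes. The key structural input is monotonicity: for $(\tau,\zeta)$ fixed, the optimal trajectory $\varphi^{\hat X^{\tau,\zeta,\varphi}}$ is monotone in the initial position $\varphi$ --- a standard comparison consequence of the strict convexity of $J_\tau$ in \eqref{eq:trackingproblem} together with uniqueness of its minimizer, which one may also simply read off from the closed-form trajectory --- so that ${}^{\varrho}\nabla^{\downarrow}_0 J_\tau(\hat X^{\tau,\zeta,\varphi})$ is nonincreasing and ${}^{\varrho}\nabla^{\uparrow}_0 J_\tau(\hat X^{\tau,\zeta,\varphi})$ nondecreasing in $\varphi$. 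Combined with Lemma~\ref{lem:nobuysell}, which forbids ${}^{\varrho}\nabla^{\uparrow}_0 J_\tau=0={}^{\varrho}\nabla^{\downarrow}_0 J_\tau$ simultaneously for $X\neq(0,0)$, this yields thresholds $\phi_{\mathrm{buy}}(\tau,\zeta)\le\phi_{\mathrm{sell}}(\tau,\zeta)$ with: $\varphi>\phi_{\mathrm{sell}}$ forcing an initial block sell and ${}^{\varrho}\nabla^{\downarrow}_0 J_\tau=0$ (hence $\cR_{\mathrm{sell}}$), $\varphi=\phi_{\mathrm{sell}}$ giving ${}^{\varrho}\nabla^{\downarrow}_0 J_\tau=0$ with no initial jump (hence $\partial\cR_{\mathrm{sell}}$), and the mirror statements for $\phi_{\mathrm{buy}}$; applying Lemma~\ref{lem:nobuysell} once more at $\varphi=\phi_{\mathrm{sell}}(\tau,\zeta)$ gives ${}^{\varrho}\nabla^{\uparrow}_0 J_\tau>0$ there, which persists on a neighborhood and hence forces the strict inequality $\phi_{\mathrm{buy}}(\tau,\zeta)<\phi_{\mathrm{sell}}(\tau,\zeta)$ of \eqref{eq:phisellphibuy}. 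Identifying these abstract thresholds with the closed-form $\phi_{\mathrm{sell}},\phi_{\mathrm{buy}}$ from the construction gives \eqref{main:sellreg}--\eqref{main:buybound}; then \eqref{main:waitreg} is immediate from \eqref{def:waitregion} since $\bar\cR_{\mathrm{buy}}=\{\varphi\le\phi_{\mathrm{buy}}\}$ and $\bar\cR_{\mathrm{sell}}=\{\varphi\ge\phi_{\mathrm{sell}}\}$, and \eqref{main:waitbound} follows once one checks that $\phi_{\mathrm{buy}},\phi_{\mathrm{sell}}$ are continuous on $[0,\infty)^2$, so that the topological boundary of the open band $\{\phi_{\mathrm{buy}}<\varphi<\phi_{\mathrm{sell}}\}$ is exactly the union of the two graphs (this also confirms the remark following Definition~\ref{def:region} that the free boundaries coincide with the topological ones).

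The main obstacle is the first step: solving explicitly for the reflecting trajectory along each boundary and verifying the first-order conditions of Proposition~\ref{prop:foc} \emph{globally} in time. Several complications must be handled: the interplay of the three phases (initial jump, singular reflection, terminal liquidation) and the switching times between them; the fact that the buy-boundary can cross $\varphi=0$ when the Merton target $\mu/(\alpha\sigma^2)$ is positive, which changes the admissible value of $\sign_\varrho(\varphi^X_\tau)$ in \eqref{eq:buysubgradient} and forces the case distinction \eqref{def:phibuy:eq1}--\eqref{def:phibuy:eq3c}; and the need to confirm the sign conditions ${}^{\varrho}\nabla^{\uparrow}_t J_\tau\ge0$, ${}^{\varrho}\nabla^{\downarrow}_t J_\tau\ge0$ off the action sets, not merely the equalities on them. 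The degenerate slice $\tau=0$ is treated separately via the convention fixed in Remark~\ref{rem:tauzero}. All of this technical work, together with the continuity of the free boundary functions, is what is carried out in Section~\ref{sec:proofs}.
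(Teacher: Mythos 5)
Your overall strategy---constructing explicit candidate strategies by pinning the subgradients \eqref{eq:buysubgradient}--\eqref{eq:sellsubgradient} at zero, verifying the sufficient conditions of Proposition~\ref{prop:foc} globally, invoking uniqueness, and then reading off membership in the regions of Definition~\ref{def:region} from the time-$0$ behaviour---is exactly the route the paper takes. The genuine gap is in your ansatz for the candidate trajectories: you allow at most one singular (``reflecting'') phase, followed by a no-trade phase and the terminal block. On a substantial part of the buying boundary (part II.1 of the $(\tau,\zeta)$-plane in Figure~\ref{fig:domain}) the optimizer has \emph{two} singular phases: it buys along $\partial\cR_{\mathrm{buy}}$ up to time $\tau_{\mathrm{buy}}(\tau,\zeta)$, then waits for the \emph{fixed} duration $\bar\theta$ solving \eqref{eq:besslich1}, and then sells continuously along $\partial\cR_{\mathrm{sell}}$ before the final block; this is the content of \eqref{def:tausell}--\eqref{eq:waitandsell} in Corollary~\ref{cor:main2} and of Case 2.1 (and, with the order wait-then-sell, Case 3.1) in the paper's proof. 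The curve $\tau\mapsto(\bar\zeta(\tau),\bar\varphi(\tau))$ and the constants $\underline\theta,\bar\theta$, which encode precisely where the buying phase must stop so that the wait-and-sell continuation keeps ${}^{\varrho}\nabla^{\uparrow}J_\tau\ge0$ and ${}^{\varrho}\nabla^{\downarrow}J_\tau\ge0$, are the real source of the piecewise definition \eqref{def:phibuy:eq1}--\eqref{def:phibuy:eq3c}; you attribute that case structure solely to the $\sign_\varrho$ ambiguity at $\varphi^X_\tau=0$, which only accounts for the cases with $\varrho^*$ (2.3/3.3) and the plateau at $\varphi=0$. With your restricted phase structure, the closed-form derivation of $\phi_{\mathrm{buy}}$ and, more importantly, the global sign verification off the action sets would fail in those regimes.

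A second, smaller divergence is in how the regions are identified once the candidates are built. You propose monotonicity of $\varphi\mapsto{}^{\varrho}\nabla^{\uparrow,\downarrow}_0 J_\tau(\hat X^{\tau,\zeta,\varphi})$ in the initial position plus a neighborhood-persistence argument to produce thresholds and the strict separation \eqref{eq:phisellphibuy}. This is asserted, not proved: the map depends on the optimizer itself, and its monotonicity and continuity in $\varphi$ are not obvious a priori (they could at best be checked from the closed forms, which requires the full construction above). The paper avoids needing any such comparison: it proves only the one-sided inclusions (graph of $\phi_{\mathrm{sell}}$ into $\partial\cR_{\mathrm{sell}}$, $\{\varphi>\phi_{\mathrm{sell}}\}$ into $\cR_{\mathrm{sell}}$, and the analogues for buy and wait), obtains \eqref{eq:phisellphibuy} from $\phi_{\mathrm{sell}}>0$ (Lemma~\ref{lem:phisell}), $\phi_{\mathrm{buy}}(\tau,\bar\zeta(\tau))=0$ for $\tau\le\underline\theta$, the non-intersection of the two boundaries (Lemma~\ref{lem:nobuysell}) and continuity, and then upgrades all inclusions to equalities because the five sets determined by $\phi_{\mathrm{buy}}<\phi_{\mathrm{sell}}$ partition $\cS$ while the regions of Definition~\ref{def:region} are pairwise disjoint. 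You should either supply the comparison argument you invoke or replace it by this partition argument.
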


In fact, the behaviour of optimal strategies with initial problem data
$(\tau,\zeta,\varphi)$ in $\mathcal{R}_{\mathrm{buy}}$,
$\mathcal{R}_{\mathrm{sell}}$, or $\mathcal{R}_{\mathrm{wait}}$ can be
readily deduced from the definition of the buying-, selling- and
waiting-region in~\eqref{def:buyregion}, \eqref{def:sellregion}
and~\eqref{def:waitregion}, together with the dynamic programming
principle from Remark~\ref{rem:dynamicprog}.

\begin{Remark} \label{rem:main1}
$\phantom{}$
\vspace{-.5em}
  \begin{enumerate}
  \item For each problem data
    $(\tau,\zeta,\varphi) \in \mathcal{R}_{\mathrm{sell}}$, i.e.,
    $\varphi > \phi_{\mathrm{sell}}(\tau,\zeta)$ in view
    of~\eqref{main:sellreg} in Theorem~\ref{thm:main}, it follows from
    the definition of $\mathcal{R}_{\mathrm{sell}}$ and
    $\partial \mathcal{R}_{\mathrm{sell}}$ in~\eqref{def:sellregion}
    and~\eqref{def:sellbound} that the optimal strategy
    $\hat{X}^{\tau,\zeta,\varphi}$ will actually ``jump'' with an
    initial impulse block sell order of size
    $\hat{X}_0^{\tau,\zeta,\varphi,\downarrow} = x^{\downarrow} > 0$
    satisfying the equation
    \begin{equation} \label{eq:impulsesell} \varphi - x^{\downarrow} =
      \phi_{\mathrm{sell}}(\tau,\zeta + \eta x^{\downarrow})
    \end{equation}
    to the triplet
    $(\tau,\zeta + \eta x^{\downarrow}, \varphi - x^{\downarrow})$
    which belongs to $\partial \mathcal{R}_{\mathrm{sell}}$ by virtue
    of~\eqref{main:sellbound}. Thereafter, it coincides with the
    corresponding optimal strategy
    $\hat{X}^{\tau, \zeta + \eta x^{\downarrow}, \varphi -
      x^{\downarrow}}$ which does satisfy
    $\hat{X}_0^{\tau, \zeta + \eta x^{\downarrow}, \varphi -
      x^{\downarrow},\downarrow} =0$ in line with the definition of
    $\partial \mathcal{R}_{\mathrm{sell}}$ in~\eqref{def:sellbound};
    cf. proof of Theorem~\ref{thm:main} below. Similarly, for each
    problem data
    $(\tau,\zeta,\varphi) \in \mathcal{R}_{\mathrm{buy}}$, i.e.,
    $\varphi < \phi_{\mathrm{buy}}(\tau,\zeta)$ in view
    of~\eqref{main:buyreg} in Theorem~\ref{thm:main}, the optimal
    strategy $\hat{X}^{\tau,\zeta,\varphi}$ will ``jump'' with an
    initial impulse block buy order of size
    $\hat{X}^{\tau,\zeta,\varphi,\uparrow} = x^{\uparrow} > 0$
    satisfying the equation
    \begin{equation} \label{eq:impulsebuy} \varphi + x^{\uparrow} =
      \phi_{\mathrm{buy}}(\tau,\zeta + \eta x^{\uparrow})
    \end{equation}
    to the triplet
    $(\tau,\zeta + \eta x^{\uparrow}, \varphi +x^{\uparrow})$ in
    $\partial \mathcal{R}_{\mathrm{buy}}$ by virtue
    of~\eqref{main:buybound} and then will coincide with the
    corresponding optimal strategy
    $\hat{X}^{\tau, \zeta + \eta x^{\uparrow}, \varphi +
      x^{\uparrow}}$. Again it will hold that
    $\hat{X}_0^{\tau, \zeta + \eta x^{\uparrow}, \varphi +
      x^{\uparrow},\uparrow} =0$ in line with the definition of
    $\partial \mathcal{R}_{\mathrm{buy}}$ in~\eqref{def:buybound}.
  \item For any problem data
    $(\tau,\zeta,\varphi) \in \mathcal{R}_{\mathrm{wait}}$, i.e.,
    $\phi_{\mathrm{buy}}(\tau,\zeta) < \varphi <
    \phi_{\mathrm{sell}}(\tau,\zeta)$ in view of~\eqref{main:waitreg}
    in Theorem~\ref{thm:main}, the optimal strategy
    $\hat{X}^{\tau,\zeta,\varphi}$ will remain inactive until the
    first time $t \in (0,\tau]$ that either
    \begin{equation} \label{eq:waitsell} \varphi =
      \phi_{\mathrm{sell}}(\tau - t,\zeta e^{-\kappa t})
    \end{equation}
    or
    \begin{equation} \label{eq:waitbuy} \varphi =
      \phi_{\mathrm{buy}}(\tau - t,\zeta e^{-\kappa t})
    \end{equation}
    holds true. That is, the triplet
    $(\tau-t,\zeta e^{-\kappa t}, \varphi)$ belongs to
    $\partial \mathcal{R}_{\mathrm{sell}}$ or
    $\partial \mathcal{R}_{\mathrm{buy}}$ due
    to~\eqref{main:sellbound} and~\eqref{main:buybound},
    respectively. On the remaining time interval $[\tau - t,\tau]$ the
    optimal strategy then coincides with the corresponding optimal
    strategy $\hat{X}^{\tau-t,\zeta e^{-\kappa t},\varphi}$. Note that
    $\hat{X}_0^{\tau-t,\zeta e^{-\kappa t},\varphi,\downarrow}
    =\hat{X}_0^{\tau-t,\zeta e^{-\kappa t},\varphi,\uparrow}=0$ will
    hold true due to the definition of the boundaries
    in~\eqref{def:sellbound} and~\eqref{def:buybound},
    respectively. For the case where neither~\eqref{eq:waitsell}
    nor~\eqref{eq:waitbuy} allows for a solution $t \in (0,\tau]$, it
    will be optimal to remain inactive all along $[0,\tau]$. Recall
    from our convention in Remark~\ref{rem:tauzero}, 2.) that any non-zero
    final position in the risky asset will be unwound with a single
    block trade.
  \end{enumerate}
\end{Remark}

As a consequence of Remark~\ref{rem:main1}, it suffices to
characterize all optimal strategies
$\hat{X}^{\tau,\zeta,\varphi}=(\hat{X}^{\tau,\zeta,\varphi,\uparrow},$
$\hat{X}^{\tau,\zeta,\varphi,\downarrow})$ with initial problem data
$(\tau,\zeta,\varphi)$ which belong to the boundaries
$\partial \cR_{\mathrm{sell}}$ or $\partial \cR_{\mathrm{buy}}$. The
next two corollaries summarize how these strategies can be computed
explicitly.

\begin{Corollary}[Selling boundary] \label{cor:main1} Let
  $(\tau,\zeta,\varphi) \in \partial
  \mathcal{R}_{\mathrm{sell}}$. Then we have
  $\{ d\hat{X}^{\tau,\zeta,\varphi,\downarrow}>0\}=[0,\tau]$. The
  optimal share holdings $\varphi^{\hat{X}^{\tau,\zeta,\varphi}}$ and
  spread dynamics $\zeta^{\hat{X}^{\tau,\zeta,\varphi}}$ satisfy
  \begin{equation} \label{eq:slide1}
    \varphi_t^{\hat{X}^{\tau,\zeta,\varphi}} =
    \phi_{\mathrm{sell}}(\tau-t,\zeta_t^{\hat{X}^{\tau,\zeta,\varphi}})
    \quad (0 \leq t \leq \tau).
  \end{equation}
  In particular, $\varphi^{\hat{X}^{\tau,\zeta,\varphi}}$ solves
  the second order ODE  \begin{equation} \label{eq:ODE1}
    \ddot\varphi_t^{\hat{X}^{\tau,\zeta,\varphi}} = \beta^2 \left(
      \varphi_t^{\hat{X}^{\tau,\zeta,\varphi}} -
      \frac{\mu}{\alpha\sigma^2} \right)
  \end{equation}
  on $(0,\tau)$ with initial conditions 
  \begin{equation} \label{eq:ODEinitcond1}
    \varphi_0^{\hat{X}^{\tau,\zeta,\varphi}} = \varphi, \quad
    \dot\varphi_0^{\hat{X}^{\tau,\zeta,\varphi}} = \beta \left(
      c_-(\tau,\zeta,\varphi) - c_+(\tau,\zeta,\varphi) \right),
  \end{equation}
  where $\beta \set \kappa\lambda/\sqrt{\lambda^2+\kappa\eta}$ and
  $c_{\pm}(\tau,\zeta,\varphi)$ are given as in~\eqref{def:cpm}.
\end{Corollary}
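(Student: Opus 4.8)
The plan is to exploit the dynamic programming principle from Remark~\ref{rem:dynamicprog} together with the characterization of $\partial\cR_{\mathrm{sell}}$ as the graph of $\phi_{\mathrm{sell}}$ in Theorem~\ref{thm:main}. First I would show that for initial data on the selling boundary the optimal strategy trades continuously on all of $[0,\tau]$, i.e.\ $\{d\hat X^{\tau,\zeta,\varphi,\downarrow}>0\}=[0,\tau]$: by definition of $\partial\cR_{\mathrm{sell}}$ there is no initial block, and by Remark~\ref{rem:dynamicprog} the time-$t$-shifted strategy is optimal for problem data $(\tau-t,\zeta^{\hat X}_{t-},\varphi^{\hat X}_{t-})$; one checks that this shifted datum again lies on $\partial\cR_{\mathrm{sell}}$, so the optimizer is ``sliding along'' the boundary and keeps selling. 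This yields the slide relation~\eqref{eq:slide1}, $\varphi^{\hat X}_t=\phi_{\mathrm{sell}}(\tau-t,\zeta^{\hat X}_t)$, for every $t\in[0,\tau]$.

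Next I would turn the first order condition into an ODE. On $\{d\hat X^{\downarrow}>0\}=[0,\tau]$ Proposition~\ref{prop:foc}(ii) gives $\nabla^{\downarrow}_t J_\tau(\hat X)=0$ identically in $t$. Differentiating the expression~\eqref{eq:sellsubgradient} in $t$ and using the dynamics~\eqref{eq:spreaddynamics} of $\zeta^{\hat X}$ (noting that on the sliding region $dX^{\uparrow}=0$, so $d\zeta^{\hat X}_t=\eta\,dX^{\downarrow}_t-\kappa\zeta^{\hat X}_t\,dt$) together with $\dot\varphi^{\hat X}_t=-\dot X^{\downarrow}_t$, the integral and boundary terms collapse and one is left with a first order relation linking $\dot\varphi^{\hat X}_t$, $\zeta^{\hat X}_t$ and $\varphi^{\hat X}_t-\mu/(\alpha\sigma^2)$. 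Differentiating once more (or substituting back the expression for $\zeta^{\hat X}$ obtained from that first relation and using~\eqref{eq:spreaddynamics} again) eliminates the spread and produces the linear second order ODE~\eqref{eq:ODE1}, $\ddot\varphi^{\hat X}_t=\beta^2(\varphi^{\hat X}_t-\mu/(\alpha\sigma^2))$, with $\beta=\kappa\lambda/\sqrt{\lambda^2+\kappa\eta}$; here the constant $\beta$ comes out of matching coefficients, and I would check it reproduces the claimed value (the quantity $\lambda$ being the parameter entering $\phi_{\mathrm{sell}}$ in~\eqref{def:phisell}).

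Finally I would pin down the initial conditions~\eqref{eq:ODEinitcond1}. The value $\varphi^{\hat X}_0=\varphi$ is immediate from $(\tau,\zeta,\varphi)\in\partial\cR_{\mathrm{sell}}$. For $\dot\varphi^{\hat X}_0$ I would evaluate the first order relation derived above at $t=0$, expressing $\dot\varphi^{\hat X}_0$ through $\zeta=\zeta^{\hat X}_0$ and $\varphi$; rewriting this combination in terms of the auxiliary quantities $c_{\pm}(\tau,\zeta,\varphi)$ from~\eqref{def:cpm} gives $\dot\varphi^{\hat X}_0=\beta(c_-(\tau,\zeta,\varphi)-c_+(\tau,\zeta,\varphi))$. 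The main obstacle I anticipate is the bookkeeping in the two differentiation steps: one has to be careful that $\nabla^{\downarrow}_t J_\tau(\hat X)$ is differentiable in $t$ (which follows from continuity of $\zeta^{\hat X}$ and $\varphi^{\hat X}$ on the sliding region, where there are no further jumps), that the boundary term $\tfrac12(\zeta^{\hat X}_\tau+\eta|\varphi^{\hat X}_\tau|)e^{-\kappa(\tau-t)}$ contributes correctly (its sign depending on $\operatorname{sign}(\varphi^{\hat X}_\tau)$, which is controlled on $\partial\cR_{\mathrm{sell}}$ since there $\varphi>\phi_{\mathrm{buy}}$), and that the algebra collapsing the two first order relations into~\eqref{eq:ODE1} indeed produces exactly $\beta^2$ and not some other constant. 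Everything else is routine calculus of variations once the sliding relation~\eqref{eq:slide1} is in hand.
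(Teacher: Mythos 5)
Your plan correctly identifies the two mechanical ingredients (the slide relation plus differentiation of the vanishing sell-subgradient does reproduce the ODE with the constant $\beta^2$, and the paper's proof is indeed organized around Proposition~\ref{prop:foc}), but it has a genuine gap at its very first and most important step. You assert that, starting from $(\tau,\zeta,\varphi)\in\partial\mathcal{R}_{\mathrm{sell}}$, ``one checks that the shifted datum again lies on $\partial\mathcal{R}_{\mathrm{sell}}$, so the optimizer is sliding along the boundary and keeps selling.'' This is precisely the content of the corollary, and neither the dynamic programming principle of Remark~\ref{rem:dynamicprog} nor Theorem~\ref{thm:main} delivers it: the DPP only transports optimality to the shifted problem data, it says nothing about where the optimally controlled state actually goes, and membership in the boundary at a single time does not by itself force continued selling. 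That no soft argument of this kind can work is shown by the buying boundary itself: by Corollary~\ref{cor:main2}, cases 2.) and 3.), a state starting on $\partial\mathcal{R}_{\mathrm{buy}}$ in general leaves that boundary after the finite time $\tau_{\mathrm{buy}}$, then waits, and later sells. Persistence on the selling boundary is thus a specific property that must be proved, and the paper proves it by the opposite, guess-and-verify route: it writes down the candidate explicitly, $\hat{X}^{\uparrow}\equiv 0$ and $\hat{X}^{\downarrow}_t=\varphi-\hat{\varphi}^{\mathrm{sell}}(\tau,\zeta,\varphi,t)$ as in~\eqref{def:optsellstrat1} with $\hat\varphi^{\mathrm{sell}}$ from~\eqref{def:phihatsell} (i.e.\ the solution of the ODE~\eqref{eq:ODE1} with initial data~\eqref{eq:ODEinitcond1}), checks via~\eqref{eq:propsell} and Lemma~\ref{lem:hatfunctions}~4.) that there is no initial block and that $\{d\hat X^{\downarrow}>0\}=[0,\tau]$, verifies by computation that $\nabla^{\downarrow}_t J_\tau(\hat X)=0$ on all of $[0,\tau]$, and then invokes the \emph{sufficiency} of Proposition~\ref{prop:foc} together with Lemma~\ref{lem:nobuysell}; the slide relation~\eqref{eq:slide1} and the ODE are then read off from the explicit formulas, and membership of the triplet in $\partial\mathcal{R}_{\mathrm{sell}}$ comes out as a conclusion, not an input.

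Two secondary points in the same vein: you use Proposition~\ref{prop:foc} in the converse (necessity) direction — ``the optimizer sells, hence $\nabla^{\downarrow}_t J_\tau(\hat X)=0$'' — whereas the paper only states and proves it as a sufficient criterion (the subgradients of Lemma~\ref{lem:subgradients} are only lower bounds for directional derivatives, so necessity would need its own convex-analytic perturbation argument, which you do not supply); and your two differentiations presuppose that $\varphi^{\hat X}$ is free of jumps and twice differentiable on $(0,\tau)$, a regularity fact that in the paper is automatic from the closed-form $\hat\varphi^{\mathrm{sell}}$ but is unestablished in your scheme. If you repair the argument by constructing the candidate explicitly (equivalently, by solving the first-order relation you derive and then verifying the sufficient conditions), you arrive at the paper's proof; as written, the circular ``stays on the boundary'' step leaves the main claims unproved.
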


\begin{Remark} \label{rem:OW} Note that in case $\zeta = \mu = 0$, the
  optimal strategy described in Remark~\ref{rem:main1}~1.) together with
  Corollary~\ref{cor:main1} and the convention from
  Remark~\ref{rem:tauzero} 2.) coincides for any $\varphi > 0$ with
  the optimal liquidation strategy computed in~\citet{ObiWang:13},
  Proposition 4.
\end{Remark}

For the buying boundary, the description of optimal strategies becomes a bit
more involved as one has to distinguish three cases depending on the
size of the initial spread:

\begin{Corollary}[Buying boundary] \label{cor:main2} Let
  $(\tau,\zeta,\varphi) \in \partial \mathcal{R}_{\mathrm{buy}}$ and
  let $\bar\zeta$, $\bar\varphi$, $\hat{\zeta}^{\mathrm{buy}}$ be
  given as in~\eqref{def:barzeta}, \eqref{def:barphi},
  \eqref{def:zetahatbuy}, respectively, as well as
  $\tau_{\mathrm{buy}}$, $\tau_{\mathrm{wait}}$ as defined in
  Lemmas~\ref{lem:taubuy} and \ref{lem:tauwait}.
  \begin{enumerate}
  \item If $\zeta > \hat{\zeta}^{\mathrm{buy}}(\tau,2\mu/\kappa,0,0)$,
    then we have
    $\{ d\hat{X}^{\tau,\zeta,\varphi,\uparrow}>0 \} = [0,\tau]$. The
    optimal share holdings $\varphi^{\hat{X}^{\tau,\zeta,\varphi}}$
    and spread dynamics $\zeta^{\hat{X}^{\tau,\zeta,\varphi}}$ satisfy
    \begin{equation} \label{eq:slide2}
      \begin{aligned}
        \varphi_t^{\hat{X}^{\tau,\zeta,\varphi}} =
        \phi_{\mathrm{buy}}(\tau-t,\zeta_t^{\hat{X}^{\tau,\zeta,\varphi}})
        \quad (0 \leq t \leq \tau).
      \end{aligned}
    \end{equation}
    In other words, $\varphi^{\hat{X}^{\tau,\zeta,\varphi}}$ solves the
    second order ODE
    \begin{equation} \label{eq:ODE2}
      \ddot\varphi_t^{\hat{X}^{\tau,\zeta,\varphi}} = \beta^2 \left(
        \varphi_t^{\hat{X}^{\tau,\zeta,\varphi}} -
        \frac{\mu}{\alpha\sigma^2} \right)
    \end{equation}
    on $(0, \tau)$ with initial conditions
    \begin{equation} \label{eq:ODEinitcond2}
      \varphi_0^{\hat{X}^{\tau,\zeta,\varphi}} = \varphi, \quad
      \dot\varphi_0^{\hat{X}^{\tau,\zeta,\varphi}} = \beta \left(
        c_-(\tau,-\zeta,\varphi) - c_+(\tau,-\zeta,\varphi) \right).
    \end{equation}
  \item If
    $\bar\zeta(\tau) < \zeta \leq
    \hat{\zeta}^{\mathrm{buy}}(\tau,2\mu/\kappa,0,0)$, then we have
    $\{d\hat{X}^{\tau,\zeta,\varphi,\uparrow}>0 \} = [0,
    \tau_{\mathrm{buy}}(\tau,\zeta)]$ with
    $\tau_{\mathrm{buy}}(\tau,\zeta) \in (0, \tau]$. The optimal share
    hodings $\varphi^{\hat{X}^{\tau,\zeta,\varphi}}$ and spread
    dynamics $\zeta^{\hat{X}^{\tau,\zeta,\varphi}}$ satisfy
    \begin{equation} \label{eq:slide3}
      \varphi_t^{\hat{X}^{\tau,\zeta,\varphi}} =
      \phi_{\mathrm{buy}}(\tau-t,\zeta_t^{\hat{X}^{\tau,\zeta,\varphi}})
      \quad \left( 0 \leq t \leq \tau_{\mathrm{buy}}(\tau,\zeta)
      \right).
    \end{equation}
    In this case, the ODE dynamics in~\eqref{eq:ODE2} are satisfied by
    $\varphi^{\hat{X}^{\tau,\zeta,\varphi}}$ on
    $(0, \tau_{\mathrm{buy}}(\tau,\zeta))$ with terminal conditions
    \begin{equation} \label{eq:ODEfinalcond}
      \begin{aligned}
        \varphi_{\tau_{\mathrm{buy}}(\tau,\zeta)}^{\hat{X}^{\tau,\zeta,\varphi}}
        = & \,\, \bar\varphi\left(\tau-\tau_{\mathrm{buy}}(\tau,\zeta)\right), \\
        \dot\varphi_{\tau_{\mathrm{buy}}(\tau,\zeta)}^{\hat{X}^{\tau,\zeta,\varphi}}
        = & \,\, \bar\zeta(\tau-\tau_{\mathrm{buy}}(\tau,\zeta))
        \beta^2/\lambda^2 \\
        & + \left( \bar\varphi(\tau-\tau_{\mathrm{buy}}(\tau,\zeta)) -
          \mu/\lambda^2 \right) \beta^2/\kappa.
      \end{aligned}
    \end{equation}
  \item If $0 \leq \zeta \leq \bar\zeta(\tau)$, then
    $\hat{X}^{\tau,\zeta,\varphi,\uparrow} \equiv 0$ on $[0,\tau]$.
  \end{enumerate}
  Moreover, in both cases 2.) and 3.), if
  \begin{equation} \label{def:tausell}
    \tau_{\mathrm{sell}}(\tau,\zeta)\set
    \tau-\tau_{\mathrm{buy}}(\tau,\zeta)-\tau_{\mathrm{wait}}(\tau,\zeta)
    > 0,
  \end{equation}
  then it holds that
  \begin{equation} \label{eq:waitandsell} \left(
      \tau_{\mathrm{sell}}(\tau, \zeta),
      \zeta^{\hat{X}^{\tau,\zeta,\varphi}}_{\tau_{\mathrm{buy}}(\tau,\zeta)+\tau_{\mathrm{wait}}(\tau,\zeta)},
      \varphi^{\hat{X}^{\tau,\zeta,\varphi}}_{\tau_{\mathrm{buy}}(\tau,\zeta)+\tau_{\mathrm{wait}}(\tau,\zeta)}
    \right) \in \partial \mathcal{R}_{\mathrm{sell}}.
    \end{equation}
\end{Corollary}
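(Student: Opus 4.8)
The plan is to read off the optimal strategy on $\partial\cR_{\mathrm{buy}}$ from Theorem~\ref{thm:main}, the dynamic programming principle of Remark~\ref{rem:dynamicprog}, and the first-order conditions of Proposition~\ref{prop:foc}. Fix $(\tau,\zeta,\varphi)\in\partial\cR_{\mathrm{buy}}$, write $\hat X=\hat X^{\tau,\zeta,\varphi}$, and abbreviate $\varphi_t\set\varphi^{\hat X}_t$, $\zeta_t\set\zeta^{\hat X}_t$. By Theorem~\ref{thm:main}, Remark~\ref{rem:main1}, and the dynamic programming principle, the action of $\hat X$ at time $t$ is governed by the location of the running triplet $(\tau-t,\zeta_t,\varphi_t)$: a block trade in $\cR_{\mathrm{buy}}\cup\cR_{\mathrm{sell}}$, a finite-rate trade on $\partial\cR_{\mathrm{buy}}\cup\partial\cR_{\mathrm{sell}}$, and inactivity in $\cR_{\mathrm{wait}}$. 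Since the starting point lies on $\partial\cR_{\mathrm{buy}}$, by~\eqref{def:buybound} there is no block buy at time $0$; moreover, as long as $\hat X$ keeps buying the triplet cannot enter the open region $\cR_{\mathrm{buy}}$ (by the dynamic programming principle this would force an initial block buy for the remaining-time problem, contradicting the definition of $\partial\cR_{\mathrm{buy}}$), so it must slide along the free boundary. This yields the identities $\varphi_t=\phi_{\mathrm{buy}}(\tau-t,\zeta_t)$ of~\eqref{eq:slide2} and~\eqref{eq:slide3} on the (possibly degenerate) interval $\{d\hat X^{\tau,\zeta,\varphi,\uparrow}>0\}$ on which buying occurs, which I would denote $[0,\tau_{\mathrm{buy}}]$.

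On this interval $d\varphi_t=d\hat X^{\uparrow}_t$ and, from the spread dynamics~\eqref{eq:spreaddynamics} with only buying active, $\dot\zeta_t=\eta\dot\varphi_t-\kappa\zeta_t$; furthermore $\nabla^{\uparrow}_tJ_\tau(\hat X)=0$ holds throughout by condition (i) of Proposition~\ref{prop:foc}. After a standard regularity bootstrap (the sliding identity and the spread equation make $\varphi$ absolutely continuous, hence $C^2$), I would differentiate this first-order identity in $t$: using the explicit form of the buy-subgradient~\eqref{eq:buysubgradient}, two further differentiations together with $\dot\zeta_t=\eta\dot\varphi_t-\kappa\zeta_t$ eliminate $\zeta$ and leave the linear ODE $\ddot\varphi_t=\beta^2(\varphi_t-\mu/(\alpha\sigma^2))$ of~\eqref{eq:ODE2}, with $\beta^2=\kappa^2\lambda^2/(\lambda^2+\kappa\eta)$ (equivalently $\kappa^2\alpha\sigma^2/(\kappa\eta+\alpha\sigma^2)$, i.e.\ $\lambda^2=\alpha\sigma^2$); this is the same computation already used for Corollary~\ref{cor:main1}, now run with~\eqref{eq:buysubgradient} in place of the sell-subgradient~\eqref{eq:sellsubgradient}. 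The boundary data come from the once-differentiated identity: at $t=0$, with $\varphi_0=\varphi$, $\zeta_0=\zeta$ and the explicit shape of $\phi_{\mathrm{buy}}$, it gives $\dot\varphi_0=\beta(c_-(\tau,-\zeta,\varphi)-c_+(\tau,-\zeta,\varphi))$ of~\eqref{eq:ODEinitcond2} (the argument $-\zeta$ reflecting the sign of the $\varphi_T$- and $\sign(\varphi_T)\zeta_T$-terms in~\eqref{eq:buysubgradient} relative to~\eqref{eq:sellsubgradient}), while at the right endpoint $t=\tau_{\mathrm{buy}}$, where the triplet reaches the corner $(\bar\zeta(\tau-\tau_{\mathrm{buy}}),\bar\varphi(\tau-\tau_{\mathrm{buy}}))$ at which buying ceases, it gives~\eqref{eq:ODEfinalcond}.

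Next I would carry out the case distinction, dictated by the explicit formula for $\phi_{\mathrm{buy}}$ (see~\eqref{def:phibuy:eq1}--\eqref{def:phibuy:eq3c}) and the thresholds it induces on $\zeta$. For $\zeta>\hat\zeta^{\mathrm{buy}}(\tau,2\mu/\kappa,0,0)$ the sliding solution of~\eqref{eq:ODE2} is realized by a nondecreasing purchase process all the way to $t=\tau$, so $\{d\hat X^{\tau,\zeta,\varphi,\uparrow}>0\}=[0,\tau]$ (Case 1). For $\bar\zeta(\tau)<\zeta\le\hat\zeta^{\mathrm{buy}}(\tau,2\mu/\kappa,0,0)$ the slide meets the corner curve $r\mapsto(\bar\zeta(r),\bar\varphi(r))$ at an interior time $\tau_{\mathrm{buy}}(\tau,\zeta)\in(0,\tau]$, after which buying is no longer optimal; here $\tau_{\mathrm{buy}}(\tau,\zeta)$ is precisely the value for which the solution of~\eqref{eq:ODE2} with terminal data~\eqref{eq:ODEfinalcond} additionally satisfies $\varphi_0=\varphi$, which is the content of Lemma~\ref{lem:taubuy} (Case 2). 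For $0\le\zeta\le\bar\zeta(\tau)$ the starting triplet already sits on the part of $\partial\cR_{\mathrm{buy}}$ beyond the corner, where maintaining $\nabla^{\uparrow}=0$ would force the purchase process to decrease, so $\hat X^{\uparrow}\equiv0$ (Case 3). Finally, in Cases 2 and 3, once buying has ended $\varphi$ is frozen while $\zeta$ decays at rate $\kappa$, and Proposition~\ref{prop:foc} keeps $\hat X$ inactive while the triplet stays in $\cR_{\mathrm{wait}}$; by definition (Lemma~\ref{lem:tauwait}) $\tau_{\mathrm{wait}}(\tau,\zeta)$ is the first time this frozen-$\varphi$, decaying-$\zeta$ curve meets $\partial\cR_{\mathrm{sell}}=\{\varphi=\phi_{\mathrm{sell}}\}$, so when $\tau_{\mathrm{sell}}(\tau,\zeta)=\tau-\tau_{\mathrm{buy}}(\tau,\zeta)-\tau_{\mathrm{wait}}(\tau,\zeta)>0$ this happens strictly before the horizon and the triplet at time $\tau_{\mathrm{buy}}+\tau_{\mathrm{wait}}$ lies in $\partial\cR_{\mathrm{sell}}$, which is~\eqref{eq:waitandsell}; the subsequent selling leg is then governed by the dynamic programming principle and Corollary~\ref{cor:main1}. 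A verification step — inserting the concatenated candidate into~\eqref{eq:buysubgradient}--\eqref{eq:sellsubgradient} and checking conditions (i)--(ii) of Proposition~\ref{prop:foc} — certifies optimality, and uniqueness of the optimizer identifies the candidate with $\hat X^{\tau,\zeta,\varphi}$.

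The hard part will be the geometric bookkeeping behind the case split: showing that the sliding solution of~\eqref{eq:ODE2} stays inside $\bar\cR_{\mathrm{buy}}$ for exactly as long as claimed and exits precisely at the corner curve $(\bar\zeta,\bar\varphi)$ — a comparison argument between this solution and the free-boundary function $\phi_{\mathrm{buy}}$ — that $\bar\zeta(\tau)$ and $\hat\zeta^{\mathrm{buy}}(\tau,2\mu/\kappa,0,0)$ are exactly the thresholds separating the three regimes, and that the piecewise buy/wait/sell candidate is globally admissible (both $\hat X^{\uparrow}$ and $\hat X^{\downarrow}$ nondecreasing) and keeps satisfying the first-order conditions across the junction times $\tau_{\mathrm{buy}}$ and $\tau_{\mathrm{buy}}+\tau_{\mathrm{wait}}$. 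The remaining ODE integrations and sign checks should be routine once this geometry is pinned down.
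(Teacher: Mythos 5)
Your architecture (first-order conditions of Proposition~\ref{prop:foc}, the dynamic programming principle of Remark~\ref{rem:dynamicprog}, a slide-along-the-boundary identity, a case split driven by $\bar\zeta$ and $\hat\zeta^{\mathrm{buy}}$, and a final verification) is the same machinery the paper uses, but note that the paper does not deduce Corollary~\ref{cor:main2} from Theorem~\ref{thm:main}: the two are proved simultaneously in Section~\ref{subsubsec:proofs:main}, by writing down \emph{explicit closed-form candidates} (via $\hat\varphi^{\mathrm{buy}}$, $\hat\varphi^{\mathrm{sell}}$, $\hat\zeta^{\mathrm{buy}}$, $\hat\zeta^{\mathrm{sell}}$, concatenated with the Lemma~\ref{lem:taubuy}/\ref{lem:tauwait} durations) and verifying conditions (i)--(ii) of Proposition~\ref{prop:foc} case by case (Cases~1, 2.1--2.3, 3.1--3.3). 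Your plan inverts this: derive the ODE~\eqref{eq:ODE2} by differentiating $\nabla^{\uparrow}_t J_\tau(\hat X)=0$ along the buying interval and then sort out the geometry. That inversion is legitimate in principle, but what you defer as ``geometric bookkeeping'' is precisely the content of the proof, not a routine appendix: showing that the buying set is a single initial interval $[0,\tau_{\mathrm{buy}}(\tau,\zeta)]$ with no interior block trades, that the slide exits exactly at the corner curve $(\bar\zeta,\bar\varphi)$ so that~\eqref{eq:ODEfinalcond} holds, that $\bar\zeta(\tau)$ and $\hat\zeta^{\mathrm{buy}}(\tau,2\mu/\kappa,0,0)$ are the exact thresholds, and above all that the concatenated buy/wait/sell candidate satisfies the first-order conditions \emph{off} its action sets (the paper's computations \eqref{eq:mastergrad1}--\eqref{eq:mastergrad3} together with the strict convexity of $g^{\uparrow,\downarrow}$, $h^{\uparrow,\downarrow}$ from Lemma~\ref{lem:waitgradients} and the monotonicity Lemmas~\ref{lem:fbcurve}, \ref{lem:hatfunctions}). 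Your sketch of why the triplet ``must slide'' also needs the regularity bootstrap you only name: the DPP argument shows the running triplet lies in $\bar{\cR}_{\mathrm{buy}}$ at buy times, but the interval structure, absence of jumps after time $0$, and the $C^2$ regularity needed to differentiate the subgradient twice are not free.

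A second, concrete omission: your proposal never addresses the $\varrho$-dependence of the subgradients when the optimal terminal position vanishes, yet exactly this arises in the regimes that produce the ``plateau'' of the buying boundary (the paper's Cases~2.3 and~3.3, where $\varphi^{\hat X^{\tau,\zeta,\varphi}}_{\tau}=0$). There the first-order conditions can only be verified for a specific choice of subgradient, $\varrho^{*}=e^{\kappa\tau^{*}}(\kappa\tau^{*}-1)$ respectively $\varrho^{*}=e^{\kappa\tau}(2\mu\tau/\zeta-1)$, and checking $\varrho^{*}\in[-1,1]$ is where the constants $\underline\theta$, $\bar\theta$ of \eqref{eq:besslich1}--\eqref{eq:besslich2} enter; a naive verification with a fixed sign convention would fail. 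Relatedly, your reading of $\tau_{\mathrm{wait}}$ as ``by definition the hitting time of $\partial\cR_{\mathrm{sell}}$ by the frozen-$\varphi$, decaying-$\zeta$ curve'' is not the definition in Lemma~\ref{lem:tauwait} (which is the equation $\zeta=s_1(\tau-\tau_{\mathrm{wait}},\tau_{\mathrm{wait}})$); the identification of the two, which yields~\eqref{eq:waitandsell}, is itself one of the facts to be proved (cf.\ \eqref{optbuysell:sellbound} and \eqref{prop:optwaitsellstrat} in the paper). So while the plan is pointed in the right direction and uses the right tools, as it stands it leaves unproven the case-by-case verification and the zero-terminal-position subtlety that constitute the actual proof of the corollary.
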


\begin{Remark} \label{rem:main2} Notice that except for a possible
  initial and final singular block trade (recall
  Remarks~\ref{rem:tauzero}, 2.) and~\ref{rem:main1}), share holdings
  $\varphi^{\hat{X}}$ of optimal strategies turn out to be absolutely
  continuous. This is in line with the optimal execution strategies
  computed in~\citet{ObiWang:13}. During these periods of steadily
  buying or selling, the dynamics of the optimal share holdings are
  prescribed by the same second order ODE in~\eqref{eq:ODE1}
  and~\eqref{eq:ODE2}. In fact, satisfying the ODE with the
  corresponding boundary conditions forces, respectively, the
  buy-subgradient or sell-subgradient to vanish which is in line with
  the first order optimality conditions in
  Proposition~\ref{prop:foc}. Also note that while the optimal
  strategy is continuously buying- or selling, the optimally
  controlled triplet evolves along the boundary of the buying- or
  selling-region in the state space~$\cS$; cf.~\eqref{eq:slide1},
  \eqref{eq:slide2}, and~\eqref{eq:slide3} together with
  Theorem~\ref{thm:main}.
\end{Remark}


\subsection{Illustration}
\label{subsec:illustrations}

Let us illustrate with a numerical example the separation of the
three-dimensional state space~$\cS$ into a buying-, waiting- and
selling-region as characterized in Theorem~\ref{thm:main} along with
trajectories of optimal strategies as described in
Corollaries~\ref{cor:main1} and~\ref{cor:main2} together with
Remark~\ref{rem:main1}. All explicit representations of the free
boundaries $\partial \mathcal{R}_{\text{buy}}$ and
$\partial \mathcal{R}_{\text{sell}}$ as well as of the illustrated
optimal strategies
$\hat{X}^{\tau,\zeta,\varphi} =
(\hat{X}^{\tau,\zeta,\varphi,\uparrow},
\hat{X}^{\tau,\zeta,\varphi,\downarrow})$ can be found in
Section~\ref{subsec:proofs3}. As for the model parameters, we simply
choose
\begin{equation*} 
  \kappa = 1, \quad \eta = 2, \quad \mu = 10, \quad
  \sigma = 1, \quad \alpha = 1.
\end{equation*}

\begin{figure}[h]
\centering
\includegraphics[scale=.6]{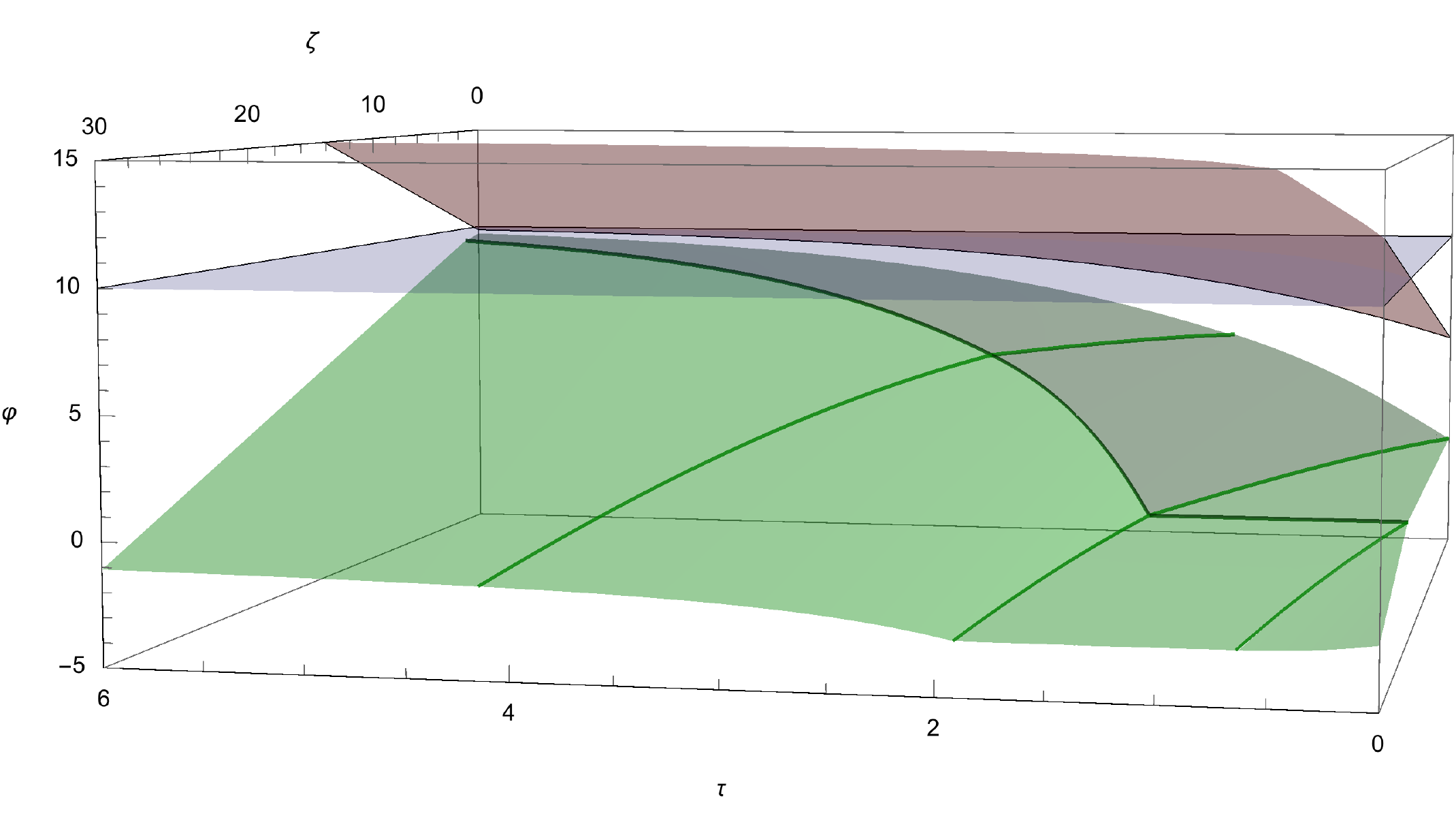}
\caption{The three-dimensional state space $\cS$ with time to maturity
  $\tau$, spread $\zeta$ and number of shares $\varphi$. The blue
  plane represents the Merton plane at level
  $\varphi^0=\mu/(\alpha\sigma^2)=10$. The boundary of the
  buying-region $\partial \mathcal{R}_{\text{buy}}$ is colored in
  green and the boundary of the selling-region
  $\partial \mathcal{R}_{\text{sell}}$ is colored in red.}
\label{fig:stateSpace}
\end{figure}

Figure~\ref{fig:stateSpace} shows the three-dimensional state space
$\cS$ with time to maturity~$\tau$, spread~$\zeta$ and number of
shares~$\varphi$. The blue plane represents the constant optimal
frictionless Merton position at level
$\varphi^0 = \mu/(\alpha\sigma^2) = 10$, henceforth referred to as
\emph{Merton plane}. The upper red surface is the free boundary of the
selling region $\partial \mathcal{R}_{\text{sell}}$ as characterized
in Theorem~\ref{thm:main}, i.e.,
\begin{equation*}
  \partial \mathcal{R}_{\text{sell}} 
  = \left\{ (\tau,\zeta,\varphi) \in \cS : 
    \varphi = \phi_{\mathrm{sell}}(\tau,\zeta)\right\}
\end{equation*}
with $\phi_{\mathrm{sell}}$ defined in~\eqref{def:phisell}.  The lower
green surface depicts the free boundary of the buying region
$\partial \mathcal{R}_{\text{buy}}$, that is,
\begin{equation*}
  \partial \mathcal{R}_{\text{buy}} 
  = \left\{ (\tau,\zeta,\varphi) \in \cS : 
    \varphi = \phi_{\mathrm{buy}}(\tau,\zeta)\right\}
\end{equation*}
with $\phi_{\mathrm{buy}}$ defined in~\eqref{def:phibuy:eq1} --
\eqref{def:phibuy:eq3c}. Observe that $\partial\mathcal{R}_{\text{buy}}$
actually decomposes into seven parts
(cf. Section~\ref{subsubsec:proofs:aux} for more details). As expected
by the formulation of the optimization problem in
Proposition~\ref{prop:optproblem} as an optimal trading problem
towards the constant Merton portfolio, one can observe in
Figure~\ref{fig:stateSpace} that the green boundary of the buying
region $\partial \mathcal{R}_{\text{buy}}$ is always below the Merton
plane. Moreover, at least for large maturities~$\tau$ and large
initial spread values~$\zeta$, the red boundary of the selling region
$\partial \mathcal{R}_{\text{sell}}$ is above the Merton
position. However, notice that it falls below the latter for small
maturities and small spread values $\zeta$. That is, even though the
position in the risky asset is below the target portfolio $\varphi^0$,
the short time horizon forces the optimizer to start liquidating the
share holdings right away. Recall from Remark~\ref{rem:opttracking}
2.) that this results from the fact that liquidation is costly. Hence,
the optimal control also has to take into account unwinding the
accrued position when terminal time comes close. The same
interpretation also applies for the ``plateau'' of the buying
boundary~$\partial \mathcal{R}_{\text{buy}}$ at level $\varphi=0$ for
small maturities and small spread values. In other words, starting
with a short position in the risky asset and facing a short time
horizon, it is optimal to simply clear out the short position even before
the time horizon is reached. Mathematically, the presence of this plateau is due to
the dependence of the subgradients in~\eqref{eq:buysubgradient}
and~\eqref{eq:sellsubgradient} on $\varrho \in [-1,1]$ in case where
the optimal terminal
position~$\varphi^{\hat{X}^{\tau,\zeta,\varphi}}_{\tau}$ is equal to
zero (again cf. Section~\ref{subsubsec:proofs:aux} for the details).

Figure~\ref{fig:policies} depicts the evolutions of some optimal share
holdings $\varphi^{\hat{X}^{\tau,\zeta,\varphi}}$ for different
problem data $(\tau,\zeta,\varphi) \in \cS$ as functions in time to
maturity $\tau - t$ with $0 \leq t \leq \tau$. The corresponding
spread dynamics $\zeta^{\hat{X}^{\tau,\zeta,\varphi}}$ are presented
in Figure~\ref{fig:spread}. The trajectories of the associated
optimally controlled state processes
$(\tau-t,\zeta^{\hat{X}^{\tau,\zeta,\varphi}}_t,\varphi^{\hat{X}^{\tau,\zeta,\varphi}}_t)_{0
  \leq t \leq \tau}$ embedded in the three-dimensional state space
$\cS$ are illustrated in Figure~\ref{fig:stateSpacePolicies}.

\begin{figure}[h]
  \centering
  \includegraphics[scale=.7]{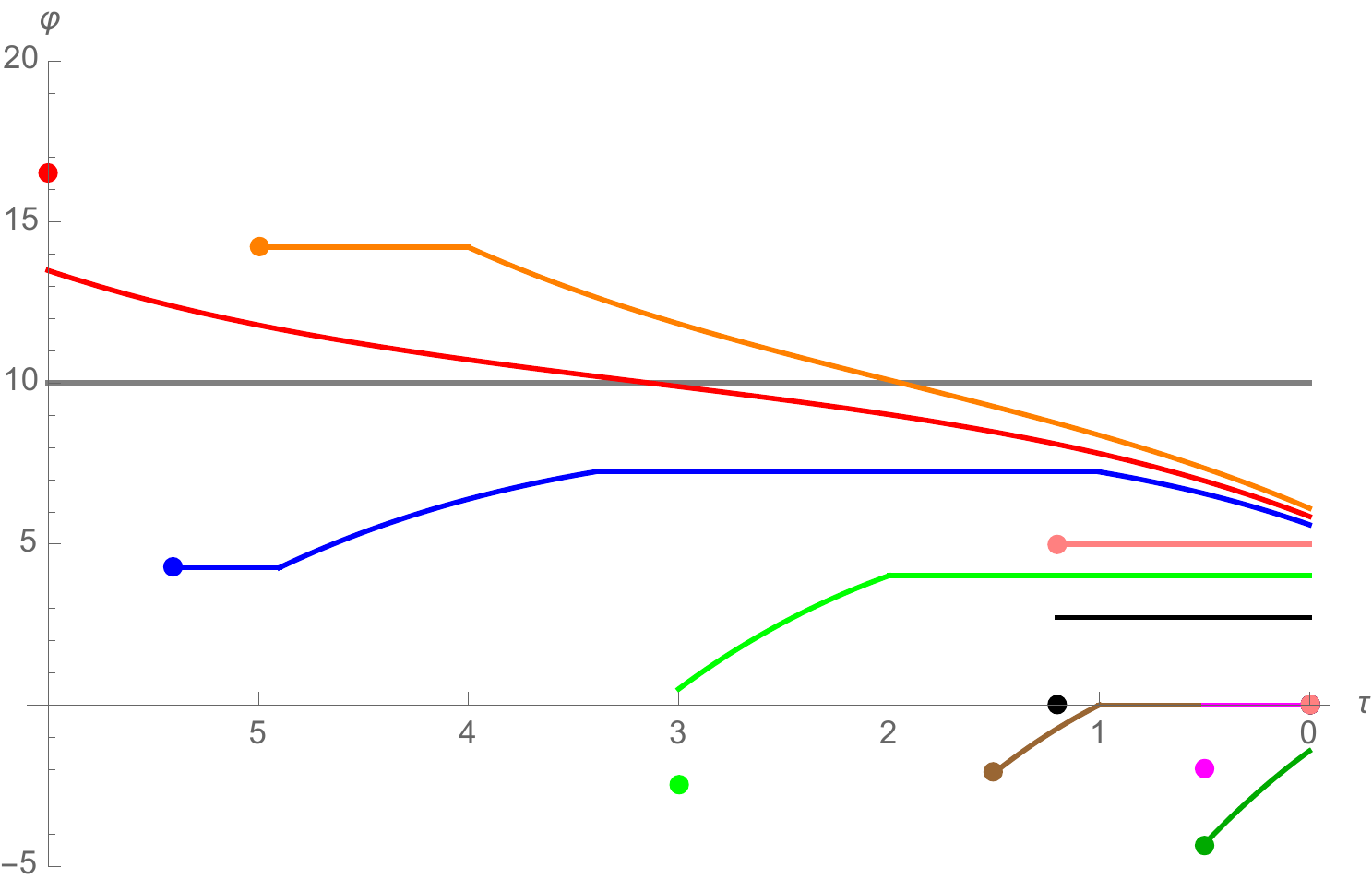}
  \caption{Evolution of optimal share holdings for different initial
    problem data $(\tau,\zeta,\varphi) \in \cS$ as functions in time
    to maturity $\tau - t$ with $0 \leq t \leq \tau$. The dots
    represent the initial position in the risky asset. By our
    convention from Remark~\ref{rem:tauzero}, 2.), all strategies unwind
    non-zero positions in the end with an impulse trade. The grey line
    depicts the Merton position $\varphi^0=10$.}
  \label{fig:policies}
\end{figure}

\begin{figure}[h]
  \centering
  \includegraphics[scale=.7]{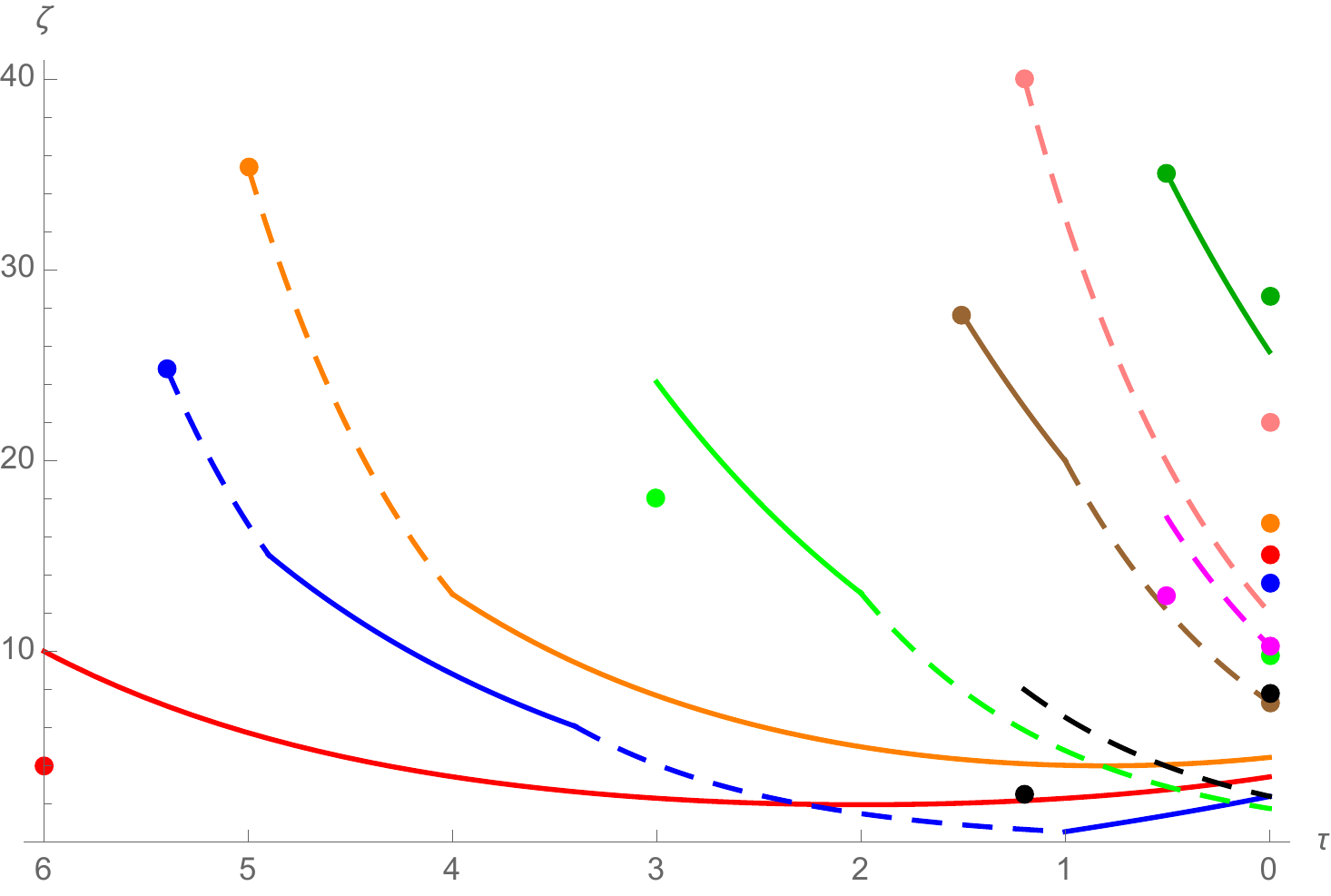}
  \caption{Evolution of the corresponding spread dynamics of the
    optimal share holdings from Figure~\ref{fig:policies}, again as
    functions in time to maturity $\tau - t$ with
    $0 \leq t \leq \tau$. The dots represent the initial and final
    spread values. Periods where the optimal strategy is inactive are
    indicated by dashed lines.}
  \label{fig:spread}
\end{figure}

\begin{figure}
  \centering
  \includegraphics[scale=.6]{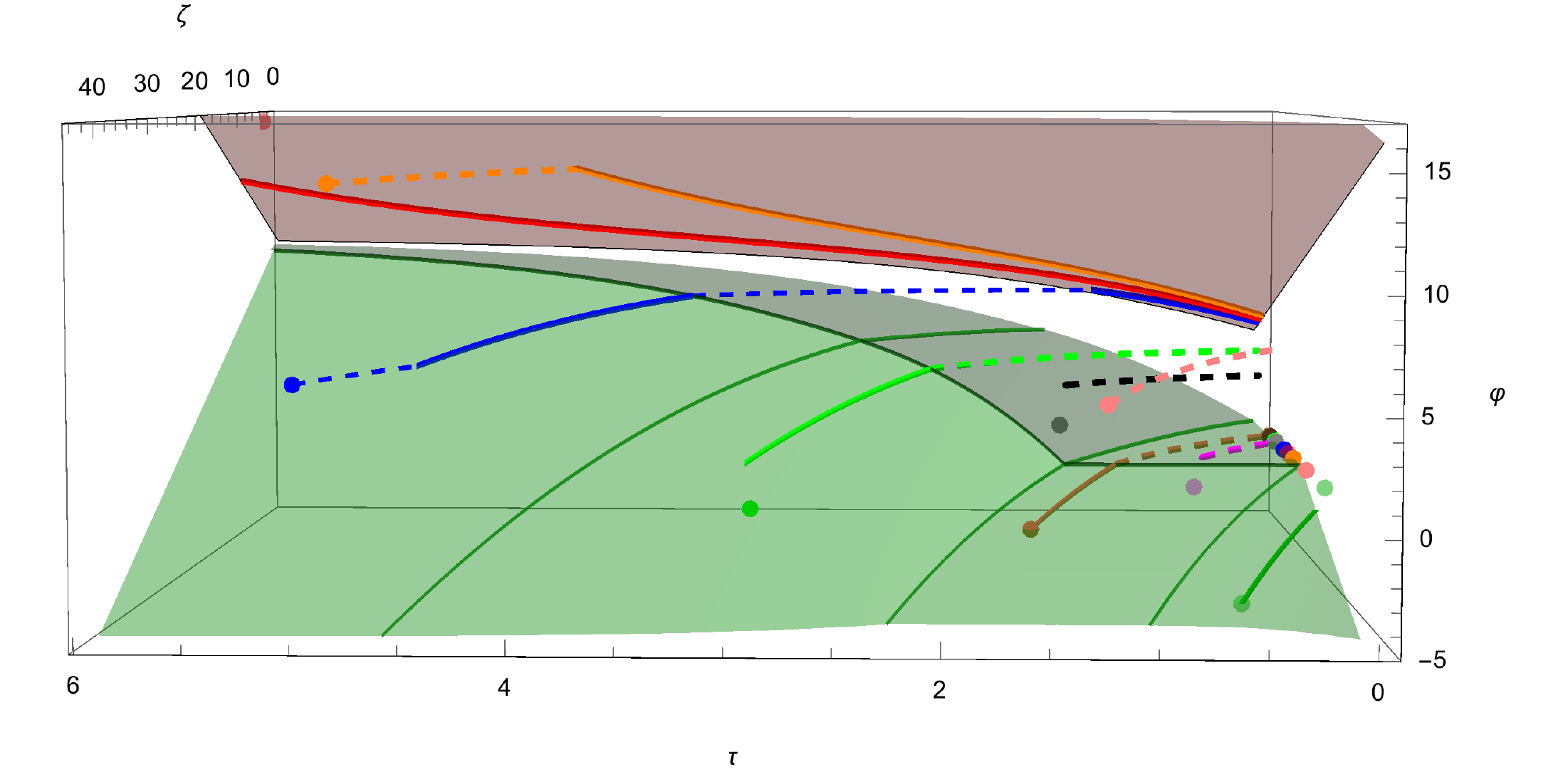}
  \caption{Evolution of optimally controlled state processes embedded
    in the three-dimensional state space $\cS$ corresponding to the
    optimal share holdings from Figure~\ref{fig:policies} with spread
    dynamics depicted in Figure~\ref{fig:spread}. Dashed lines
    indicate waiting parts of the strategies and the big dots
    represent the corresponding initial and final triplets.}
  \label{fig:stateSpacePolicies}
\end{figure}

The red policy is similar to the optimal liquidation strategies
computed in~\citet{ObiWang:13} for a risk-averse investor, though with
a non-zero but small initial spread (recall Remark~\ref{rem:OW}).
Observe that the trajectory starts in the selling
region~$\mathcal{R}_{\text{sell}}$ with an initial position in the
risky asset above the Merton portfolio. Thus, as described in
Remark~\ref{rem:main1} 1.), the policy jumps with an initial block
sell order to the boundary of the selling
region~$\partial \mathcal{R}_{\text{sell}}$ and then continues
steadily trading towards the Merton level $\varphi^0$ by selling the risky
asset as described in Corollary~\ref{cor:main1}. In particular, note
that the strategy steadily sells until maturity even after reaching
the targeted Merton level $\varphi^0=10$. As characterized
in~\eqref{eq:slide1} (recall also Remark~\ref{rem:main2}), the
optimally controlled trajectory evolves along the
boundary~$\partial \mathcal{R}_{\text{sell}}$. At the end, following
our convention from Remark~\ref{rem:tauzero}~2.), the remaining shares
are liquidated with a single block sell order. Similarly to the red
policy, the orange policy also has an initial position above Merton
but it comes along with a large initial spread. As a consequence, the
corresponding problem data $(\tau,\zeta,\varphi)$ belongs to the
waiting region~$\mathcal{R}_{\text{wait}}$. In this case it is optimal
to exploit the resilience effect first and to be inactive until the
value of the spread is sufficiently small so that it becomes optimal
to trade towards~$\varphi^0$. This happens when the trajectory hits
the boundary of the selling
region~$\partial \mathcal{R}_{\text{sell}}$ as described in
Remark~\ref{rem:main1} 2.). The blue policy is an optimal strategy
which decomposes into a waiting-, buying-, waiting- and selling
part. The initial share holdings are below the desired Merton position
but similar to the orange policy the initial spread value is too large
to intervene immediately. Again the optimal strategy is inactive until
the spread is sufficiently small so that it becomes optimal to trade
towards $\varphi^0$ and to buy shares according to the description in
Corollary~\ref{cor:main2}~2.), once the trajectory hits the boundary
of the buying region $\partial \mathcal{R}_{\text{buy}}$. Note that
the optimizer is exploiting the resilience effect also while it is
purchasing the risky asset since the value of the spread continues to
decrease; see Figure~\ref{fig:spread}. Thereafter, when the position
in the risky asset is close enough to Merton $\varphi^0$ with respect
to the remaining time, the optimizer becomes inactive again. During
this waiting period the spread continues to decay until the trajectory
hits the boundary of the selling
region~$\partial \mathcal{R}_{\text{sell}}$ as characterized
in~\eqref{eq:waitandsell}. The optimal control then starts to
continuously unwind its accrued position until terminal time. The
black policy is of buy-and-hold type with initial and final block
trades, remarkably similar to the frictionless optimizer. This is due
to the fact that the time horizon is very small together with a small
initial spread which makes it optimal to execute a single initial
block buy order from~$\mathcal{R}_{\text{sell}}$
to~$\partial \mathcal{R}_{\text{sell}}$ towards the Merton position
$\varphi^0$ but without reaching it. Thereafter, the optimizer follows
the characterization in Corollary~\ref{cor:main2} 3.)
where~\eqref{eq:waitandsell} does not occur. The pink policy does not
trade at all and unwinds at the end. The brown policy starts with a
short position in the risky asset. Again, since time horizon is
relatively short, it merely clears out its short position in the risky asset
as described in Corollary~\ref{cor:main2} 2.) even before the end and
then remains at level $\varphi =0$ until the time horizon is
reached. Similarly for the magenta policy but instead with a single
initial block buy order. The dark green policy continuously liquidates
an initial short position until the end and correspond to the case
described in Corollary~\ref{cor:main2} 1.).

To sum up, the numerical example illustrates how the optimal
strategies which maximize expected utility from terminal liquidation
wealth in our illiquid Bachelier model exhibit for different time
horizons $\tau$, initial spread values $\zeta$ and initial endowments
$\varphi$ a rich phenomenology of possible trajectories.


\section{Proofs}
\label{sec:proofs}

\subsection{Proofs for Sections \ref{sec:model} and \ref{sec:problem}}
\label{subsec:proofs1}

We start with the computation of the dynamics of the liquidation
wealth process $(V_t(X))_{t \geq 0}$ defined in
\eqref{eq:defliquidwealth} and the associated liquidity costs
$(L_t(X))_{t \geq 0}$ stated in Lemma
\ref{lem:propliquidwealth}.\medskip

\noindent{{\emph{Proof of Lemma
      \ref{lem:propliquidwealth}.}}}
To alleviate the notation, let us introduce the mid-quote price
process $M^X_t \set (A^X_t + B^X_t)/2$ for all $t \geq 0$ with initial
value $M^X_{0-} \set (A_{0} + B_{0})/2 = P_{0-}$. Applying integration
by parts in \eqref{eq:defliquidwealth} as in, e.g.,
\citet{JacodShir:02}, Definition I.4.45, yields
\begin{equation} \label{eq:p:wealthdyn:1}
  \begin{aligned}
    dV_t(X) = & \; -\frac{1}{2} ( \zeta^X_{t-} + \eta \Delta
    X^{\uparrow}_t) dX^{\uparrow}_t -\frac{1}{2} ( \zeta^X_{t-} + \eta
    \Delta X^{\downarrow}_t ) dX^{\downarrow}_t + \varphi^X_{t-}
    dM^X_t
    \\
    & \; -\eta \varphi^X_{t-} d\varphi^X_t - \frac{1}{2} \Big(
    \zeta_{t-}^X d\vert \varphi^X_t\vert + \vert \varphi^X_{t-}\vert
    d\zeta^X_t + d[\vert \varphi^X \vert, \zeta^X]_t \Big),
  \end{aligned}
\end{equation}
where we used the fact that
$[\varphi^X,M^X] = \eta [\varphi^X,\varphi^X]/2$ by virtue of
\cite{JacodShir:02}, Theorem I.4.52. Moreover, note that Proposition
I.4.49 a) in \cite{JacodShir:02} implies
$[\vert \varphi^X \vert, \zeta^X]_t = \int_{[0,t]} \Delta \zeta^X_s
d\vert \varphi^X_s \vert$ for all $t \geq 0$, because
$\vert \varphi^X \vert$ is predictable and $\zeta^X$ is of finite
variation. Inserting this, the spread dynamics
\eqref{eq:spreaddynamics} as well as the dynamics of the mid-quote
$dM^X_t = dP_t + \frac{\eta}{2} dX^{\uparrow}_t - \frac{\eta}{2}
dX^{\downarrow}_t$ in \eqref{eq:p:wealthdyn:1} above yields
\begin{align} \label{eq:p:wealthdyn:3}
  \begin{aligned}
    dV_t(X) = & ~ \varphi^X_{t-} dP_t - \frac{1}{2} \zeta^X_t d\vert
    \varphi^X_t \vert
    + \frac{1}{2} \kappa \vert \varphi_{t-}^X \vert \zeta^X_{t-} dt  \\
    & - \frac{1}{2} \Big( \zeta^X_{t-} + \eta \Delta X^{\uparrow}_t +
    \eta \varphi^X_{t-}
    + \eta \vert \varphi^X_{t-} \vert \Big) dX^{\uparrow}_t \\
    & - \frac{1}{2} \Big( \zeta^X_{t-} + \eta \Delta X^{\downarrow}_t
    - \eta \varphi^X_{t-} + \eta \vert \varphi^X_{t-} \vert \Big)
    dX^{\downarrow}_t \quad (t \geq 0).
  \end{aligned}
\end{align}
This motivates to define the liquidation cost functional $L_t(X)$ as
\begin{align} \label{eq:p:wealthdyn:4}
  \begin{aligned}
    L_t(X) \set & \; L_{0-}(X) + \frac{1}{2} \int_{[0,t]} \zeta^X_s
    d\vert \varphi^X_s \vert - \frac{1}{2} \kappa \int_{[0,t]} \vert
    \varphi^X_{s-}\vert
    \zeta^X_{s-} ds \\
    & \; + \frac{1}{2} \int_{[0,t]} \left(\zeta^{X}_{s-} + \eta \Delta
      X^{\uparrow}_s + \eta \varphi^X_{s-}
      + \eta \vert \varphi^X_{s-} \vert \right) dX^{\uparrow}_s \\
    & \; + \frac{1}{2} \int_{[0,t]} \left( \zeta^X_{s-} + \eta \Delta
      X^{\downarrow}_s - \eta \varphi^X_{s-} + \eta \vert
      \varphi^X_{s-} \vert \right) dX^{\downarrow}_s \quad (t \geq 0)
  \end{aligned}
\end{align}
with
$L _{0-} (X) \set \zeta_0 \vert \varphi^X_{0-} \vert/2 + \eta
(\varphi^X_{0-})^2/2$. Using once more the spread dynamics in
\eqref{eq:spreaddynamics} we can write
$-\frac{\kappa}{2} \vert \varphi^X_{t-} \vert \zeta^X_{t-} dt =
\frac{1}{2} \vert \varphi^X_{t-} \vert d\zeta^X_t - \frac{\eta}{2}
\vert \varphi^X_t \vert (dX^{\uparrow}_t + dX^{\downarrow}_t)$.
Inserting this expression in \eqref{eq:p:wealthdyn:4} gives us
\begin{align} \label{eq:p:wealthdyn:5}
  \begin{aligned}
    L_t(X) = & \; L_{0-}(X) + \frac{1}{2} \int_{[0,t]} (\zeta^X_{s-} +
    \eta \Delta X^{\uparrow}_s )
    dX^{\uparrow}_s \\
    & + \frac{1}{2} \int_{[0,t]} (\zeta^X_{s-} + \eta \Delta
    X^{\downarrow}_s )
    dX^{\downarrow}_s +\frac{1}{2} \int_{[0,t]} \zeta^X_s d\vert \varphi^X_s \vert \\
    & +\frac{1}{2} \int_{[0,t]} \vert \varphi^X_{s-} \vert d\zeta^X_s
    +\frac{1}{2} \eta \int_{[0,t]} \varphi^X_{s-} d \varphi^X_s \quad
    (t \geq 0).
  \end{aligned}
\end{align}
Again, integration by parts as in \cite{JacodShir:02}, Definition
I.4.45, allows us to write
\begin{align}
  \frac{1}{2} \int_{[0,t]} \vert \varphi^X_{s-} \vert d\zeta^X_s
  = & \;
      \frac{1}{2}
      \vert \varphi^X_t \vert \zeta^X_t - \frac{1}{2} \vert \varphi^X_{0-} \vert
      \zeta^X_{0-} - \frac{1}{2} \int_{[0,t]} \zeta^X_s d\vert
      \varphi^X_s \vert, \label{eq:p:wealthdyn:6} \\ 
  \frac{1}{2} \eta \int_{[0,t]} \varphi^X_{s-} d \varphi^X_s 
  = & \;
      \frac{1}{4}
      \eta
      \left( (\varphi^X_t)^2 - (\varphi^X_{0-})^2 -
      [\varphi^X,\varphi^X]_t \right). \label{eq:p:wealthdyn:7}
\end{align}
Plugging back \eqref{eq:p:wealthdyn:6} and \eqref{eq:p:wealthdyn:7}
into \eqref{eq:p:wealthdyn:5}, using the definition of $L_{0-}(X)$ as
well as the fact that
$[X^{\uparrow,\downarrow},X^{\uparrow,\downarrow}]_t = \int_{[0,t]}
\Delta X^{\uparrow,\downarrow}_s \, dX^{\uparrow,\downarrow}_s$ for
all $t\geq 0$ (cf. Proposition I.4.49 a) in \cite{JacodShir:02})
finally yields
\begin{align} \label{eq:p:wealthdyn:8}
  \begin{aligned}
    L_t(X) = & ~\frac{1}{2} |\varphi^X_t| \zeta^X_t + \frac{\eta}{4}
    \big((\varphi^X_t)^2 + (\varphi^X_{0-})^2 \big) + \frac{1}{2}
    \int_{[0,t]} \zeta^X_{s-} (dX_s^{\uparrow} + dX_s^{\downarrow})\\
    & + \frac{\eta}{4} ([X^{\uparrow},X^{\uparrow}]_t +
    [X^{\downarrow},X^{\downarrow}]_t +
    2[X^{\uparrow},X^{\downarrow}]_t) \quad (t \geq 0).
  \end{aligned}
\end{align}
Next, by using the explicit representation of the spread $\zeta^X$ in
\eqref{eq:spreadsolution} and introducing the process
$Y_t \set \int_{[0,t]} e^{\kappa s} (dX^{\uparrow}_s +
dX^{\downarrow}_s)$ for all $t \geq 0$ we obtain
\begin{equation} \label{eq:p:wealthdyn:9}
  \begin{aligned}
    \int_{[0,t]} \zeta^X_{s-} (dX_s^{\uparrow} + dX_s^{\downarrow}) &
    = \int_{[0,t]} e^{-\kappa s} \zeta_{0} (dX_s^{\uparrow} +
    dX_s^{\downarrow}) + \frac{\eta}{2} e^{-2 \kappa t} Y_t^2 \\
    & \hspace{10pt} + \kappa \eta \int_0^t e^{-2\kappa s} Y^2_{s-} ds
    - \frac{\eta}{2} \int_{[0,t]} e^{-2\kappa s} d[Y,Y]_s.
  \end{aligned}
\end{equation}
Once more due to \cite{JacodShir:02}, Proposition I.4.49, observe that
we have
\begin{equation}
  d[Y,Y]_s = \;  e^{2 \kappa s} \left( d[X^{\uparrow},X^{\uparrow}]_s 
    + d[X^{\downarrow},X^{\downarrow}]_s +
    2d[X^{\uparrow},X^{\downarrow}]_s \right). \label{eq:p:wealthdyn:10}
\end{equation}
In addition, it holds that
$e^{-\kappa t} Y_t = ( \zeta_t^X - e^{-\kappa t} \zeta_0)/\eta$ for
all $t \geq 0$. Thus, using this representation as well
as~\eqref{eq:p:wealthdyn:10} in \eqref{eq:p:wealthdyn:9}, and plugging
the resulting term back into \eqref{eq:p:wealthdyn:8} yields the
desired form of the liquidity cost functional in
\eqref{eq:liquiditycosts}. Finally, one can easily observe that the
functional $L_t(X)$ in~\eqref{eq:liquiditycosts} is convex in $X$ for
each $t \geq 0$. Moreover, using the lower estimate
$\zeta^X_t - e^{-\kappa t} \zeta_0 \geq \eta e^{-\kappa t}
(X^{\uparrow}_t + X^{\downarrow}_t)$ for all $t \geq 0$, we obtain the
lower bound of $L_t(X)$ as claimed
in~\eqref{eq:liquiditycostsbound}. \qed \medskip

In order to apply Lemma~\ref{lem:compactness} in our setting in the
proof of Theorem~\ref{thm:existence}, we need the following lemma.

\begin{Lemma} \label{lem:levelsets} For the level-set
  $\cL_0 \set \{ X \in \cX : \EE u(V_T(X)) \geq \EE u(V_T(0))\}$,
  $\conv(\{ X^{\uparrow}_T + X^{\downarrow}_T : X \in \cL_0 \})$ is
  $L^0(\Omega,\cF,\PP)$-bounded.
\end{Lemma}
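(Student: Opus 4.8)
The plan is to reduce the statement about the convex hull to plain boundedness in probability of the terminal trading volumes $\|X\|_T \set X^{\uparrow}_T + X^{\downarrow}_T$, $X \in \cL_0$, and then to read this off from the quadratic growth of the liquidity costs in Lemma~\ref{lem:propliquidwealth}. First I would observe that $X \mapsto \EE u(V_T(X))$ is concave on the convex set $\cX$: by the decomposition~\eqref{eq:liquidwealth} we have $V_T(X) = \xi_0 + \varphi_0 P_{0-} + \int_0^T \varphi^X_s\,dP_s - L_T(X)$, where $X\mapsto \varphi^X$, and hence $X\mapsto \int_0^T \varphi^X_s\,dP_s$, is affine, while $X\mapsto L_T(X)$ is pathwise convex by Lemma~\ref{lem:propliquidwealth}; composing with the concave, increasing $u$ and taking expectations preserves concavity. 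Consequently $\cL_0$ is convex, and since $X\mapsto \|X\|_T$ is linear, every element of $\conv(\{\|X\|_T: X\in\cL_0\})$ equals $\|\bar X\|_T$ for some finite convex combination $\bar X$ of strategies in $\cL_0$, which again lies in $\cL_0$ by concavity. Hence $\conv(\{\|X\|_T: X\in\cL_0\}) = \{\|X\|_T: X\in\cL_0\}$, and it remains to show this set is bounded in $L^0$.

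Next I would establish an a priori pathwise upper bound for $V_T(X)$ isolating its dependence on $X$. With $P^*_T \set \sup_{0\le t\le T}|P_t| < \infty$ a.s., integration by parts together with $[\varphi^X,P]\equiv 0$ (as $P$ is continuous and $\varphi^X$ has finite variation) yields $\int_0^T \varphi^X_s\,dP_s = \varphi^X_T P_T - \varphi_0 P_{0-} - \int_{[0,T]} P_{s-}\,d\varphi^X_s$, and therefore
\begin{equation*}
  \left| \int_0^T \varphi^X_s\,dP_s \right| \le 2|\varphi_0|\,P^*_T + 2 P^*_T\,\|X\|_T .
\end{equation*}
Combining this with the lower bound $L_T(X) \ge c_1 \|X\|_T^2$, $c_1 \set \eta e^{-2\kappa T}/4 > 0$, from~\eqref{eq:liquiditycostsbound}, I obtain
\begin{equation*}
  V_T(X) \le G + 2 P^*_T\,\|X\|_T - c_1 \|X\|_T^2 , \qquad G \set |\xi_0| + 3|\varphi_0|\,P^*_T ,
\end{equation*}
with $G$ finite a.s. In particular, on $\{G\le R,\ P^*_T\le R\}$ one has $V_T(X)\le R + 2Rz - c_1 z^2$ with $z=\|X\|_T$, an expression that lies below any prescribed level once $z$ exceeds a bound depending only on $R$ and that level.

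To close the argument I would use Markov's inequality on the level set. Since $u$ is strictly concave and increasing, $u(x)\le u(0)-(u(1)-u(0))|x|$ for $x\le 0$, so $\lim_{x\to-\infty}u(x)=-\infty$; thus $u$ maps $\RR$ onto $(-\infty,\bar u)$ with $\bar u\set u(\infty)$ and $y_K\set u^{-1}(\bar u-K)\to-\infty$ as $K\to\infty$. For $X\in\cL_0$ we have $\EE[\bar u - u(V_T(X))]\le \bar u - \EE u(V_T(0)) =: b$, which we may assume finite, whence $\PP[V_T(X)<y_K] = \PP[\bar u - u(V_T(X))>K]\le b/K$. Given $\epsilon>0$, I would pick $R$ with $\PP[G>R]+\PP[P^*_T>R]<\epsilon/2$ and then $K$ with $b/K<\epsilon/2$; choosing $M=M(R,K)$ so large that $R+2Rz-c_1 z^2<y_K$ for all $z\ge M$, the inclusion $\{G\le R,\ P^*_T\le R,\ \|X\|_T>M\}\subseteq\{V_T(X)<y_K\}$ gives $\PP[\|X\|_T>M]\le \PP[G>R]+\PP[P^*_T>R]+\PP[V_T(X)<y_K]<\epsilon$ for all $X\in\cL_0$, with $M$ independent of $X$. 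This is exactly the asserted $L^0$-boundedness. The step requiring the most care is the pathwise estimate taming the stochastic gains $\int_0^T\varphi^X_s\,dP_s$ by a constant multiple of $P^*_T(1+\|X\|_T)$ via integration by parts, so that the quadratic penalty $L_T(X)\ge c_1\|X\|_T^2$ can dominate it for large trading volumes; everything after that is routine measure theory.
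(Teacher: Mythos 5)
Your proposal is correct and follows the same overall architecture as the paper's proof: convexity of $\cL_0$ (from pathwise convexity of $L_T$ and concavity/monotonicity of $u$) reduces the convex hull to the set itself, and the pathwise estimate $V_T(X)\le G+2P^*_T(X^{\uparrow}_T+X^{\downarrow}_T)-c\,(X^{\uparrow}_T+X^{\downarrow}_T)^2$, obtained from integration by parts and the quadratic lower bound on $L_T$ in Lemma~\ref{lem:propliquidwealth}, is exactly the paper's key inequality. The only genuine deviation is the concluding step: the paper normalizes $u(0)=0$, uses $u(z)\le u'(0)z$ to convert the constraint $\EE u(V_T(X))\ge \EE u(V_T(0))$ into a uniform $L^1$-bound on the positive part $\bigl(\bigl(\sqrt{c}\,(X^{\uparrow}_T+X^{\downarrow}_T)-P^*_T/\sqrt{c}\bigr)^2-(P^*_T)^2/c\bigr)^+$, and then deduces $L^0$-boundedness, whereas you apply Markov's inequality directly to the nonnegative variable $u(\infty)-u(V_T(X))$ and exploit $u(x)\to-\infty$ as $x\to-\infty$ to obtain a uniform tail bound $\PP[V_T(X)<y_K]\le b/K$, which you then intersect with the event where $G$ and $P^*_T$ are controlled. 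Both routes are valid; yours works on the probability scale rather than the $L^1$ scale and does not use differentiability of $u$ at $0$, while sharing with the paper the same implicit assumption (handled there by the ``without loss of generality'' normalization $\xi_0=\varphi_0=0$) that $\EE u(V_T(0))>-\infty$, without which the level set would be all of $\cX$ and the statement vacuous or false.
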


\begin{proof}
  First, observe that due to convexity of the liquidity cost
  functional $L_T(X)$ in $X \in \cX$ by virtue of Lemma
  \ref{lem:propliquidwealth} as well as concavity and monotonicity of
  the utility function~$u$, the level-set $\cL_0$ is a convex set.  As
  a consequence, it holds that
  $\conv(\{ X^{\uparrow}_T + X^{\downarrow}_T : X \in \cL_0 \}) = \{
  X^{\uparrow}_T + X^{\downarrow}_T : X \in \cL_0 \}$.  Next, note
  that for any $X \in \cX$ the liquidation wealth $V_T(X)$ as given in
  \eqref{eq:liquidwealth} can be bounded from above by
  \begin{align}
    V_T(X)
    & = V_{0-}(X) + L_{0-}(X) + \int_0^T \varphi^X_t dP_t - L_T(X) \nonumber \\
    & \leq \xi^X_{0-} + 2 ( \varphi^X_{0-} + X^{\uparrow}_T + X^{\downarrow}_T) P_T^* - c
      (X^{\uparrow}_T + X^{\downarrow}_T)^2 \nonumber \\
    & = \xi^X_{0-} + \frac{1}{c} (P_T^*)^2 -
      \Big( \sqrt{c} (X^{\uparrow}_T + X^{\downarrow}_T) 
      - \frac{1}{\sqrt{c}} P_T^* \Big)^2 + 2 \varphi^X_{0-} P_T^* 
      \label{p:levelsets:bound1}
  \end{align}
  with $P^*_T\set\max_{0 \leq s\leq T} \vert P_s \vert$, where we used
  integration by parts, the fact that the semimartingale
  $(P_t)_{t \geq 0}$ is continuous and the lower bound
  $L_T(X)\geq c ( X^{\uparrow}_T + X^{\downarrow}_T )^2$ from Lemma
  \ref{lem:propliquidwealth} for some constant $c>0$. Henceforth, to
  alleviate the presentation, let us assume without loss of generality
  that $\xi^X_{0-} = \varphi^X_{0-} = 0$ as well as $u(0)=0$. Due to
  the upper bound in~\eqref{p:levelsets:bound1}, we obtain for all
  $X \in \cL_0$ the estimate
  \begin{equation*}
    \EE[u(V_T(0))]
    \leq \EE\left[ u \left( \frac{1}{c} (P_T^*)^2 -
        \left( \sqrt{c} (X^{\uparrow}_T + X^{\downarrow}_T) 
          - \frac{1}{\sqrt{c}} P_T^* \right)^2 \right) \right]. 
  \end{equation*}
  Hence, together with the fact that $u$ is bounded from above, it
  must hold for the negative part that
  \begin{equation*}
    \sup_{X \in \cL_0} \EE\left[ u \left( \frac{1}{c} (P_T^*)^2 -
        \left( \sqrt{c} (X^{\uparrow}_T + X^{\downarrow}_T) 
          - \frac{1}{\sqrt{c}} P_T^* \right)^2 \right)^{\!\! -} \; \right] < \infty.
  \end{equation*}
  Moreover, since $u \in C^1(\mathbb{R})$ is strictly concave and
  increasing which yields $u(z) \leq u(0) + u'(0) z = u'(0) z$ and
  thus $u(z)^{-} \geq u'(0) (-z)^+$ for all $z \in \RR$, we obtain
  \begin{equation} \label{p:levelsets:bound4} 
    \sup_{X \in \cL_0}
    \EE\left[\left( \left( \sqrt{c} (X^{\uparrow}_T + X^{\downarrow}_T) -
          \frac{1}{\sqrt{c}} P_T^* \right)^2 - \frac{1}{c} (P_T^*)^2
      \right)^{\!\! +} \; \right] < \infty.
  \end{equation}
  Finally, observe that the
  $L^1(\Omega,\mathcal{F},\mathbb{P})$-boundedness
  in~\eqref{p:levelsets:bound4} implies that the set
  $\{ X^{\uparrow}_T + X^{\downarrow}_T : X \in \cL_0 \}$ is bounded in
  $L^0(\Omega,\mathcal{F},\mathbb{P})$.
\end{proof}

The last ingredient for the proof of Theorem~\ref{thm:existence} is
the continuity of the liquidation wealth $V_T(X)$ in $X$.
\medskip

\noindent{{\emph{Proof of
      Lemma~\ref{lem:semicontinuity}.}}}
We fix $\omega \in \Omega$. By weak convergence of $X^n$ to $X$ on
$[0,T]$, we obtain that
$\zeta^{X^n}_t(\omega) \rightarrow \zeta^X_t(\omega)$ for $t=T$ and
all $t \in [0,T)$ such that
$\Delta X^{\uparrow}_t(\omega)=\Delta X^{\downarrow}_t(\omega)=0$;
cf. the representation of the spread in~\eqref{eq:spreadsolution}. In
particular, it holds that
$\zeta^{X^n}_\cdot(\omega) \rightarrow \zeta^X_\cdot(\omega)$
$dt$-a.e. on $[0,T]$ because the number of jumps of
$X^{\uparrow}(\omega)$, $X^{\downarrow}(\omega)$ is countable. At the
time $\zeta^{X^n}_s(\omega)$ is uniformly bounded in $n$ and $s$ since
so is $X^n_s(\omega)$. Thus,
by dominated convergence, we get for any $\omega \in \Omega$ that
$\lim_{n \rightarrow \infty} \int_0^t \left(\zeta^{X^n}_{s}(\omega)
  - e^{-\kappa s} \zeta_0 \right)^2 ds = \int_0^t
\left(\zeta^X_{s}(\omega) - e^{-\kappa s} \zeta_0 \right)^2 ds.$
Moreover, we obviously have that
$\varphi_T^{X^n}(\omega) \rightarrow \varphi_T^X(\omega)$. Hence,
referring to the representation of the liquidity costs $L_T(X^n)$
in~\eqref{eq:liquiditycosts}, we can conclude that
$\lim_{n \rightarrow \infty} L_T(X^n(\omega)) =
L_T(X(\omega))$. Next, concerning the stochastic integral of
$\varphi^{X^n}$ with respect to the continuous semimartingale~$P$ in
the liquidation wealth $V_T(X^n)$ in~\eqref{eq:liquidwealth}, we
obtain, after applying integration by parts, the expression
  \begin{align*}
   & \int_0^T \varphi^X_t dP_t 
    = \varphi^X_T P_T - \varphi^X_{0-} P_{0-} 
      - \int_{[0,T]} P_s (dX^{\uparrow}_s -dX^{\downarrow}_s) \\
    & = \lim_{n \rightarrow \infty} \Big( \varphi^{X^n}_T P_T - \varphi^{X^n}_{0-} P_{0-} 
      - \int_{[0,T]} P_s (dX^{\uparrow,n}_s -dX^{\downarrow,n}_s) \Big) \\
    & = \lim_{n \rightarrow \infty}  \int_0^T \varphi^{X^n}_t dP_t 
      \quad \textrm{for all} \,\, \omega \in \Omega,
  \end{align*}
  where we again used weak convergence of
  $X^n(\omega) \xrightarrow[]{w} X(\omega)$ on $[0,T]$ for all
  $\omega \in \Omega$ and the continuity of $P$. In summary, we
  obtain $\lim_{n \rightarrow \infty} V_T(X^n) = V_T(X)$
  pointwise for all $\omega \in \Omega$ as desired.
\qed


\subsection{Proofs of Lemma \ref{lem:subgradients} and Lemma
  \ref{lem:nobuysell}}
\label{subsec:proofs2}

Next, let us compute the infinite-dimensional subgradients
in~\eqref{eq:buysubgradient} and~\eqref{eq:sellsubgradient} of the
convex cost functional $J_T(\cdot)$ on $\cX^d$ given
in~\eqref{eq:trackingproblem}.  \medskip

\noindent{{\emph{Proof of Lemma
      \ref{lem:subgradients}.}}}
Let us define the deviation functional
\begin{equation}
  D_T(X) \set \frac{\alpha\sigma^2}{2} \int_0^T \left(\varphi^X_t -
    \frac{\mu}{\alpha\sigma^2} \right)^2 dt \label{p:lem:sub:D1}
\end{equation}
on $\cX^d$. Then the convex cost functional $J_T(\cdot)$ in
\eqref{eq:trackingproblem} can be written as
$ J_T(X) = L_T(X) + D_T(X)$.  We will proceed in three steps.

\emph{Step 1:} Let us start with the computation of the subgradients
of the liquidity cost functional $L_T(\cdot)$ given in
\eqref{eq:liquiditycosts}. Observe that for any $X, Y \in \cX^d$ with
$\varphi^{Y}_{0-}=\varphi^{X}_{0-}$ and any $\epsilon \in (0, 1]$ we
obtain
\begin{equation} \label{p:lem:sub:L1}
  \begin{aligned}
    & \frac{L_T(\epsilon Y + (1-\epsilon) X) - L_T(X)}{\epsilon} =
    \frac{\kappa}{\eta} \int_0^T (\zeta^X_t - e^{-\kappa
      t} \zeta_0) (\zeta^Y_t - \zeta^X_t) dt \\
    & + \frac{\eta}{4} \frac{| \epsilon \varphi^Y_T + (1-\epsilon)
      \varphi^X_T |^2 - |\varphi^X_T|^2}{\epsilon} + \frac{1}{2}
    \zeta^X_T \frac{| \epsilon \varphi^Y_T + (1-\epsilon)
      \varphi^X_T | - |\varphi^X_T|}{\epsilon}  \\
    & + \frac{1}{2} \left( \zeta^Y_T - \zeta^X_T \right) \left(
      |\epsilon \varphi^Y_T + (1-\epsilon) \varphi^X_T| +
      \frac{1}{\eta} (\zeta^X_T - e^{-\kappa T} \zeta_0)
    \right)  \\
    & + \frac{1}{2} \int_{[0,T]} e^{-\kappa t} \zeta_0
    (dY^{\uparrow}_t + dY^{\downarrow}_t - dX^{\uparrow}_t
    - dX^{\downarrow}_t)  \\
    & + \epsilon \left(\frac{1}{4 \eta} (\zeta^Y_T - \zeta^X_T)^2 +
      \frac{\kappa}{2 \eta} \int_0^T (\zeta^Y_t - \zeta^X_t)^2 dt
    \right).
  \end{aligned}
\end{equation}
Note that we have the lower bound
$| \epsilon \varphi^Y_T + (1-\epsilon) \varphi^X_T |^2 -
|\varphi^X_T|^2\geq 2\varepsilon \varphi^X_T (\varphi^Y_T -
\varphi^X_T)$ and
$| \epsilon \varphi^Y_T + (1-\epsilon) \varphi^X_T | -
|\varphi^X_T|\geq \epsilon \sign_\varrho(\varphi^X_T) (\varphi^Y_T -
\varphi^X_T)$, where we denote by $x \mapsto \sign_\varrho(x)$ the
subgradient of the function $x \mapsto |x|$ with
$\sign_\varrho(0) = \varrho \in [-1,1]$;
cf. Remark~\ref{rem:sign}. Plugging back 
these lower bounds into \eqref{p:lem:sub:L1} and passing to the limit
$\epsilon \downarrow 0$ yields
\begin{equation} \label{p:lem:sub:L2}
  \begin{aligned}
    & \lim_{\epsilon \downarrow 0} \frac{L_T(\epsilon Y + (1-\epsilon)
      X) - L_T(X)}{\epsilon} \geq \frac{\kappa}{\eta} \int_0^T
    (\zeta^X_t - e^{-\kappa
      t} \zeta_0) (\zeta^Y_t - \zeta^X_t) dt \\
    & + \frac{1}{2} \left( \eta \varphi^X_T +
      \sign_\varrho(\varphi^X_T) \zeta^X_T \right) (\varphi^Y_T -
    \varphi^X_T) + \frac{1}{2} \!\left( |\varphi^X_T| + \frac{1}{\eta}
      (\zeta^X_T - e^{-\kappa T} \zeta_0) \!
    \right)  \\
    & \cdot \left( \zeta^Y_T - \zeta^X_T \right) + \frac{1}{2}
    \int_{[0,T]} \zeta_0 e^{-\kappa t} (dY_t^{\uparrow} -
    dX_t^{\uparrow} + dY_t^{\downarrow} - dX^{\downarrow}_t).
  \end{aligned}
\end{equation}
Next, let us express every term in \eqref{p:lem:sub:L2} as an integral
with respect to either $Y^{\uparrow} - X^{\uparrow}$ or
$Y^{\downarrow} - X^{\downarrow}$. Using~\eqref{eq:spreadsolution} for
the spreads $\zeta^Y$ and $\zeta^X$ as well as Fubini's Theorem, we
can rewrite the first term in~\eqref{p:lem:sub:L2} as
\begin{equation*}
  \begin{aligned}
    & \frac{\kappa}{\eta} \int_0^T (\zeta^X_t - e^{-\kappa
      t} \zeta_0) (\zeta^Y_t - \zeta^X_t) dt \\
    & = \kappa \int_{[0,T]} \left( \int_s^T  (\zeta^X_t - e^{-\kappa
      t} \zeta_0) e^{-\kappa (t-s)} dt \right) (dY_s^{\uparrow} -
      dX^{\uparrow}_s)  \\
    & \phantom{=} + \kappa \int_{[0,T]} \left( \int_s^T  (\zeta^X_t - e^{-\kappa
      t} \zeta_0) e^{-\kappa (t-s)} dt \right) (dY_s^{\downarrow} -
      dX^{\downarrow}_s). 
  \end{aligned}
\end{equation*}
Moreover, using that
$\varphi^Y_T - \varphi^X_T = \int_{[0,T]} (dY_s^{\uparrow} -
dX^{\uparrow}_s) - \int_{[0,T]} (dY_s^{\downarrow} -
dX_s^{\downarrow})$ as well as
$\zeta^Y_T - \zeta^X_T = \int_{[0,T]} \eta e^{-\kappa(T-s)}
(dY_s^{\uparrow} - dX^{\uparrow}_s) + \int_{[0,T]} \eta
e^{-\kappa(T-s)} (dY_s^{\downarrow} - dX_s^{\downarrow})$ allows us to
finally write \eqref{p:lem:sub:L2} as
\begin{equation} \label{p:lem:sub:L3}
  \begin{aligned}
    & \lim_{\epsilon \downarrow 0} \frac{L_T(\epsilon Y + (1-\epsilon)
      X) - L_T(X)}{\epsilon}
    \\
    & \geq \int_{[0,T]} {}^{\varrho}\nabla_s^{\uparrow} L_T(X)
    (dY_s^{\uparrow} - dX_s^{\uparrow}) + \int_{[0,T]}
    {}^{\varrho}\nabla_s^{\downarrow} L_T(X) (dY_s^{\downarrow} -
    dX_s^{\downarrow}),
  \end{aligned}
\end{equation}
where we set
\begin{align*}
  {}^{\varrho}\nabla_s^{\uparrow, \downarrow} L_T(X) \set 
  & \, \kappa \int_s^T
    e^{-\kappa (t-s)} \zeta^X_t dt +
    \frac{1}{2} (\eta |\varphi^X_T| + \zeta^X_T) e^{-\kappa (T-s)} \\
  & \pm
    \frac{\eta}{2} \varphi^X_T \pm \frac{1}{2} \sign_\varrho(\varphi^X_T)
    \zeta^X_T \qquad (0 \leq s \leq T). 
\end{align*}

\emph{Step 2:} Let us now compute the subgradients of the deviation
functional $D_T(\cdot)$ defined in \eqref{p:lem:sub:D1}. Again, for
any $X, Y \in \cX^d$ with $\varphi^{Y}_{0-}=\varphi^{X}_{0-}$ and any
$\epsilon \in (0, 1]$ we obtain
\begin{align*}
  & \frac{D_T(\epsilon Y + (1-\epsilon) X) - D_T(X)}{\epsilon} \\
  &   = \alpha \sigma^2 \int_0^T \left( \varphi^X_t
    - \frac{\mu}{\alpha \sigma^2} \right) (\varphi^Y_t - \varphi^X_t) dt
    + \epsilon \frac{\alpha \sigma^2}{2} \int_0^T (\varphi^Y_t - \varphi^X_t)^2 dt
\end{align*}
and hence, together with 
Fubini's Theorem, we arrive at
\begin{align*}
  & \lim_{\epsilon \downarrow 0} \frac{D_T(\epsilon Y + (1-\epsilon)
    X) 
    - D_T(X)}{\epsilon} \\
  & = \alpha \sigma^2 \int_{[0,T]} \left( \int_s^T \left(\varphi^X_t -
    \frac{\mu}{\alpha \sigma^2} \right) dt \right) (dY^{\uparrow}_s
    - dX^{\uparrow}_s)  \\
  & \phantom{=} + \alpha \sigma^2 \int_{[0,T]} \left( \int_s^T \left(
    \frac{\mu}{\alpha \sigma^2} - \varphi^X_t \right) dt \right) (dY^{\downarrow}_s
    - dX^{\downarrow}_s).
\end{align*}
Consequently, we can write
\begin{align}
  & \lim_{\epsilon \downarrow 0} \frac{D_T(\epsilon Y + (1-\epsilon)
    X) 
    - D_T(X)}{\epsilon} \nonumber \\
  & = \int_{[0,T]} \nabla_s^{\uparrow} D_T(X) (dY_s^{\uparrow} - dX_s^{\uparrow}) 
    + \int_{[0,T]} \nabla_s^{\downarrow} D_T(X) (dY_s^{\downarrow} -
    dX_s^{\downarrow}), \label{p:lem:sub:D2}
\end{align}
where we set
\begin{align}
  \nabla_s^{\uparrow, \downarrow} D_T(X) \set \pm \alpha \sigma^2
  \int_s^T \left(\varphi^X_t -
  \frac{\mu}{\alpha \sigma^2} \right) dt \qquad (0 \leq s \leq T). 
  \label{p:lem:sub:D3}
\end{align}

\emph{Step 3:} Finally, for the convex cost functional $J_T(\cdot)$ we
obtain for any $X, Y \in \cX^d$ with
$\varphi^{Y}_{0-}=\varphi^{X}_{0-}$ and any $\epsilon \in (0,1]$ the
lower bound
\begin{align*}
  &  J_T(Y) - J_T(X) \\
  & \geq\frac{ J_T(\epsilon Y+(1-\epsilon)X)-J_T(X)}{\epsilon} \\ 
  & = \frac{ L_T(\epsilon Y + (1-\epsilon) X) -
    L_T(X)}{\epsilon} + \frac{ D_T(\epsilon Y + (1-\epsilon) X) -
    D_T(X)}{\epsilon}.
\end{align*}
Passing to the limit $\epsilon \downarrow 0$ yields together with
\eqref{p:lem:sub:L3} and \eqref{p:lem:sub:D2}
\begin{align*}
  & \frac{J_T(\epsilon Y + (1-\epsilon) X) - J_T(X)}{\epsilon} \\
  & \geq \int_{[0,T]} ({}^{\varrho}\nabla_s^{\uparrow} L_T(X) +
    \nabla_s^{\uparrow} D_T(X)) (dY_s^{\uparrow} - dX_s^{\uparrow}) 
  \\
  & \phantom{\geq} + \int_{[0,T]} ({}^{\varrho}\nabla_s^{\downarrow} L_T(X) +
    \nabla_s^{\downarrow} D_T(X)) (dY_s^{\downarrow} -
    dX_s^{\downarrow}),
\end{align*}
where we note that
${}^{\varrho}\nabla_s^{\uparrow,\downarrow} J_T(X) =
{}^{\varrho}\nabla_s^{\uparrow,\downarrow} L_T(X) +
\nabla_s^{\uparrow,\downarrow} D_T(X)$ for all $s \in [0,T]$ as
desired.  \qed \medskip

\noindent{{\emph{Proof of Lemma~\ref{lem:nobuysell}.}}}
For a stragegy $X=(X^{\uparrow},X^{\downarrow}) \in \cX^d$,
$X \neq (0,0)$, let $t \in [0,T]$ be such that
$\nabla_t^{\uparrow} J_T(X) = 0$. Using the definition of
$\nabla^{\uparrow}_t J_T(X)$ in \eqref{eq:buysubgradient} this amounts to
the identity
\begin{align*}
  &
    - \frac{\eta}{2} \varphi^X_T - \frac{1}{2} \sign(\varphi^X_T)
    \zeta^X_T \\
  & = \int_t^T \left( \kappa e^{-\kappa (u-t)} \zeta_u^X 
    + \alpha \sigma^2 \left( \varphi^X_u - \frac{\mu}{\alpha\sigma^2}
    \right) \right) du 
    + \frac{1}{2} \left( \eta \vert \varphi^X_T \vert + \zeta_T^X
    \right) e^{-\kappa (T-t)}.
\end{align*}
Plugging this in the definition of $\nabla^{\downarrow}_t J_T(X)$ in
\eqref{eq:sellsubgradient} yields
\begin{equation*}
  \nabla_t^{\downarrow} J_T(X) = 2 \int_t^T \kappa e^{-\kappa (u-t)} \zeta_u^X du + \left( \eta \vert \varphi^X_T \vert + \zeta_T^X
  \right) e^{-\kappa (T-t)} > 0
\end{equation*}
because $X \neq (0,0)$. The same reasoning applies when the roles of
$\uparrow$ and $\downarrow$ are interchanged.  \qed

\subsection{Proofs of Section~\ref{subsec:mainresult}}
\label{subsec:proofs3}

In this section we prove our main Theorem~\ref{thm:main} together with
Corollaries~\ref{cor:main1} and~\ref{cor:main2}. We start with
introducing the two key objects, that is, the free boundary functions
$\phi_{\mathrm{buy}}(\tau,\zeta)$ and
$\phi_{\mathrm{sell}}(\tau,\zeta)$ on the domain $[0,+\infty)^2$.

\subsubsection{The free boundary functions}
\label{subsubsec:proofs:aux}

Introducing the function $\phi_{\mathrm{sell}}$ is
straightforward. Recall that $\lambda = \sqrt{\alpha} \sigma$ and
$\beta = \kappa\lambda/\sqrt{\lambda^2+\kappa\eta}$. We set
\begin{equation} \label{eq:def:lambdagamma} 
  \gamma_{\pm} \set \lambda
  \pm \sqrt{\kappa\eta + \lambda^2}
\end{equation}
and denote
\begin{equation}
  C(\tau) \set \frac{e^{-\beta\tau}\gamma_- + 
    e^{\beta\tau}\gamma_+}{e^{-\beta\tau}\gamma^2_- + 
    e^{\beta\tau}\gamma^2_+}, \quad 
  D(\tau) \set 1- \frac{2\kappa\eta}{e^{-\beta\tau}\gamma^2_- +
    e^{\beta\tau}\gamma^2_+} \label{eq:def:CandD} \quad (\tau \geq 0).
\end{equation}
On the domain $[0,+\infty)^2$, the free boundary
function $(\tau,\zeta) \mapsto \phi_{\mathrm{sell}}(\tau,\zeta)$ will
then be defined as
\begin{equation} \label{def:phisell} 
  \phi_{\mathrm{sell}}(\tau,\zeta)
  \set \frac{\mu}{\lambda^2} D(\tau) + \zeta \frac{\kappa}{\lambda}
  C(\tau) \quad (\tau \geq 0, \zeta \geq 0).
\end{equation}
Let us note the following property which can be easily checked:

\begin{Lemma} \label{lem:phisell} 
  We have $\phi_{\mathrm{sell}}(\tau,\zeta) > 0$ for all
  $\tau \geq 0$, $\zeta \geq 0$, since $1 > D(\tau) > 0$ and
  $C(\tau) > 0$ for all $\tau \geq 0$. \qed
\end{Lemma}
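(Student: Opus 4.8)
The plan is to verify the two auxiliary sign statements quoted in the lemma, namely $C(\tau)>0$ and $0<D(\tau)<1$ for every $\tau\ge0$, and then to read off positivity of $\phi_{\mathrm{sell}}$ directly from the representation~\eqref{def:phisell}.

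First I would collect the elementary facts about the characteristic roots $\gamma_\pm$ in~\eqref{eq:def:lambdagamma}. Since $\lambda=\sqrt{\alpha}\,\sigma>0$ and $\kappa\eta>0$, we have $\sqrt{\kappa\eta+\lambda^2}>\lambda$, so $\gamma_+>0>\gamma_-$; moreover $\gamma_++\gamma_-=2\lambda$, $\gamma_+\gamma_-=-\kappa\eta$, and consequently $\gamma_+>|\gamma_-|$ with $\gamma_+-|\gamma_-|=2\lambda$ and $\gamma_+^2\gamma_-^2=(\kappa\eta)^2$. Also $\beta=\kappa\lambda/\sqrt{\lambda^2+\kappa\eta}>0$, hence $e^{\beta\tau}\ge1\ge e^{-\beta\tau}$ for $\tau\ge0$. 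The denominator $e^{-\beta\tau}\gamma_-^2+e^{\beta\tau}\gamma_+^2$ appearing in both $C(\tau)$ and $D(\tau)$ is then a sum of strictly positive terms.

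For $C(\tau)>0$ it remains to check that its numerator $e^{-\beta\tau}\gamma_-+e^{\beta\tau}\gamma_+=e^{\beta\tau}\gamma_+-e^{-\beta\tau}|\gamma_-|$ is positive; using $e^{\beta\tau}\ge1$ and $e^{-\beta\tau}\le1$ it is bounded below by $\gamma_+-|\gamma_-|=2\lambda>0$. The bound $D(\tau)<1$ is immediate because the subtracted fraction in~\eqref{eq:def:CandD} is strictly positive. The one substantive point is $D(\tau)>0$, that is,
\begin{equation*}
  e^{-\beta\tau}\gamma_-^2+e^{\beta\tau}\gamma_+^2>2\kappa\eta\qquad(\tau\ge0),
\end{equation*}
which I would obtain from the inequality of arithmetic and geometric means: the left-hand side is at least $2\sqrt{e^{-\beta\tau}\gamma_-^2\cdot e^{\beta\tau}\gamma_+^2}=2\sqrt{\gamma_-^2\gamma_+^2}=2\kappa\eta$, with equality only when $e^{-\beta\tau}\gamma_-^2=e^{\beta\tau}\gamma_+^2$, equivalently $e^{2\beta\tau}=\gamma_-^2/\gamma_+^2<1$, which forces $\tau<0$; hence for $\tau\ge0$ the inequality is strict.

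Finally, since $\mu>0$, $\lambda>0$, $\kappa>0$ and $\zeta\ge0$, formula~\eqref{def:phisell} presents $\phi_{\mathrm{sell}}(\tau,\zeta)=\frac{\mu}{\lambda^2}D(\tau)+\zeta\frac{\kappa}{\lambda}C(\tau)$ as the sum of the strictly positive quantity $\frac{\mu}{\lambda^2}D(\tau)$ and the nonnegative quantity $\zeta\frac{\kappa}{\lambda}C(\tau)$, so $\phi_{\mathrm{sell}}(\tau,\zeta)>0$. There is no real obstacle here: the only nonroutine ingredient is recognizing that $D(\tau)>0$ is exactly an AM--GM estimate made possible by the identity $\gamma_+\gamma_-=-\kappa\eta$; the rest is bookkeeping with the signs of the exponentials and of $\gamma_\pm$.
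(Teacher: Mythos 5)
Your proposal is correct and follows exactly the route the paper indicates: the paper simply asserts $1>D(\tau)>0$ and $C(\tau)>0$ without proof, and your verification (signs of $\gamma_\pm$, the bound $e^{\beta\tau}\gamma_+-e^{-\beta\tau}|\gamma_-|\ge 2\lambda$, and the AM--GM estimate with $\gamma_+\gamma_-=-\kappa\eta$ giving strict inequality since $\gamma_+^2>\gamma_-^2$) is a complete and accurate fill-in of that omitted check. Nothing is missing.
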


\begin{Remark} \label{rem:negphisell} 
  By a slight abuse of the definition of the function
  $\phi_{\mathrm{sell}}$ in~\eqref{def:phisell} which is only confined to
  the positive half-plane $[0,+\infty)^2$, we will
  also use for $\zeta > 0$ the notation
  $\phi_{\mathrm{sell}}(\tau,-\zeta)$ with the obvious meaning
  $\phi_{\mathrm{sell}}(\tau,-\zeta) \set \mu D(\tau)/\lambda^2 - \zeta
  \kappa C(\tau) /\lambda$.
\end{Remark}

In contrast to $\phi_{\mathrm{sell}}$ in~\eqref{def:phisell},
introducing the free boundary function
$(\tau,\zeta) \mapsto \phi_{\mathrm{buy}}(\tau,\zeta)$ on the domain
$[0,+\infty)^2$ is much more intricate and
necessitates several auxiliary constants and functions. First, let
$\bar{\theta} > 0$ denote the unique strictly positive solution to the
equation
\begin{equation} \label{eq:besslich1} 
  e^{\kappa \bar{\theta}} (2-\kappa
  \bar{\theta}) + 2 + \kappa \bar{\theta} = 0
\end{equation} 
and let $\underline{\theta} \in (0,\bar\theta)$ denote the unique
solution to the equation
\begin{equation} \label{eq:besslich2} 
  e^{\kappa \underline\theta}
  (\kappa \underline{\theta} - 1) = 1.
\end{equation} 
Next, we introduce the mapping
$\tau \mapsto (\bar\zeta(\tau),\bar\varphi(\tau))$ for all
$\tau \geq 0$ via
\begin{equation} \label{def:barzeta} 
  \bar{\zeta}(\tau) \set
  \begin{cases}
    s_1(\tau-\bar\theta,\bar\theta), & \tau > \bar\theta, \\
    s_2(\tau), &\underline\theta < \tau \leq \bar\theta, \\
    2\mu/\kappa, & 0 \leq \tau \leq \underline\theta,
  \end{cases}
\end{equation}
with
\begin{align}
  s_1(\tau,\theta) & \set 
                     \frac{\mu(1-D(\tau)) e^{\kappa\theta}}{\lambda\kappa
                     C(\tau)+\frac{\kappa}{2} e^{\kappa\theta}
                     \frac{(1+e^{-\kappa\theta})^2}{\kappa\theta-1-e^{-\kappa\theta}}}
                     \quad (\tau \geq 0, \theta \geq 0), \label{def:s1} \\
  s_2(\tau) & \set \mu\eta
              \frac{\kappa\tau e^{-\kappa\tau}+1+e^{-\kappa\tau}}{\lambda^2
              \kappa\tau +
              \frac{\kappa\eta}{2}(1+e^{-\kappa\tau})^2-\lambda^2(1+e^{-\kappa\tau})}
              \quad (\underline\theta \leq \tau \leq \bar\theta), \label{def:s2}
\end{align}
and
\begin{equation} \label{def:barphi}
  \bar{\varphi}(\tau) \set
  \begin{cases}
    \phi_{\textrm{sell}}(\tau-\bar\theta,\bar\zeta(\tau)
    e^{-\kappa\bar\theta}), & \tau > \bar\theta, \\
    \phi_2(\tau,\bar\zeta(\tau)), &\underline\theta < \tau \leq \bar\theta, \\
    0, & 0 \leq \tau \leq \underline\theta,
  \end{cases}
\end{equation}
where
\begin{equation} \label{def:phi2}
  \phi_2(\tau,\zeta)  \set 
  \frac{\mu\tau -\frac{1}{2} \zeta (1+e^{-\kappa\tau})}{\lambda^2
    \tau +
    \frac{\eta}{2}(1+e^{-\kappa\tau})} \quad (\tau \geq 0,
  \zeta \geq 0). 
\end{equation}
We further set
\begin{equation} \label{def:s3}
  s_3(\tau) \set \frac{2\mu\tau}{1+e^{-\kappa\tau}} \quad (0 \leq \tau
  \leq \underline\theta).
\end{equation}
Let us mention that since $\bar\theta$ satisfies~\eqref{eq:besslich1}
it holds in~\eqref{def:s1} that
\begin{equation} \label{eq:barzeta1}
  s_1(\tau-\bar\theta,\bar\theta) =
  \frac{\mu (1-D(\tau-\bar\theta)) e^{\kappa\bar\theta} }{\lambda\kappa
    C(\tau-\bar\theta) + \frac{\kappa}{2} e^{\kappa
      \bar\theta}(1-e^{-\kappa \bar\theta})} > 0 \quad (\tau >
  \bar\theta).
\end{equation}
Moreover, direct computations reveal that 
\begin{equation} \label{eq:barphi1} 
  \phi_{\textrm{sell}}(\tau-\bar\theta,\bar\zeta(\tau)
  e^{-\kappa\bar\theta}) = 
  \frac{\mu}{\lambda^2} - \frac{\kappa}{2} \bar{\zeta}(\tau)
  \frac{1-e^{-\kappa
      \bar{\theta}}}{\lambda^2} \quad (\tau > \bar\theta)
\end{equation}
as well as
\begin{equation} \label{eq:barphi2} 
  \phi_{2}(\tau,\bar\zeta(\tau)) =
  \frac{\mu \kappa \tau - \mu (1+e^{-\kappa \tau})}{\kappa \lambda^2
    \tau + \frac{\kappa \eta}{2} (1+e^{-\kappa \tau})^2 -
    \lambda^2(1+e^{-\kappa \tau})} \quad (\underline\theta < \tau \leq
  \bar\theta ).
\end{equation}
It will turn out that
$\tau \mapsto (\tau,\bar\zeta(\tau),\bar\varphi(\tau))$ specifies a
curve which is embedded in the free boundary
$\partial \cR_{\mathrm{buy}}$ in the state space $\cS$. The next lemma
collects some useful properties concerning the maps $s_1, s_2,s_3$
introduced in~\eqref{def:s1}, \eqref{def:s2}, \eqref{def:s3},
respectively, as well as this curve. We also refer to the
graphical illustration in Figure~\ref{fig:domain} below in this
context.

\begin{Lemma} \label{lem:fbcurve}
$\phantom{}$
\vspace{-.5em}
  \begin{enumerate}
  \item We have $s_3(0)=0$ and
    $s_3(\underline\theta)=2\mu/\kappa$. Moreover, on the interval
    $(0,\underline\theta)$, the map $\tau \mapsto s_3(\tau)$ is
    strictly increasing. In particular, it holds that
    $s_3(\tau) < 2\mu/\kappa$ on $(0,\underline\theta)$.
  \item We have $s_1(0,\underline\theta)=0$ and
    $s_2(\underline\theta)=2\mu/\kappa$ as well as
    $s_1(0,\bar\theta)=s_2(\bar\theta)$. Moreover, on the interval
    $(\underline\theta,\bar\theta)$, the map
    $\tau \mapsto s_1(0,\tau)$ is strictly increasing and the map
    $\tau \mapsto s_2(\tau)$ is stricly decreasing. In particular, it
    holds that $s_1(0,\tau) < s_2(\tau)$ on
    $(\underline\theta,\bar\theta)$.
  \item The map $\tau \mapsto \bar\zeta(\tau)$, $\tau \geq 0$, is
    continuous, flat on $[0,\underline\theta)$, and strictly
    decreasing on $[\underline\theta , + \infty)$ with
    $\lim_{\tau \uparrow \infty} \bar\zeta(\tau) = 0$. In particular,
    we have $2\mu/\kappa \geq \bar\zeta(\tau) > 0$ for all
    $\tau \geq 0$.
  \item The map $\tau \mapsto \bar\varphi(\tau)$, $\tau \geq 0$, is
    continuous, flat on $[0,\underline\theta)$, and strictly
    increasing on $[\underline\theta , + \infty)$ with
    $\lim_{\tau \uparrow \infty} \bar\varphi(\tau) =
    \mu/\lambda^2$. In particular, we have
    $0 \leq \bar\varphi(\tau) < \mu/\lambda^2$ for all $\tau \geq 0$.
  \end{enumerate}
\end{Lemma}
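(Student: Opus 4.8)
The plan is to handle the four items in order, after first establishing the facts about $\bar\theta$ and $\underline\theta$ that the preceding definitions presuppose. For $\bar\theta$: put $f(x)\set e^{x}(2-x)+2+x$, so $f(0)=4>0$, $f(x)\to-\infty$, and $f''(x)=-xe^{x}<0$ on $(0,\infty)$; hence $f'$ decreases from $f'(0)=2>0$ to $-\infty$, so $f$ first increases then decreases and has exactly one zero $\kappa\bar\theta>0$, with $f>0$ on $[0,\kappa\bar\theta)$. Equation~\eqref{eq:besslich1} is equivalent to $\kappa\bar\theta(1-e^{-\kappa\bar\theta})=2(1+e^{-\kappa\bar\theta})$, hence also to $\kappa\bar\theta-1-e^{-\kappa\bar\theta}=\tfrac{(1+e^{-\kappa\bar\theta})^{2}}{1-e^{-\kappa\bar\theta}}=\kappa\bar\theta e^{-\kappa\bar\theta}+1+e^{-\kappa\bar\theta}$, and I use these reformulations repeatedly. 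For $\underline\theta$: $g(x)\set e^{x}(x-1)-1$ has $g'(x)=xe^{x}>0$ and $g(1)=-1<0$, so~\eqref{eq:besslich2} has a unique root $\kappa\underline\theta>1$, equivalent to $\kappa\underline\theta-1-e^{-\kappa\underline\theta}=0$; plugging this into $f$ gives $f(\kappa\underline\theta)=\tfrac{(\kappa\underline\theta)^{2}}{\kappa\underline\theta-1}>0$, so $\underline\theta<\bar\theta$. I also record: since $\kappa\underline\theta>1$ we have $e^{-\kappa\tau}<e^{-1}<\tfrac12$ for all $\tau\ge\underline\theta$; and $D,C$ from~\eqref{eq:def:CandD} are strictly increasing in $\tau$, because $1-D=2\kappa\eta/(e^{-\beta\tau}\gamma_-^{2}+e^{\beta\tau}\gamma_+^{2})$ has a strictly increasing denominator ($\gamma_+^{2}>\gamma_-^{2}$), while with $q=e^{2\beta\tau}$ one has $C=(\gamma_-+q\gamma_+)/(\gamma_-^{2}+q\gamma_+^{2})$ and $\partial_{q}C=\gamma_+\gamma_-(\gamma_--\gamma_+)/(\gamma_-^{2}+q\gamma_+^{2})^{2}>0$ since $\gamma_+>0>\gamma_-$ and $\gamma_--\gamma_+<0$.

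Item~(1) is immediate: $s_{3}(0)=0$; $s_{3}(\underline\theta)=2\mu/\kappa$ follows from $\kappa\underline\theta=1+e^{-\kappa\underline\theta}$; and $s_{3}'(\tau)=2\mu(1+e^{-\kappa\tau}+\kappa\tau e^{-\kappa\tau})/(1+e^{-\kappa\tau})^{2}>0$. For item~(2) I first pin the three boundary values. Because $\kappa\underline\theta-1-e^{-\kappa\underline\theta}=0$, the quotient $(1+e^{-\kappa\theta})^{2}/(\kappa\theta-1-e^{-\kappa\theta})$ in~\eqref{def:s1} blows up as $\theta\downarrow\underline\theta$, so $s_{1}(0,\underline\theta)=0$. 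Substituting $\kappa\underline\theta=1+e^{-\kappa\underline\theta}$ into~\eqref{def:s2} collapses its numerator to $(1+e^{-\kappa\underline\theta})^{2}$ and its denominator to $\tfrac{\kappa\eta}{2}(1+e^{-\kappa\underline\theta})^{2}$, giving $s_{2}(\underline\theta)=2\mu/\kappa$. Finally, inserting $C(0)=\lambda/(2\lambda^{2}+\kappa\eta)$, $D(0)=2\lambda^{2}/(2\lambda^{2}+\kappa\eta)$ and $(1+e^{-\kappa\bar\theta})^{2}/(\kappa\bar\theta-1-e^{-\kappa\bar\theta})=1-e^{-\kappa\bar\theta}$, both $1/s_{1}(0,\bar\theta)$ and $1/s_{2}(\bar\theta)$ simplify to $(\mu\eta)^{-1}\big(\lambda^{2}+\tfrac{\kappa\eta}{2}(1-e^{-\kappa\bar\theta})\big)$, so $s_{1}(0,\bar\theta)=s_{2}(\bar\theta)$; the same manipulations also yield the identities~\eqref{eq:barzeta1}--\eqref{eq:barphi2}.

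For the monotonicity in item~(2): $1/s_{1}(0,\theta)=(\mu\eta)^{-1}\big(\lambda^{2}e^{-\kappa\theta}+\tfrac{2\lambda^{2}+\kappa\eta}{2}\cdot\tfrac{(1+e^{-\kappa\theta})^{2}}{\kappa\theta-1-e^{-\kappa\theta}}\big)$ is a sum of a decreasing term and a ratio whose numerator decreases and whose positive denominator increases on $(\underline\theta,\bar\theta)$, so $s_{1}(0,\cdot)$ is strictly increasing there. For $s_{2}$, differentiate $1/s_{2}=(\mu\eta)^{-1}N/M$ with $N=\lambda^{2}(\kappa\tau-1-e^{-\kappa\tau})+\tfrac{\kappa\eta}{2}(1+e^{-\kappa\tau})^{2}$ and $M=\kappa\tau e^{-\kappa\tau}+1+e^{-\kappa\tau}$; writing $u=e^{-\kappa\tau}$, $y=\kappa\tau$, one gets $N'M-NM'=\lambda^{2}[(1+u)^{2}+y^{2}u]+\kappa\eta u(1+u)\big[\tfrac{y(1-u)}{2}-(1+u)\big]$, which must be shown strictly positive on $(\kappa\underline\theta,\kappa\bar\theta)$: the bracket vanishes at $y=\kappa\bar\theta$ by~\eqref{eq:besslich1}, and on the interval one controls the sign using $u<\tfrac12$ (from $\kappa\underline\theta>1$); hence $s_{2}$ is strictly decreasing, and then $s_{1}(0,\tau)<s_{1}(0,\bar\theta)=s_{2}(\bar\theta)<s_{2}(\tau)$ on $(\underline\theta,\bar\theta)$. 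I expect this sign check for $s_{2}$, together with the bookkeeping behind~\eqref{eq:barzeta1}--\eqref{eq:barphi2}, to be the main obstacle: the steps are elementary but they genuinely use the defining relations for $\underline\theta$ and $\bar\theta$.

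Items~(3) and~(4) then follow by gluing the pieces. Continuity of $\bar\zeta$ at $\underline\theta$ and at $\bar\theta$ is exactly $s_{2}(\underline\theta)=2\mu/\kappa$ and $s_{1}(0,\bar\theta)=s_{2}(\bar\theta)$; flatness on $[0,\underline\theta)$ is trivial; on $(\underline\theta,\bar\theta)$, $\bar\zeta=s_{2}$ is strictly decreasing by item~(2); on $(\bar\theta,\infty)$,~\eqref{eq:barzeta1} writes $\bar\zeta(\tau)=\mu(1-D(\tau-\bar\theta))e^{\kappa\bar\theta}/\big(\lambda\kappa C(\tau-\bar\theta)+\tfrac{\kappa}{2}(e^{\kappa\bar\theta}-1)\big)$, a decreasing numerator over an increasing positive denominator, so $\bar\zeta$ is strictly decreasing with $\bar\zeta(\tau)\to0$ as $1-D\to0$; positivity and $\bar\zeta\le2\mu/\kappa$ use $1>D>0$, $C>0$ (Lemma~\ref{lem:phisell}) and $\bar\zeta(\underline\theta)=2\mu/\kappa$. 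For $\bar\varphi$: continuity at $\underline\theta$ is $\phi_{2}(\underline\theta,2\mu/\kappa)=0$, clear from~\eqref{eq:barphi2} since its numerator is $\mu(\kappa\underline\theta-1-e^{-\kappa\underline\theta})=0$; continuity at $\bar\theta$ follows by comparing~\eqref{eq:barphi1} and~\eqref{eq:barphi2} (both reduce to $\mu/(\lambda^{2}+\tfrac{\kappa\eta}{2}(1-e^{-\kappa\bar\theta}))$ once the formula for $s_{2}(\bar\theta)$ is inserted); on $(\underline\theta,\bar\theta)$,~\eqref{eq:barphi2} gives $\bar\varphi=\mu/(\lambda^{2}+\tfrac{\kappa\eta}{2}(1+e^{-\kappa\tau})^{2}/w)$ with $w=\kappa\tau-1-e^{-\kappa\tau}$ strictly increasing from $0$, hence a strictly decreasing denominator and $\bar\varphi$ strictly increasing; on $(\bar\theta,\infty)$,~\eqref{eq:barphi1} reads $\bar\varphi(\tau)=\mu/\lambda^{2}-\tfrac{\kappa(1-e^{-\kappa\bar\theta})}{2\lambda^{2}}\bar\zeta(\tau)$, strictly increasing because $\bar\zeta$ is strictly decreasing, with $\bar\varphi(\tau)\to\mu/\lambda^{2}$; the bounds $0\le\bar\varphi<\mu/\lambda^{2}$ then read off piece by piece.
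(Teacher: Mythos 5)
Most of your proposal is sound, and it supplies exactly the bookkeeping that the paper's one-line proof (``the claims follow from the defining equations of $\underline\theta$, $\bar\theta$ and simple differentiation'') leaves implicit: the analysis of \eqref{eq:besslich1}--\eqref{eq:besslich2} and $\underline\theta<\bar\theta$, the three equivalent forms of \eqref{eq:besslich1}, the strict monotonicity of $C$ and $D$, the boundary values $s_3(\underline\theta)=s_2(\underline\theta)=2\mu/\kappa$, $s_1(0,\underline\theta)=0$, $s_1(0,\bar\theta)=s_2(\bar\theta)$, the strict increase of $s_1(0,\cdot)$, and the treatment of $\bar\varphi$ and of $\bar\zeta$ on $(\bar\theta,\infty)$ via \eqref{eq:barzeta1}--\eqref{eq:barphi2} all check out.

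The genuine gap is the step you yourself flag as the main obstacle and then do not carry out: the positivity of $N'M-NM'$ on $(\kappa\underline\theta,\kappa\bar\theta)$. Your formula (correct up to an omitted harmless factor $\kappa$) gives $\lambda^{2}\left[(1+u)^{2}+y^{2}u\right]+\kappa\eta\, u(1+u)\left[\tfrac{y(1-u)}{2}-(1+u)\right]$, and the bracket is strictly \emph{negative} on the whole open interval: along $u=e^{-y}$ it is strictly increasing in $y$ and vanishes only at $y=\kappa\bar\theta$ by \eqref{eq:besslich1}. So the expression has the form $\lambda^{2}A(y)-\kappa\eta B(y)$ with $A(y),B(y)>0$ depending only on $\kappa\tau$, and its sign is governed by the ratio $\kappa\eta/\lambda^{2}$; no estimate of the type $u<\tfrac12$ can make it positive uniformly in the model parameters. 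Concretely, at $\kappa\tau=1.8$ one has $A\approx 1.89$ and $B\approx 0.08$, so the derivative of $1/s_2$ becomes negative once $\kappa\eta/\lambda^{2}$ exceeds roughly $24$; equivalently, your own endpoint formulas give $s_2(\bar\theta)=\mu\eta/\bigl(\lambda^{2}+\tfrac{\kappa\eta}{2}(1-e^{-\kappa\bar\theta})\bigr)>2\mu/\kappa=s_2(\underline\theta)$ whenever $\kappa\eta e^{-\kappa\bar\theta}>2\lambda^{2}$, so $s_2$ cannot be decreasing on all of $(\underline\theta,\bar\theta)$ for such parameters. Consequently the proposed argument does not establish the strict decrease of $s_2$, and with it the chain $s_1(0,\tau)<s_1(0,\bar\theta)=s_2(\bar\theta)<s_2(\tau)$, the strict decrease of $\bar\zeta$ on $(\underline\theta,\bar\theta)$ and the bound $\bar\zeta\le 2\mu/\kappa$ in item (3) remain unproved. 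Closing this step requires either an explicit restriction on $\kappa\eta/\lambda^{2}$ (the paper's numerical example has $\kappa\eta/\lambda^{2}=2$, where the inequality does hold) or an argument of a genuinely different nature; the sketched sign check does not suffice.
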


\begin{proof}
  The claims follow from the fact that $\underline\theta$ and
  $\bar\theta$ satisfy equation~\eqref{eq:besslich2}
  and~\eqref{eq:besslich1}, respectively, as well as simple
  differentiation of the mappings with respect to~$\tau$ (recall also
  the representations of $\bar\varphi(\cdot)$ in~\eqref{eq:barphi1}
  and~\eqref{eq:barphi2}). Finally, observe that
  $\lim_{\tau \uparrow \infty} \bar\zeta(\tau) = 0$ can be deduced
  from~\eqref{eq:barzeta1} as well as
  $\lim_{\tau \uparrow \infty} \bar\varphi(\tau) = \mu/\lambda^2$
  from~\eqref{eq:barphi1}.
\end{proof}

Next, let us introduce for all $\tau \geq 0$,
$\zeta, \varphi \in \mathbb{R}$ and $0 \leq \theta \leq \tau$ the
mappings
\begin{equation} \label{def:zetahatbuy} 
  \begin{aligned}
    \hat{\zeta}^{\mathrm{buy}}(\tau,\zeta,\varphi,\theta) \set & \;
    \zeta \frac{\eta\beta^2}{2\lambda^2} \left(
      \frac{e^{-\beta(\tau-\theta)}}{\kappa+\beta}
      + \frac{e^{\beta(\tau-\theta)}}{\kappa-\beta} \right) \\
    & - \frac{\eta\beta}{\kappa} \left( \varphi-\frac{\mu}{\lambda^2}
    \right) \sinh\left(\beta(\tau-\theta)\right),
  \end{aligned}
\end{equation}
\begin{equation}
  \begin{aligned} \label{def:phihatbuy}
    \hat{\varphi}^{\mathrm{buy}}(\tau,\zeta,\varphi,\theta) \set 
    & \; 
    \left( \varphi - \frac{\mu}{\lambda^2} \right)
    \cosh(\beta(\tau-\theta)) \\
    & - \frac{\beta}{\kappa} \sinh(\beta(\tau-\theta)) \left(
      \varphi-\frac{\mu}{\lambda^2} + \frac{\kappa}{\lambda^2} \zeta
    \right) + \frac{\mu}{\lambda^2} 
  \end{aligned}
\end{equation}
as well as
\begin{equation}
  \begin{aligned} \label{def:zetahatsell}
    \hat{\zeta}^{\mathrm{sell}}(\tau,\zeta,\varphi,\theta) \set & \; 
    \zeta e^{-\kappa\theta} + \eta e^{-\kappa\theta} \bigg(
    \frac{\beta}{\kappa+\beta} c_+(\tau,\zeta,\varphi)
    (e^{(\kappa+\beta)\theta} - 1 )\bigg. \\
    &   \hspace{83pt} \bigg. + \frac{\beta}{\beta-\kappa} c_-(\tau,\zeta,\varphi)
    (e^{-(\beta-\kappa)\theta} - 1 )\bigg),
  \end{aligned}
\end{equation}
\begin{equation}
  \begin{aligned} \label{def:phihatsell}
    \hat{\varphi}^{\mathrm{sell}}(\tau,\zeta,\varphi,\theta) \set & \; 
    -c_+(\tau,\zeta,\varphi) e^{\beta\theta} - c_-(\tau,\zeta,\varphi)
    e^{-\beta\theta}
    + \frac{\mu}{\lambda^2}, 
  \end{aligned}
\end{equation}
where
\begin{equation}
  c_{\pm}(\tau,\zeta,\varphi) \set 
  \frac{\kappa (e^{\mp\beta\tau} \gamma_{\mp} (\eta\mu - \lambda^2
    (\zeta+\eta\varphi))
    +\eta\mu\gamma_{\pm})}{\lambda^2 \sqrt{\kappa\eta+\lambda^2} 
    (e^{\beta\tau} \gamma_+^2 - e^{-\beta\tau} \gamma_-^2)}. 
  \label{def:cpm}
\end{equation} 
In fact, simple computations reveal the identities
\begin{equation} \label{eq:propbuy}
  \hat{\zeta}^{\mathrm{buy}}(\tau,\zeta,\varphi,\tau) = \zeta \quad
  \text{and} \quad
  \hat{\varphi}^{\mathrm{buy}}(\tau,\zeta,\varphi,\tau) = \varphi
\end{equation}
for all $(\tau,\zeta,\varphi) \in \cS$ as well as
\begin{equation} \label{eq:propsell}
  \hat{\zeta}^{\mathrm{sell}}(\tau,\zeta,\phi_{\mathrm{sell}}(\tau,\zeta),0)
  = \zeta \quad \text{and} \quad
  \hat{\varphi}^{\mathrm{sell}}(\tau,\zeta,\phi_{\mathrm{sell}}(\tau,\zeta),0)
  = \phi_{\mathrm{sell}}(\tau,\zeta)
\end{equation}
for all $\tau \geq 0$ and $\zeta \in \mathbb{R}$.  Moreover, the
following lemma can be easily verified by elementary calculus which we
omit for the sake of brevity.

\begin{Lemma}[Monotonicity properties] \label{lem:hatfunctions} 
$\phantom{}$
\vspace{-.5em}
  \begin{enumerate}
  \item For any $\theta \geq 0$, the function
    $\tau \mapsto
    \hat\zeta^{\mathrm{buy}}(\tau,\bar\zeta(\theta),\bar\varphi(\theta),\theta)$,
    $\tau \geq \theta$, is continuous and strictly increasing with
    $\hat\zeta^{\mathrm{buy}}(\theta,\bar\zeta(\theta),\bar\varphi(\theta),\theta)
    =\bar\zeta(\theta)$.  Moreover, for any two
    $0 \leq \theta_1 < \theta_2$, the functions do not intersect on
    $[\theta_2,+\infty)$.
  \item For any $\tau \geq 0$, the function
    $z \mapsto \hat{\zeta}^{\mathrm{buy}}\left( \tau,
      \bar\zeta\left(\tau-z \right), \bar\varphi\left(\tau-z\right),
      \tau - z \right)$, $0 \leq z \leq \tau$, is continuous and
    strictly increasing.
  \item For any $\theta \geq 0$, the function
    $\tau \mapsto
    \hat\varphi^{\mathrm{buy}}(\tau,\bar\zeta(\theta),\bar\varphi(\theta),\theta)$,
    $\tau \geq \theta$, is continuous and strictly decreasing with
    $\hat\varphi^{\mathrm{buy}}(\theta,\bar\zeta(\theta),\bar\varphi(\theta),\theta)
    =\bar\varphi(\theta)$.
  \item For any $\tau \geq 0$, $\zeta \geq 0$, the function
    $t \mapsto
    \hat\varphi^{\mathrm{sell}}(\tau,\zeta,\phi_{\mathrm{sell}}(\tau,\zeta),t)$
    on $[0,\tau]$ is continuous and strictly decreasing with
    $\hat\varphi^{\mathrm{sell}}(\tau,\zeta,\phi_{\mathrm{sell}}(\tau,\zeta),0)
    = \phi_{\mathrm{sell}}(\tau,\zeta)$.
  \item For any $\tau \geq 0$, $\zeta > 0$, the function
    $t \mapsto
    \hat\varphi^{\mathrm{sell}}(\tau,-\zeta,\phi_{\mathrm{sell}}(\tau,-\zeta),t)$
    on $[0,\tau]$ is continuous and strictly increasing with
    $\hat\varphi^{\mathrm{sell}}(\tau,-\zeta,\phi_{\mathrm{sell}}(\tau,-\zeta),0)$
    $= \phi_{\mathrm{sell}}(\tau,-\zeta)$.
\qed
  \end{enumerate}
\end{Lemma}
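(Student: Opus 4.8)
The plan is to reduce each of the five claims to the strict monotonicity of an explicit scalar function of one variable and to read off the sign of its derivative from the algebraic structure of the defining formulas. Throughout I would use the elementary facts $\gamma_+>0>\gamma_-$, $\gamma_+^2>\gamma_-^2$, $\gamma_+\gamma_-=-\kappa\eta$, $\gamma_++\gamma_-=2\lambda$, $0<\beta<\kappa$, and $e^{\beta\tau}\gamma_+^2>e^{-\beta\tau}\gamma_-^2$ for $\tau\ge0$, together with the identities $1-D(\tau)=2\kappa\eta/Q(\tau)$ and $C(\tau)=(e^{-\beta\tau}\gamma_-+e^{\beta\tau}\gamma_+)/Q(\tau)$, where $Q(\tau):=e^{-\beta\tau}\gamma_-^2+e^{\beta\tau}\gamma_+^2>0$, plus the qualitative behaviour of $\bar\zeta,\bar\varphi$ recorded in Lemma~\ref{lem:fbcurve}. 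For $\tau=0$ all five statements are vacuous, so I may assume $\tau>0$.

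For the buy-functions (parts 1--3) the decisive structural observation is that, for fixed $s:=\tau-\theta\ge0$, both $\hat\zeta^{\mathrm{buy}}(\theta+s,z,\varphi,\theta)$ and $\hat\varphi^{\mathrm{buy}}(\theta+s,z,\varphi,\theta)$ in \eqref{def:zetahatbuy}, \eqref{def:phihatbuy} are affine in $(z,\varphi)$ with coefficients built from $e^{\pm\beta s}$. Writing $g(s;z,\varphi):=\hat\zeta^{\mathrm{buy}}(\theta+s,z,\varphi,\theta)$, a one-line differentiation in $s$ gives $\partial_s g>0$ whenever $z>0$ and $\varphi<\mu/\lambda^2$ (the coefficient of $z$ is strictly increasing in $s$ because $\kappa>\beta>0$, and the remaining term $-\tfrac{\eta\beta}{\kappa}(\varphi-\mu/\lambda^2)\sinh(\beta s)$ has positive leading constant), while $g$ is increasing in $z$ and, for $s>0$, strictly decreasing in $\varphi$. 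Since $\bar\zeta(\theta)>0$ and $\bar\varphi(\theta)<\mu/\lambda^2$ by Lemma~\ref{lem:fbcurve}, part~1's monotonicity in $\tau$ follows at once, and the boundary value $\bar\zeta(\theta)$ at $\tau=\theta$ is precisely \eqref{eq:propbuy}. For the non-intersection claim in part~1 and for part~2 I would combine the three monotonicities of $g$ with the fact that $\bar\zeta$ is nonincreasing and $\bar\varphi$ nondecreasing (constant up to $\underline\theta$, strictly monotone afterwards), via a telescoping decomposition: for $\theta_1<\theta_2\le\tau$ all three changes $\tau-\theta_1$ vs $\tau-\theta_2$, $\bar\zeta(\theta_1)$ vs $\bar\zeta(\theta_2)$, $\bar\varphi(\theta_1)$ vs $\bar\varphi(\theta_2)$ push $g$ the same way, whence $\hat\zeta^{\mathrm{buy}}(\tau,\bar\zeta(\theta_1),\bar\varphi(\theta_1),\theta_1)>\hat\zeta^{\mathrm{buy}}(\tau,\bar\zeta(\theta_2),\bar\varphi(\theta_2),\theta_2)$, and likewise $z\mapsto g(z;\bar\zeta(\tau-z),\bar\varphi(\tau-z))$ is strictly increasing. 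Part~3 is the same idea for $\hat\varphi^{\mathrm{buy}}$: one gets $\tfrac{d}{d\tau}\hat\varphi^{\mathrm{buy}}(\tau,\bar\zeta(\theta),\bar\varphi(\theta),\theta)=\beta\cosh(\beta(\tau-\theta))\big(a\tanh(\beta(\tau-\theta))-\tfrac{\beta}{\kappa}b\big)$ with $a:=\bar\varphi(\theta)-\mu/\lambda^2<0$ and $b:=a+\tfrac{\kappa}{\lambda^2}\bar\zeta(\theta)$; since $a<0$ the bracket is strictly decreasing in $\tau$ with value $-\tfrac{\beta}{\kappa}b$ at $\tau=\theta$, so strict decrease reduces to $b\ge0$, i.e. $\kappa\bar\zeta(\theta)\ge\mu-\lambda^2\bar\varphi(\theta)$.

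Verifying $b\ge0$ is the one genuine obstacle on the buy side, and I would do it regime by regime along \eqref{def:barzeta}--\eqref{def:barphi}: on $[0,\underline\theta]$ one has $b=\mu/\lambda^2>0$ outright; on $(\bar\theta,\infty)$ formula \eqref{eq:barphi1} gives $b=\tfrac{\kappa}{2\lambda^2}\bar\zeta(\theta)(1+e^{-\kappa\bar\theta})>0$; and on $(\underline\theta,\bar\theta]$ one inserts \eqref{def:phi2}/\eqref{eq:barphi2} and the explicit $s_2$ from \eqref{def:s2} and, using the defining equations \eqref{eq:besslich1}, \eqref{eq:besslich2} for $\bar\theta,\underline\theta$, reduces the inequality to a manifestly true polynomial one. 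For the sell-function (parts 4--5) the decisive remark is that $\theta\mapsto\hat\varphi^{\mathrm{sell}}(\tau,\zeta',\varphi',\theta)$ in \eqref{def:phihatsell} has the form $\mu/\lambda^2-c_+e^{\beta\theta}-c_-e^{-\beta\theta}$ with $c_\pm=c_\pm(\tau,\zeta',\varphi')$ fixed, so $\tfrac{d}{d\theta}\hat\varphi^{\mathrm{sell}}=\beta\big(c_-e^{-\beta\theta}-c_+e^{\beta\theta}\big)$ and everything comes down to sign information on $c_+,c_-$ along the relevant boundary. In part~4 I would substitute $\varphi'=\phi_{\mathrm{sell}}(\tau,\zeta)$ from \eqref{def:phisell}, so that $\eta\mu-\lambda^2(\zeta+\eta\varphi')=\tfrac{2\kappa\eta^2\mu}{Q(\tau)}-\lambda^2\zeta\big(1+\tfrac{\eta\kappa}{\lambda}C(\tau)\big)$, clear the common (positive) denominator in \eqref{def:cpm}, and use $\gamma_+\gamma_-=-\kappa\eta$, $\gamma_++\gamma_-=2\lambda$ and $\gamma_-^2<e^{2\beta\tau}\gamma_+^2$: the numerators then factor so that $c_+>0$ and $c_+>c_-$, hence $c_-e^{-\beta\theta}<c_+e^{\beta\theta}$ on all of $[0,\tau]$, and $\hat\varphi^{\mathrm{sell}}$ is strictly decreasing, with value $\phi_{\mathrm{sell}}(\tau,\zeta)$ at $\theta=0$ by \eqref{eq:propsell}. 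Part~5 is the mirror computation with the negative spread input $-\zeta$ in the sense of Remark~\ref{rem:negphisell}: now $\eta\mu-\lambda^2(-\zeta+\eta\phi_{\mathrm{sell}}(\tau,-\zeta))=\tfrac{2\kappa\eta^2\mu}{Q(\tau)}+\zeta\big(\lambda^2+\lambda\eta\kappa C(\tau)\big)$, and the analogous algebra delivers the sign conditions on $c_\pm$ that force $c_-e^{-\beta\theta}-c_+e^{\beta\theta}>0$ on $[0,\tau)$, so $\hat\varphi^{\mathrm{sell}}$ is strictly increasing. Continuity of all the functions involved is immediate from the closed forms.

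The hard part is therefore purely computational: in parts 4--5 each $c_\pm$ is a ratio in $e^{\pm\beta\tau}$, $\gamma_\pm$ and the affine quantity $\eta\mu-\lambda^2(\pm\zeta+\eta\varphi)$, and one must carry the bookkeeping so that after clearing denominators the relevant combinations collapse---typically to $\gamma_+$ (or $-\gamma_-$) times $(e^{\beta\tau}\gamma_+^2-e^{-\beta\tau}\gamma_-^2)$ plus obviously signed remainders---reducing the target inequalities to $\gamma_+>0>\gamma_-$, $\gamma_+^2>\gamma_-^2$ and $e^{2\beta\tau}\ge1$; the slightly more delicate analogue with \eqref{def:s1}, \eqref{def:s2}, \eqref{eq:barphi1}, \eqref{eq:barphi2} underlies the case-by-case proof of $b\ge0$. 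Since $\hat\zeta^{\mathrm{buy}},\hat\varphi^{\mathrm{buy}}$ are affine in $(\zeta,\varphi)$ and $\hat\varphi^{\mathrm{sell}}$ is a two-exponential, none of the monotonicities requires ODE uniqueness or any of the deeper content of Theorem~\ref{thm:main}---they are, as asserted, elementary calculus.
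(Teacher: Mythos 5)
Your treatment of parts 1--3 and 4 is correct and is essentially the ``elementary calculus'' verification the paper has in mind (the paper omits the proof). I checked the key reductions: the coefficient of $\zeta$ in \eqref{def:zetahatbuy} is strictly increasing in $s=\tau-\theta$ because $\kappa>\beta$, the $\varphi$-coefficient is $-\tfrac{\eta\beta}{\kappa}\sinh(\beta s)\le 0$, and combining this with $\bar\zeta$ nonincreasing, $\bar\varphi$ nondecreasing, $\bar\zeta>0$, $\bar\varphi<\mu/\lambda^2$ from Lemma~\ref{lem:fbcurve} gives parts 1--2 by your telescoping; for part 3 the reduction to $b=\bar\varphi(\theta)-\mu/\lambda^2+\tfrac{\kappa}{\lambda^2}\bar\zeta(\theta)\ge 0$ is right, and in the middle regime the bracket collapses (after clearing the common positive denominator of \eqref{def:s2} and \eqref{eq:barphi2}) to $\kappa\theta e^{-\kappa\theta}+\tfrac12(1-e^{-2\kappa\theta})\ge 0$, so you do not even need \eqref{eq:besslich1}--\eqref{eq:besslich2} there beyond positivity of that denominator. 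For part 4 the two inequalities $c_+>0$ and $c_+>c_-$ indeed collapse as you predict: at $\zeta=0$ the numerators of $c_\pm$ in \eqref{def:cpm} reduce to positive multiples of $\gamma_+(e^{\beta\tau}\gamma_+^2-e^{-\beta\tau}\gamma_-^2)$ and $-\gamma_-(e^{\beta\tau}\gamma_+^2-e^{-\beta\tau}\gamma_-^2)$, and the $\zeta$-dependent remainders have the favorable sign, so strict decrease on $[0,\tau]$ follows.

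Part 5, however, contains a genuine gap: the sign conditions you claim the algebra ``delivers'' are false for small $\zeta>0$, so no bookkeeping can produce them. Concretely, at $\zeta=0$ one gets $c_\pm(\tau,0,\phi_{\mathrm{sell}}(\tau,0))$ equal to the \emph{same} positive constant times $\gamma_+$ and $-\gamma_-$, respectively, hence $c_+-c_-$ is a positive multiple of $\gamma_++\gamma_-=2\lambda>0$; since $\zeta\mapsto c_\pm(\tau,-\zeta,\phi_{\mathrm{sell}}(\tau,-\zeta))$ is affine, for all small $\zeta>0$ we still have $c_+>c_->0$ and the $\theta$-derivative $\beta\bigl(c_-e^{-\beta\theta}-c_+e^{\beta\theta}\bigr)$ is strictly negative at $\theta=0$, so the map is strictly \emph{decreasing} near $0$. (Equivalently: as $\zeta\downarrow 0$ the part-5 function converges to the part-4 function at $\zeta=0$, which is strictly decreasing; a family of increasing functions cannot converge to a strictly decreasing one. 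A numerical check with the paper's illustration parameters, $\tau=1$, $\zeta=0.01$, confirms this.) So the statement can only hold for $\zeta$ above an explicit threshold, and your argument must identify and use that restriction: what is actually needed is the endpoint inequality $c_-e^{-\beta\tau}\ge c_+e^{\beta\tau}$, i.e.\ $(\gamma_+-\gamma_-)\bigl(\eta\mu-\lambda^2(-\zeta+\eta\,\phi_{\mathrm{sell}}(\tau,-\zeta))\bigr)\ge\eta\mu\bigl(\gamma_+e^{\beta\tau}-\gamma_-e^{-\beta\tau}\bigr)$, which is linear in $\zeta$ and holds only for $\zeta$ large. In the paper part 5 is invoked only in Case 1 of the proof of Theorem~\ref{thm:main} (region I, where \eqref{def:phibuy:eq1} applies, i.e.\ $\zeta\ge\hat\zeta^{\mathrm{buy}}(\tau,2\mu/\kappa,0,0)$), so a complete proof should restrict to that regime and verify that this lower bound on $\zeta$ implies the endpoint inequality above; your proposal, which asserts strict increase for every $\zeta>0$, cannot be repaired as written.
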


As it will turn out below, for a given problem data
$(\tau,\zeta,\varphi)$ belonging to
$\partial \mathcal{R}_{\mathrm{buy}}$ or
$\partial \mathcal{R}_{\mathrm{sell}}$, the optimal share holdings
$\varphi^{\hat{X}^{\tau,\zeta,\varphi}}$ as well as the optimally
controlled spread dynamics $\zeta^{\hat{X}^{\tau,\zeta,\varphi}}$ of
the optimal policy $\hat{X}^{\tau,\zeta,\varphi}$ will be given in
terms of the mappings introduced in~\eqref{def:zetahatbuy}
to~\eqref{def:phihatsell}. Two further important ingredients are
provided by the following two lemmas.

\begin{Lemma}[Buying duration] \label{lem:taubuy} 
  For a given pair $(\tau,\zeta) \in [0,+\infty)^2$ such that
  $\bar\zeta(\tau) \leq \zeta \leq
  \hat{\zeta}^{\mathrm{buy}}(\tau,\bar\zeta(0),\bar\varphi(0),0)$, we
  define $\tau_{\mathrm{buy}}(\tau,\zeta)$ as the unique solution in
  $[0,\tau]$ to the equation
  \begin{equation} \label{def:taubuy} \zeta =
    \hat{\zeta}^{\mathrm{buy}}\left( \tau,
      \bar\zeta\left(\tau-\tau_{\mathrm{buy}}(\tau,\zeta)\right),
      \bar\varphi\left(\tau-\tau_{\mathrm{buy}}(\tau,\zeta)\right),
      \tau -\tau_{\mathrm{buy}}(\tau,\zeta) \right).
  \end{equation}
  In particular, it holds that 
  \begin{equation} \label{eq:taubuy:property} \tau_{\mathrm{buy}}
    \big( \tau,
    \hat{\zeta}^{\mathrm{buy}}(\tau,\bar\zeta(\theta),\bar\varphi(\theta),\theta)
    \big) = \tau - \theta \quad (0 \leq \theta \leq \tau)
  \end{equation}
  which implies
  $\tau_{\mathrm{buy}}(\tau,
  \hat{\zeta}^{\mathrm{buy}}(\tau,\bar\zeta(0),\bar\varphi(0),0)) =
  \tau$ and $\tau_{\mathrm{buy}}(\tau,\bar\zeta(\tau)) = 0$. We
  further set
  \begin{equation}
    \begin{aligned} \label{set:taubuy}
      \tau_{\mathrm{buy}}(\tau,\zeta) \set
      \begin{cases}
        \tau & \textrm{ for } \zeta >
        \hat{\zeta}^{\mathrm{buy}}(\tau,\bar\zeta(0),\bar\varphi(0),0), \\
        0 & \textrm{ for } 0 \leq \zeta < \bar\zeta(\tau),
      \end{cases}
    \end{aligned}
  \end{equation}
  so that $\tau_{\mathrm{buy}}(\cdot,\cdot)$ is defined for all
  $(\tau,\zeta)\in[0,\infty)^2$ with values in
  $[0,\tau]$.
\end{Lemma}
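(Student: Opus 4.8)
The plan is to deduce everything from the monotonicity already isolated in Lemma~\ref{lem:hatfunctions}, part 2. Fix $\tau\geq 0$ and consider the function
\[
  g_\tau(z)\set\hat{\zeta}^{\mathrm{buy}}\bigl(\tau,\bar\zeta(\tau-z),\bar\varphi(\tau-z),\tau-z\bigr),\qquad z\in[0,\tau].
\]
By Lemma~\ref{lem:hatfunctions}, part 2, $g_\tau$ is continuous and strictly increasing, hence a bijection from $[0,\tau]$ onto $[g_\tau(0),g_\tau(\tau)]$ (for $\tau=0$ this is a single point and the whole statement is trivial, since then $\bar\zeta(0)=2\mu/\kappa$ and $\hat{\zeta}^{\mathrm{buy}}(0,\bar\zeta(0),\bar\varphi(0),0)=\bar\zeta(0)$ by~\eqref{eq:propbuy}, so the hypothesis forces $\zeta=2\mu/\kappa$ and $\tau_{\mathrm{buy}}(0,2\mu/\kappa)=0$). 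I would then identify the two endpoints: the identity $\hat{\zeta}^{\mathrm{buy}}(\tau,\zeta,\varphi,\tau)=\zeta$ from~\eqref{eq:propbuy} gives $g_\tau(0)=\bar\zeta(\tau)$, while directly from the definition $g_\tau(\tau)=\hat{\zeta}^{\mathrm{buy}}(\tau,\bar\zeta(0),\bar\varphi(0),0)$. Thus the standing hypothesis $\bar\zeta(\tau)\leq\zeta\leq\hat{\zeta}^{\mathrm{buy}}(\tau,\bar\zeta(0),\bar\varphi(0),0)$ says exactly that $\zeta$ lies in the range of $g_\tau$, and by the intermediate value theorem together with strict monotonicity there is one and only one $z\in[0,\tau]$ with $g_\tau(z)=\zeta$. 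Setting $\tau_{\mathrm{buy}}(\tau,\zeta)\set z$, the equation $g_\tau(z)=\zeta$ is precisely~\eqref{def:taubuy}, which establishes well-definedness.

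For the identity~\eqref{eq:taubuy:property}, I would fix $\theta\in[0,\tau]$ and note that the point $z=\tau-\theta\in[0,\tau]$ satisfies $g_\tau(\tau-\theta)=\hat{\zeta}^{\mathrm{buy}}(\tau,\bar\zeta(\theta),\bar\varphi(\theta),\theta)$ simply by unwinding the definition of $g_\tau$. Uniqueness from the first step then yields $\tau_{\mathrm{buy}}\bigl(\tau,\hat{\zeta}^{\mathrm{buy}}(\tau,\bar\zeta(\theta),\bar\varphi(\theta),\theta)\bigr)=\tau-\theta$, which is~\eqref{eq:taubuy:property}. Specializing to $\theta=0$ gives $\tau_{\mathrm{buy}}(\tau,\hat{\zeta}^{\mathrm{buy}}(\tau,\bar\zeta(0),\bar\varphi(0),0))=\tau$, and specializing to $\theta=\tau$, together with $\hat{\zeta}^{\mathrm{buy}}(\tau,\bar\zeta(\tau),\bar\varphi(\tau),\tau)=\bar\zeta(\tau)$ from~\eqref{eq:propbuy}, gives $\tau_{\mathrm{buy}}(\tau,\bar\zeta(\tau))=0$.

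Finally, I would observe that the extension~\eqref{set:taubuy} is merely a definition and is compatible with the boundary values just computed: it assigns $\tau$ for $\zeta>\hat{\zeta}^{\mathrm{buy}}(\tau,\bar\zeta(0),\bar\varphi(0),0)$, matching $\tau_{\mathrm{buy}}(\tau,\hat{\zeta}^{\mathrm{buy}}(\tau,\bar\zeta(0),\bar\varphi(0),0))=\tau$, and $0$ for $0\leq\zeta<\bar\zeta(\tau)=g_\tau(0)$, matching $\tau_{\mathrm{buy}}(\tau,\bar\zeta(\tau))=0$; since these two rays together with the interval $[\bar\zeta(\tau),\hat{\zeta}^{\mathrm{buy}}(\tau,\bar\zeta(0),\bar\varphi(0),0)]$ cover all of $[0,\infty)$ in the $\zeta$-variable and all assigned values lie in $[0,\tau]$, the claim follows. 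I do not expect a genuine obstacle in this argument: all the analytic content — the continuity and, crucially, the strict monotonicity of $g_\tau$, i.e. that a longer waiting period $z=\tau-\theta$ forces a strictly larger spread level reached along the buying boundary — is exactly Lemma~\ref{lem:hatfunctions}, part 2, which is handled there by elementary differentiation; the present lemma is then a bookkeeping application of the intermediate value theorem and injectivity, the only mild care being the degenerate case $\tau=0$.
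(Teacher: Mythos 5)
Your proposal is correct and follows essentially the same route as the paper: the paper's proof also considers the map $z \mapsto \hat{\zeta}^{\mathrm{buy}}(\tau,\bar\zeta(\tau-z),\bar\varphi(\tau-z),\tau-z)$ on $[0,\tau]$, identifies its endpoint values via~\eqref{eq:propbuy}, and invokes the continuity and strict monotonicity from Lemma~\ref{lem:hatfunctions}~2.) to get a unique solution. Your added details (the explicit derivation of~\eqref{eq:taubuy:property}, the degenerate case $\tau=0$, and the remark that~\eqref{set:taubuy} is just a compatible extension by definition) are fine elaborations of the same argument.
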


\begin{proof}
  For any $\tau \geq 0$ consider the mapping
  $z \mapsto F_\tau(z) \set \hat{\zeta}^{\mathrm{buy}}( \tau,
    \bar\zeta\left(\tau-z \right),$ $\bar\varphi\left(\tau-z\right),\tau
    -z)$ with $z \in [0,\tau]$. Then,
  $F_\tau(0) = \bar\zeta(\tau)$ due to~\eqref{eq:propbuy} as well as
  $F_\tau(\tau) = \hat{\zeta}^{\mathrm{buy}}\left( \tau,
    \bar\zeta\left(0 \right), \bar\varphi\left(0\right), 0
  \right)$. Moreover, it follows from Lemma~\ref{lem:hatfunctions} 2.)
  that $F_\tau(z)$ is continuous and strictly increasing on
  $[0,\tau]$. Consequently, for any
  $\bar\zeta(\tau) \leq \zeta \leq
  \hat{\zeta}^{\mathrm{buy}}(\tau,\bar\zeta(0),\bar\varphi(0),0)$
  there exists a unique
  $\tau_{\mathrm{buy}}(\tau,\zeta) \set z^*\in [0,\tau]$ such that
  $\zeta = F_\tau(z^*)$.
\end{proof}

\begin{Lemma}[Waiting duration] \label{lem:tauwait} 
   For a given pair $(\tau,\zeta) \in [0,\infty)^2$ such that either
  $\tau \geq \bar\theta$ and $0 \leq \zeta \leq \bar\zeta(\tau)$, or
  $\underline\theta \leq \tau < \bar\theta$ and
  $0 \leq \zeta \leq s_1(0,\tau)$, we define
  $\tau_{\mathrm{wait}}(\tau,\zeta)$ as the unique solution in
  $(0,\tau]$ to the equation
  \begin{equation} \label{def:tauwait} 
    \zeta = s_1\left(
      \tau-\tau_{\mathrm{wait}}(\tau,\zeta),
      \tau_{\mathrm{wait}}(\tau,\zeta) \right).
  \end{equation}
  In particular, in case $\tau \geq \bar\theta$ we have
  $\tau_{\mathrm{wait}}(\tau,\bar\zeta(\tau)) = \bar\theta$ and in
  case $\underline\theta \leq \tau < \bar\theta$ we have
  $\tau_{\mathrm{wait}}(\tau,s_1(0,\tau)) = \tau$. We further set
  \begin{equation}
    \begin{aligned} \label{set:tauwait}
      \tau_{\mathrm{wait}}(\tau,\zeta) \set
      \begin{cases}
        \bar\theta & \textrm{for } \tau \geq \bar\theta \textrm{ and
        } \\
& \bar\zeta(\tau) < \zeta \leq
        \hat{\zeta}^{\mathrm{buy}}(\tau,\bar\zeta(\bar\theta),\bar\varphi(\bar\theta),
        \bar\theta), \\
        \tau-\tau_{\mathrm{buy}}(\tau,\zeta) & \textrm{in all
          remaining cases},
      \end{cases}
    \end{aligned}
  \end{equation}
  so that $\tau_{\mathrm{wait}}(\cdot,\cdot)$ is defined for all
  $(\tau,\zeta)\in[0,\infty)^2$ with values in $[0,\tau]$.
\end{Lemma}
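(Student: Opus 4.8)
The plan is to argue exactly as in the proof of Lemma~\ref{lem:taubuy}: for fixed $\tau\geq0$ reduce everything to the one real function $G_\tau(t)\set s_1(\tau-t,t)$, $t\in(0,\tau]$, for which \eqref{def:tauwait} reads $G_\tau(t)=\zeta$. Dividing numerator and denominator in~\eqref{def:s1} by $e^{\kappa t}$ I would first record
\begin{equation*}
  G_\tau(t) = \frac{\mu\,(1-D(\tau-t))}{\lambda\kappa\,e^{-\kappa t}\,C(\tau-t) + P(t)}, \qquad P(\theta)\set\frac{\kappa}{2}\,\frac{(1+e^{-\kappa\theta})^2}{\kappa\theta-1-e^{-\kappa\theta}},
\end{equation*}
and note that by~\eqref{eq:besslich2} the number $\underline\theta$ is precisely the unique zero of the strictly increasing map $\theta\mapsto\kappa\theta-1-e^{-\kappa\theta}$ (which runs from $-2$ at $\theta=0$ to $+\infty$); hence $P$ is finite and strictly negative on $(0,\underline\theta)$ with $P(0)=-\kappa$ and $P(\theta)\to-\infty$ as $\theta\uparrow\underline\theta$, while $P$ is strictly positive on $(\underline\theta,\infty)$ with $P(\theta)\to+\infty$ as $\theta\downarrow\underline\theta$; in particular $s_1(x,\underline\theta)=0$ for every $x\geq0$ (the pole of the second factor in the denominator of~\eqref{def:s1}).

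The second step is to collect the monotonicity of the constituents, each by an elementary differentiation as in Lemma~\ref{lem:fbcurve}. Since $\gamma_+^2>\gamma_-^2>0$, the map $x\mapsto e^{-\beta x}\gamma_-^2+e^{\beta x}\gamma_+^2$ is strictly increasing, so $x\mapsto 1-D(x)=2\kappa\eta/(e^{-\beta x}\gamma_-^2+e^{\beta x}\gamma_+^2)$ is strictly decreasing (and positive, cf.\ Lemma~\ref{lem:phisell}); writing $C=N/Q$ with $N(x)=e^{-\beta x}\gamma_-+e^{\beta x}\gamma_+$ and $Q(x)=e^{-\beta x}\gamma_-^2+e^{\beta x}\gamma_+^2$, a short computation of $N'Q-NQ'$ using the identity $\gamma_+\gamma_-=-\kappa\eta$ shows it equals a positive constant, hence $x\mapsto C(x)$ is strictly increasing, with $0<\lambda C(x)<\lambda/\gamma_+<1$ for all $x\geq0$ (the upper bound being the limit as $x\to\infty$); finally $\theta\mapsto P(\theta)$ is strictly decreasing on each of $(0,\underline\theta)$ and $(\underline\theta,\infty)$. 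Consequently, for fixed $\tau$, the numerator $t\mapsto\mu(1-D(\tau-t))$ of $G_\tau$ is strictly increasing and the denominator $t\mapsto\lambda\kappa e^{-\kappa t}C(\tau-t)+P(t)$ is strictly decreasing on $(0,\tau]$.

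Third, I would run the sign analysis. At $t=0$ the denominator of $G_\tau$ equals $\kappa(\lambda C(\tau)-1)<0$, so being decreasing it remains negative on $(0,\underline\theta)$, running to $-\infty$ as $t\uparrow\underline\theta$; hence $G_\tau<0$ on $(0,\underline\theta)$, so~\eqref{def:tauwait} has no solution with $\zeta\geq0$ there, and $G_\tau(\underline\theta)=0$. On $(\underline\theta,\tau]$ the denominator is decreasing but still strictly positive at $t=\tau$ (where $P(\tau)>0$), hence positive throughout, so $G_\tau$ is continuous and strictly increasing on $[\underline\theta,\tau]$ with $G_\tau(\underline\theta)=0$, $G_\tau(\tau)=s_1(0,\tau)$ and, for $\tau\geq\bar\theta$, $G_\tau(\bar\theta)=s_1(\tau-\bar\theta,\bar\theta)=\bar\zeta(\tau)$ by~\eqref{def:barzeta} (for $\tau=\bar\theta$ one also uses $\bar\zeta(\bar\theta)=s_2(\bar\theta)=s_1(0,\bar\theta)$ from Lemma~\ref{lem:fbcurve}). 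The intermediate value theorem together with strict monotonicity then gives, for each admissible pair $(\tau,\zeta)$ --- that is $0\leq\zeta\leq\bar\zeta(\tau)$ when $\tau\geq\bar\theta$, respectively $0\leq\zeta\leq s_1(0,\tau)$ when $\underline\theta\leq\tau<\bar\theta$ --- a unique $\tau_{\mathrm{wait}}(\tau,\zeta)\in[\underline\theta,\bar\theta\wedge\tau]\subset(0,\tau]$ (recall $0<\underline\theta<\bar\theta$) solving~\eqref{def:tauwait}, and the two asserted identities $\tau_{\mathrm{wait}}(\tau,\bar\zeta(\tau))=\bar\theta$ and $\tau_{\mathrm{wait}}(\tau,s_1(0,\tau))=\tau$ are immediate from $G_\tau(\bar\theta)=\bar\zeta(\tau)$ and $G_\tau(\tau)=s_1(0,\tau)$. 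For the remaining $(\tau,\zeta)\in[0,\infty)^2$ the two branches of~\eqref{set:tauwait} are exhaustive, and their values lie in $[0,\tau]$ because $\bar\theta\leq\tau$ in the sub-range where the first branch applies and $\tau_{\mathrm{buy}}(\tau,\zeta)\in[0,\tau]$ by Lemma~\ref{lem:taubuy}; thus $\tau_{\mathrm{wait}}$ is defined on all of $[0,\infty)^2$ with values in $[0,\tau]$.

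The main obstacle is not the intermediate-value argument but the two monotonicity inputs behind it: the positivity of $N'Q-NQ'$, which genuinely relies on the algebraic identity $\gamma_+\gamma_-=-\kappa\eta$ rather than on any crude bound, and the behaviour of $G_\tau$ across the pole of $P$ at $\theta=\underline\theta$, where one has to get the sign of the denominator right on each side both to locate the zero of $G_\tau$ at $t=\underline\theta$ and to rule out spurious solutions of~\eqref{def:tauwait} in $(0,\underline\theta)$. The remaining effort is the bookkeeping needed to glue the piecewise branches of $\bar\zeta$ at $\underline\theta$ and $\bar\theta$, which is exactly what Lemma~\ref{lem:fbcurve} supplies.
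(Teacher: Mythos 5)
Your argument is correct and follows essentially the same route as the paper's proof: both reduce \eqref{def:tauwait} to the scalar equation $\zeta = G_\tau(z)$ with $G_\tau(z) \set s_1(\tau-z,z)$ and conclude via continuity, strict monotonicity and the boundary values $G_\tau(\bar\theta)=\bar\zeta(\tau)$, respectively $G_\tau(\tau)=s_1(0,\tau)$. You simply spell out the ``elementary computation'' the paper omits (monotonicity of $C$, $D$ and of the term involving the pole at $\underline\theta$) and treat the range $z\in(0,\underline\theta)$ by a sign argument ($G_\tau<0$ there) rather than by monotonicity, which is a harmless variation of the same approach.
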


\begin{proof}
  Consider for any $\tau \geq \underline\theta$ arbitrary but fixed
  the continuous function $z \mapsto G_\tau(z) \set s_1(\tau - z, z)$
  with $z \in [0,\min\{\tau,\bar\theta\}]$. An elementary computation
  shows that $G_\tau(z)$ is strictly increasing on
  $[0, \min\{\tau,\bar\theta\}]$ with $G_\tau(0) = s_1(\tau,0) <
  0$. Moreover, in case $\tau \geq \bar\theta$ it holds that
  $G_\tau(\bar\theta) = \bar\zeta(\tau)$ due to the definition
  in~\eqref{def:barzeta}, and in case
  $\underline\theta \leq \tau < \bar\theta$ it holds that
  $G_\tau(\tau) = s_1(0,\tau)$. Consequently, when
  $\tau \geq \bar \theta$ equation~\eqref{def:tauwait} admits for
  every $0 \leq \zeta \leq \bar\zeta(\tau)$ a unique solution
  $\tau_{\mathrm{wait}}(\tau,\zeta) \in (0,\bar\theta]$. Similarly,
  when $\underline\theta \leq \tau < \bar\theta$
  equation~\eqref{def:tauwait} admits for every
  $0 \leq \zeta \leq s_1(0,\tau)$ a unique solution
  $\tau_{\mathrm{wait}}(\tau,\zeta) \in (0,\tau]$.
\end{proof}

We are now ready to introduce the second free boundary function
$(\tau,\zeta) \mapsto \phi_{\mathrm{buy}}(\tau,\zeta)$ on the domain
$[0,+\infty)^2$. For given $\tau \geq 0$,
$\zeta \geq 0$,  we distinguish the following cases:
\begin{enumerate}
\item For
  $\zeta >
  \hat{\zeta}^{\mathrm{buy}}(\tau,\bar\zeta(0),\bar\varphi(0),0) =
  \hat{\zeta}^{\mathrm{buy}}(\tau,2\mu/\kappa,0,0)$ we set
  \begin{equation} \label{def:phibuy:eq1}
    \phi_{\mathrm{buy}}(\tau,\zeta) \set
    \phi_{\mathrm{sell}}(\tau,-\zeta)
  \end{equation}
  with $\phi_{\mathrm{sell}}$ as given in~\eqref{def:phisell} together
  with Remark~\ref{rem:negphisell}.
\item For
  $\bar\zeta(\tau) < \zeta \leq
  \hat{\zeta}^{\mathrm{buy}}(\tau,\bar\zeta(0),\bar\varphi(0),0)$ we
  set
  \begin{equation} \label{def:phibuy:eq2}
    \begin{aligned}
      & \phi_{\mathrm{buy}}(\tau,\zeta) \\
& \set
      \hat{\varphi}^{\mathrm{buy}} \left( \tau,
        \bar\zeta(\tau-\tau_{\mathrm{buy}}(\tau,\zeta)),
        \bar\varphi(\tau-\tau_{\mathrm{buy}}(\tau,\zeta)), \tau
        -\tau_{\mathrm{buy}}(\tau,\zeta) \right)
    \end{aligned}
  \end{equation}
  with $\hat{\varphi}^{\mathrm{buy}}$ and $\tau_{\mathrm{buy}}(\tau,\zeta)$ as defined
  in~\eqref{def:phihatbuy} and~\eqref{def:taubuy}, respectively.
\item If $0 \leq \zeta \leq \bar\zeta(\tau)$ and
  \begin{enumerate}
  \item if, in addition, $\tau \geq \bar\theta$, we set
    \begin{equation} \label{def:phibuy:eq3a}
      \phi_{\mathrm{buy}}(\tau,\zeta) \set \phi_{\mathrm{sell}}\left(
        \tau-\tau_{\mathrm{wait}}(\tau,\zeta), \zeta e^{-\kappa
          \tau_{\mathrm{wait}}(\tau,\zeta)} \right),
    \end{equation}
    with $\tau_{\mathrm{wait}}(\tau,\zeta)$ as defined
    in~\eqref{def:tauwait};
  \item if, in addition, $\underline\theta \leq \tau < \bar\theta$, we set
    \begin{equation} \label{def:phibuy:eq3b}
      \begin{aligned}
        & \phi_{\mathrm{buy}}(\tau,\zeta) \\ & \set
        \begin{cases}
          \phi_2(\tau,\zeta) & \text{if } \zeta > s_1(0,\tau), \\
          \phi_{\mathrm{sell}} \big(
            \tau-\tau_{\mathrm{wait}}(\tau,\zeta), \zeta e^{-\kappa
              \tau_{\mathrm{wait}}(\tau,\zeta)} \big) &
          \text{if } 0 \leq \zeta \leq s_1(0,\tau),
        \end{cases}
      \end{aligned}
    \end{equation}
    with $\phi_2$ given in~\eqref{def:phi2} and
    $\tau_{\mathrm{wait}}(\tau,\zeta)$ defined
    in~\eqref{def:tauwait};
  \item if, in addition, $0 \leq \tau < \underline\theta$, we set
    \begin{equation} \label{def:phibuy:eq3c}
      \begin{aligned}
        \phi_{\mathrm{buy}}(\tau,\zeta) \set
        \begin{cases}
          0 & \text{if } \zeta > s_3(\tau), \\
          \phi_2( \tau,\zeta) & \text{if } 0 \leq \zeta \leq s_3(\tau),
        \end{cases}
      \end{aligned}
    \end{equation}
    with $\phi_2$ and $s_3$ given in~\eqref{def:phi2}
    and~\eqref{def:s3}, respectively.
  \end{enumerate}
\end{enumerate}

Notice that together with the properties of the functions
$s_1, s_2, s_3$ collected in Lemma~\ref{lem:fbcurve} 1.) and 2.), the
above cases from~\eqref{def:phibuy:eq1} to~\eqref{def:phibuy:eq3c}
fully determine a map
$(\tau,\zeta) \mapsto \phi_{\mathrm{buy}}(\tau,\zeta)$ on the domain
$[0,+\infty)^2$; cf. Figure~\ref{fig:domain} for the corresponding partition of the
$(\tau,\zeta)$-half-plane.

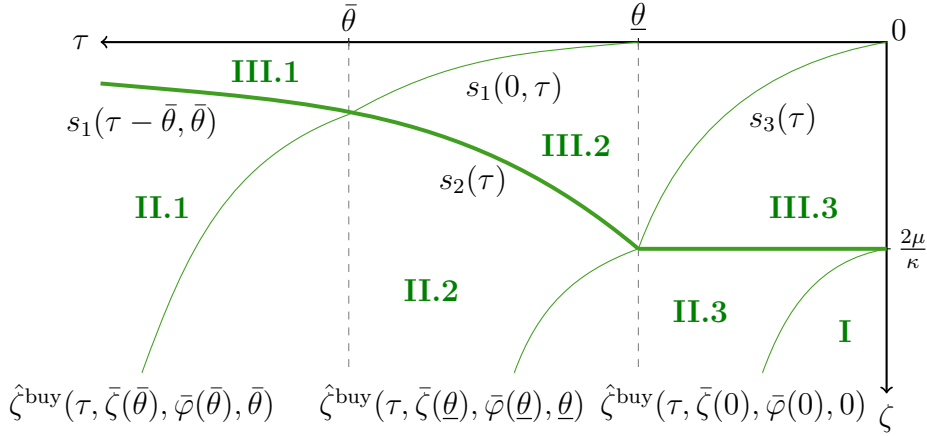
\begin{figure}[h]
\centering
\begin{tikzpicture}[scale=.55]
\draw [black, thick, <->] (-16,3) -- (3,3) -- (3,-5.5);
\node at (3.3,3.3) {0};
\node [left] at (-16,3) {$\tau$};
\node [below] at (3,-5.5) {$\zeta$}; 
\node[right] at (3,-2) {$\frac{2\mu}{\kappa}$};
\draw (2.9,-2) -- (3.1,-2); 
\node [above] at (-3,3) {$\underline\theta$}; 
\draw (-3,2.9) -- (-3,3.1); 
\draw [dashed, gray] (-3, 3) -- (-3,-5);
\node [above] at (-10,3) {$\bar\theta$}; 
\draw (-10,2.9) -- (-10,3.1); 
\draw [dashed, gray] (-10, 3) -- (-10,-5);
\draw [color={rgb:red,2;green,5;blue,1}, line width = 1.5] (-3,-2)
-- (3,-2);
\draw [color={rgb:red,2;green,5;blue,1}] (3,3) to [out=190,in=70] (-3,-2);
\node [above] at (.5,.5) {$s_3(\tau)$}; 
\draw [color={rgb:red,2;green,5;blue,1}, line width = 1.5] (-16,2) to [out=-5,in=140] (-3,-2);
\node [above] at (-7,-1) {$s_2(\tau)$}; 
\node [below] at (-15,1.75) {$s_1(\tau-\bar\theta,\bar\theta)$}; 
\draw [color={rgb:red,2;green,5;blue,1}] (-10,1.25) to [out=30,in=185] (-3,3);
\node [above] at (-6,1.25) {$s_1(0,\tau)$}; 
\draw [color={rgb:red,2;green,5;blue,1}] (0,-5) to [out=70,in=190] (3,-2);
\node [below] at (-.7,-5)
{$\hat{\zeta}^{\mathrm{buy}}(\tau,\bar\zeta(0),\bar\varphi(0),0)$};
\draw [color={rgb:red,2;green,5;blue,1}] (-6,-5) to [out=70,in=200] (-3,-2);
\node [below] at (-7.5,-5)
{$\hat{\zeta}^{\mathrm{buy}}(\tau,\bar\zeta(\underline\theta),\bar\varphi(\underline\theta),
  \underline\theta)$};
\draw [color={rgb:red,2;green,5;blue,1}] (-15,-5) to [out=70,in=200] (-10,1.25);
\node [below] at (-15,-5)
{$\hat{\zeta}^{\mathrm{buy}}(\tau,\bar\zeta(\bar\theta),\bar\varphi(\bar\theta),\bar\theta)$};
\node at (-14.5,-1) {\textcolor[rgb]{0,0.5,0}{\textbf{II.1}}};
\node at (-12,-3) {\textcolor[rgb]{0,0.5,0}{\textbf{}}};
\node at (-8,-3) {\textcolor[rgb]{0,0.5,0}{\textbf{II.2}}};    
\node at (-4.5,-4) {\textcolor[rgb]{0,0.5,0}{\textbf{}}};
\node at (-1.5,-3.5) {\textcolor[rgb]{0,0.5,0}{\textbf{II.3}}};
\node at (2,-4) {\textcolor[rgb]{0,0.5,0}{\textbf{I}}};
\node at (-12,2.3) {\textcolor[rgb]{0,0.5,0}{\textbf{III.1}}};
\node at (-9,2.5) {\textcolor[rgb]{0,0.5,0}{\textbf{}}};
\node at (-4.5,0.5) {\textcolor[rgb]{0,0.5,0}{\textbf{III.2}}};
\node at (-1.5,2) {\textcolor[rgb]{0,0.5,0}{\textbf{}}};
\node at (1,-1) {\textcolor[rgb]{0,0.5,0}{\textbf{III.3}}};
\end{tikzpicture}
\caption{An illustration of the mappings
  $s_1(\tau-\bar\theta,\bar\theta)$ on $[\bar\theta,+\infty)$,
  $s_1(0,\tau)$ and $s_2(\tau)$ on $[\underline\theta,\bar\theta]$,
  $s_3(\tau)$ on $[0,\underline\theta]$ as well as
  $\hat{\zeta}^{\mathrm{buy}}(\tau,\bar\zeta(0),\bar\varphi(0),0)$ on
  $[0,+\infty)$,
  $\hat{\zeta}^{\mathrm{buy}}(\tau,\bar\zeta(\underline\theta),
  \bar\varphi(\underline\theta), \underline\theta)$ on
  $[\underline\theta,+\infty)$ and
  $\hat{\zeta}^{\mathrm{buy}}(\tau,\bar\zeta(\bar\theta),\bar\varphi(\bar\theta),
  \bar\theta)$ on $[\bar\theta,+\infty)$ in the
  $(\tau,\zeta)$-half-plane as functions in $\tau$; cf. also
  Lemma~\ref{lem:fbcurve} and Lemma~\ref{lem:hatfunctions}. The thick
  green curve depicts the map $\tau \mapsto \bar\zeta(\tau)$ for
  $\tau \geq 0$; cf. the definition in~\eqref{def:barzeta}.}
\label{fig:domain}
\end{figure}

\begin{Lemma}
  The map $(\tau,\zeta) \mapsto \phi_{\mathrm{buy}}(\tau,\zeta)$
  defined from~\eqref{def:phibuy:eq1} to~\eqref{def:phibuy:eq3c} is
  continuous on $[0,+\infty)^2$. In particular, we
  have $\phi_{\mathrm{buy}}(\tau,\bar\zeta(\tau)) = \bar\varphi(\tau)$
  for all $\tau \geq 0$.
\end{Lemma}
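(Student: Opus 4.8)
The plan is to prove continuity region by region, following the partition of the $(\tau,\zeta)$-half-plane depicted in Figure~\ref{fig:domain}, and then to check that the defining formulas~\eqref{def:phibuy:eq1}--\eqref{def:phibuy:eq3c} agree along the common boundaries of that partition; the ``in particular'' assertion $\phi_{\mathrm{buy}}(\tau,\bar\zeta(\tau))=\bar\varphi(\tau)$ will fall out as the matching condition along the curve $\zeta=\bar\zeta(\tau)$ separating regions~II and~III.

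First I would note that on the interior of each region in~\eqref{def:phibuy:eq1}--\eqref{def:phibuy:eq3c} the map $\phi_{\mathrm{buy}}$ is a composition of continuous functions. The explicit building blocks $\phi_{\mathrm{sell}}$ (continuous since $D(\tau)\in(0,1)$ and $C(\tau)>0$ by Lemma~\ref{lem:phisell}), $\hat{\varphi}^{\mathrm{buy}}$ and $\phi_2$ are manifestly continuous, their denominators not vanishing on the relevant domains, so the only point needing comment is the continuity of the durations $\tau_{\mathrm{buy}}(\cdot,\cdot)$ and $\tau_{\mathrm{wait}}(\cdot,\cdot)$. This follows from their construction in Lemmas~\ref{lem:taubuy} and~\ref{lem:tauwait}: each is the $z$-coordinate of the unique zero of a jointly continuous map which, by Lemma~\ref{lem:hatfunctions}~2.) (respectively by the strict monotonicity of $G_\tau$ established in the proof of Lemma~\ref{lem:tauwait}), is strictly monotone in $z$; a standard inverse-function argument then yields continuity of the solution in $(\tau,\zeta)$, and the extensions~\eqref{set:taubuy}, \eqref{set:tauwait} glue continuously at $\zeta=\bar\zeta(\tau)$ and at $\zeta=\hat{\zeta}^{\mathrm{buy}}(\tau,\bar\zeta(0),\bar\varphi(0),0)$ by virtue of~\eqref{eq:taubuy:property} and Lemma~\ref{lem:fbcurve}.

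Next I would verify the gluing across the interfaces. (i) On $\zeta=\hat{\zeta}^{\mathrm{buy}}(\tau,2\mu/\kappa,0,0)=\hat{\zeta}^{\mathrm{buy}}(\tau,\bar\zeta(0),\bar\varphi(0),0)$ one has $\tau_{\mathrm{buy}}=\tau$ by~\eqref{eq:taubuy:property} with $\theta=0$, so~\eqref{def:phibuy:eq2} reduces to $\hat{\varphi}^{\mathrm{buy}}(\tau,2\mu/\kappa,0,0)$, and a direct computation with the explicit formulas~\eqref{def:zetahatbuy}, \eqref{def:phihatbuy} and~\eqref{def:phisell} identifies this with $\phi_{\mathrm{sell}}(\tau,-\zeta)$ from~\eqref{def:phibuy:eq1}. (ii) On $\zeta=\bar\zeta(\tau)$ one has $\tau_{\mathrm{buy}}(\tau,\bar\zeta(\tau))=0$, so~\eqref{def:phibuy:eq2} gives $\hat{\varphi}^{\mathrm{buy}}(\tau,\bar\zeta(\tau),\bar\varphi(\tau),\tau)=\bar\varphi(\tau)$ by~\eqref{eq:propbuy}; on the region~III side one checks, using $\tau_{\mathrm{wait}}(\tau,\bar\zeta(\tau))=\bar\theta$ together with~\eqref{eq:barphi1} for case~\eqref{def:phibuy:eq3a}, the representation~\eqref{eq:barphi2} together with $s_2(\tau)>s_1(0,\tau)$ (Lemma~\ref{lem:fbcurve}~2.)) for case~\eqref{def:phibuy:eq3b}, and $s_3(\tau)<2\mu/\kappa$ (Lemma~\ref{lem:fbcurve}~1.)) for case~\eqref{def:phibuy:eq3c}, that $\phi_{\mathrm{buy}}(\tau,\bar\zeta(\tau))=\bar\varphi(\tau)$ in all three cases, which is exactly the asserted identity. (iii) Inside region~III the internal interfaces $\zeta=s_3(\tau)$ and $\zeta=s_1(0,\tau)$ are handled by the elementary identities $\phi_2(\tau,s_3(\tau))=0$ (immediate from~\eqref{def:phi2} and~\eqref{def:s3}) and $\phi_2(\tau,s_1(0,\tau))=\phi_{\mathrm{sell}}(0,s_1(0,\tau)e^{-\kappa\tau})$ (using $\tau_{\mathrm{wait}}(\tau,s_1(0,\tau))=\tau$ from Lemma~\ref{lem:tauwait} and the definition~\eqref{def:s1}); the interfaces $\tau=\bar\theta$ and $\tau=\underline\theta$ are settled by $s_1(0,\bar\theta)=s_2(\bar\theta)=\bar\zeta(\bar\theta)$ and by $s_1(0,\underline\theta)=0$, $s_3(\underline\theta)=\bar\zeta(\underline\theta)=2\mu/\kappa$, which force the competing branches to coincide there, the only genuinely transcendental check being $\phi_{\mathrm{sell}}(0,0)=\phi_2(\underline\theta,0)$, which follows from the defining equation~\eqref{eq:besslich2} for $\underline\theta$.

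I expect the main obstacle to be the bookkeeping behind these matching computations, in particular the algebra in step~(i) and the identities at $\underline\theta$ and $\bar\theta$, where the transcendental equations~\eqref{eq:besslich1}--\eqref{eq:besslich2} have to be invoked; no single computation is deep, but care is needed to confirm that the partition in Figure~\ref{fig:domain} is exhaustive and that the domains on which $\tau_{\mathrm{buy}}$ and $\tau_{\mathrm{wait}}$ are defined abut exactly as claimed, so that the seven pieces of $\partial\mathcal{R}_{\mathrm{buy}}$ indeed fit together without gaps or overlaps.
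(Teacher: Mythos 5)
Your proposal is correct and takes essentially the same route as the paper's proof: continuity of the building blocks $\phi_{\mathrm{sell}}$, $\hat{\varphi}^{\mathrm{buy}}$, $\phi_2$ followed by case-by-case matching of the defining formulas along the partition boundaries of Figure~\ref{fig:domain}, with the identity $\phi_{\mathrm{buy}}(\tau,\bar\zeta(\tau))=\bar\varphi(\tau)$ obtained from $\tau_{\mathrm{buy}}(\tau,\bar\zeta(\tau))=0$, $\tau_{\mathrm{wait}}(\tau,\bar\zeta(\tau))=\bar\theta$ and property~\eqref{eq:propbuy}. Your additional care about the continuity of $\tau_{\mathrm{buy}}$ and $\tau_{\mathrm{wait}}$ and the checks at $\tau=\underline\theta$ and $\tau=\bar\theta$ (via~\eqref{eq:besslich1}--\eqref{eq:besslich2}) only makes explicit what the paper leaves implicit.
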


\begin{proof}
  Appealing to the continuity of the functions $\phi_{\mathrm{sell}}$,
  $\hat{\varphi}^{\mathrm{buy}}$, and $\phi_2$, we merely need to
  check continuity of $\phi_{\mathrm{buy}}$ along the boundaries of
  the partition of $[0,+\infty)^2$ described by $s_1, s_2,
  s_3$. First, observe in~\eqref{def:phibuy:eq3b} with
  $\zeta = s_1(0,\tau)$ that
  $\tau_{\mathrm{wait}}(\tau,s_1(0,\tau)) = \tau$ due to
  Lemma~\ref{lem:tauwait} and thus
  $\phi_{\mathrm{sell}}(\tau-\tau,s_1(0,\tau) e^{-\kappa \tau}) =
  \phi_2(\tau,s_1(0,\tau))$ holds true by continuity of
  $\bar\varphi(\tau)$ in~\eqref{def:barphi} as argued in
  Lemma~\ref{lem:fbcurve} 4.). In~\eqref{def:phibuy:eq3c}, if
  $\zeta = s_3(\tau)$, we have $\phi_2(\tau,s_3(\tau)) = 0$ by
  definition of $\phi_2$ in~\eqref{def:phi2} and $s_3$
  in~\eqref{def:s3}.  Next, let
  $\zeta =
  \hat{\zeta}^{\mathrm{buy}}(\tau,\bar\zeta(0),\bar\varphi(0),0)$
  in~\eqref{def:phibuy:eq2}. Since
  $\tau_{\mathrm{buy}}(\tau,
  \hat{\zeta}^{\mathrm{buy}}(\tau,\bar\zeta(0),\bar\varphi(0),0)) =
  \tau$ due to Lemma~\ref{lem:taubuy}, a simple computation shows that
  $\hat{\varphi}^{\mathrm{buy}} \left( \tau, \bar\zeta(0),
    \bar\varphi(0), 0 \right) = \phi_{\mathrm{sell}}( \tau,
  -\hat{\zeta}^{\mathrm{buy}}(\tau, \bar\zeta(0), \bar\varphi(0),
  0))$, cf.~\eqref{def:phibuy:eq1}.  For $\zeta = \bar\zeta(\tau)$ and
  $\tau \geq \bar\theta$ we have
  $\tau_{\mathrm{wait}}(\tau,\bar\zeta(\tau)) = \bar\theta$ and
  $\tau_{\mathrm{buy}}(\tau, \bar\zeta(\tau)) = 0$ by virtue of
  Lemmas~\ref{lem:tauwait} and~\ref{lem:taubuy}. Consequently,
  in~\eqref{def:phibuy:eq3a} it holds that
  $\phi_{\mathrm{sell}}(\tau-\bar\theta,\bar\zeta(\tau)
  e^{-\kappa\bar\theta}) = \bar\varphi(\tau) =
  \hat{\varphi}^{\mathrm{buy}} \left( \tau, \bar\zeta(\tau),
    \bar\varphi(\tau), \tau \right)$ by definition of $\bar\varphi$
  in~\eqref{def:barphi} and property~\eqref{eq:propbuy}. Similarly,
  for $\zeta = \bar\zeta(\tau)$ and $\tau < \bar\theta$ we obtain once
  more due to $\tau_{\mathrm{buy}}(\tau, \bar\zeta(\tau)) = 0$ the
  identities
  $\phi_2(\tau,\bar\zeta(\tau)) = \bar\varphi(\tau) =
  \hat{\varphi}^{\mathrm{buy}}(\tau,\bar\zeta(\tau),\bar\varphi(\tau),\tau)$
  and
  $0 = \bar\varphi(\tau) =
  \hat{\varphi}^{\mathrm{buy}}(\tau,\bar\zeta(\tau),\bar\varphi(\tau),\tau)$
  (again by definition in~\eqref{def:barphi} and
  property~\eqref{eq:propbuy}). In particular, note that
  $\phi_{\mathrm{buy}}(\tau,\bar\zeta(\tau)) = \bar\varphi(\tau)$ for
  all $\tau \geq 0$ in~\eqref{def:phibuy:eq2}.
\end{proof}


\subsubsection{Proof of Theorem~\ref{thm:main} and
  Corollaries~\ref{cor:main1} and~\ref{cor:main2}}
\label{subsubsec:proofs:main}

We are now ready to prove our main Theorem~\ref{thm:main} together
with Corollaries~\ref{cor:main1} and~\ref{cor:main2}. The outline of
our reasoning is as follows: First, we show that
\begin{align}
  \{ (\tau,\zeta,\varphi)
  \in \cS : \varphi = \phi_{\mathrm{sell}}(\tau,\zeta) \} 
  & \subset \partial \mathcal{R}_{\mathrm{sell}}, \label{pf:sellbound} \\
  \{ (\tau,\zeta,\varphi) \in
  \cS : \varphi > \phi_{\mathrm{sell}}(\tau,\zeta) \} 
  & \subset \mathcal{R}_{\mathrm{sell}}, \label{pf:sellreg} \\
  \{ (\tau,\zeta,\varphi)
  \in \cS : \phi_{\mathrm{buy}}(\tau,\zeta) = \varphi \} 
  & \subset \partial \mathcal{R}_{\mathrm{buy}}, \label{pf:buybound} \\
  \{ (\tau,\zeta,\varphi) \in
  \cS : \phi_{\mathrm{buy}}(\tau,\zeta) >
  \varphi\} 
  & \subset \mathcal{R}_{\mathrm{buy}} \label{pf:buyreg}
\end{align}
hold true. Then we prove the inequality in~\eqref{eq:phisellphibuy},
i.e.,
$\phi_{\mathrm{sell}}(\tau,\zeta) > \phi_{\mathrm{buy}}(\tau,\zeta)$
on $[0,+\infty)^2$ and argue that
\begin{equation}
  \left\{ (\tau,\zeta,\varphi) \in \cS : \phi_{\mathrm{buy}}(\tau,\zeta) <
    \varphi < \phi_{\mathrm{sell}}(\tau,\zeta) \right\} \subset
  \mathcal{R}_{\mathrm{wait}}. \label{pf:waitreg}
\end{equation}
In fact, since for all $(\tau,\zeta) \in [0,+\infty)^2$ the two
surfaces $(\tau,\zeta, \phi_{\mathrm{buy}}(\tau,\zeta))$ and
$(\tau,\zeta, \phi_{\mathrm{sell}}(\tau,\zeta))$ separate the state
space $\cS$ into three disjoint regions, we can then
readily deduce that equality must hold in all relations
from~\eqref{pf:sellbound} to~\eqref{pf:waitreg} and that
$\partial \mathcal{R}_{\mathrm{wait}} = \partial
\mathcal{R}_{\mathrm{buy}} \cup \partial \mathcal{R}_{\mathrm{sell}}$
as claimed in Theorem~\ref{thm:main}.

\emph{Step 1:} We start with the boundary of the selling region
$\partial \mathcal{R}_{\mathrm{sell}}$ and the claim
in~\eqref{pf:sellbound}. Showing that this relation holds true comes
along with the verification of the claims in Corollary~\ref{cor:main1}
which describe the corresponding optimal strategies for triplets in
$\partial \mathcal{R}_{\mathrm{sell}}$. Therefore, let
$(\tau,\zeta,\varphi) \in \cS$ such that
$\varphi = \phi_{\mathrm{sell}}(\tau,\zeta)$ with
$\phi_{\mathrm{sell}}$ as introduced in~\eqref{def:phisell}. We have
to argue that $(\tau,\zeta,\phi_{\mathrm{sell}}(\tau,\zeta))$ belongs
to $\partial \mathcal{R}_{\mathrm{sell}}$ as defined
in~\eqref{def:sellbound}. To justify this, we claim that the
corresponding optimal strategy
$\hat{X}^{\tau,\zeta,\varphi}=(\hat{X}^{\tau,\zeta,\varphi,\uparrow},
\hat{X}^{\tau,\zeta,\varphi,\downarrow}) \in \cX^d$ associated to the
problem data
$(\tau,\zeta,\varphi)=(\tau,\zeta, \phi_{\mathrm{sell}}(\tau,\zeta))$
is given by
\begin{equation} \label{def:optsellstrat1}
  \hat{X}_t^{\tau,\zeta,\varphi,\uparrow} \equiv 0, \quad
  \hat{X}_t^{\tau,\zeta,\varphi,\downarrow} = \varphi -
  \hat{\varphi}^{\mathrm{sell}}(\tau,\zeta,\varphi,t) \quad (0 \leq t
  \leq \tau)
\end{equation}
with $\hat{\varphi}^{\mathrm{sell}}$ as defined
in~\eqref{def:phihatsell}. First, observe
that~\eqref{def:optsellstrat1} immediately yields
$\hat{X}_0^{\tau,\zeta,\varphi,\downarrow} =
\phi_{\mathrm{sell}}(\tau,\zeta) -
\hat{\varphi}^{\mathrm{sell}}(\tau,\zeta,
\phi_{\mathrm{sell}}(\tau,\zeta),0) = 0$ due
to~\eqref{eq:propsell}. Moreover, it follows from
Lemma~\ref{lem:hatfunctions} 4.) that
$\hat{X}^{\tau,\zeta,\varphi,\downarrow}$ in~\eqref{def:optsellstrat1}
is strictly increasing and thus
$\{ d\hat{X}^{\tau,\zeta,\varphi,\downarrow}>0\}=[0,\tau]$. Obviously,
the corresponding share holdings of strategy
$\hat{X}^{\tau,\zeta,\varphi}$ are given by
\begin{equation} 
  \varphi_t^{\hat{X}^{\tau,\zeta,\varphi}}  
  = \hat{\varphi}^{\mathrm{sell}}(\tau,\zeta,\varphi,t) \quad (0 \leq t
  \leq \tau). \label{def:optsellphi}
\end{equation}
Inserting~\eqref{def:optsellstrat1} into the spread dynamics
in~\eqref{eq:spreadsolution} yields, after some elementary
computations, the representation
\begin{equation}
  \zeta_t^{\hat{X}^{\tau,\zeta,\varphi}}  
  = \hat{\zeta}^{\mathrm{sell}}(\tau,\zeta,\varphi,t) \quad (0 \leq t
  \leq \tau) \label{def:optsellzeta}
\end{equation}
with $\hat{\zeta}^{\mathrm{sell}}$ as defined
in~\eqref{def:zetahatsell}. In particular, the identities
in~\eqref{eq:propsell} imply
$\varphi_0^{\hat{X}^{\tau,\zeta,\varphi}} = \varphi =
\phi_{\mathrm{sell}}(\tau,\zeta)$ and
$\zeta_0^{\hat{X}^{\tau,\zeta,\varphi}}=\zeta$ as desired. Given the
explicit expression of the share holdings in~\eqref{def:optsellphi},
it can be easily checked that the second order ODE in~\eqref{eq:ODE1}
with initial conditions~\eqref{eq:ODEinitcond1} is
satisfied. Moreover, using the representation of the corresponding
controlled spread dynamics in~\eqref{def:optsellzeta}, a
straightforward computation shows that the desired relation
in~\eqref{eq:slide1} also holds true. As a consequence, appealing to
Lemma~\ref{lem:phisell}, we can deduce that the final position in the
risky asset is strictly positiv, i.e.,
$\varphi_{\tau}^{\hat{X}^{\tau,\zeta,\varphi}} =
\phi_{\mathrm{sell}}(0,\zeta_{\tau}^{\hat{X}^{\tau,\zeta,\varphi}}) >
0$. Concerning the claimed optimality of the strategy
$\hat{X}^{\tau,\zeta,\varphi}=(\hat{X}^{\tau,\zeta,\varphi,\uparrow},
\hat{X}^{\tau,\zeta,\varphi,\downarrow})$ in~\eqref{def:optsellstrat1}
a simple but tedious computation (which we omit for the sake of brevity) yields that
$\hat{X}^{\tau,\zeta,\varphi}$ satisfies
$\nabla^{\downarrow}_t J_{\tau} (\hat{X}^{\tau,\zeta,\varphi}) = 0$
for all $t \in [0,\tau]$. 
Note that the subgradient does not depend on $\varrho$
here. Consequently, by virtue of the first order optimality conditions
in Proposition~\ref{prop:foc} together with Lemma~\ref{lem:nobuysell},
we obtain that $\hat{X}^{\tau,\zeta,\varphi}$
in~\eqref{def:optsellstrat1} is optimal. In particular, since
$\nabla^{\downarrow}_0 J_{\tau} (\hat{X}^{\tau,\zeta,\varphi})=0$ and
$\hat{X}_0^{\tau,\zeta,\varphi,\downarrow}=0$, we can conclude that
$(\tau,\zeta,\varphi) = (\tau,\zeta,\phi_{\mathrm{sell}}(\tau,\zeta))$
belongs to $\partial \mathcal{R}_{\mathrm{sell}}$ as defined
in~\eqref{def:sellbound} with Corollary~\ref{cor:main1} holding true
for these triplets.

\emph{Step 2:} Let us continue with the claim in~\eqref{pf:sellreg}
concerning the selling-region $\mathcal{R}_{\mathrm{sell}}$. We argue
that for any $(\tau,\zeta,\varphi) \in \cS$ with
$\varphi > \phi_{\mathrm{sell}}(\tau,\zeta)$ the corresponding optimal
strategy
$\hat{X}^{\tau,\zeta,\varphi}=(\hat{X}^{\tau,\zeta,\varphi,\uparrow},\hat{X}^{\tau,\zeta,\varphi,\downarrow})
\in \cX^d$ is given by
\begin{equation} \label{def:optsellstrat2}
  \hat{X}_t^{\tau,\zeta,\varphi,\uparrow} \equiv 0, \quad
  \hat{X}_t^{\tau,\zeta,\varphi,\downarrow} = x^\downarrow +
 \hat{X}_t^{\tau,\zeta+\eta x^{\downarrow},\varphi-x^{\downarrow},\downarrow} \quad (0 \leq t
  \leq \tau),
\end{equation}
where $x^{\downarrow}$ is defined as
\begin{equation} \label{def:xdown} x^\downarrow \set \frac{\varphi -
    \phi_{\mathrm{sell}}(\tau,\zeta)}{1 + \eta \frac{\kappa}{\lambda}
    C(\tau)} = \frac{\varphi - \frac{\mu}{\lambda^2} D(\tau) - \zeta
    \frac{\kappa}{\lambda} C(\tau)}{1 + \eta \frac{\kappa}{\lambda}
    C(\tau)}> 0.
\end{equation}
Indeed, note that~\eqref{def:xdown} implies
$\varphi - x^{\downarrow} = \phi_{\mathrm{sell}}(\tau,\zeta + \eta
x^{\downarrow})$ and thus we have
$(\tau,\zeta + \eta x^{\downarrow}, \varphi - x^{\downarrow}) \in
\partial \mathcal{R}_{\mathrm{sell}}$ due to Step 1 with corresponding
optimal strategy
$\hat{X}^{\tau,\zeta+\eta x^{\downarrow},\varphi-x^{\downarrow}}$ as
described in~\eqref{def:optsellstrat1} above. Recall that this implies
$\hat{X}_0^{\tau,\zeta+\eta
  x^\downarrow,\varphi-x^{\downarrow},\downarrow}=0$. Hence, by
construction in~\eqref{def:optsellstrat2}, it holds that
$\{ d\hat{X}^{\tau,\zeta,\varphi,\downarrow}>0\}=[0,\tau]$. Moreover,
appealing to the definition of the subgradients
in~\eqref{eq:buysubgradient} and~\eqref{eq:sellsubgradient}, we have
\begin{equation*}
  {}^\varrho\nabla_{t}^{\uparrow,\downarrow}
  J_\tau(\hat{X}^{\tau,\zeta,\varphi}) =
  {}^\varrho\nabla_{t}^{\uparrow,\downarrow}
  J_\tau(\hat{X}^{\tau,\zeta+\eta
    x^\downarrow,\varphi-x^{\downarrow}}) \quad (0 \leq t \leq \tau)
\end{equation*}
because
$\varphi^{\hat{X}^{\tau,\zeta,\varphi}}_t =
\varphi^{\hat{X}^{\tau,\zeta+\eta
    x^\downarrow,\varphi-x^\downarrow}}_t$ and
$\zeta^{\hat{X}^{\tau,\zeta,\varphi}}_t =
\zeta^{\hat{X}^{\tau,\zeta+\eta x^\downarrow,\varphi-x^\downarrow}}_t$
for all $t \in [0,\tau]$. But this allows us to deduce that the
strategy in~\eqref{def:optsellstrat2} is optimal by virtue of the
first order optimality conditions in Proposition~\ref{prop:foc} and
the fact that these are satisfied by the strategy
$\hat{X}^{\tau,\zeta+\eta x^\downarrow,\varphi-x^{\downarrow}}$ as
shown in Step~1. Specifically, we have
$\nabla_{t}^{\uparrow} J_\tau(\hat{X}^{\tau,\zeta,\varphi}) > 0$ and
$\nabla_{t}^{\downarrow} J_\tau(\hat{X}^{\tau,\zeta,\varphi}) = 0$ for
all $t \in [0,\tau]$ (observe that the subgradients do not depend on
$\varrho$ here as in Step~1). Together with
$\hat{X}_0^{\tau,\zeta,\varphi,\downarrow} = x^\downarrow > 0$
in~\eqref{def:optsellstrat2} we obtain that $(\tau,\zeta,\varphi)$
belongs to $\mathcal{R}_{\mathrm{sell}}$ as defined
in~\eqref{def:sellregion}.

\emph{Step 3:} Now, we address the boundary of the buying region
$\partial \mathcal{R}_{\mathrm{buy}}$ and the claim
in~\eqref{pf:buybound}.  Therefore, let $(\tau,\zeta,\varphi) \in \cS$
be such that $\varphi = \phi_{\mathrm{buy}}(\tau,\zeta)$ holds true
with $\phi_{\mathrm{buy}}$ as introduced in~\eqref{def:phibuy:eq1}
to~\eqref{def:phibuy:eq3c}. Since the definition of
$\phi_{\mathrm{buy}}$ rests upon a partition of the domain
$[0,+\infty)^2$, we have to consider each of these cases separately;
cf. also Figure~\ref{fig:domain}. We will verify this together with
the claims in Corollary~\ref{cor:main2} 1.), 2.), and 3.),
respectively, which describe the corresponding optimal strategies.

\underline{Case 1 (part I in Fig.~\ref{fig:domain}):} First, let
$\zeta \geq
\hat{\zeta}^{\mathrm{buy}}(\tau,\bar\zeta(0),\bar\varphi(0),0)$. In
this case, we have
$\varphi = \phi_{\mathrm{buy}}(\tau,\zeta) =
\phi_{\mathrm{sell}}(\tau,-\zeta)$ in view
of~\eqref{def:phibuy:eq1}. In order to show that
$(\tau,\zeta, \phi_{\mathrm{buy}}(\tau,\zeta))$ belongs to
$\partial \mathcal{R}_{\mathrm{buy}}$ as defined
in~\eqref{def:buybound}, we claim that the corresponding optimal
strategy
$\hat{X}^{\tau,\zeta,\varphi}=(\hat{X}^{\tau,\zeta,\varphi,\uparrow},
\hat{X}^{\tau,\zeta,\varphi,\downarrow}) \in \cX^d$ is given by
\begin{equation} \label{def:optbuystrat1}
  \hat{X}_t^{\tau,\zeta,\varphi,\uparrow} =
  \hat{\varphi}^{\mathrm{sell}}(\tau,-\zeta,\varphi,t) - \varphi,
  \quad \hat{X}_t^{\tau,\zeta,\varphi,\downarrow} \equiv 0 \quad (0
  \leq t \leq \tau),
\end{equation}
with associated share holdings and spread dynamics
$ \varphi_t^{\hat{X}^{\tau,\zeta,\varphi}} =
\hat{\varphi}^{\mathrm{sell}}(\tau,-\zeta,\varphi,t)$ and
$\zeta_t^{\hat{X}^{\tau,\zeta,\varphi}} = -
\hat{\zeta}^{\mathrm{sell}}(\tau,-\zeta,\varphi,t)$, respectively, for
all $t \in [0,\tau]$. In fact, very similar computations as in Step 1
above allow us to verify that the strategy $\hat{X}^{\tau,\zeta,\varphi}$
in~\eqref{def:optbuystrat1} is optimal and that all assertions stated
in Corollary~\ref{cor:main2} 1.) hold true for the triplet
$(\tau,\zeta,\phi_{\mathrm{sell}}(\tau,-\zeta))$. As in Step~1, an
elementary but lengthy computation reveals that
$\nabla^{\uparrow}_t J_{\tau} (\hat{X}^{\tau,\zeta,\varphi}) \equiv 0$
for all $t \in [0,\tau]$. Note that the subgradient does not depend on
$\varrho$ here because
$\varphi_{\tau}^{\hat{X}^{\tau,\zeta,\varphi}} =
\phi_{\mathrm{sell}}(0,-\zeta_{\tau}^{\hat{X}^{\tau,\zeta,\varphi}}) <
0$ based on~\eqref{eq:slide2} and the fact that
$\zeta_{\tau}^{\hat{X}^{\tau,\zeta,\varphi}} > 2\mu/\kappa$. Since
$\hat{X}_0^{\tau,\zeta,\varphi,\uparrow} = 0$
in~\eqref{def:optbuystrat1} due to~\eqref{eq:propsell}, we can
conclude that
$(\tau,\zeta,\phi_{\mathrm{sell}}(\tau,-\zeta))$ belongs to
$\partial \mathcal{R}_{\mathrm{buy}}$ as defined
in~\eqref{def:buybound}.

\underline{Case 2:} Next, let us consider the case $\bar\zeta(\tau) <
\zeta <
\hat{\zeta}^{\mathrm{buy}}(\tau,\bar\zeta(0),\bar\varphi(0),0)$ and
let $\tau_{\mathrm{buy}}(\tau,\zeta) \in
(0,\tau)$ as defined in Lemma~\ref{lem:taubuy},
equation~\eqref{def:taubuy}. To ease notation, we set
\begin{equation} \label{def:starstriplet} 
  \tau^* \set \tau
  -\tau_{\mathrm{buy}}(\tau,\zeta), \quad \zeta^* \set
  \bar\zeta(\tau^*), \quad \varphi^* \set \bar\varphi(\tau^*).
\end{equation}
In view of the definition of
$\phi_{\mathrm{buy}}$ in~\eqref{def:phibuy:eq2} we thus have
\begin{equation} \label{eq:proofphibuy} 
  \varphi =
  \phi_{\mathrm{buy}}(\tau,\zeta) = \hat{\varphi}^{\mathrm{buy}}
  (\tau, \zeta^*, \varphi^*, \tau^*).
\end{equation}
To show that $(\tau,\zeta,
\phi_{\mathrm{buy}}(\tau,\zeta))$ belongs to $\partial
\mathcal{R}_{\mathrm{buy}}$ as defined in~\eqref{def:buybound}, we
will explicitly state the corresponding optimal strategy
$\hat{X}^{\tau,\zeta,\varphi}=(\hat{X}^{\tau,\zeta,\varphi,\uparrow},
\hat{X}^{\tau,\zeta,\varphi,\downarrow})$ in
$\cX^d$. This will be carried out by distinguishing further sub-cases
with respect to the initial data $\tau$ and
$\zeta$ (cf.  Figure~\ref{fig:domain}).

\underline{Case 2.1 (part II.1 in Fig.~\ref{fig:domain}):} If
$\tau > \bar\theta$ and
$s_1(\tau-\bar\theta,\bar\theta) = \bar\zeta(\tau) < \zeta <
\hat{\zeta}^{\mathrm{buy}}(\tau,\bar\zeta(\bar\theta),
\bar\varphi(\bar\theta),\bar\theta)$, it follows from
Lemma~\ref{lem:hatfunctions} 1.) and 2.) that
$\tau_{\mathrm{buy}}(\tau,\zeta) < \tau - \bar\theta$ and thus
$\tau^* > \bar\theta$. This implies $\varphi^*>0$ due to
Lemma~\ref{lem:fbcurve} 4.). We claim that the corresponding optimal
strategy is given as follows: The cumulative purchases of the risky
asset are
\begin{equation} \label{def:optbuysellstratbuy}
  \hat{X}_t^{\tau,\zeta,\varphi,\uparrow} =
  \begin{cases}
    \hat\varphi^{\mathrm{buy}}(\tau-t,\zeta^*,\varphi^*,\tau^*) -
    \varphi
    & \text{if } 0 \leq t \leq \tau_{\mathrm{buy}}(\tau,\zeta), \\
    \varphi^* - \varphi & \text{if } \tau_{\mathrm{buy}}(\tau,\zeta) <
    t \leq \tau,
  \end{cases}
\end{equation}
with $\hat{\varphi}^{\mathrm{buy}}$ as defined
in~\eqref{def:phihatbuy}. Observe that
$\hat{X}_0^{\tau,\zeta,\varphi,\uparrow}=0$ due to
assumption~\eqref{eq:proofphibuy} as well as
$\{
d\hat{X}^{\tau,\zeta,\varphi,\uparrow}>0\}=[0,\tau_{\mathrm{buy}}(\tau,\zeta))$
by virtue of Lemma~\ref{lem:hatfunctions} 3.). In particular,
$\varphi^* > \varphi = \hat{\varphi}^{\mathrm{buy}} (\tau, \zeta^*,
\varphi^*, \tau^*)$. The cumulative sells of the risky asset are
\begin{equation} \label{def:optbuysellstratsell}
  \hat{X}_t^{\tau,\zeta,\varphi,\downarrow} =
  \begin{cases}
    0 & \text{if } 0 \leq t < \tau_{\mathrm{buy}}(\tau,\zeta) + \bar\theta, \\
    \hat{X}_{t-\tau_{\mathrm{buy}}(\tau,\zeta) -
      \bar\theta}^{\tau^*-\bar\theta,\zeta^*
      e^{-\kappa\bar\theta},\varphi^*,\downarrow} & \text{if }
    \tau_{\mathrm{buy}}(\tau,\zeta) + \bar\theta \leq t \leq \tau.
  \end{cases}
\end{equation}
Notice that
\begin{equation} \label{prop:optbuysellstrat}
  (\tau^*-\bar\theta,\zeta^* e^{-\kappa\bar\theta},\varphi^*) \in
  \partial \mathcal{R}_{\mathrm{sell}}
\end{equation}
due to Step 1 because
$\varphi^* = \bar\varphi(\tau^*)
=\phi_{\textrm{sell}}(\tau^*-\bar\theta,\zeta^*
e^{-\kappa\bar\theta})$ by the definition of $\bar\varphi$
in~\eqref{def:barphi} and the fact that $\tau^* > \bar\theta$. In
other words,
$\hat{X}_\cdot^{\tau^*-\bar\theta,\zeta^*
  e^{-\kappa\bar\theta},\varphi^*,\downarrow} = \varphi^* -
\hat{\varphi}^{\mathrm{sell}}(\tau^*-\bar\theta, \zeta^*
e^{-\kappa\bar\theta},\varphi^*,\cdot)$ denotes the optimal cumulative
sells on $[0,\tau^* - \bar\theta]$ as given
in~\eqref{def:optsellstrat1} in Step 1 for the triplet
$(\tau^*-\bar\theta,\zeta^* e^{-\kappa\bar\theta},\varphi^*)
\in \partial \mathcal{R}_{\mathrm{sell}}$. In particular, it holds
that
$\{ d\hat{X}^{\tau,\zeta,\varphi,\downarrow}>0\}
=[\tau_{\mathrm{buy}}(\tau,\zeta)+\bar\theta,\tau)$.  The associated
share holdings and spread dynamics of strategy
$\hat{X}^{\tau,\zeta,\varphi}=(\hat{X}^{\tau,\zeta,\varphi,\uparrow},
\hat{X}^{\tau,\zeta,\varphi,\downarrow})$ prescribed
in~\eqref{def:optbuysellstratbuy} and~\eqref{def:optbuysellstratsell}
can be easily computed and are given by
\begin{equation} \label{def:optbuysellphi}
  \varphi_t^{\hat{X}^{\tau,\zeta,\varphi}} =
  \begin{cases}
    \hat\varphi^{\mathrm{buy}}(\tau-t,\zeta^*,\varphi^*,\tau^*),
    & 0 \leq t \leq \tau_{\mathrm{buy}}(\tau,\zeta), \\
    \varphi^*, & \tau_{\mathrm{buy}}(\tau,\zeta) < t \leq
    \tau_{\mathrm{buy}}(\tau,\zeta) + \bar\theta, \\
    \hat{\varphi}^{\mathrm{sell}}\big(\tau^*-\bar\theta,\zeta^*
    e^{-\kappa\bar\theta},\varphi^*, & \\
    \hspace{27pt} t-\tau_{\mathrm{buy}}(\tau,\zeta) - \bar\theta
    \big), & \tau_{\mathrm{buy}}(\tau,\zeta) + \bar\theta < t \leq
    \tau,
  \end{cases}
\end{equation}
and
\begin{equation} \label{def:optbuysellzeta}
  \zeta_t^{\hat{X}^{\tau,\zeta,\varphi}} =
  \begin{cases}
    \hat\zeta^{\mathrm{buy}}(\tau-t,\zeta^*,\varphi^*,\tau^*),
    & 0 \leq t \leq \tau_{\mathrm{buy}}(\tau,\zeta), \\
    \zeta^* e^{-\kappa(t-\tau_{\mathrm{buy}}(\tau,\zeta))}, &
    \tau_{\mathrm{buy}}(\tau,\zeta) < t \leq
    \tau_{\mathrm{buy}}(\tau,\zeta) + \bar\theta, \\
    \hat{\zeta}^{\mathrm{sell}}(\tau^*-\bar\theta,\zeta^*
    e^{-\kappa\bar\theta},\varphi^*, & \\
    \hspace{27pt} t-\tau_{\mathrm{buy}}(\tau,\zeta) - \bar\theta ), &
    \tau_{\mathrm{buy}}(\tau,\zeta) + \bar\theta < t \leq \tau.
  \end{cases}
\end{equation}
Observe that
$\varphi_{\tau_{\mathrm{buy}}(\tau,\zeta)}^{\hat{X}^{\tau,\zeta,\varphi}}
= \varphi^*$,
$\zeta_{\tau_{\mathrm{buy}}(\tau,\zeta)}^{\hat{X}^{\tau,\zeta,\varphi}}
= \zeta^*$,
$\varphi_{\tau_{\mathrm{buy}}(\tau,\zeta)+\bar\theta}^{\hat{X}^{\tau,\zeta,\varphi}}
= \varphi^*$, and
$\zeta_{\tau_{\mathrm{buy}}(\tau,\zeta)
  +\bar\theta}^{\hat{X}^{\tau,\zeta,\varphi}} = \zeta^*
e^{-\kappa\bar\theta}$ by virtue
of~\eqref{eq:propbuy},~\eqref{eq:propsell}. Hence,
recalling~\eqref{prop:optbuysellstrat}, it holds that
\begin{equation} \label{optbuysell:sellbound} 
  \left( \tau-
    \tau_{\mathrm{buy}}(\tau,\zeta) - \bar\theta,
    \zeta_{\tau_{\mathrm{buy}}(\tau,\zeta)
      +\bar\theta}^{\hat{X}^{\tau,\zeta,\varphi}} ,
    \varphi_{\tau_{\mathrm{buy}}(\tau,\zeta)+\bar\theta}^{\hat{X}^{\tau,\zeta,\varphi}}
  \right) \in
  \partial \mathcal{R}_{\mathrm{sell}}.
\end{equation}
In other words, referring to~\eqref{def:tausell}
and~\eqref{eq:waitandsell} in Corollary~\ref{cor:main2}, we have
$\tau_{\mathrm{sell}}(\tau,\zeta) =
\tau-\tau_{\mathrm{buy}}(\tau,\zeta)-\bar\theta = \tau^* - \bar\theta
> 0$ with $\tau_{\mathrm{wait}}(\tau,\zeta) = \bar\theta$ (see also
the definition in~\eqref{set:tauwait}). Next, it can be easily checked
that the second order ODE in~\eqref{eq:ODE2} with desired terminal
conditions~\eqref{eq:ODEfinalcond} is satisfied by
$\varphi^{\hat{X}^{\tau,\zeta,\varphi}}$ on
$(0, \tau_{\mathrm{buy}}(\tau,\zeta))$ as stated
in~\eqref{def:optbuysellphi}. Moreover, the relation
in~\eqref{eq:slide3} also holds true. Indeed, for all
$t \in [0,\tau^{\mathrm{buy}}(\tau,\zeta)]$ it holds that
$\zeta_t^{\hat{X}^{\tau,\zeta,\varphi}}
=\hat\zeta^{\mathrm{buy}}(\tau-t,\zeta^*,\varphi^*,\tau^*) \in
[\bar\zeta(\tau-t),
\hat\zeta^{\mathrm{buy}}(\tau-t,\bar\zeta(0),\bar\varphi(0),0)]$ as
well as
$\tau^{\mathrm{buy}}(\tau-t,
\hat\zeta^{\mathrm{buy}}(\tau-t,\zeta^*,\varphi^*,\tau^*)) =
\tau-t-\tau^*$ due to Lemma~\ref{lem:hatfunctions} 1.)
and~\eqref{eq:taubuy:property}, respectively. Thus, by the definition
of $\phi_{\mathrm{buy}}$ in~\eqref{def:phibuy:eq2} we obtain
$\phi_{\mathrm{buy}}(\tau-t, \zeta_t^{\hat{X}^{\tau,\zeta,\varphi}}) =
\phi_{\mathrm{buy}}(\tau-t,
\hat\zeta^{\mathrm{buy}}(\tau-t,\zeta^*,\varphi^*,\tau^*))
=\hat{\varphi}^{\mathrm{buy}}(\tau-t,\bar\zeta(\tau^*),\bar\varphi(\tau^*),\tau^*)
= \varphi_t^{\hat{X}^{\tau,\zeta,\varphi}}$ for all
$t \in (0, \tau_{\mathrm{buy}}(\tau,\zeta))$ as desired. It is left to
argue that the strategy $\hat{X}^{\tau,\zeta,\varphi}$ specified
in~\eqref{def:optbuysellstratbuy} and~\eqref{def:optbuysellstratsell}
satisfies the first order optimality conditions in
Proposition~\ref{prop:foc} and is thus optimal. Due to the dynamic
programming principle from Remark~\ref{rem:dynamicprog} this can be
done via a backward reasoning in time. First of all, optimality of the
strategy on the time interval
$[\tau_{\mathrm{buy}}(\tau,\zeta)+\bar\theta,\tau]$ follows by
construction of $\hat{X}^{\tau,\zeta,\varphi}$
from~\eqref{optbuysell:sellbound} and Step 1. Next, we have to check
the sell- and buy-subgradients on
$[\tau_{\mathrm{buy}}(\tau,\zeta),\tau_{\mathrm{buy}}(\tau,\zeta)+\bar\theta]$. Observe
that, again by construction of $\hat{X}^{\tau,\zeta,\varphi}$ on this
interval and due to the fact that
\begin{equation} \label{opt:buysell:sellgradient}
  \nabla_{\tau_{\mathrm{buy}}(\tau,\zeta)+\bar\theta}^{\downarrow}J_\tau(
  \hat{X}^{\tau,\zeta,\varphi}) =
  \nabla_{0}^{\downarrow}J_{\tau^*-\bar\theta} (
  \hat{X}^{\tau^*-\bar\theta,\zeta^*e^{-\kappa\bar\theta},\varphi^*} )
  = 0,
\end{equation}
we obtain with Lemma~\ref{lem:waitgradients} 1.) for all
$t \in
[\tau_{\mathrm{buy}}(\tau,\zeta),\tau_{\mathrm{buy}}(\tau,\zeta)+\bar\theta)$
the expressions
\begin{equation} \label{eq:mastergrad1}
  \begin{aligned}
    & \nabla_{t}^{\uparrow,\downarrow}
    J_\tau(\hat{X}^{\tau,\zeta,\varphi}) =
    \nabla_{t-\tau_{\mathrm{buy}}(\tau,\zeta)}^{\uparrow,\downarrow}
    J_{\tau^*}(\hat{X}^{\tau^*,\zeta^*,\varphi^*}) \\
    & =
    \nabla_{0}^{\uparrow,\downarrow}J_{\tau-t}(\hat{X}^{\tau-t,\zeta^*e^{-\kappa
        (t-\tau_{\mathrm{buy}}(\tau,\zeta))},\varphi^*})
    \\
    & = g^{\uparrow,\downarrow}
    (\tau_{\mathrm{buy}}(\tau,\zeta)+\bar\theta-t;
    \tau^*-\bar\theta,\zeta^* e^{-\kappa\bar\theta},\varphi^*) \\
    & = \pm(\lambda^2 \varphi^* - \mu)
    (\tau_{\mathrm{buy}}(\tau,\zeta)+\bar\theta-t) + \frac{1}{2}
    \zeta^* e^{-\kappa\tau^*}
    (e^{\kappa(\tau_{\mathrm{buy}}(\tau,\zeta)+\bar\theta-t)}\pm 1) \\
    & \hspace{13.5pt} + \frac{1}{\kappa}
    (e^{-\kappa(\tau_{\mathrm{buy}}(\tau,\zeta)+\bar\theta-t)} \pm 1)
    ( -\lambda^2\varphi^* + \mu + \frac{1}{2} \kappa \zeta^*
    e^{-\kappa\bar\theta}),
  \end{aligned}
\end{equation}
where we used the fact that
\begin{equation*} 
  \int_{[0,\tau^*-\bar\theta]} e^{-\kappa
    u} \, d\hat{X}^{\tau^*-\bar\theta,\zeta^* e^{-\kappa\bar\theta},\varphi^*,\downarrow}_u =
  \frac{2}{\eta\kappa} ( -\lambda^2\varphi^* + \mu +
  \frac{1}{2} \kappa \zeta^* e^{-\kappa\bar\theta}).
\end{equation*}
Notice that~\eqref{opt:buysell:sellgradient} implies
$g^{\downarrow}( 0 ; \tau^*-\bar\theta,\zeta^*
e^{-\kappa\bar\theta},\varphi^*)=0$. Moreover, it can be easily
checked that
$\frac{\partial}{\partial \theta} g^{\downarrow}(\theta ;
\tau^*-\bar\theta,\zeta^* e^{-\kappa\bar\theta},\varphi^*)
\vert_{\theta = 0} = 0$. Consequently, due to strict convexity of
$s\mapsto g^{\downarrow}(\bar\theta - s; \tau^*-\bar\theta,\zeta^*
e^{-\kappa\bar\theta},\varphi^*)$ on $[0,\bar\theta]$, we can deduce
that $\nabla_{t}^{\downarrow}J_\tau(\hat{X}^{\tau,\zeta,\varphi}) > 0$
on
$[\tau_{\mathrm{buy}}(\tau,\zeta),
\tau_{\mathrm{buy}}(\tau,\zeta)+\bar\theta)$. Similarly, concerning
the buy-subgradient, \eqref{opt:buysell:sellgradient} implies
$g^{\uparrow}(0; \tau^*-\bar\theta,\zeta^*
e^{-\kappa\bar\theta},\varphi^*) > 0$ due to
Lemma~\ref{lem:nobuysell}. In addition, simple algebraic manipulations
show that the identity $\varphi^* = \bar\varphi(\tau^*)$ (by using the
representation in~\eqref{eq:barphi1}) actually implies that
$g^{\uparrow}(\bar\theta; \tau^*-\bar\theta,\zeta^*
e^{-\kappa\bar\theta},$ $\varphi^*)=0$ and
$\frac{\partial}{\partial \theta} g^{\uparrow}( \theta ;
\tau^*-\bar\theta,\zeta^* e^{-\kappa\bar\theta},\varphi^*)
\vert_{\theta = \bar\theta} = 0$.  Hence, utilizing the fact that
$s \mapsto g^{\uparrow}(\bar\theta - s; \tau^*-\bar\theta,\zeta^*
e^{-\kappa\bar\theta},\varphi^*)$ is strictly convex on
$[0,\bar\theta]$, we can deduce that
$\nabla_{t}^{\uparrow}J_\tau(\hat{X}^{\tau,\zeta,\varphi}) > 0$ for
all
$t \in (\tau_{\mathrm{buy}}(\tau,\zeta),
\tau_{\mathrm{buy}}(\tau,\zeta)+\bar\theta]$. To complete the
verification of optimality of strategy $\hat{X}^{\tau,\zeta,\varphi}$,
we need to check that
$\nabla_{t}^{\uparrow}J_\tau(\hat{X}^{\tau,\zeta,\varphi}) = 0$ for
all $t \in [0, \tau_{\mathrm{buy}}(\tau,\zeta)]$. Indeed, once more
simple but tedious algebraic manipulations show that this holds
true. 
To sum up, it follows from the
first order optimality conditions in Proposition~\ref{prop:foc} that
$\hat{X}^{\tau,\zeta,\varphi}$ in~\eqref{def:optbuysellstratbuy}
and~\eqref{def:optbuysellstratsell} is optimal. Hence, we can conclude
that $(\tau,\zeta,\varphi)$ with
$\varphi = \phi_{\mathrm{buy}}(\tau,\zeta) =
\hat{\varphi}^{\mathrm{buy}} (\tau, \zeta^*,$ $\varphi^*, \tau^*)$
in~\eqref{eq:proofphibuy} belongs to
$\partial \mathcal{R}_{\mathrm{buy}}$ as defined
in~\eqref{def:buybound} with Corollary~\ref{cor:main2} 2.) holding
true for these triplets.

\underline{Case 2.2 (part II.2 in Fig.~\ref{fig:domain}):} Let us next
consider one of the two cases where either $\tau \geq \bar\theta$ and
$\hat{\zeta}^{\mathrm{buy}}(\tau,\bar\zeta(\bar\theta),
\bar\varphi(\bar\theta),\bar\theta) \leq \zeta <
\hat{\zeta}^{\mathrm{buy}}(\tau,\bar\zeta(\underline\theta),
\bar\varphi(\underline\theta),\underline\theta)$ or
$\underline\theta < \tau < \bar\theta$ and
$s_2(\tau) \leq \zeta <
\hat{\zeta}^{\mathrm{buy}}(\tau,\bar\zeta(\underline\theta),
\bar\varphi(\underline\theta),\underline\theta)$. Recall that we are
still given $\tau^*, \zeta^*, \varphi^*$ from~\eqref{def:starstriplet}
as well as the identity in~\eqref{eq:proofphibuy}. Notice, though,
that $\tau^* \in (\underline\theta,\bar\theta]$ in view of
Lemma~\ref{lem:hatfunctions} 1.) and 2.). In each of both considered
cases, we claim that the optimal strategy
$\hat{X}^{\tau,\zeta,\varphi}=(\hat{X}^{\tau,\zeta,\varphi,\uparrow},
\hat{X}^{\tau,\zeta,\varphi,\downarrow})$ is given as follows: The
cumulative purchases of the risky asset are still prescribed as in
\eqref{def:optbuysellstratbuy} above with
$\hat{X}_0^{\tau,\zeta,\varphi,\uparrow}=0$ and
$\{ d\hat{X}^{\tau,\zeta,\varphi,\downarrow}>0\} =
[0,\tau_{\mathrm{buy}}(\tau,\zeta))$. In contrast, the cumulative
sells of the risky asset are now given by
$\hat{X}_t^{\tau,\zeta,\varphi,\downarrow} \equiv 0$ on $[0,\tau]$. As
a consequence, compared to~\eqref{def:optbuysellphi}
and~\eqref{def:optbuysellzeta}, the corresponding induced share
holdings and spread dynamics simplify to
\begin{equation} \label{def:optbuyholdphi}
  \varphi_t^{\hat{X}^{\tau,\zeta,\varphi}} =
  \begin{cases}
    \hat\varphi^{\mathrm{buy}}(\tau-t,\zeta^*,\varphi^*,\tau^*),
    & 0 \leq t \leq \tau_{\mathrm{buy}}(\tau,\zeta), \\
    \varphi^*, & \tau_{\mathrm{buy}}(\tau,\zeta) < t \leq \tau,
  \end{cases}
\end{equation}
and
\begin{equation} \label{def:optbuyholdzeta}
  \zeta_t^{\hat{X}^{\tau,\zeta,\varphi}} =
  \begin{cases}
    \hat\zeta^{\mathrm{buy}}(\tau-t,\zeta^*,\varphi^*,\tau^*),
    & 0 \leq t \leq \tau_{\mathrm{buy}}(\tau,\zeta), \\
    \zeta^* e^{-\kappa(t-\tau_{\mathrm{buy}}(\tau,\zeta))}, &
    \tau_{\mathrm{buy}}(\tau,\zeta) < t \leq \tau.
  \end{cases}
\end{equation}
Notice that
$\varphi_{\tau_{\mathrm{buy}}(\tau,\zeta)}^{\hat{X}^{\tau,\zeta,\varphi}}
= \varphi^* = \bar\varphi(\tau^*)> 0$ (cf. Lemma~\ref{lem:fbcurve}
4.)) and
$\zeta_{\tau_{\mathrm{buy}}(\tau,\zeta)}^{\hat{X}^{\tau,\zeta,\varphi}}
= \zeta^*$ by virtue of~\eqref{eq:propbuy}. Moreover, following the
definition in~\eqref{set:tauwait}, we have
$\tau_{\mathrm{wait}}(\tau,\zeta) =
\tau-\tau_{\mathrm{buy}}(\tau,\zeta) > 0$ in the current
setting. Hence, $\tau_{\mathrm{sell}}(\tau,\zeta) = 0$
in~\eqref{def:tausell} in Corollary~\ref{cor:main2} which is in line
with the fact that $\hat{X}^{\tau,\zeta,\varphi,\downarrow} \equiv 0$
on $[0,\tau]$. All other assertions in Corollary~\ref{cor:main2} 2.)
can be easily checked as in Step 2.1. Next, very similar arguments as
in Step 2.1 above allow us to verify via the first order conditions in
Proposition~\ref{prop:foc} that the strategy
$\hat{X}^{\tau,\zeta,\varphi}=(\hat{X}^{\tau,\zeta,\varphi,\uparrow},0)$
with $\hat{X}^{\tau,\zeta,\varphi,\uparrow}$ given
in~\eqref{def:optbuysellstratbuy} is optimal. First, we check the
sell- and buy-subgradients on
$[\tau_{\mathrm{buy}}(\tau,\zeta),\tau]$. Due to the construction of
$\hat{X}^{\tau,\zeta,\varphi}$, we can again refer to
Lemma~\ref{lem:waitgradients} 1.) (which is applicable here in light
of our convention in Remark~\ref{rem:tauzero} 1.)) and obtain for all
$t \in [\tau_{\mathrm{buy}}(\tau,\zeta),\tau]$ the expressions
\begin{equation} \label{eq:mastergrad2}
  \begin{aligned}
    &
    \nabla_{t}^{\uparrow,\downarrow}J_\tau(\hat{X}^{\tau,\zeta,\varphi})
    = \nabla_{t-\tau_{\mathrm{buy}}(\tau,\zeta)}^{\uparrow,\downarrow}
    J_{\tau^*}(\hat{X}^{\tau^*,\zeta^*,\varphi^*}) \\
    & =
    \nabla_{0}^{\uparrow,\downarrow}J_{\tau-t}(\hat{X}^{\tau-t,\zeta^*e^{-\kappa
        (t-\tau_{\mathrm{buy}}(\tau,\zeta))},\varphi^*}) =
    g^{\uparrow,\downarrow}( \tau -t;
    0,\zeta^* e^{-\kappa\tau^*},\varphi^*) \\
    & = \pm(\lambda^2 \varphi^* - \mu) (\tau - t) + \frac{1}{2}
    \zeta^* e^{-\kappa\tau^*} (e^{\kappa(\tau-t)}\pm 1) +
    \frac{\eta}{2} \varphi^* (e^{-\kappa(\tau-t)} \pm 1)
  \end{aligned}
\end{equation}
with
$\nabla_{\tau}^{\downarrow}J_\tau(\hat{X}^{\tau,\zeta,\varphi}) =
\nabla_{0}^{\downarrow}J_0(\hat{X}^{0,\zeta^*
  e^{-\kappa\tau},\varphi^*})=g^{\downarrow}(0; 0,\zeta^*
e^{-\kappa\tau^*},\varphi^*)=0$. Using the monotonicity properties
from Lemma~\ref{lem:fbcurve} 3.)  and 4.), it holds that
\begin{equation*}
  \varphi^* \leq \bar\varphi(\bar\theta) = 
  \phi_{\textrm{sell}}(0,\bar\zeta(\bar\theta)
  e^{-\kappa\bar\theta}) = \frac{2\mu + \kappa \bar\zeta(\bar\theta)
    e^{-\kappa\bar\theta}}{2\lambda^2 + \kappa\eta} < \frac{2\mu + \kappa \zeta^*
    e^{-\kappa\tau^*}}{2\lambda^2 + \kappa\eta},
\end{equation*}
which implies
$\frac{\partial}{\partial t} g^{\downarrow}(\tau -t; 0,\zeta^*
e^{-\kappa\tau^*},\varphi^*) < 0$ and hence
$\nabla_{t}^{\downarrow}J_\tau(\hat{X}^{\tau,\zeta,\varphi}) > 0$ for
$t \in [\tau_{\mathrm{buy}}(\tau,\zeta),\tau)$. Concerning the
buy-subgradient, we have
$\nabla_{\tau}^{\uparrow}J_\tau(\hat{X}^{\tau,\zeta,\varphi}) =
g^{\uparrow}(0; 0,\zeta^* e^{-\kappa\tau^*},\varphi^*) > 0$. In
addition, using the fact that $\varphi^* = \bar\varphi(\tau^*)$ as
in~\eqref{eq:barphi2} and $\zeta^* = \bar\zeta(\tau^*)=s_2(\tau^*)$ as
in~\eqref{def:s2}, one can verify that
$g^{\uparrow}( \tau^*; 0,\zeta^* e^{-\kappa\tau^*},\varphi^*) = 0$ as
well as
$\frac{\partial}{\partial \theta} g^{\uparrow} (\theta ; 0,\zeta^*
e^{-\kappa\tau^*},\varphi^*)\vert_{\theta = \tau^*} = 0$.  But this
implies
$\nabla_{t}^{\uparrow}J_\tau(\hat{X}^{\tau,\zeta,\varphi}) > 0$ for
all $t \in (\tau_{\mathrm{buy}}(\tau,\zeta),\tau]$ because
$t \mapsto g^{\uparrow}( \tau^* - t; 0,\zeta^*
e^{-\kappa\tau^*},\varphi^*)$ is strictly convex on $[0,\tau^*]$. To
complete the verification of optimality of strategy
$\hat{X}^{\tau,\zeta,\varphi}$, one sees as in Step 2.1 that
$\nabla_{t}^{\uparrow}J_\tau(\hat{X}^{\tau,\zeta,\varphi}) = 0$ for
all $t \in [0, \tau_{\mathrm{buy}}(\tau,\zeta)]$. Hence, we can
conclude that
$(\tau,\zeta,\varphi) = (\tau,\zeta,\phi_{\mathrm{buy}}(\tau,\zeta))$
belongs to $\partial \mathcal{R}_{\mathrm{buy}}$ as defined
in~\eqref{def:buybound}.

\underline{Case 2.3 (part II.3 in Fig.~\ref{fig:domain}):} Consider
next one of the two cases where either
$\underline\theta \leq \tau < \bar\theta$ and
$\hat{\zeta}^{\mathrm{buy}}(\tau,\bar\zeta(\underline\theta),\bar\varphi(\underline\theta),\underline\theta)
\leq \zeta <
\hat{\zeta}^{\mathrm{buy}}(\tau,\bar\zeta(0),\bar\varphi(0),0)$, or
$0 < \tau < \underline\theta$ and
$2\mu/\kappa \leq \zeta <
\hat{\zeta}^{\mathrm{buy}}(\tau,\bar\zeta(0), \bar\varphi(0),0)$. Due
to Lemma~\ref{lem:hatfunctions} 1.) and 2.), we now have
$\tau^* = \tau -\tau_{\mathrm{buy}}(\tau,\zeta) \in
(0,\underline\theta]$ which implies
$\zeta^* = \bar\zeta(\tau^*) = 2\mu/\kappa$ and
$\varphi^* = \varphi(\tau^*) = 0$ in~\eqref{def:starstriplet} (recall
the definitions in~\eqref{def:barzeta} and~\eqref{def:barphi}). In
each of these cases, we claim that the optimal strategy
$\hat{X}^{\tau,\zeta,\varphi}=(\hat{X}^{\tau,\zeta,\varphi,\uparrow},
\hat{X}^{\tau,\zeta,\varphi,\downarrow})$ is prescribed as in Case 2.2
with controlled dynamics~\eqref{def:optbuyholdphi}
and~\eqref{def:optbuyholdzeta}. As a consequence, all assertions in
Corollary~\ref{cor:main2} 2.) still hold true in the current setting
and we again have $\tau_{\mathrm{sell}}(\tau,\zeta) = 0$
in~\eqref{def:tausell}. Optimality can once more be verified via the
first order conditions in Proposition~\ref{prop:foc} with similar
arguments as in Steps 2.1 and 2.2. Notice, though, that
$\varphi_{\tau}^{\hat{X}^{\tau,\zeta,\varphi}} = \varphi^* = 0$, that
is, the first order conditions need to be checked with a proper choice
of subgradients depending on $\varrho$. Therefore, we set
$\varrho^*\set e^{\kappa \tau^*} (\kappa\tau^*-1)$. Observe that
$\varrho^* \in (-1,1]$ since $\tau^* \in (0,\underline\theta]$ (recall
that $\underline\theta$ satisfies~\eqref{eq:besslich2}). Then, it
follows by construction of $\hat{X}^{\tau,\zeta,\varphi}$ and
Lemma~\ref{lem:waitgradients} 2.) that the buy- and sell-subgradients
on $[\tau_{\mathrm{buy}}(\tau,\zeta),\tau]$ are given by
\begin{equation} \label{eq:mastergrad3}
  \begin{aligned}
    &
    {}^{\varrho^*}\nabla_{t}^{\uparrow,\downarrow}J_\tau(\hat{X}^{\tau,\zeta,\varphi})
    =
    {}^{\varrho^*}\nabla_{t-\tau_{\mathrm{buy}}(\tau,\zeta)}^{\uparrow,\downarrow}
    J_{\tau^*}(\hat{X}^{\tau^*,\zeta^*,\varphi^*}) \\
    & = {}^{\varrho^*}\nabla_{0}^{\uparrow,\downarrow}
    J_{\tau-t}(\hat{X}^{\tau-t,\zeta^*e^{-\kappa
        (t-\tau_{\mathrm{buy}}(\tau,\zeta))},\varphi^*}) =
    h^{\uparrow,\downarrow}( \tau -t;
    \zeta^* e^{-\kappa\tau^*},\varrho^*) \\
    & = \mp \mu (\tau - t) + \frac{1}{2} \zeta^* e^{-\kappa\tau^*}
    (e^{\kappa(\tau-t)} \pm \varrho^*).
  \end{aligned}
\end{equation}
Obviously,
${}^{\varrho^*}\nabla_{t}^{\downarrow}J_\tau(\hat{X}^{\tau,\zeta,\varphi})
\geq 0$ on $[\tau_{\mathrm{buy}}(\tau,\zeta),\tau]$. Moreover, it
holds that
${}^{\varrho^*}\nabla_{\tau_{\mathrm{buy}}(\tau,\zeta)}^{\uparrow}
J_{\tau}(\hat{X}^{\tau,\zeta,\varphi})$
$= h^{\uparrow}(\tau^*; \zeta^* e^{-\kappa\tau^*},\varrho^*) = 0$ and
$\frac{\partial}{\partial\theta} h^{\uparrow}(\theta; \zeta^*
e^{-\kappa\tau^*},\varrho^*)\vert_{\theta=\tau^*} = 0$, which implies
${}^{\varrho^*}\nabla_{t}^{\downarrow}J_\tau(\hat{X}^{\tau,\zeta,\varphi})
> 0$ on $(\tau_{\mathrm{buy}}(\tau,\zeta),\tau]$ due to strict
convexity of the mapping
$t \mapsto h^{\uparrow}( \tau -t; \zeta^*
e^{-\kappa\tau^*},\varrho^*)$ on
$[\tau_{\mathrm{buy}}(\tau,\zeta),\tau]$. Concerning the interval
$[0, \tau_{\mathrm{buy}}(\tau,\zeta)]$, one can check as in Step 2.1
and 2.2 that the buy-gradient vanishes. Hence,
$\hat{X}^{\tau,\zeta,\varphi}$ is optimal and we can conclude that
$(\tau,\zeta,\varphi) = (\tau,\zeta,\phi_{\mathrm{buy}}(\tau,\zeta))$
belongs to $\partial \mathcal{R}_{\mathrm{buy}}$ as defined
in~\eqref{def:buybound}.

\underline{Case 3:} In order to finalize Step 3 concerning the
boundary of the buying region $\partial \mathcal{R}_{\mathrm{buy}}$
and the claim in~\eqref{pf:buybound}, we have to address the case
$0 \leq \zeta \leq \bar\zeta(\tau)$. This will be proved together with the
assertion in Corollary~\ref{cor:main2}~3.). Regarding the definition
of $\phi_{\mathrm{buy}}$ in~\eqref{def:phibuy:eq3a},
\eqref{def:phibuy:eq3b}, and~\eqref{def:phibuy:eq3c}, we have to carry
out once more a refined analysis.

\underline{Case 3.1 (part III.1 in Fig.~\ref{fig:domain}):} Let either
$\tau \geq \bar\theta$ and
$0 \leq \zeta \leq \bar\zeta(\tau) = s_1(\tau-\bar\theta,\bar\theta)$,
or $\underline\theta < \tau \leq \bar\theta$ and
$0 \leq \zeta < s_1(0,\tau)$. In view of the definitions
in~\eqref{def:phibuy:eq3a} and~\eqref{def:phibuy:eq3b}, we have
\begin{equation} \label{phi:optwaitsellstrat}
  \varphi = \phi_{\mathrm{buy}}(\tau,\zeta) =
  \phi_{\mathrm{sell}}(\tau-\tau_{\mathrm{wait}}(\tau,\zeta),\zeta
  e^{-\kappa \tau_{\mathrm{wait}}(\tau,\zeta)}) > 0
\end{equation}
with $\tau_{\mathrm{wait}}(\tau,\zeta) \in (0,\tau)$ as defined
in~\eqref{def:tauwait}. In particular, recall that this implies
$\zeta = s_1 ( \tau-\tau_{\mathrm{wait}}(\tau,\zeta),
\tau_{\mathrm{wait}}(\tau,\zeta))$. In both cases, we claim that the
optimal strategy
$\hat{X}^{\tau,\zeta,\varphi}=(\hat{X}^{\tau,\zeta,\varphi,\uparrow},
\hat{X}^{\tau,\zeta,\varphi,\downarrow})$ is given by
\begin{equation} \label{def:optwaitsellstratsell}
  \begin{aligned}
    \hat{X}_t^{\tau,\zeta,\varphi,\uparrow} & = 0 \qquad (0 \leq t \leq \tau), \\
    \hat{X}_t^{\tau,\zeta,\varphi,\downarrow} & =
    \begin{cases}
      0 & \text{if } 0 \leq t < \tau_{\mathrm{wait}}(\tau,\zeta), \\
      \hat{X}_{t-\tau_{\mathrm{wait}}(\tau,\zeta)}^{\tau-\tau_{\mathrm{wait}}(\tau,\zeta),\zeta
        e^{-\kappa
          \tau_{\mathrm{wait}}(\tau,\zeta)},\varphi,\downarrow} &
      \text{if } \tau_{\mathrm{wait}}(\tau,\zeta) \leq t \leq \tau.
    \end{cases}
  \end{aligned}
\end{equation}
Note that~\eqref{phi:optwaitsellstrat} immediately yields
\begin{equation} \label{prop:optwaitsellstrat}
  (\tau-\tau_{\mathrm{wait}}(\tau,\zeta),\zeta e^{-\kappa
    \tau_{\mathrm{wait}}(\tau,\zeta)},\varphi) \in
  \partial \mathcal{R}_{\mathrm{sell}}
\end{equation}
due to Step 1.  That is,
$\hat{X}_{\cdot}^{\tau-\tau_{\mathrm{wait}}(\tau,\zeta),\zeta
  e^{-\kappa \tau_{\mathrm{wait}}(\tau,\zeta)},\varphi,\downarrow}$
denotes the optimal cumulative sells on
$[0, \tau-\tau_{\mathrm{wait}}(\tau,\zeta)]$ as given
in~\eqref{def:optsellstrat1} for the triplet
in~\eqref{prop:optwaitsellstrat}. Hence, the associated share holdings
and spread dynamics for strategy $\hat{X}^{\tau,\zeta,\varphi}$ are
given by
\begin{equation} \label{def:optwaitsellphi}
  \varphi_t^{\hat{X}^{\tau,\zeta,\varphi}} =
  \begin{cases}
    \varphi,
    & 0 \leq t < \tau_{\mathrm{wait}}(\tau,\zeta), \\
    \hat{\varphi}^{\mathrm{sell}}\big(\tau-\tau_{\mathrm{wait}}(\tau,\zeta),\zeta
    e^{-\kappa \tau_{\mathrm{wait}}(\tau,\zeta)}, & \\
    \hspace{26pt} \varphi, t-\tau_{\mathrm{wait}}(\tau,\zeta) \big), &
    \tau_{\mathrm{wait}}(\tau,\zeta) \leq t \leq \tau,
  \end{cases}
\end{equation}
and
\begin{equation} \label{def:optwaitsellzeta}
  \zeta_t^{\hat{X}^{\tau,\zeta,\varphi}} =
  \begin{cases}
    \zeta e^{-\kappa t}, &
    0 \leq t < \tau_{\mathrm{wait}}(\tau,\zeta), \\
    \hat{\zeta}^{\mathrm{sell}}(\tau-\tau_{\mathrm{wait}}(\tau,\zeta),\zeta
    e^{-\kappa \tau_{\mathrm{wait}}(\tau,\zeta)}, & \\
    \hspace{23pt} \varphi, t-\tau_{\mathrm{wait}}(\tau,\zeta)), &
    \tau_{\mathrm{wait}}(\tau,\zeta) \leq t \leq \tau.
  \end{cases}
\end{equation}
Observe that~\eqref{phi:optwaitsellstrat} also implies
$\varphi_{\tau_{\mathrm{wait}}(\tau,\zeta)}^{\hat{X}^{\tau,\zeta,\varphi}}
= \varphi$ and
$\zeta_{\tau_{\mathrm{wait}}(\tau,\zeta)}^{\hat{X}^{\tau,\zeta,\varphi}}
= \zeta e^{-\kappa \tau_{\mathrm{wait}}(\tau,\zeta)}$ by virtue
of~\eqref{eq:propbuy}, \eqref{eq:propsell}. Moreover, due to the
definition of $\tau_{\mathrm{buy}}$ in~\eqref{set:taubuy}, we have
$\tau_{\mathrm{buy}}(\tau,\zeta) = 0$ in the current setup. Thus,
refering to~\eqref{def:tausell} and~\eqref{eq:waitandsell} in
Corollary~\ref{cor:main2}, we obtain
$\tau_{\mathrm{sell}}(\tau,\zeta) =
\tau-\tau_{\mathrm{wait}}(\tau,\zeta) > 0$, which is in line
with~\eqref{prop:optwaitsellstrat},~\eqref{def:optwaitsellphi}
and~\eqref{def:optwaitsellzeta} above. Next, optimality of strategy
$\hat{X}^{\tau,\zeta,\varphi}$ on
$[\tau_{\mathrm{wait}}(\tau,\zeta), \tau]$ follows by Step
1. Moreover, since
$\nabla_{\tau_{\mathrm{wait}}(\tau,\zeta)}^{\downarrow}
J_\tau(\hat{X}^{\tau,\zeta,\varphi})=0$, we obtain analogously
to~\eqref{eq:mastergrad1} for the sell- and buy-subgradients on
$[0,\tau_{\mathrm{wait}}(\tau,\zeta)]$ the expressions
\begin{equation*} 
  \begin{aligned}
    & \nabla_{t}^{\uparrow,\downarrow}
    J_\tau(\hat{X}^{\tau,\zeta,\varphi}) = g^{\uparrow,\downarrow}
    (\tau_{\mathrm{wait}}(\tau,\zeta)-t;
    \tau-\tau_{\mathrm{wait}}(\tau,\zeta),\zeta e^{-\kappa
      \tau_{\mathrm{wait}}(\tau,\zeta)}, \varphi) \\
    & = \pm(\lambda^2 \varphi - \mu)
    (\tau_{\mathrm{wait}}(\tau,\zeta)-t) + \frac{1}{2} \zeta
    e^{-\kappa \tau_{\mathrm{wait}}(\tau,\zeta)}
    (e^{\kappa(\tau_{\mathrm{wait}}(\tau,\zeta)-t)}\pm 1) \\
    & \hspace{13.5pt} + \frac{1}{\kappa}
    (e^{-\kappa(\tau_{\mathrm{wait}}(\tau,\zeta)-t)} \pm 1) (
    -\lambda^2\varphi + \mu + \frac{1}{2} \kappa \zeta e^{-\kappa
      \tau_{\mathrm{wait}}(\tau,\zeta)}).
  \end{aligned}
\end{equation*}
In fact, by similar convexity arguments as in Step 2.1 we have
$\nabla_{t}^{\downarrow} J_\tau(\hat{X}^{\tau,\zeta,\varphi}) >0$ on
the interval $[0, \tau_{\mathrm{wait}}(\tau,\zeta))$ as well as
$\nabla_{t}^{\uparrow} J_\tau(\hat{X}^{\tau,\zeta,\varphi}) > 0$ on
$(0, \tau_{\mathrm{wait}}(\tau,\zeta)]$. Indeed, since
$\zeta = s_1 ( \tau-\tau_{\mathrm{wait}}(\tau,\zeta),
\tau_{\mathrm{wait}}(\tau,\zeta))$ and
$\varphi =
\phi_{\mathrm{sell}}(\tau-\tau_{\mathrm{wait}}(\tau,\zeta),$ $\zeta
e^{-\kappa \tau_{\mathrm{wait}}(\tau,\zeta)})$, one can compute
$\nabla_{0}^{\uparrow} J_\tau(\hat{X}^{\tau,\zeta,\varphi}) =
g^{\uparrow} (\tau_{\mathrm{wait}}(\tau,\zeta);
\tau-\tau_{\mathrm{wait}}(\tau,\zeta),$ $\zeta e^{-\kappa
  \tau_{\mathrm{wait}}(\tau,\zeta)},\varphi) = 0$ and
$\frac{\partial}{\partial \theta} g^{\uparrow} (\theta;
\tau-\tau_{\mathrm{wait}}(\tau,\zeta),$
$\zeta e^{-\kappa \tau_{\mathrm{wait}}(\tau,\zeta)},\varphi)
\vert_{\theta=\tau_{\mathrm{wait}}(\tau,\zeta)} < 0$. Consequently, by
virtue of the first order conditions in Proposition~\ref{prop:foc}, it
follows that $\hat{X}^{\tau,\zeta,\varphi}$ is optimal. In particular,
we can conclude that $(\tau,\zeta,\varphi)$ with
$\varphi = \phi_{\mathrm{buy}}(\tau,\zeta)$ given
in~\eqref{phi:optwaitsellstrat} belongs to
$\partial \mathcal{R}_{\mathrm{buy}}$ as defined
in~\eqref{def:buybound} with Corollary~\ref{cor:main2} 3.) holding
true for these triplets.

\underline{Case 3.2 (part III.2 in Fig.~\ref{fig:domain}):} In case
$\underline\theta \leq \tau < \bar\theta$ and
$s_1(0,\tau) < \zeta \leq \bar\zeta(\tau) = s_2(\tau)$, or
$0 \leq \tau < \underline\theta$ and $0 \leq \zeta < s_3(\tau)$, we
now have
\begin{equation} \label{phi:optwaitstrat1} 
  \varphi =
  \phi_{\mathrm{buy}}(\tau,\zeta) = \phi_{2}(\tau,\zeta) =
  \frac{\mu\tau -\frac{1}{2} \zeta (1+e^{-\kappa\tau})}{\lambda^2 \tau
    + \frac{1}{2}\eta(1+e^{-\kappa\tau})} > 0
\end{equation}
due to the definitions in~\eqref{def:phibuy:eq3b},
\eqref{def:phibuy:eq3c},~\eqref{def:phi2} and the monotonicity
properties of $\phi_2$. In both above cases, we claim that the optimal
strategy $\hat{X}^{\tau,\zeta,\varphi}$ is given by
$\hat{X}^{\tau,\zeta,\varphi,\uparrow}_t =
\hat{X}^{\tau,\zeta,\varphi,\downarrow}_t = 0$ for all
$t \in [0,\tau]$. Hence, the corresponding dynamics for the share
holdings and the spread simplify to
$\varphi_t^{\hat{X}^{\tau,\zeta,\varphi}} = \varphi$ and
$\zeta_t^{\hat{X}^{\tau,\zeta,\varphi}} = \zeta e^{-\kappa t}$,
$t \in [0,\tau]$. Notice that $\tau_{\mathrm{buy}}(\tau,\zeta) = 0$
and $\tau_{\mathrm{wait}}(\tau,\zeta) = \tau$ due to the definitions
in~\eqref{set:taubuy} and~\eqref{set:tauwait} which yields
$\tau_{\mathrm{sell}}(\tau,\zeta) = 0$
in~\eqref{def:tausell}. Concerning the proof of optimality via
Proposition~\ref{prop:foc}, we obtain for the buy- and
sell-subgradients on $[0,\tau]$ similar to~\eqref{eq:mastergrad2} the
representations
\begin{equation*}
  \begin{aligned}
    &
    \nabla_{t}^{\uparrow,\downarrow}J_\tau(\hat{X}^{\tau,\zeta,\varphi})
    = g^{\uparrow,\downarrow}( \tau -t;
    0,\zeta e^{-\kappa\tau},\varphi) \\
    & = \pm(\lambda^2 \varphi - \mu) (\tau - t) + \frac{1}{2} \zeta
    e^{-\kappa\tau} (e^{\kappa(\tau-t)}\pm 1) + \frac{\eta}{2} \varphi
    (e^{-\kappa(\tau-t)} \pm 1).
  \end{aligned}
\end{equation*} 
By utilizing the identity in~\eqref{phi:optwaitstrat1} and similar
convexity arguments as in Step 2.2, it holds that
$\nabla_{t}^{\downarrow}J_\tau(\hat{X}^{\tau,\zeta,\varphi}) > 0$ on
$[0,\tau)$ as well as
$\nabla_{t}^{\uparrow}J_\tau(\hat{X}^{\tau,\zeta,\varphi}) > 0$ on
$(0,\tau]$ with
$\nabla_{0}^{\uparrow}J_\tau(\hat{X}^{\tau,\zeta,\varphi}) =
0$. Therefore, we can conclude that $(\tau,\zeta,\varphi)$ with
$\varphi = \phi_{\mathrm{buy}}(\tau,\zeta)$ given
in~\eqref{phi:optwaitstrat1} belongs to
$\partial \mathcal{R}_{\mathrm{buy}}$ as defined
in~\eqref{def:buybound} with Corollary~\ref{cor:main2} 3.) holding
true for these triplets.

\underline{Case 3.3 (part III.3 in Fig.~\ref{fig:domain}):} Finally,
in case $0 < \tau < \underline\theta$ and
$s_3(\tau) \leq \zeta \leq \bar\zeta(\tau) = 2\mu/\kappa$, we have
$\varphi = \phi_{\mathrm{buy}}(\tau,\zeta) = 0$ due
to~\eqref{def:phibuy:eq3c}. As in Case 3.2 above, we claim that the
optimal strategy $\hat{X}^{\tau,\zeta,\varphi}$ is again given by
$\hat{X}^{\tau,\zeta,\varphi,\uparrow}_t =
\hat{X}^{\tau,\zeta,\varphi,\downarrow}_t = 0$ for all
$t \in [0,\tau]$ with $\tau_{\mathrm{sell}}(\tau,\zeta) = 0$
in~\eqref{def:tausell}. Optimality can be checked via
Proposition~\ref{prop:foc} similar to Step 2.3 above. Indeed, since
$\varphi_{\tau}^{\hat{X}^{\tau,\zeta,\varphi}} = \varphi = 0$, we set
$\varrho^* \set e^{\kappa \tau} ( 2\mu\tau/\zeta -1)$. Notice that
$\varrho^* \in [-1,1]$ in the current setup. Next, analog
to~\eqref{eq:mastergrad3}, we obtain for the buy- and
sell-subgradients on $[0,\tau]$ the representations
${}^{\varrho^*}\nabla_{t}^{\uparrow,\downarrow}J_\tau(\hat{X}^{\tau,\zeta,0})
= h^{\uparrow,\downarrow}( \tau -t; \zeta e^{-\kappa\tau},\varrho^*) =
\mp \mu (\tau - t) + \frac{1}{2} \zeta e^{-\kappa\tau}
(e^{\kappa(\tau-t)} \pm \varrho^*)$.  Obviously, it holds that
${}^{\varrho^*}\nabla_{t}^{\downarrow}J_\tau(\hat{X}^{\tau,\zeta,0})
\geq 0$ on $[0,\tau]$. Moreover, we have
${}^{\varrho^*}\nabla_{0}^{\uparrow}J_{\tau}(\hat{X}^{\tau,\zeta,0}) =
h^{\uparrow}(\tau; \zeta e^{-\kappa\tau},\varrho^*) = 0$ and
$\frac{\partial}{\partial t} h^{\uparrow}(\tau - t; \zeta
e^{-\kappa\tau},\varrho^*) > 0$. But this implies
${}^{\varrho^*}\nabla_{t}^{\downarrow}J_\tau(\hat{X}^{\tau,\zeta,\varphi})
> 0$ on $(0,\tau]$. As a consequence, we obtain that
$\hat{X}^{\tau,\zeta,\varphi}$ is optimal and that
$(\tau,\zeta, \phi_{\mathrm{buy}}(\tau,\zeta)) = (\tau,\zeta,0)$
belongs to $\partial \mathcal{R}_{\mathrm{buy}}$ as defined
in~\eqref{def:buybound} with Corollary~\ref{cor:main2} 3.) holding
true for these triplets. This finishes Step 3 and the proof of the
claim in~\eqref{pf:buybound}.

\emph{Step 4:} Concerning the claim in~\eqref{pf:buyreg} for the
buying-region $\mathcal{R}_{\mathrm{buy}}$, the reasoning follows
along the same lines as in Step 2 for the selling-region
$\mathcal{R}_{\mathrm{sell}}$. That is, for any
$(\tau,\zeta,\varphi) \in \cS$ with
$\varphi < \phi_{\mathrm{buy}}(\tau,\zeta)$ the corresponding optimal
strategy
$\hat{X}^{\tau,\zeta,\varphi}=(\hat{X}^{\tau,\zeta,\varphi,\uparrow},
\hat{X}^{\tau,\zeta,\varphi,\downarrow}) \in \cX^d$ is in fact given
by
\begin{equation} \label{def:optbuystrat2}
  \hat{X}_t^{\tau,\zeta,\varphi,\uparrow} = x^{\uparrow} +
  \hat{X}_t^{\tau,\zeta+\eta
    x^{\uparrow},\varphi+x^{\uparrow},\uparrow}, \quad
  \hat{X}_t^{\tau,\zeta,\varphi,\downarrow} =
  \hat{X}_t^{\tau,\zeta+\eta
    x^{\uparrow},\varphi+x^{\uparrow},\downarrow}
\end{equation}
for all $t \in [0,\tau]$, where $x^{\uparrow}>0$ denotes the unique
solution to the equation
\begin{equation} \label{def:xup} 
  \varphi + x^{\uparrow} =
  \phi_{\mathrm{buy}}(\tau, \zeta + \eta x^{\uparrow}).
\end{equation}
Notice that~\eqref{def:xup} implies
$(\tau,\zeta + \eta x^{\uparrow}, \varphi + x^{\uparrow}) \in
\partial \mathcal{R}_{\mathrm{buy}}$ by virtue of Step~3. Therefore,
$\hat{X}^{\tau,\zeta+\eta x^{\uparrow},\varphi+x^{\uparrow}}$ denotes
the corresponding optimal strategy as prescribed in one of the
different cases presented in Step 3 above. Optimality of the strategy
in~\eqref{def:optbuystrat2} then follows as in Step 2 by virtue of the
first order optimality conditions in Proposition~\ref{prop:foc} and
the fact that they are satisfied by
$\hat{X}^{\tau,\zeta+\eta x^{\uparrow},\varphi+x^{\uparrow}}$. In
particular, it holds that
$\nabla_{0}^{\uparrow} J_\tau(\hat{X}^{\tau,\zeta,\varphi}) = 0$ and
$\hat{X}_0^{\tau,\zeta,\varphi,\uparrow} = x^\uparrow > 0$ which
implies that $(\tau,\zeta,\varphi)$ belongs to
$\mathcal{R}_{\mathrm{buy}}$ as defined in~\eqref{def:buyregion}.

\emph{Step 5:} We now argue that inequality~\eqref{eq:phisellphibuy}
holds true, i.e.,
$\phi_{\mathrm{sell}}(\tau,\zeta) > \phi_{\mathrm{buy}}(\tau,\zeta)$
on $[0,+\infty)^2$. Observe that this actually follows from the fact
that $\phi_{\mathrm{sell}}(\tau,\zeta) > 0$ on $[0,+\infty)^2$ (recall
Lemma~\ref{lem:phisell}) but, e.g.,
$\phi_{\mathrm{buy}}(\tau,\bar\zeta(\tau)) = 0$ for all
$\tau \in [0,\underline\theta]$ together with~\eqref{pf:sellbound}
and~\eqref{pf:buybound} as well as
$\partial \mathcal{R}_{\mathrm{buy}} \cap \partial
\mathcal{R}_{\mathrm{sell}} = \varnothing$
(cf. Lemma~\ref{lem:nobuysell}).

\emph{Step 6:} It is left to prove~\eqref{pf:waitreg}. We will only
sketch the argument. For this, let
$(\tau,\zeta,\varphi) \in \cS$ be such that
$\phi_{\mathrm{buy}}(\tau,\zeta) < \varphi <
\phi_{\mathrm{sell}}(\tau,\zeta)$. It is easy to observe that the
continuous mapping
$t \mapsto \phi_{\mathrm{sell}}(\tau-t,\zeta e^{-\kappa t})$ is
decreasing on $[0,\tau]$. In addition, one can also check that the
continuous mapping
$t \mapsto \phi_{\mathrm{buy}}(\tau-t,\zeta e^{-\kappa t})$ is
increasing for those $t \in [0,\tau]$ such that
$\zeta e^{-\kappa t} \geq \bar\zeta(\tau - t)$, that is, when
$\phi_{\mathrm{buy}}$ is either given as in~\eqref{def:phibuy:eq1}
or~\eqref{def:phibuy:eq2}. Otherwise, if $\zeta < \bar\zeta(\tau)$, it
holds that the mapping
$t \mapsto \phi_{\mathrm{buy}}(\tau-t,\zeta e^{-\kappa t})$ is
non-increasing on $[0,\tau]$. This is the case when
$\phi_{\mathrm{buy}}$ is given as in~\eqref{def:phibuy:eq3a},
\eqref{def:phibuy:eq3b} or~\eqref{def:phibuy:eq3c}. Now, the following
cases can arise.

\underline{Case 6.1:} Let $\zeta \geq \bar\zeta(\tau)$. In case there
exists a smallest $t^* \in [0,\tau)$ such that either
$\varphi = \phi_{\mathrm{sell}}(\tau-t^*,\zeta e^{-\kappa t^*})$ or
$\varphi = \phi_{\mathrm{buy}}(\tau-t^*,\zeta e^{-\kappa t^*})$ holds
true, we claim that the corresponding optimal strategy satisfies
$\hat{X}_t^{\tau,\zeta,\varphi,\uparrow} =
\hat{X}_t^{\tau,\zeta,\varphi,\downarrow} = 0$ on $[0,t^*]$ and is
then given by
$\hat{X}_{t-t^*}^{\tau-t^*,\zeta e^{-\kappa t^*},\varphi}$ on
$[t^*,\tau]$ as characterized in Step 1 or 3 above (i.e.,
Corollary~\ref{cor:main1} or Corollary~\ref{cor:main2}). Otherwise, we
obtain that
$\hat{X}_t^{\tau,\zeta,\varphi,\uparrow} =
\hat{X}_t^{\tau,\zeta,\varphi,\downarrow} = 0$ on $[0,\tau]$. Indeed,
by exploiting similar convexity arguments as above one can deduce that
$\nabla_{t}^{\uparrow,\downarrow} J_\tau(\hat{X}^{\tau,\zeta,\varphi})
> 0$ on $[0,t^*)$ and $[0,\tau)$, respectively. This implies optimality
of $\hat{X}^{\tau,\zeta,\varphi}$ via the dynamic programming
principle from Remark~\ref{rem:dynamicprog} and the first order
conditions from Proposition~\ref{prop:foc}. Moreover, if
$\varphi = \phi_{\mathrm{buy}}(\tau-t^*,\zeta e^{-\kappa t^*})$ it
must necessarily hold that
$\zeta e^{-\kappa t^*} \geq \bar\zeta(\tau-t^*)$ (i.e.,
$\phi_{\mathrm{buy}}$ is either given by~\eqref{def:phibuy:eq1}
or~\eqref{def:phibuy:eq2}) due to the monotonicity properties of
$\phi_{\mathrm{buy}}$ mentioned above.

\underline{Case 6.2:} Let $\zeta < \bar\zeta(\tau)$. If
$\varphi > \phi_{\mathrm{sell}}(0,\zeta e^{-\kappa\tau})$, there
exists a smallest $t^* \in [0,\tau)$ such that
$\varphi = \phi_{\mathrm{sell}}(\tau-t^*,\zeta e^{-\kappa t^*})$ holds
true. Analogously to Case 6.1, one can verify that the corresponding
optimal strategy satisfies
$\hat{X}_t^{\tau,\zeta,\varphi,\uparrow} =
\hat{X}_t^{\tau,\zeta,\varphi,\downarrow} = 0$ on $[0,t^*]$ and is
then given by
$\hat{X}_{t-t^*}^{\tau-t^*,\zeta e^{-\kappa t^*},\varphi}$ on
$[t^*,\tau]$ as characterized in Step 1. Otherwise, we have
$\hat{X}_t^{\tau,\zeta,\varphi,\uparrow} =
\hat{X}_t^{\tau,\zeta,\varphi,\downarrow} = 0$ on $[0,\tau]$.

In both, Case 6.1 and Case 6.2, we obtain that
$(\tau,\zeta,\varphi) \in \mathcal{R}_{\mathrm{wait}}$ as defined
in~\eqref{def:waitregion}. This finishes the proof of
Theorem~\ref{thm:main}, Corollary~\ref{cor:main1} and~\ref{cor:main2}.
\qed

\medskip


The following lemma summarizes some simple results which are used in the
proofs of Theorem~\ref{thm:main} and Corollary~\ref{cor:main2}.

\begin{Lemma} \label{lem:waitgradients} Let
  $(\tau,\zeta,\varphi) \in \cS$, $\tau \geq 0$, $\zeta > 0$, with
  corresponding optimal strategy
  $\hat{X}^{\tau,\zeta,\varphi} =
  (\hat{X}^{\tau,\zeta,\varphi,\uparrow},\hat{X}^{\tau,\zeta,\varphi,\downarrow})
  \in \cX^d$. For any $\theta > 0$ consider the problem data
  $(\tau+\theta,\zeta e^{\kappa \theta},\varphi) \in \cS$ and the
  strategy
  \begin{equation} \label{lem:waitgradients:cand}
    X_t^{\tau+\theta,\zeta e^{\kappa \theta},\varphi}\set
    \hat{X}_{t-\theta}^{\tau,\zeta,\varphi} 1_{[\theta, \tau +
      \theta]}(t) \quad (0 \leq t \leq \tau+\theta)
  \end{equation}
  in $\cX^d$ such that
  $\varphi^{X^{\tau+\theta,\zeta e^{\kappa \theta},\varphi}}_{0-} =
  \varphi$,
  $\zeta^{X^{\tau+\theta,\zeta e^{\kappa \theta},\varphi}}_{0-} =
  \zeta e^{\kappa \theta}$.
  \begin{enumerate} 
  \item Assume that
    ${}^\varrho\nabla^{\downarrow}_0J_{\tau}(\hat{X}^{\tau,\zeta,\varphi}) =
    0$. Then we have
    \begin{align} \label{lem:waitgradients:grad1} 
      \begin{aligned}
        & g^{\uparrow,\downarrow}(\theta;\tau,\zeta,\varphi) \set
        {}^\varrho\nabla^{\uparrow,\downarrow}_0
        J_{\tau+\theta}(X^{\tau+\theta,\zeta e^{\kappa
            \theta},\varphi}) \\
        & = \pm (\alpha \sigma^2 \varphi - \mu ) \theta + \frac{1}{2}
        \zeta (e^{\kappa \theta} \pm 1) + \frac{1}{2} \eta \vert
        \varphi^{\hat{X}^{\tau,\zeta,\varphi}}_{\tau} \vert
        (e^{-\kappa (\tau + \theta)} \pm e^{-\kappa \tau})
        \\
        &\;\;\;\; + \frac{1}{2} \eta (e^{-\kappa \theta} \pm 1)
        \int_{[0,\tau]} e^{-\kappa u}
        (d\hat{X}^{\tau,\zeta,\varphi,\uparrow}_u +
        d\hat{X}^{\tau,\zeta,\varphi,\downarrow}_u).
      \end{aligned}
    \end{align}
    The maps
    $\theta \mapsto
    g^{\uparrow,\downarrow}(\theta;\tau,\zeta,\varphi)$ are continuous
    and strictly convex on $(0,+\infty)$.
  \item Assume that $\tau = \varphi = 0$. Then we have
    \begin{align} \label{lem:waitgradients:grad2}
      \begin{aligned}
        h^{\uparrow,\downarrow}(\theta;\zeta,\varrho) \set
        {}^\varrho\nabla^{\uparrow,\downarrow}_0
        J_{\theta}(X^{\theta,\zeta e^{\kappa \theta},0}) = \mp \mu
        \theta + \frac{1}{2} \zeta \left( e^{\kappa \theta} \pm
          \varrho \right).
      \end{aligned}
    \end{align}
    The maps
    $\theta \mapsto h^{\uparrow,\downarrow}(\theta;\zeta,\varrho)$ are
    continuous and strictly convex on $(0,+\infty)$.
  \end{enumerate}
\end{Lemma}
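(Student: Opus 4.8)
Throughout I abbreviate $\hat{X}:=\hat{X}^{\tau,\zeta,\varphi}$ and $X^\theta:=X^{\tau+\theta,\zeta e^{\kappa\theta},\varphi}$. The plan is to read both identities off the explicit subgradient formulas \eqref{eq:buysubgradient}, \eqref{eq:sellsubgradient} evaluated at $t=0$, once one has recorded the controlled state of $X^\theta$. Since $X^\theta$ is inactive on $[0,\theta)$, the spread representation \eqref{eq:spreadsolution} gives $\zeta^{X^\theta}_u=\zeta e^{\kappa(\theta-u)}$ for $u\in[0,\theta)$ (hence left limit $\zeta$ at $\theta$) and $\varphi^{X^\theta}_u=\varphi$ there, while on $[\theta,\tau+\theta]$ the strategy just replays $\hat{X}$, so $\zeta^{X^\theta}_{\theta+s}=\zeta^{\hat{X}}_s$ and $\varphi^{X^\theta}_{\theta+s}=\varphi^{\hat{X}}_s$ for $s\in[0,\tau]$ (again by \eqref{eq:spreadsolution} and the definition of the portfolio process); in particular the terminal values of $X^\theta$ at $\tau+\theta$ equal those of $\hat{X}$ at $\tau$.

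For part 1, I split the integral in \eqref{eq:buysubgradient} (resp.\ \eqref{eq:sellsubgradient}) with $T=\tau+\theta$ as $\int_0^{\tau+\theta}=\int_0^\theta+\int_\theta^{\tau+\theta}$. The $[0,\theta]$-part yields $\pm(\alpha\sigma^2\varphi-\mu)\theta$ from the deviation integrand and, via $\int_0^\theta\kappa e^{-\kappa u}\zeta e^{\kappa(\theta-u)}\,du=\tfrac{\zeta}{2}(e^{\kappa\theta}-e^{-\kappa\theta})$, the amount $\tfrac{\zeta}{2}(e^{\kappa\theta}-e^{-\kappa\theta})$ from the resilience integrand. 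After the shift $u=\theta+s$, the $[\theta,\tau+\theta]$-part equals $e^{-\kappa\theta}$ times the resilience integral $\int_0^\tau\kappa e^{-\kappa u}\zeta^{\hat{X}}_u\,du$ plus the \emph{un}-discounted deviation integral $\pm\int_0^\tau\alpha\sigma^2(\varphi^{\hat{X}}_s-\mu/(\alpha\sigma^2))\,ds$, while the boundary terms of $J_{\tau+\theta}$ at $\tau+\theta$ are $e^{-\kappa\theta}$ times $\tfrac{1}{2}(\zeta^{\hat{X}}_\tau+\eta|\varphi^{\hat{X}}_\tau|)e^{-\kappa\tau}$ plus the $\varrho$-boundary term $F:=\tfrac{\eta}{2}\varphi^{\hat{X}}_\tau+\tfrac{1}{2}\sign_\varrho(\varphi^{\hat{X}}_\tau)\zeta^{\hat{X}}_\tau$. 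The crucial step is to invoke the hypothesis ${}^\varrho\nabla^\downarrow_0 J_\tau(\hat{X})=0$, i.e. $\int_0^\tau\alpha\sigma^2(\varphi^{\hat{X}}_s-\mu/(\alpha\sigma^2))\,ds=\int_0^\tau\kappa e^{-\kappa u}\zeta^{\hat{X}}_u\,du+\tfrac{1}{2}(\zeta^{\hat{X}}_\tau+\eta|\varphi^{\hat{X}}_\tau|)e^{-\kappa\tau}-F$: substituting this makes $F$ cancel and eliminates the deviation integral, so that only a combination of $\int_0^\tau\kappa e^{-\kappa u}\zeta^{\hat{X}}_u\,du$ and $\tfrac{1}{2}(\zeta^{\hat{X}}_\tau+\eta|\varphi^{\hat{X}}_\tau|)e^{-\kappa\tau}$ survives, with prefactor $e^{-\kappa\theta}\pm1$. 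Rewriting $\int_0^\tau\kappa e^{-\kappa u}\zeta^{\hat{X}}_u\,du$ via \eqref{eq:spreadsolution} and Fubini one obtains the identity
\[
\int_0^\tau\!\kappa e^{-\kappa u}\zeta^{\hat{X}}_u\,du+\tfrac{1}{2}\big(\zeta^{\hat{X}}_\tau+\eta|\varphi^{\hat{X}}_\tau|\big)e^{-\kappa\tau}=\tfrac{\zeta}{2}+\tfrac{\eta}{2}|\varphi^{\hat{X}}_\tau|e^{-\kappa\tau}+\tfrac{\eta}{2}\!\int_{[0,\tau]}\!e^{-\kappa u}\big(d\hat{X}^{\tau,\zeta,\varphi,\uparrow}_u+d\hat{X}^{\tau,\zeta,\varphi,\downarrow}_u\big),
\]
and then collecting all $\zeta$-contributions turns $\tfrac{\zeta}{2}(e^{\kappa\theta}-e^{-\kappa\theta})+\tfrac{\zeta}{2}(\pm1+e^{-\kappa\theta})$ into $\tfrac{\zeta}{2}(e^{\kappa\theta}\pm1)$, and the remaining terms assemble precisely into \eqref{lem:waitgradients:grad1}.

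Part 2 is immediate: by the convention of Remark~\ref{rem:tauzero} one has $\hat{X}^{0,\zeta,0}=(0,0)$, so $X^{\theta,\zeta e^{\kappa\theta},0}$ is the do-nothing strategy on $[0,\theta]$, with $\zeta_u=\zeta e^{\kappa(\theta-u)}$ on $[0,\theta)$, zero position throughout and terminal position $0$; evaluating \eqref{eq:buysubgradient}--\eqref{eq:sellsubgradient} at $t=0$, $T=\theta$ with $\sign_\varrho(0)=\varrho$ and again using $\int_0^\theta\kappa e^{-\kappa u}\zeta e^{\kappa(\theta-u)}\,du=\tfrac{\zeta}{2}(e^{\kappa\theta}-e^{-\kappa\theta})$ yields \eqref{lem:waitgradients:grad2} at once. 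Finally, continuity is clear, and strict convexity of $\theta\mapsto g^{\uparrow,\downarrow}$ and $\theta\mapsto h^{\uparrow,\downarrow}$ follows by inspection of the closed forms: the only nonlinear term in each is $\tfrac{\zeta}{2}e^{\kappa\theta}$, strictly convex because $\zeta>0$ and $\kappa>0$, while the remaining nonconstant terms are $e^{-\kappa\theta}$ times the nonnegative constants $\tfrac{\eta}{2}|\varphi^{\hat{X}}_\tau|e^{-\kappa\tau}$ and $\tfrac{\eta}{2}\int_{[0,\tau]}e^{-\kappa u}(d\hat{X}^{\tau,\zeta,\varphi,\uparrow}_u+d\hat{X}^{\tau,\zeta,\varphi,\downarrow}_u)$ (nonnegative since $\hat{X}^{\tau,\zeta,\varphi}$ is nondecreasing), hence convex.

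The step I expect to be the main obstacle is the bookkeeping in part 1: correctly matching the boundary terms of $J_{\tau+\theta}$ at $\tau+\theta$ against the vanishing of ${}^\varrho\nabla^\downarrow_0 J_\tau(\hat{X})$ and then using \eqref{eq:spreadsolution} to collapse all $\zeta$-contributions into the clean form $\tfrac{\zeta}{2}(e^{\kappa\theta}\pm1)$. None of this is deep, but it is the one place where the signs and the discount factors $e^{\pm\kappa\theta}$ and $e^{-\kappa\tau}$ have to be tracked with care.
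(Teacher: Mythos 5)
Your proposal is correct and follows essentially the same route as the paper's proof: split the subgradient integral at $\theta$, use the vanishing sell-subgradient ${}^\varrho\nabla^{\downarrow}_0 J_\tau(\hat{X}^{\tau,\zeta,\varphi})=0$ (equivalently ${}^\varrho\nabla^{\downarrow}_\theta J_{\tau+\theta}(X)=0$) to eliminate the deviation integral and the $\varrho$-boundary term, and then invoke the spread representation \eqref{eq:spreadsolution} together with Fubini to express the remaining resilience contribution through $\int_{[0,\tau]}e^{-\kappa u}(d\hat{X}^{\uparrow}_u+d\hat{X}^{\downarrow}_u)$; part 2 and the convexity argument likewise match. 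Your intermediate identity collapsing $\int_0^\tau\kappa e^{-\kappa u}\zeta^{\hat{X}}_u\,du+\tfrac12\zeta^{\hat{X}}_\tau e^{-\kappa\tau}$ into $\tfrac{\zeta}{2}+\tfrac{\eta}{2}\int_{[0,\tau]}e^{-\kappa u}\,d(\hat{X}^{\uparrow}+\hat{X}^{\downarrow})$ is a correct, slightly tidier packaging of the same computation the paper performs on the shifted interval.
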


\begin{proof}
  \emph{1.):} We only compute the mapping
  $\theta \mapsto g^{\uparrow}(\theta;\tau,\zeta,\varphi)$
  in~\eqref{lem:waitgradients:grad1}. The computation of
  $g^{\downarrow}$ is very similar and thus omitted. Hence, let
  $(\tau,\zeta,\varphi) \in \cS$ with associated optimal strategy
  $\hat{X}^{\tau,\zeta,\varphi}$. We have to compute the
  buy-subgradient of strategy
  $X^{\tau+\theta,\zeta e^{\kappa \theta},\varphi}$ in
  \eqref{lem:waitgradients:cand} at 0, i.e.,
  $\nabla^{\uparrow}_0 J_{\tau+\theta}(X^{\tau+\theta,\zeta e^{\kappa
      \theta},\varphi})=g^{\uparrow}(\theta;\tau,\zeta,\varphi) $.
  For notational convenience, we will henceforth write $X$ for the
  strategy $X^{\tau+\theta,\zeta e^{\kappa \theta},\varphi}$ and
  denote by $\varphi^{X}$, $\zeta^{X}$ the corresponding stock
  holdings and spread dynamics on $[0,\tau+\theta]$. By definition of
  the buy-subgradient in~\eqref{eq:buysubgradient} we obtain
  \begin{equation} 
    \begin{aligned} \label{lem:waitgradients:peq1}
      {}^{\varrho}\nabla^{\uparrow}_0
      J_{\tau+\theta}(X)& 
      = \int_{\theta}^{\tau+\theta} \kappa e^{-\kappa t} \zeta_t^{X}
      dt + \int_{\theta}^{\tau+\theta} (\alpha\sigma^2 \varphi^{X}_t -
      \mu) dt
      \\
      &\;\;\; + \kappa \int_0^{\theta} e^{-\kappa t} \zeta_t^X dt +
      \theta ( \alpha\sigma^2 - \mu \varphi)  \\
      & \;\;\; + \frac{1}{2}
      (\eta \vert \varphi^{X}_{\tau+\theta} \vert +
      \zeta^{X}_{\tau + \theta}) e^{-\kappa (\tau + \theta)} +
      \frac{\eta}{2} \varphi^{X}_{\tau+\theta} + \\
      & \;\;\; + \frac{1}{2} \sign_{\varrho}(\varphi^{X}_{\tau+\theta}) \zeta^{X}_{\tau +
        \theta}.
    \end{aligned}
  \end{equation}
  In addition, it holds that
  $0 =
  {}^{\varrho}\nabla^{\downarrow}_0J_{\tau}(\hat{X}^{\tau,\zeta,\varphi})
  = {}^{\varrho}\nabla^{\downarrow}_{\theta} J_{\tau+\theta}(X)$ which
  gives us the identity
  \begin{equation} \label{lem:waitgradients:peq2}
    \begin{aligned}
      \int_{\theta}^{\tau+\theta} (\alpha\sigma^2 \varphi^{X}_t - \mu)
      dt = & \int_{\theta}^{\tau+\theta} \kappa e^{-\kappa(t-\theta)}
      \zeta^{X}_t dt + \frac{1}{2} (\eta \vert
      \varphi^{X}_{\tau+\theta} \vert +
      \zeta^{X}_{\tau + \theta}) e^{-\kappa \tau} \\
      & - \frac{\eta}{2} \varphi^{X}_{\tau+\theta} - \frac{1}{2}
      \sign_{\varrho}(\varphi^{X}_{\tau+\theta}) \zeta^{X}_{\tau + \theta}.
    \end{aligned}
  \end{equation}
  Inserting~\eqref{lem:waitgradients:peq2} back
  into~\eqref{lem:waitgradients:peq1} and using the fact that
  $\zeta^X_t = \zeta e^{\kappa (\theta - t)}$ on $[0,\theta]$ yields
  \begin{equation} \label{lem:waitgradients:peq3}
    \begin{aligned}
      {}^{\varrho}\nabla^{\uparrow}_0 J_{\tau+\theta}(X) 
      & = \kappa (1+e^{\kappa \theta}) \int_{\theta}^{\tau+\theta}
      \zeta^{X}_t e^{-\kappa t} dt \\
& \;\;\; -\frac{1}{2} \zeta (e^{-\kappa
        \theta} - e ^{\kappa \theta}) + \theta
      (\alpha\sigma^2 \varphi - \mu)  \\
      & \;\;\; + \frac{1}{2} \eta \vert \varphi^{X}_{\tau+\theta}
      \vert (e^{-\kappa \tau} + e^{-\kappa(\tau+\theta)}) +
      \frac{1}{2} \zeta^{X}_{\tau + \theta} (e^{-\kappa \tau} +
      e^{-\kappa(\tau+\theta)}).
    \end{aligned}
  \end{equation}
  Next, applying the spread dynamics
  \begin{equation} \label{lem:waitgradients:peq4} \zeta^{X}_t = \zeta
    e^{-\kappa (t-\theta) } + e^{-\kappa (t-\theta) }
    \int_{[\theta,t]} \eta e^{\kappa (s-\theta)} (dX^{\uparrow}_s +
    dX^{\downarrow}_s) \quad (\theta \leq t \leq \tau+\theta)
  \end{equation} 
  and Fubini's Theorem, we finally obtain
  in~\eqref{lem:waitgradients:peq3} the representation
  \begin{align*}
    g^{\uparrow}(\theta;\tau,\zeta,\varphi) = 
    & \; (\alpha \sigma^2 \varphi -\mu) \theta + \frac{1}{2} \zeta
      (e^{\kappa \theta} + 1) + \frac{1}{2} \eta \vert \varphi^X_{\tau + \theta} \vert
      (e^{-\kappa (\tau + \theta)} + e^{-\kappa \tau})
    \\
    & + \frac{1}{2} \eta (1 + e^{-\kappa \theta}) 
      \int_{[\theta, \tau + \theta]} e^{\kappa (\theta - u)} (dX^{\uparrow}_u +
      dX^{\downarrow}_u). 
  \end{align*}
  Observing that
  $ \varphi^X_{\tau + \theta} =
  \varphi^{\hat{X}^{\tau,\zeta,\varphi}}_{\tau}$ and
  \begin{equation*}
    \int_{[\theta, \tau + \theta]} e^{\kappa (\theta - u)} (dX^{\uparrow}_u +
    dX^{\downarrow}_u) = \int_{[0,\tau]} e^{-\kappa u} (d\hat{X}^{\tau,\zeta,\varphi,\uparrow}_u +
    d\hat{X}^{\tau,\zeta,\varphi,\downarrow}_u)
  \end{equation*}
  yields the desired result in
  \eqref{lem:waitgradients:grad1}. Obviously, the map $g^{\uparrow}$
  is continuous in $\theta$. Moreover, it can be easily verified that
  the second dervative of $g^{\uparrow}$ with respect to $\theta$ is
  strictly positive which implies that
  $\theta \mapsto g^{\uparrow}(\theta;\tau,\zeta,\varphi)$ is strictly
  convex.

  \emph{2.)} Let $(0,\zeta,0) \in \cS$ with associated optimal
  strategy $\hat{X}^{0,\zeta,0} = (0,0)$ (recall also
  Remark~\ref{rem:tauzero}, 2.)). Using the definition
  in~\eqref{eq:buysubgradient} and~\eqref{eq:sellsubgradient}, the
  buy- and sell-subgradient of strategy
  $X^{\theta,\zeta e^{\kappa \theta},0}=(0,0)$ on $[0,\theta]$
  in~\eqref{lem:waitgradients:cand} can be readily computed as claimed
  in~\eqref{lem:waitgradients:grad2}. Strict convexity of the mappings
  $h^{\uparrow,\downarrow}$ follows as in 1.).
\end{proof}

\bibliographystyle{plainnat} \bibliography{finance}

\end{document}